\documentclass[11pt]{article}

\newif\ifcomments
\commentstrue

\usepackage{fullpage}
\usepackage[utf8]{inputenc}

\usepackage{graphicx}
\usepackage{amsmath,amsthm,amssymb}
\usepackage{algorithm}
\usepackage{algpseudocode}
\usepackage{multirow}
\usepackage{makecell}

\usepackage{thm-restate,color,xspace}
\usepackage{comment}
\usepackage{thmtools}
\usepackage{xcolor}
\usepackage{nameref}
\usepackage{array}
\usepackage{bbm}

\definecolor{ForestGreen}{rgb}{0.1333,0.5451,0.1333}
\definecolor{DarkRed}{rgb}{0.65,0,0}
\definecolor{Red}{rgb}{1,0,0}
\definecolor{DarkBlue}{RGB}{0, 70, 140}
\usepackage[linktocpage=true,
pagebackref=true,colorlinks,
linkcolor=DarkRed,citecolor=ForestGreen,
bookmarks,bookmarksopen,bookmarksnumbered]
{hyperref}
\usepackage{cleveref}
\usepackage{mathrsfs}
\usepackage{enumitem}

\usepackage{mathtools}
\usepackage[most]{tcolorbox}

\declaretheorem[numberwithin=section]{theorem}
\declaretheorem[numberlike=theorem]{lemma}

\declaretheorem[numberlike=theorem]{claim}

\declaretheorem[numberlike=theorem]{assumption}
\declaretheorem[numberlike=theorem]{observation}

\declaretheorem[numberlike=theorem,style=definition]{definition}
\declaretheorem[numberlike=theorem,style=definition]{remark}

\newcounter{Qtypectr}
\newcounter{Qtildetypectr}
\newcounter{Oracletypectr}

\global\long\def\ed{\mathrm{ed}}

\global\long\def\tree{\mathrm{tree}}
\global\long\def\nc{\mathrm{nc}}
\global\long\def\cov{\mathrm{cov}}
\global\long\def\diam{\mathrm{diam}}
\global\long\def\sc{\mathrm{sc}}
\global\long\def\tmn{\mathrm{tmn}}
\global\long\def\nd{\mathrm{nd}}
\global\long\def\lk{\mathrm{lk}}

\global\long\def\qry{Q}
\global\long\def\wtilde{\widetilde}
\global\long\def\aff{\mathrm{aff}}
\global\long\def\dist{\mathrm{dist}}
\global\long\def\x{x}
\global\long\def\y{y}

\global\long\def\heavy{\mathrm{heavy}}
\global\long\def\px{\mathrm{px}}
\global\long\def\leng{\mathrm{leng}}
\global\long\def\level{\mathsf{level}}
\global\long\def\scale{\mathsf{scale}}
\global\long\def\ans{\mathsf{ans}}
\global\long\def\ProxyList{\mathsf{ProxyVertex}}
\global\long\def\ProxyVertex{\mathsf{ProxyVertex}}
\global\long\def\spar{\mathrm{spar}}
\global\long\def\satellite{\mathsf{Stl}}
\global\long\def\cluster{\mathsf{cluster}}
\global\long\def\pivot{\mathsf{pivot}}
\global\long\def\Ball{\mathrm{Ball}}

\global\long\def\alphahat{\hat{\alpha}}

\global\long\def\node{\mathsf{node}}

\global\long\def\core{\mathsf{core}}

\global\long\def\sc{\mathrm{sc}}

\global\long\def\Qtilde{\wtilde{Q}}
\global\long\def\Ptilde{\wtilde{P}}
\global\long\def\affS{\mathrm{affS}}
\global\long\def\nb{\mathrm{nb}}
\global\long\def\Nbr{\mathsf{Nbr}}
\global\long\def\Upward{\mathsf{Upward}}
\global\long\def\AdjCluster{\mathsf{AdjCluster}}
\global\long\def\Euler{\mathsf{Euler}}
\global\long\def\start{\mathsf{start}}
\global\long\def\eend{\mathsf{end}}
\global\long\def\oc{\mathrm{oc}}
\global\long\def\Ical{\mathcal{I}}
\global\long\def\larr{\leftarrow}
\global\long\def\rarr{\rightarrow}

\global\long\def\CoreNodes{\mathsf{CoreNodes}}
\global\long\def\Shortcuts{\mathsf{Shortcuts}}
\global\long\def\ppt{\mathrm{pt}}
\global\long\def\point{\mathsf{point}}
\global\long\def\vpt{v_{\mathrm{pt}}}
\global\long\def\upt{u_{\mathrm{pt}}}
\global\long\def\region{\mathsf{region}}
\global\long\def\down{\mathrm{down}}
\global\long\def\up{\mathrm{up}}

\global\long\def\desc{\prec}
\global\long\def\desceq{\preceq}
\global\long\def\anc{\succ}
\global\long\def\anceq{\succeq}

\global\long\def\fail{\mathrm{fail}}
\global\long\def\sub{\mathrm{sub}}
\global\long\def\vtx{\mathrm{vtx}}

\global\long\def\poly{\mathrm{poly}}
\global\long\def\supp{\mathrm{supp}}

\global\long\def\spars{\mathrm{spars}}
\global\long\def\sep{\mathrm{sep}}

\global\long\def\Otilde{\tilde{O}}
\global\long\def\glo{\mathrm{glo}}
\global\long\def\cost{\mathrm{cost}}

\global\long\def\portal{\mathsf{portal}}

\title{Vertex-Failure Distance Oracles and Labeling Schemes:\\ Compact and Constant-Approximate}
\author{Yaowei Long\thanks{\texttt{yaoweil@umich.edu}}\\University of Michigan}
\date{}

\begin{document}

\maketitle
\pagenumbering{gobble}
\begin{abstract}
We present new algorithms for the \emph{vertex-failure distance oracles} and \emph{labeling schemes} problems in undirected weighted graphs.

A vertex-failure distance oracle is a data structure that, given two vertices $x$ and $y$ and a failed vertex set $F$ of size at most $f$, returns an approximation to the distance between $x$ and $y$ in $G \setminus F$. In the labeling-scheme setting, the data structure needs to be stored distributively as \emph{labels} on the vertices, and each query $(x,y,F)$ must be answered by accessing only the labels of the vertices in $F \cup \{x,y\}$.

For any $f\geq 1$ and $k \ge 1$, we obtain a vertex-failure distance oracle with $O(k^{6})$ approximation, space $\tilde{O}(f^{2}n^{1+1/k})$, query time $\tilde{O}(f^{5}n^{1/k})$, and polynomial preprocessing time. In particular, this is the first time-efficient oracle for multiple vertex failures with space close to linear, as well as the first constant-approximation oracle with polynomial space when tolerating $\Omega(\log n)$ vertex failures. The previous results, due to Duan, Gu, and Ren \cite{duan2021approximate}, gave two alternatives: for any constant $c \ge 1$ and $\epsilon>0$, one oracle has $\poly(\log n,f)$ approximation, space $n^{2+1/c}\poly(\log n,f)$, and query time $\poly(\log n,f^{c})$, while the other has $(1+\epsilon)$ approximation, space $n^{2+1/c}(\log n/\epsilon)^{O(f)}$, and query time $\poly(\log n,f^{c},1/\epsilon)$. 

We also obtain a vertex-failure distance labeling scheme with $O(k^{6})$ approximation and label size $f^{3}n^{1/k}\log^{O(k)} n$. This is the first nontrivial distance labeling scheme for vertex failures.

Our techniques build on recent tools related to \emph{length-constrained vertex expanders} \cite{haeupler2025length} and also introduce a new expander-based shortcut sparsification. The latter also leads to a deterministic \emph{vertex-failure connectivity labeling scheme} of size $\tilde{O}(f^{2})$, improving the previous existential bound of $\tilde{O}(f^{3})$ \cite{parter2024connectivity,long2025connectivity} and the previous deterministic bound of $\tilde{O}(f^{4})$ \cite{long2025connectivity}.

\end{abstract}

\newpage
\setcounter{tocdepth}{2}
\tableofcontents
\newpage
\pagenumbering{arabic}

\section{Introduction}

\emph{Fault-tolerant oracles} are data structures that support efficient queries for fundamental graph properties, such as connectivity and distances, in the presence of a small batch of edge or vertex failures of size at most $f$. Since the pioneering work of Pătraşcu and Thorup \cite{patrascu2007planning}, extensive work has investigated fault-tolerant oracles for problems including pairwise connectivity \cite{duan2010connectivity,kapron2013dynamic,gibb2015dynamic,duan2020connectivity,long2022near,kosinas2023connectivity,long2024better}, global connectivity \cite{jiang2026new,kosinas2025optimal}, distances \cite{chechik2012f,chechik20171+,bilo2023approximate,HLRS26}, and diameter \cite{bilo2023fault}.

When designing fault-tolerant oracles for undirected graphs, a central goal is to achieve oracles that are both space-efficient, ideally with nearly linear space, and support fast queries. Remarkably, for pairwise connectivity, nearly-linear-space oracles have been achieved under both edge \cite{kapron2013dynamic,gibb2015dynamic,duan2020connectivity} and vertex failures \cite{long2022near,long2024better}, and these results are in fact conditionally almost-optimal.

In contrast, for pairwise distances, nearly linear space is unattainable even in the non-fault-tolerant setting if approximation is not allowed: the classic result of Thorup and Zwick \cite{thorup2005approximate} gives a distance oracle with approximation $2k-1$, space $O(kn^{1+1/k})$, and query time $O(k)$ for any $k\geq 1$, and this tradeoff between approximation and space is tight under the Erdős
girth conjecture.

\paragraph{Edge-Failure Distance Oracles.} 
Therefore, there has been substantial work on edge-failure distance oracles with different approximation-space tradeoffs. For example, some results \cite{chechik20171+,duan2022maintaining,dey2024nearly} provide exact or $(1+\epsilon)$-approximate answers, but require at least quadratic space and depend exponentially on $f$. Some other results \cite{bilo2023approximate, bilo2024improved} achieve mildly subquadratic space and small constant approximations such as $3+\epsilon$. In particular, there are two edge-failure distance oracles in the spirit of the Thorup-Zwick oracle that achieve analogous approximation-space tradeoffs while supporting fast queries. The first, due to Chechik, Langberg, Peleg, and Roditty \cite{chechik2012f}, attains approximation $(8k-2)(f+1)$, space $\Otilde(fkn^{1+1/k})$\footnote{Throughout the paper, we use $\Otilde(\cdot)$ to hide $\poly(\log n)$ factors.}, and query time $\Otilde(f)$. The second, due to Haeupler, Long, Roeyskoe, and Saranurak \cite{HLRS26}, attains approximation $O(k^4)$, space $\Otilde(f^4 n^{1+1/k})$, and query time $\Otilde(f^5 n^{1/k})$.

\paragraph{Vertex-Failure Distance Oracles.} Nevertheless, for vertex failures, despite substantial effort, most existing results only handle a single vertex failure \cite{baswana2013approximate,bernstein2008improved,bernstein2009nearly,chechik2020distance,duan2017improved,ren2022improved}. The only known distance oracles for multiple vertex failures are due to Duan, Gu, and Ren \cite{duan2021approximate}. For any constants $c \ge 1$ and $\epsilon > 0$, they give one oracle with approximation $O(f^{c+2}\log^{6}n)$, space $\Otilde(n^{2+1/c}f^{3})$, and query time $\Otilde(f^{2c+9})$, and another with approximation $1+\epsilon$, space $n^{2+1/c}\cdot (\log n/\epsilon)^{O(f)}$, and query time $\Otilde(f^{2c+6}/\epsilon)$. In particular, both oracles require superquadratic space. Moreover, the first oracle, whose space does not depend exponentially on $f$, cannot achieve a constant approximation even when $f$ is a constant. However, a line of work on \emph{vertex-fault-tolerant spanners} \cite{chechik2009fault,dinitz2011fault,bodwin2018optimal,bodwin2019trivial} shows that, without requiring fast queries, one can simultaneously achieve approximation $2k-1$ and size $O(n^{1+1/k}f^{1-1/k})$. This raises the following open problem.
\begin{center}
\textit{Is there a vertex-failure distance oracle with approximation $\poly(k)$,\\ space $\Otilde(n^{1+1/k})\cdot\poly(f)$, and query time $\Otilde(n^{1/k})\cdot \poly(f)$?}
\end{center}

\subsection{Our Results}

We answer the above open problem affirmatively with the following \Cref{thm:IntroOracle}.

\begin{theorem}
\label{thm:IntroOracle}
Given an undirected graph $G$ with non-negative, integral and polynomially bounded vertex and edge lengths, and given parameters $f\geq 1$ and $k\geq 1$, there is a randomized\footnote{Our oracle (and the distance labeling scheme) answers each query correctly with high probability $1-1/n^{10}$. Consequently, it supports a polynomial number of queries. This guarantee is standard in previous randomized fault-tolerant oracles, e.g. \cite{duan2020connectivity}. The success probability can be boosted to $1-1/n^{\Omega(f)}$ in either a while-box or black-box way. The latter incurs an extra factor of $f$ in the space, update time, and query time.} vertex-failure distance oracle with approximation $O(k^{6})$, space $\Otilde(f^{2}\cdot n^{1+1/k})$, update time $\Otilde(f^{5}\cdot n^{1/k})$, query time $\Otilde(f)$, and polynomial construction time.
\end{theorem}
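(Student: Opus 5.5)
The construction is a distance-scale hierarchy. For each scale $h = 2^i$, $i = 0,1,\dots,O(\log n)$, we build a structure that answers scale-$h$ queries: for any $F$ with $|F|\le f$, if $\dist_{G\setminus F}(x,y)\le h$, it certifies a path of length at most $O(k^{6})h$ in $G\setminus F$; whenever it reports a path of length $\ell$, that path genuinely avoids $F$. A query $(x,y,F)$ scans the scales in increasing order and returns the first certified value. This gives an $O(k^6)$ approximation up to the factor $2$ lost in rounding to powers of two. With polynomially bounded lengths there are $O(\log n)$ scales, so space and time grow only by polylogarithmic factors.

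Fix a scale $h$. We compute a length-constrained vertex expander decomposition of the kind in \cite{haeupler2025length}. It removes a moving cut $C$ of $h$-length vertex separators such that $G-C$ is an $(h,s)$-length vertex expander with $s=\poly(k)$, and the total cut size is $\Otilde(n^{1+1/k})$ after routing through the usual $n^{1/k}$-sparsity tradeoff. Failures are then handled in two regimes.

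\textbf{(i) Failures that do not hit the cut.} Inside the expander, $f$ vertex failures can only separate small regions. More precisely, there is a ``robust core'' such that any two vertices of it that are within length $h$ of each other in $G\setminus F$ remain within length $O(s)h$. We precompute neighborhood covers in the spirit of Thorup--Zwick: $k$ levels of sampled pivot sets with $n^{1/k}$-sized bunches and clusters. Each vertex stores $\Otilde(f\,n^{1/k})$ pivots at each level, and the extra factor $f$ guarantees that some pivot survives the failures. At query time, starting from $x$ and $y$, we find surviving pivots and use the expander routing guarantee to stitch the two sides together.

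\textbf{(ii) Failures that hit the cut or the peeled-off region.} We add, at each scale, a shortcut graph on the cut vertices together with the new expander-based shortcut sparsification. This keeps only $\Otilde(f^2)$ shortcut edges per cut vertex while preserving $f$-fault-tolerant $h$-length connectivity among the cut vertices. This accounts for the $f^2 n^{1+1/k}$ space bound.

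The update phase runs once per $F$ and costs $\Otilde(f^{5} n^{1/k})$. For each failed vertex it marks the affected clusters, then recomputes the connectivity of the sparsified shortcut graph restricted to the affected portals. Roughly, $f$ failures each touch $f\cdot n^{1/k}$ clusters, and each cluster requires an $f^3$ shortcut recomputation. After this, a query $(x,y)$ only needs to locate $O(f)$ portals near $x$ and $y$ and look them up, which gives query time $\Otilde(f)$. Correctness holds with high probability by the usual sampling argument and a union bound over the $n^{O(1)}$ queries.

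The main obstacle is the shortcut sparsification under vertex failures. We need a sparse set of shortcuts whose $h$-length connectivity among cut vertices survives any $f$ vertex failures with only $\poly(k)$ stretch. Edge-failure techniques such as those of \cite{HLRS26} do not transfer, because a single vertex may lie on many shortcuts. To handle this, I would first try to embed the shortcuts into a length-constrained vertex expander and sparsify inside it using expander routing. Each failed vertex then destroys only a bounded fraction of the routing paths, and a degree-$\Otilde(f^2)$ subgraph, chosen greedily or by sampling, suffices to reroute around $F$. The stretch compounds over three places: the decomposition, the routing, and the cover levels. Controlling it to $O(k^{6})$ requires careful parameter choices.
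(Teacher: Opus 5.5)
There is a genuine gap. Your outline names several of the right ingredients, but the two steps that carry the proof are missing, and step (i) would fail as stated. LC vertex expansion says nothing about individual vertices: an expanding terminal may have a single neighbor, and failing that neighbor isolates it. What expansion does give (the paper's LC-expansion lemma) is about terminal \emph{sets}. Two sets of size at least $\rho_{\heavy}=f\cdot O(\log n/\phi)+1$ that are close before the failures stay within $h_{\ed,\alpha}s_{\ed}$ in $G\setminus F$, because $f$ vertices can destroy only a bounded amount of routed flow. Your ``robust core'' depends on $F$, so it cannot be precomputed. Storing $f+1$ pivots only guarantees that some pivot vertex survives; it does not guarantee that a path from $x$ to it avoids $F$, or that the oracle can certify such a path without underestimating.

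The missing idea is a family of low-degree, low-diameter, low-overlap Steiner trees spanning each cluster's terminals, built by LC expander routing. Failures cut these trees into only $\tilde O(f)\,n^{O(1/k)}$ residual subtrees. Heavy subtrees are contracted and linked through a cluster node of length $\Theta(h_{\ed,\alpha}s_{\ed})$, which is exactly where expansion is legitimately used. Light subtrees contribute fewer than $\rho_{\heavy}$ explicit singleton nodes each. This bounds the query graph by $\tilde O(f^{2})\,n^{O(1/k)}$ nodes. The $f^{5}$ update time is then $\tilde O(f)$ affected clusters times $\tilde O(f^{4})$ node pairs of range-counting queries; your count instead assumes, without justification, that a failed vertex touches $f\cdot n^{1/k}$ clusters. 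The $\tilde O(f)$ query time likewise needs proxy lists. Take an unaffected cluster around $x$ and an affected higher-level cluster around $x$, and store $f+1$ terminals of the latter that lie in the surrounding set of the former. Any surviving one is reachable from $x$ through an unaffected cluster and is represented in the query graph.

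Step (ii) is circular, and the sparsification you flag as the main obstacle is left as a guess. After a single decomposition, preserving fault-tolerant distances ``among the cut vertices'' is the original problem again, since the paths between them run through the expanding part and through $F$. The paper recurses instead:
\begin{itemize}
\item Each level's fractional cuts are sampled into the next level's terminal set. This gives an $O(k)$-level hierarchy (terminals shrink by $n^{1/k}$ per level) with weakly expanding graphs $G_{i,\alpha}$ at every scale.
\item The witness path is decomposed recursively at each level's terminals, so every leaf segment's interior lies in an unaffected cluster at that segment's \emph{own} scale.
\item Portal lengths are charged to recursive segments of total length $O(i_{\max})\cdot\leng(P)$.
\end{itemize}
A structure confined to scale $h$ cannot do this, because a length-$h$ path may cross arbitrarily many cut vertices.

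For sparsification, the paper's shortcuts are cliques on surrounding sets $\Lambda(S)$, owned by $S$ and discarded when $S$ is affected. This ownership is also how the no-underestimate direction is enforced; you never say how $F$ invalidates a shortcut. Bicliques are kept only between levels where $S$ is nearby-sparse, so each cluster contributes $\tilde O(f^{2})\,n^{O(1/k)}$ entries. Dense levels are replaced by expansion-based links: adjacent-cluster denseness, and satellite clusters in the mixed sparse/dense case. These links are justified by propagating heaviness through $G\setminus F$-closeness.

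Finally, shortcuts from original edges cost $m\cdot n^{O(1/k)}$ space, which the paper avoids by first passing to an $f$-vertex-fault-tolerant spanner. The spanner's stretch $2k-1$, times $(s_{\tree}+s_{\nc})\,s_{\nc}\,s_{\ed}\,i_{\max}=O(k^{5})$ with $s_{\ed}=k^{2}$, is where $O(k^{6})$ comes from.
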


Similar to some previous work \cite{duan2020connectivity,HLRS26}, our oracle supports \emph{separate} update and query phases: after a polynomial-time construction, in the update phase it processes a given failure set $F$ in extra $\Otilde(f^{5} \cdot n^{1/k})$ time, and then in the query phase it answers the approximate distance under $F$ between any given pair of vertices $x$ and $y$ in faster $\Otilde(f)$ time. 

Compared with the large-approximation oracle of \cite{duan2021approximate}, our oracle achieves significant improvements in nearly every aspect, while incurring only an additional factor of $n^{1/k}$ in the update time. Furthermore, when $k=\Omega(\log n)$, our oracle achieves $\poly\log(n)$ approximation, $\Otilde(f^{2}n)$ space, $\Otilde(f^{5})$ update time, and $\Otilde(f)$ query time. By the lower bounds for connectivity oracles, the space bound is within a factor of $f$ of the $\Omega(\min\{m,fn\})$ lower bound \cite{duan2020connectivity}, and the query time is nearly optimal \cite{henzinger2015unifying}. See \Cref{thm:OracleMain} for a detailed version of \Cref{thm:IntroOracle}.

\paragraph{Fault-Tolerant Labeling Schemes.} We also obtain new results for the harder model of \emph{fault-tolerant labeling schemes}, the distributed analogue of centralized oracles. Formally, a fault-tolerant labeling scheme assigns short \emph{labels} to vertices and edges such that, for any query specified by vertices $x,y$ and a failure set $F$, the desired graph property between $x$ and $y$ under failures, such as connectivity or distance, can be decoded solely from the labels of $x$, $y$, and the elements of $F$. In other words, unlike a centralized oracle, a labeling scheme has no random access to a global data structure and must instead evenly distribute information among vertices and edges. The main complexity measure is the (maximum) label size.

Since the introduction of this model in \cite{courcelle2007compact}, there has been a recent line of work on connectivity labeling schemes under edge failures \cite{DBLP:conf/podc/DoryP21,DBLP:conf/podc/IzumiEWM23,long2025connectivity} or vertex failures \cite{DBLP:conf/wdag/ParterP22a,parter2024connectivity,long2025connectivity}, as well as on distance labeling schemes under edge failures \cite{DBLP:conf/podc/DoryP21,HLRS26}. In particular, the labeling scheme in \cite{HLRS26} is deterministic and achieves approximation $O(k^{4})$ and label size $\Otilde(f^{4}n^{1/k})$.  Nevertheless, to the best of our knowledge, no nontrivial vertex-failure distance labeling scheme is known. A natural open problem, therefore, is to design such a scheme. Ideally, one would like to achieve tradeoffs analogous to those of the near-optimal non-fault-tolerant distance labeling scheme, which attains approximation $2k-1$ with label size $\Otilde(n^{1/k})$ \cite{thorup2005approximate}.

We present the first vertex-failure distance labeling schemes in \Cref{thm:IntroLabelScheme} below.

\begin{theorem}
\label{thm:IntroLabelScheme}
Given an undirected graph $G$ with non-negative, integral and polynomially bounded vertex and edge lengths, and given parameters $f\geq 1$ and $k\geq 1$, there is a randomized vertex-failure distance labeling scheme with approximation $O(k^{6})$ and label size $f^{3}\cdot n^{1/k}\cdot \log^{O(k)}n$. The labels can be computed in polynomial time.
\end{theorem}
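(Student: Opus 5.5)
The plan is to obtain \Cref{thm:IntroLabelScheme} by "distributing" the centralized oracle of \Cref{thm:IntroOracle}, in the same way that \cite{HLRS26} turned its edge-failure oracle into a labeling scheme. Concretely, we fix a Thorup--Zwick style hierarchy of scales $2^i$ (polynomially many lengths, so $O(\log n)$ scales). At each scale we build a length-constrained vertex-expander decomposition using the tools of \cite{haeupler2025length}: a moving cut removes a small number of vertices and leaves clusters that are length-constrained vertex expanders. We then sparsify the shortcut structure over each cluster with the new expander-based shortcut sparsification. The distance at scale $2^i$ between $x$ and $y$ under $F$ is certified in one of two ways. Either $x$ and $y$ lie in a common expander cluster that $F$ does not "damage" too much, in which case the expander routes around $F$ with $O(\poly(k))$ stretch. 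Or the path crosses the removed (sparse) vertex set, in which case we recurse on a hierarchy of $O(k)$ levels, using sampling at rate $n^{-1/k}$ as in Thorup--Zwick.

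For the labeling version, each vertex $v$ stores, for each scale and each hierarchy level, the following:
\begin{itemize}
\item its cluster identifiers and its $n^{1/k}$-sized "bunch" of nearby pivots/portals with their distances;
\item enough information to test whether $v$, as a failed vertex, is a cut vertex or a heavy vertex of each expander it belongs to.
\end{itemize}
Vertex failures are harder than edge failures because a single failed vertex can have high degree. We handle this by charging: a vertex of high degree inside a length-constrained expander is stored with a sketch of its neighborhood. We use the connectivity labeling techniques of \cite{parter2024connectivity,long2025connectivity}, which give size $\Otilde(f^2)$ for vertex-failure connectivity after our sparsification. These let us decide, from labels of $F$ alone, which portal/shortcut vertices survive and how the surviving shortcut graph is connected.

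The query algorithm proceeds as follows. From the labels of $x$ and $y$ it reads the candidate pivots and clusters at each scale. From the labels of $F$ it learns which candidates are affected and which replacement shortcuts remain. It then outputs the smallest scale at which a certified surviving route exists. Correctness (the upper bound) comes from every reported route being an actual path in $G\setminus F$. The approximation follows by combining expander routing stretch $\poly(k)$ with the $O(k)$ Thorup--Zwick levels, aiming for $O(k^{6})$ as in the oracle. The label size accounting is as follows. Each vertex participates in $\log^{O(k)} n$ nested structures, since the recursion over $k$ levels multiplies the $O(\log n)$ scale/decomposition overhead. In each structure it stores $n^{1/k}$ bunch entries, and each entry carries fault-tolerant connectivity information costing $O(f^{3})$: $f^2$ from the sparsified connectivity labels and one extra $f$ for handling which of up to $f$ failed vertices intersect each cluster's shortcut set.

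The main obstacle is the decentralization step. The centralized oracle's update phase spends $\Otilde(f^5 n^{1/k})$ time exploring global structures, for instance scanning clusters touched by $F$ and re-routing through neighbors of failed vertices. In the labeling model all of this must be reconstructible from $f+2$ labels. I would first try to make each failed vertex's label self-contained: it would record, per cluster containing it, a compressed description of the shortcuts through it. The shortcut sparsification then bounds the number of shortcuts per vertex by $O(f)$, so that the union over $F$ suffices to rebuild the relevant portion of the surviving shortcut graph. The $\log^{O(k)} n$ factor (instead of $\Otilde(1)$) is the price of having to store, rather than recompute, the recursion across levels.
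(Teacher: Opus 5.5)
There is a genuine gap. The proposal does not contain the mechanism that makes a labeling scheme possible, and the one concrete mechanism you offer for the decentralization step fails.

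The shortcut sparsification bounds the number of shortcuts \emph{owned by a cluster} (at most $\lambda_{\nb}^{2}$, \Cref{ob:ClusterOwnShortcut}). It does not bound the number of shortcuts incident to a vertex. A high-level terminal can lie in the surrounding sets $\Lambda(S)$ of arbitrarily many lower-level clusters $S$, and shortcuts coming from original edges inherit the unbounded degree of $G$ (no spanner is used in the labeling scheme).

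More fundamentally, the shortcuts a query needs are the \emph{surviving} ones. These are incident to non-failed terminals inside residual subtrees of affected clusters, and the labels of those vertices are never read. Storing each cluster's owned shortcuts at its vertices only lets you enumerate the invalid (affected) shortcuts so they can be subtracted. It does not recover the valid ones.

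Note also that LC-expansion does not let an expander ``route around $F$'' between arbitrary vertices of a cluster. It only keeps sets of at least $\rho_{\heavy}$ terminals close (\Cref{lemma:LCExpanding}), which is why light residual pieces must be handled point by point.

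Finally, the $O(k^{6})$ stretch and the $f^{3}$ are asserted rather than derived, and the Thorup--Zwick bunch/pivot skeleton is not what drives either bound. The stretch $O(s_{\nc}s_{\ed}s_{\tree}i_{\max})$ comes from decomposing the witness path into leaf segments through unaffected clusters and charging subtree and cluster node lengths to $O(i_{\max})\cdot\leng(P)$; this needs low-degree low-diameter Steiner trees. The $\log^{O(k)}n$ factor comes from the tree-structured cover hierarchy. Its nesting is also what guarantees that a leader set $B_{i'}(S)$ of $f+1$ vertices has a non-failed member that actually appears in the query graph.

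The missing idea, which replaces the oracle's global range-counting tables, is the upward/downward denseness machinery.
\begin{itemize}
\item Residual subtrees are accessed through canonical Euler-tour intervals whose labels are stored at their failed outer endpoints.
\item A point or residual subtree with fewer than $\rho_{\heavy}$ unaffected upward shortcut-neighbors at a given level lists them explicitly. It reads them from a stored certificate of size $\lambda_{\affS}\lambda_{\nb}+\rho_{\heavy}$, which is large enough to survive subtracting the affected shortcuts.
\item Otherwise it is linked to the ancestor cluster node, which is sound by \Cref{lemma:LCExpanding}.
\item The \emph{upper} endpoint of such a dropped shortcut must then also reach that cluster node. So one must decide, from the labels of $F\cup\{x,y\}$ alone, whether a point has an unaffected, non-failed shortcut to an upward-dense residual subtree or to an unaffected nearby-dense cluster (\Cref{def:DownwardDenseness}).
\end{itemize}
The last item is the hardest step. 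The paper handles it in \Cref{sect:SolveDownDense} by partitioning $H_{S}$ into $H_{S}^{\down}$, $H_{S}^{\up}$, $H_{S}^{\spar}$ and counting with subtractions and prefix sums.

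Invoking \cite{parter2024connectivity,long2025connectivity} does not fill this gap, for three reasons.
\begin{itemize}
\item Those works give $\Otilde(f^{3})$ (existential) or $\Otilde(f^{4})$ (deterministic) labels, not $\Otilde(f^{2})$; the $\Otilde(f^{2})$ bound is itself a by-product of this paper's machinery.
\item Connectivity information does not certify length-bounded replacement paths.
\item The counting approach of \cite{long2025connectivity} to downward denseness relies on low arboricity obtained from connectivity certificates, and this does not carry over to distances.
\end{itemize}
In the paper, the $f^{3}$ arises because each of the up to $\rho_{\heavy}$ points of a light residual subtree carries an $\Otilde(f^{2})$-size upward certificate as a secondary label.
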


In particular, when $k$ is a constant, our label size is $\Otilde(f^{3}n^{1/k})$. Although this improves on the $\Otilde(f^{4}n^{1/k})$ label size of the edge-failure distance labeling scheme in \cite{HLRS26}, we note that their scheme is deterministic, and may save a factor of $f$ if randomization is allowed. Moreover, when choosing $\omega(1)<k<o(\log n/\log\log n)$, the label size is $f^{3}n^{o(1)}$ with overhead only subpolynomial in $n$. See \Cref{thm:DistLabelMain} for a detailed version of \Cref{thm:IntroLabelScheme}.

Our approach also leads to a new result for vertex-failure \emph{connectivity} labeling schemes, as shown in \Cref{thm:IntroConnLabel}.

\begin{theorem}
\label{thm:IntroConnLabel}
Given an undirected graph $G$ and a parameter $f$, there is a deterministic vertex-failure connectivity labeling scheme with label size $\Otilde(f^{2})$. The labels can be computed in polynomial time.
\end{theorem}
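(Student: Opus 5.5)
The plan is to reuse the framework of the previous vertex-failure connectivity labeling schemes \cite{parter2024connectivity,long2025connectivity}, which already give $\Otilde(f^{3})$ existentially and $\Otilde(f^{4})$ deterministically, and to replace the component responsible for the extra factor $f$ with the expander-based shortcut sparsification developed in this paper. First, I reduce to low-degree instances. Fix a spanning forest and a hierarchy of vertex-expander decompositions, constructed deterministically via the length-constrained vertex expander tools of \cite{haeupler2025length} with the length constraint set to infinity. Each vertex has bounded participation: it lies in $O(\log n)$ expanders/levels. Inside a $\phi$-vertex-expander with $\phi=1/\poly\log n$, deleting $f$ vertices disconnects only $O(f/\phi)$ "small-side" vertices. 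As a result, connectivity questions inside an expander reduce to whether $x$ or $y$ is cut off into a small piece, and this can be certified from the labels of $F$ using $\Otilde(f)$ bits per vertex.

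Second, the remaining difficulty is high-degree vertices, and more generally how failed vertices cut the tree into $O(f\cdot\deg)$ subtrees that must be reconnected by non-tree edges. The earlier schemes store, for each vertex, sketches or explicit shortcut edges among the subtrees hanging off it, and pay $f$ times the number of relevant neighbors. I will construct a sparse set of shortcuts as follows. For each vertex $v$, consider its children subtrees and the graph $H_v$ whose nodes are those subtrees, with edges given by $G\setminus\{v\}$ connectivity witnesses. Sparsify $H_v$ into an $f$-vertex-fault-tolerant connectivity certificate of size $O(f\cdot|V(H_v)|)$, for example by a deterministic Nagamochi--Ibaraki style forest packing. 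Then use expander decomposition of the certificate so that each node needs to store only $\Otilde(f)$ adjacency records, plus $\Otilde(f)$ records about its expander cluster. Summed over the $O(\log n)$ heavy-path levels, each label holds $\Otilde(f)$ records of $\Otilde(f)$ words each, giving $\Otilde(f^{2})$.

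Third comes decoding. Given the labels of $x$, $y$ and $F$, I reconstruct the $O(f)$ affected tree pieces via Euler-tour intervals. For each failed vertex, its label lists the shortcut adjacencies among its child subtrees that survive $F$. I then run a union--find over the pieces using only shortcut edges whose endpoints are certified non-failed. Expander-cluster information lets me merge all pieces in a cluster that is not separated by $F$ without listing its edges. Finally I answer whether the pieces of $x$ and $y$ merge.

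I expect the main obstacle to be the shortcut sparsification. The sparsification has to be both deterministic and local, so that each vertex stores only $\Otilde(f)$ items per level. At the same time it must remain correct when up to $f$ of the endpoints or intermediate subtrees fail. My first attempt would be to require the shortcut graph on each cluster to be an $\Omega(1/\poly\log n)$ vertex expander with max degree $\Otilde(f)$, by adding expander-graph edges on top of the certificate. Robustness under $f$ deletions would then follow from the expansion bound, and the small-side pieces would be at most $O(f\poly\log n)$, so they can be enumerated explicitly from the failed vertices' labels.
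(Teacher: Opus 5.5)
Your outline never supplies the mechanism that yields $\Otilde(f^2)$, and the step that is supposed to carry the bound fails as stated.

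\textbf{The sparsification step.} You sparsify each $H_v$ by a Nagamochi--Ibaraki certificate and then assert that an expander decomposition of the certificate leaves every node with only $\Otilde(f)$ adjacency records. A certificate with $O(f\cdot|V(H_v)|)$ edges bounds only the \emph{average} degree. Individual nodes may keep $\Theta(|V(H_v)|)$ neighbors, and decomposing into expanders does not lower degrees. These high-degree (dense) points are the crux. Their incident shortcuts cannot be stored, so the decoder needs a replacement route. Moreover, the \emph{other} endpoint of every unstored shortcut must be able to discover, from its own short label, that it should be connected.

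The paper handles this in two ways. First, upward denseness: a residual subtree with at least $\rho_{\heavy}$ surviving upward shortcut-neighbors is linked to its ancestor's cluster node, which is justified by expansion. Second, and this is its main new ingredient, an expander-based test for downward denseness. This test splits $H_S$ into $H_S^{\spar}$, $H_S^{\down}$, $H_S^{\up}$ and uses stored candidate lists, counting, and Euler-tour prefix sums. The paper attributes a factor of $f$ in the earlier deterministic scheme to exactly the certificate-based local sparsification you propose, and its improvement comes from replacing it. So reusing that step gives no route to $f^2$.

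\textbf{Tree degree.} Building $H_v$ over the children subtrees of each vertex presumes a high-degree spanning tree. A failed vertex's label listing adjacencies among its children can then have size $\Theta(f\cdot\deg(v))$. The paper instead spans each cluster's terminals by low-degree Steiner trees obtained from vertex expansion, so $f$ failures leave only $\Otilde(f)$ residual subtrees.

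\textbf{Invalidated shortcuts.} A shortcut stands for a path, and a failure anywhere on that path invalidates it, not only a failure at its endpoints. Running union--find over ``shortcut edges whose endpoints are certified non-failed'' can therefore merge pieces that are disconnected in $G\setminus F$. The paper makes every shortcut \emph{owned} by a cluster. It uses a tree-structured cluster hierarchy and places shortcuts only between leader sets $B_{i'}(S)$. These equal $\Lambda_{i'}(S)$ when $S$ is nearby $A_{i'}$-sparse, and consist of $f+1$ representatives otherwise. Hence each cluster owns only $\Otilde(f^2)$ shortcuts, and the decoder can list and discard every shortcut owned by an affected cluster. This replaces the coloring step that cost earlier schemes an extra $f$.

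\textbf{Artificial expander edges.} Your fallback of adding expander-graph edges on top of the certificate is unsound. Expansion of the \emph{shortcut graph} says nothing about connectivity in $G\setminus F$, so such edges can report disconnected vertices as connected. In the paper, expansion enters only through virtual cluster nodes. These are attached to sets certified to contain at least $\rho_{\heavy}$ same-level terminals: heavy residual subtrees, or sets made dense via nearby, adjacent-cluster, upward or downward denseness. Vertex expansion of the level-$i$ terminals in the original graph then guarantees that any two such sets stay connected after deleting $f$ vertices.

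\textbf{Your first step.} Expansion covers only a cluster's level-$i$ terminals, and $x$--$y$ paths in $G\setminus F$ may leave the cluster through higher-level vertices. So connectivity does not reduce to checking whether $x$ or $y$ is cut into a small piece, and this is exactly why the cross-level shortcut structure is needed.
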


Our result improves the previous state-of-the-art existential bound of $\Otilde(f^{3})$ \cite{parter2024connectivity,long2025connectivity}, which requires exponential construction time, as well as the previous deterministic bound of $\Otilde(f^{4})$ \cite{long2025connectivity}.

We note that obtaining \Cref{thm:IntroConnLabel} from the approach behind \Cref{thm:IntroLabelScheme} requires some straightforward white-box adaptation. In particular, the resulting labeling scheme is deterministic, since the relevant tools for connectivity are deterministic. See \Cref{thm:ConnLabel} for a detailed version of \Cref{thm:IntroConnLabel}.

\subsection{Our Techniques}

We briefly introduce our techniques here. A more detailed overview together with further discussion is given in \Cref{sect:Overview}. Our distance oracle and labeling scheme build on recent developments of \emph{length-constrained vertex expanders} \cite{haeupler2025length}, and we also introduce new expander-based \emph{shortcut sparsification} techniques. 

\paragraph{Length-Constrained Vertex Expanders.} The theory of \emph{length-constrained (LC) expanders} \cite{DBLP:conf/stoc/HaeuplerR022} has found numerous applications across several areas, while the recently developed \emph{LC vertex expander} variant \cite{haeupler2025length} remains relatively unexplored. As one of the earliest applications of this variant, our work brings LC vertex expanders to the study of fault-tolerant algorithms. 

In particular, LC vertex expanders allow us to extend the \emph{low-degree hierarchy} of \cite{duan2020connectivity}, a widely used framework for vertex-failure connectivity problems \cite{long2022near,parter2024connectivity,long2025connectivity}, to the distance setting. At a high level, after constructing an LC-vertex-expander hierarchy, we show that the expanding graph at each level admits a sparse neighborhood cover together with low-diameter, low-degree Steiner trees spanning the terminals of each cluster. Moreover, these trees have small overlap, in the sense that each vertex belongs to only a small number of them.

We believe this structure is likely to find further applications to distance-related problems under vertex updates.

\paragraph{New Expander-Based Shortcut Sparsification.} Many previous works on vertex-fault-tolerant algorithms are based on the idea of adding shortcuts \cite{duan2010connectivity,duan2020connectivity,long2022near,kosinas2023connectivity,parter2024connectivity,long2025connectivity}, but differ in the sparsification techniques they employ. We provide new expander-based sparsification based on a simple but generic principle of \emph{propagating heaviness/denseness}, which allows us to exploit the expanding property more broadly in an inter-cluster manner. This general principle captures a previous expander-based sparsification from \cite{long2025connectivity}, and also enables us to refine their existing notions of denseness and introduce new ones. These refinements and new notions are in fact crucial, especially for the labeling scheme setting, as they are needed for handle distances, and also lead to simpler and improved connectivity labeling schemes. For example, they allow us to bypass the coloring step in \cite{parter2024connectivity,long2025connectivity} and avoid a local sparsification step in \cite{long2025connectivity} using connectivity certificates \cite{DBLP:journals/algorithmica/NagamochiI92}.

\section{Overview}
\label{sect:Overview}

In this section, we give an overview of our vertex-failure approximate distance oracles and labeling schemes. We begin with the oracle setting. As a warm-up, we first consider the case where the input graph is a \emph{length-constrained (LC) vertex expander}. We then describe a conceptually simple distance oracle for general graphs that already achieves $\poly(f)\cdot n^{1+1/k}\cdot \log^{O(k)}n$ space, but with slightly worse $\poly(k)$ approximation and undesirable $\log^{O(k)} n$ overheads. Next, we introduce our new \emph{shortcut sparsification} technique, which improves the approximation, removes these unnatural overheads, and still preserves $\poly(f)\cdot n^{1+1/k}$ space. This sparsification is also crucial for the design of our labeling schemes. Finally, we briefly explain how to obtain labeling schemes by combining it with further shortcut sparsification techniques.

\subsection{Warm-Up: The Expander Case}

We first assume the input graph $G=(V,E)$ is an LC vertex expander. In fact, when assuming the input graph is some kind of expander, all previous expander-based fault-tolerant connectivity or distance algorithms \cite{patrascu2007planning,long2025connectivity,HLRS26} share a similar starting point. Namely, they all exploit the robustness of expanders under edge or vertex deletions. We state this property in the following lemma informally, phrased in the setting of LC vertex expanders.

\begin{lemma}[LC-Expansion Property]
\label{lemma:OverviewLCEProperty}
Let $G$ be a weighted graph with terminals $A\subseteq V$. Suppose $A$ is $h_{\ed}$-length expanding in $G$. Then for any two terminal sets $R_{1},R_{2}\subseteq A$, if
\begin{itemize}
\item $|R_{1}|,|R_{2}|\geq \rho_{\heavy} := \Otilde(f\cdot n^{1/k})$, and
\item $\dist_{G}(u,v)\leq h_{\ed}$ for all pairs $(u,v)$ of $u\in R_{1}$ and $v\in R_{2}$,
\end{itemize}
then for any failed vertex set $F$ of size at most $f$, the distance between $u$ and $v$ under failures $F$ is at most $h_{\ed}\cdot k$, i.e., $\dist_{G\setminus F}(u,v)\leq h_{\ed}\cdot k$.
\end{lemma}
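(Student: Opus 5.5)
We argue by contradiction via the cut characterization of length-constrained vertex expansion from \cite{haeupler2025length}. Informally, $A$ being $h_{\ed}$-length $\phi$-expanding means: for every length-constrained vertex cut $C$ (vertex lengths raised by amounts $C(v)$ of total size $|C|$), and every pair of terminal sets that are $h_{\ed}$-close in $G$ but more than $h_{\ed}\cdot s$ apart in $G+C$ (with $s\approx k$ the length slack), the smaller set has at most $|C|/\phi$ terminals. We take $\phi \approx 1/n^{1/k}$ up to polylogarithmic factors, as in the hierarchy of \cite{haeupler2025length}. This gives the threshold $\rho_{\heavy}=\Otilde(f\cdot n^{1/k})$.

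First, turn the failure set $F$ into a length-constrained cut. Set $C(v)=h_{\ed}\cdot s$, which in normalized units is one unit, for each $v\in F$; then $|C|=|F|\le f$. A path in $G\setminus F$ avoids $F$, so distances between vertices outside $F$ in $G\setminus F$ dominate those in $G+C$. The relation we need goes the other way. Because raising the length of a vertex by $h_{\ed}k$ makes any path through it longer than the threshold, a path of length at most $h_{\ed}k$ in $G+C$ cannot use $F$. Hence $\dist_{G+C}(u,v)\le h_{\ed}k$ implies $\dist_{G\setminus F}(u,v)\le h_{\ed}k$.

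Next, suppose some $u\in R_1$ and $v\in R_2$ have $\dist_{G\setminus F}(u,v)>h_{\ed}k$. We need a separated pair of large terminal sets. Grow a ball around $u$ in $G+C$ of radius about $h_{\ed}k/2$. Either it contains at least half of $R_2$, or it misses at least half of $R_2$.

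\begin{itemize}
\item If the ball around $u$ contains half of $R_2$, apply the same argument to $R_1$ using balls around points of $R_2$.
\item If it misses half of $R_2$, those missed terminals form a large set far from $u$.
\end{itemize}

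This step needs a ball-growing or averaging argument, choosing radii from $k$ geometrically spaced scales so the cut does not need to be enlarged. It yields subsets $R_1'\subseteq R_1$ and $R_2'\subseteq R_2$ with $|R_1'|,|R_2'|\ge \rho_{\heavy}/2$ that are at distance more than $h_{\ed}k/O(1)$ in $G+C$. The pairwise $h_{\ed}$-closeness in $G$ is inherited from the hypothesis.

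The expansion guarantee then gives $\min(|R_1'|,|R_2'|)\le |C|/\phi\le f n^{1/k}\cdot\poly\log n$. This contradicts $\rho_{\heavy}$ once the constant and polylog in $\rho_{\heavy}$ are chosen larger. Therefore every pair satisfies $\dist_{G\setminus F}(u,v)\le h_{\ed}\cdot k$, with constants absorbed into the slack.

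The main obstacle is the separation step. Distances in $G+C$ between different pairs are not uniform, so a single far pair does not directly produce two large mutually far sets. A clean way to handle this is to invoke the demand-based definition: route a unit demand between all pairs of $R_1\times R_2$, which is $h_{\ed}$-length. Expansion gives a low-congestion routing with lengths at most $h_{\ed}s$, with vertex congestion below $\rho_{\heavy}/f$ per unit. Because $F$ has only $f$ vertices, at least one path from $u$ into $R_2$ and one path from $v$ into $R_1$ avoid $F$. These paths chain as $u \to v' \to u' \to v$, and the intermediate legs can also be chosen to avoid $F$ by counting. This gives a path of length $O(h_{\ed}s)=O(h_{\ed}k)$ in $G\setminus F$.
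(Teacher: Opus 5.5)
\textbf{There is a genuine gap.} You read the conclusion as a bound for \emph{every} pair $u\in R_1$, $v\in R_2$. In that form the statement is false, so neither of your arguments can close.

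Two counterexamples:
\begin{itemize}
\item Take $u\in R_1\cap F$.
\item Let $G$ be a bounded-degree expander with $A=V$ and $f$ at least the maximum degree. Pick $u\in R_1$ and let $F$ be the neighbourhood of $u$. Then $u$ is isolated in $G\setminus F$, no matter how well $A$ expands.
\end{itemize}

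The provable conclusion is the set-level bound $\dist_{G\setminus F}(R_1\setminus F,R_2\setminus F)\le h_{\ed}\cdot s$: \emph{some} surviving pair is close. This is exactly what the formal version \Cref{lemma:LCExpanding} states. It is also all the overview needs to justify the star on the subtree nodes, since one then walks inside a residual subtree at cost $h_{\tree}$.

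This misreading is where both of your steps fail.
\begin{itemize}
\item \emph{Ball growing.} You try to manufacture two large mutually separated sets from a single far pair. That is impossible in general. In the example above, $(u,v)$ is a far pair, while $R_1\setminus\{u\}$ and $R_2$ can remain entirely well connected.
\item \emph{Flow chaining.} The claim that some flow path from the specific vertex $u$ into $R_2$ avoids $F$ is unjustified. Any $A$-respecting demand gives $u$ at most $A(u)=1$ unit of flow in total. (Your unit demand on all of $R_1\times R_2$ is not $A$-respecting unless you scale it by $1/|R_2|$.) Meanwhile, $f$ failures may destroy up to $\gamma f$ units, which includes all flow through $u$ or its neighbours.
\end{itemize}

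The counting only works in aggregate, and this is the paper's proof. Route a matching demand between $\rho_{\heavy}$ vertices of $R_1$ and $\rho_{\heavy}$ vertices of $R_2$; it is $h_{\ed}$-length by the pairwise-closeness hypothesis. Its total value is $\rho_{\heavy}=\gamma f+1$, while at most $\gamma f$ units can touch $F$. So some flow path of length at most $h_{\ed}s$ survives in $G\setminus F$, and it connects $R_1\setminus F$ to $R_2\setminus F$. No chaining is needed.

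If you prefer your cut language, it works just as directly for the set-level statement:
\begin{itemize}
\item Set $C(w)=1$ for $w\in F$ and $C(w)=0$ otherwise, so $|C|\le f$.
\item If no surviving pair were within $h_{\ed}s$, then every pair of the matching would be $h_{\ed}s$-separated in $G-h_{\ed}s\cdot C$.
\item That gives sparsity at most $f/\rho_{\heavy}<\phi$, contradicting expansion.
\end{itemize}
The ball growing and the geometric radii are unnecessary once the conclusion is stated correctly.
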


Since we mainly exploit the above LC-expansion property, the precise definition of ``$A$ is $h_{\ed}$-length expanding in $G$'' is not important. Moreover, in this case, the terminal set $A$ is simply the whole set $V$. 

In addition, we need two more building blocks: \emph{sparse neighborhood covers}, augmented with associated \emph{low-diameter low-degree Steiner trees}.

Sparse neighborhood covers are commonly used in distance-related problems. Suppose we only want a non-fault-tolerant distance oracle, and only need to answer a decision query asking whether $\dist_{G}(x,y)>h$ or $\dist_{G}(x,y)\leq hk$. Then storing a sparse neighborhood cover with covering radius $h_{\cov} = h$, diameter $h_{\diam} = hk$ and width $\omega = \Otilde(n^{1/k})$ suffices. Recall that a neighborhood cover is a collection of vertex sets, called \emph{clusters}, such that every close (distance $\leq h_{\cov}$) vertex pair is covered by a cluster, every cluster has (strong) diameter $h_{\diam}$, and every vertex appears in at most $\omega$ clusters.

However, when dealing with vertex failures, we cannot just answer the query by checking whether there exists a cluster $S$ covering both $x$ and $y$, simply because $S$ can contain failed vertices (thus $S$ is \emph{affected}) and no longer certifies the $x$-$y$ distance. Therefore, we need additional structure to recover distances for affected clusters. This is where low-diameter low-degree Steiner trees come in.

\begin{lemma}
\label{lemma:OverviewSteinerTree}
Set $h_{\ed} = h_{\diam}$. When $A$ is $h_{\ed}$-length expanding in $G$, there exists a collection of Steiner trees $\{T_{S}\mid \text{clusters }S\}$ one-to-one corresponding to clusters such that $T_{S}$ spans all terminals $A\cap S$. Moreover, each $T_{S}$ has diameter $h_{\tree} = h_{\ed}\cdot k^{2}$ and maximum degree $\Delta = n^{O(1/k)}$, and each vertex appears in at most $\gamma = n^{O(1/k)}$ trees.
\end{lemma}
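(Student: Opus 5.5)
The plan is to build the trees cluster by cluster from the length-constrained expansion of $A$, and then use a global routing argument to control how often vertices are reused.

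\textbf{Step 1: routing inside a cluster.} Fix a cluster $S$ of the neighborhood cover. Every pair of terminals in $A\cap S$ lies within distance $h_{\diam}=h_{\ed}$ in $G$. By the definition of $h_{\ed}$-length expansion (the property behind \Cref{lemma:OverviewLCEProperty}), any demand between terminals that are pairwise $h_{\ed}$-close can be routed with the following guarantees: each path has length $O(h_{\ed}\cdot k)$, and vertex congestion is $n^{O(1/k)}$ relative to the terminal load. This is the standard flow characterization of LC vertex expanders from \cite{haeupler2025length}.

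\textbf{Step 2: building each tree.} Build $T_S$ by an expander-style recursive matching, in the spirit of the cut-matching game and the ``low-degree spanning tree in expanders'' constructions.
\begin{itemize}
\item Arrange the terminals of $A\cap S$ into a balanced hierarchy with branching factor $n^{1/k}$ and depth $O(k)$.
\item At each level, route a matching (or a star of fan-in $n^{1/k}$) between group representatives along paths of length $O(h_{\ed} k)$.
\item Take the union of these paths over $O(k)$ levels. This gives a connected Steiner subgraph of diameter $O(k)\cdot O(h_{\ed}k)=O(h_{\ed}k^2)=h_{\tree}$.
\item Extract a BFS/shortest-path tree from this union. Its diameter stays within $h_{\tree}$ up to a constant, and it spans $A\cap S$.
\end{itemize}

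\textbf{Step 3: degree and overlap.}
\begin{itemize}
\item The maximum degree of a vertex in $T_S$ is bounded by the number of routing paths through it. Per level this is the per-vertex congestion of the routing, $n^{O(1/k)}$. Summing over $O(k)$ levels and $n^{1/k}$ fan-in still gives $n^{O(1/k)}$.
\item For the overlap $\gamma$, route the demands of all clusters simultaneously. Each terminal belongs to at most $\omega=\Otilde(n^{1/k})$ clusters, so the combined demand on any terminal is at most $\omega$ times the unit load. Length-constrained expansion then gives one routing of this union demand with congestion $\omega\cdot n^{O(1/k)}$.
\item Hence each vertex lies on paths of at most $n^{O(1/k)}$ trees, since the polylog factors are absorbed for $k\le\log n$ (and otherwise are handled by the $\tilde O$ conventions).
\end{itemize}

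\textbf{Main obstacle.} The hard part is making Step 3 compatible with Step 2. The hierarchical matching must be chosen so that each vertex's degree inside a single tree is small, and this bound must hold simultaneously with the global congestion bound. A vertex with high congestion in the joint routing could become a high-degree hub in one tree.

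I would handle this as follows:
\begin{itemize}
\item Cap each level's flow so that it is integral, with vertex capacity $n^{O(1/k)}$ in the joint multicommodity routing.
\item Then decompose the flow into paths.
\item Degree in $T_S$ is bounded by the number of $S$-paths through a vertex, and this is at most the capacity.
\end{itemize}
This gives both $\Delta$ and $\gamma$ at once.
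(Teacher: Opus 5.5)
Your plan matches the paper's proof. The paper fixes, for each cluster, a virtual tree on $A\cap S$ with degree $n^{1/k}$ and diameter $O(k)$, and splits its edges into $O(n^{1/k})$ matchings. For each $j$, it routes the union of the $j$-th matchings over all clusters as one $\omega A$-respecting, $h_{\diam}$-length demand. It then takes a shortest-path tree in the union of each cluster's flow paths, and reads off both degree and overlap from the vertex congestion.

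The one step you leave vague is making the flow integral, and it is genuinely needed: a fractional flow does not bound the number of distinct paths through a vertex. The paper handles it with an integral expander-routing lemma; for an existence statement, randomized rounding of the fractional flow would also do.
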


Note that the idea of using a \emph{low-degree} Steiner tree as the \emph{backbone} of a cluster was first introduced by \cite{duan2020connectivity} on connectivity oracles. Later, \cite{long2022near} shows that a (non-LC) vertex expander always admits a low-degree spanning tree. We generalize this argument to the distance setting by showing that in an LC vertex expander, we can augment the neighborhood cover with the above low-diameter low-degree Steiner tree, at the cost of an extra factor of $k$ in the length slack (i.e., from $k$ in \Cref{lemma:OverviewLCEProperty} to $k^{2}$ in \Cref{lemma:OverviewSteinerTree}).

Now, with the Steiner tree $T_{S}$, observe that $f$ failed vertices can only break $T_{S}$ into at most $\Delta\cdot f$ \emph{residual subtrees}. However, using a vertex-failure connectivity oracle to reconnect these residual subtrees would yield a $\Delta \cdot \poly(k)$ approximation, since each subtree contributes an additive $h_{\tree}$ loss (this is the approach in \cite{chechik2012f} which attains $O(kf)$ approximation for edge failures). Fortunately, as observed in \cite{HLRS26}, combining the backbone tree with the LC-expansion property gives a more fine-grained argument as follows. 

Consider a cluster $S$ (or equivalently, the subgraph $G[S]$) that covers the $x$-$y$ shortest path after the failures. For each residual subtree of $S$, we call it \emph{heavy} if it contains more than $\rho_{\heavy}$ terminals; otherwise it is \emph{light}. Now for each heavy residual subtree $\tau$, we contract all terminals of $S$ (equivalently all vertices of $S$ since $A=V$) spanned by $\tau$ into a single node $\tau_{\nd}$ called a \emph{subtree node}. We assign node length $h_{\tree}$ to $\tau_{\nd}$, meaning that when we go through $\tau_{\nd}$, we are willing to pay an additive loss of $h_{\tree}$, the diameter of the subtree. However, notice that each pair of (terminal sets of) heavy subtrees satisfies the conditions in \Cref{lemma:OverviewLCEProperty}, so it is feasible to add a star on all subtree nodes, and set the node length of the center of the star (called a \emph{cluster node}) to be $h_{\ed}\cdot k$, meaning that we are using the LC-expansion property to jump from one heavy subtree to another. Finally, observe that the modified graph (called the \emph{query graph}) has roughly $f^{2}\cdot n^{O(1/k)}$ vertices, since there are not too many vertices on light residual subtrees. The distance between $x$ and $y$ in the query graph only loses $2h_{\tree} + h_{\ed}\cdot k$ additively, which implies an $O(k^{3})$ approximation.

\subsection{A Simple Oracle for General Graphs}

Although the underlying ideas are similar when the input graph is assumed to have some form of robustness, such as expansion or the existence of a low-degree spanning tree, previous work \cite{patrascu2007planning,duan2010connectivity,duan2020connectivity,long2022near,parter2024connectivity,long2025connectivity,HLRS26} has proceeded quite differently in extending these ideas to general graphs. The overall idea is again to construct some sort of small query graph $Q$ (explictly or implicitly), which captures the connectivity or approximate distance between $x$ and $y$ after failures. But the question is how to construct the query graph $Q$. 

In particular, the edge-failure distance oracle in \cite{HLRS26} builds on a \emph{nested LC-expander hierarchy}. We find, however, that nestedness is not useful in the vertex-failure setting, and in fact we can bypass it using a generic idea of adding shortcuts that appears in many vertex-failure connectivity algorithms \cite{duan2010connectivity,duan2020connectivity,long2022near,kosinas2023connectivity,parter2024connectivity,long2025connectivity}. The next challenge is then to sparsify the shortcut set, since the number of shortcuts directly affects the space of the oracle. There are many and different sparsification techniques in previous work. In this subsection, we will see that combining a simple sparsification in \cite{duan2020connectivity} with a \emph{tree-structured hierarchy of neighborhood covers} already gives a reasonably good vertex-failure distance oracle. However, if we want to replace the tree-structured hierarchy (which incurs the $\log^{O(k)}n$ overheads and worse approximation), or proceed to the labeling scheme setting, we need new sparsification techniques.

\paragraph{The LC-Vertex-Expander Hierarchy.} We start by defining the \emph{LC-vertex-expander hierarchy}. Since we remove the nestedness requirement, our hierarchy can be computed in a straightforward bottom-up manner, so it is conceptually simpler. It only guarantees \emph{weak expansion}, but this is sufficient for our purposes. We directly define the hierarchy without mentioning its construction.

The hierarchy is a partition $\{A_{0},A_{1},...,A_{i_{\max}}\}$ of $V$ such that $i_{\max} = O(k)$ and for each level $i\in[0,i_{\max}]$, $A_{i}$ is weakly all-length expanding w.r.t. $G_{i}:=G\setminus A_{>i}$ and $G$. By “weakly all-length expanding” we mean the following property \Cref{lemma:OverviewWeakExpanding}. The key difference is that now the second condition requires $u,v$ to be close in $G_{i}$, but only guarantees that $u,v$ are close in $G\setminus F$ (i.e. the input graph after failures). Moreover, now the property holds for any $h_{\ed}$. 

\begin{lemma}
\label{lemma:OverviewWeakExpanding}
Suppose $A_{i}$ is weakly all-length expanding w.r.t. $G_{i}$ and $G$. Then for any $h_{\ed}\geq 1$ and each $R_{1},R_{2}\subseteq A_{i}$, if
\begin{itemize}
\item $|R_{1}|,|R_{2}|\geq \rho_{\heavy} := \Otilde(f\cdot n^{1/k})$, and
\item $\dist_{G_{i}}(u,v)\leq h_{\ed}$ for all pairs $(u,v)$ of $u\in R_{1}$ and $v\in R_{2}$,
\end{itemize}
then for any failed vertex set $F$ of size at most $f$, $\dist_{G\setminus F}(u,v)\leq h_{\ed}\cdot k$.
\end{lemma}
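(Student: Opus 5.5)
This lemma is essentially the \emph{definition} of weak all-length expansion, specialized to the overview setting. The real content is showing that a suitable formal notion of length-constrained vertex expansion implies this robustness property. I would therefore fix a concrete definition and derive the lemma from it via a cut/flow duality argument. The definition I would adopt is the one from \cite{haeupler2025length}, with distances measured in $G_{i}$ for the demand but in $G$ for the routing. Concretely, $A_{i}$ is weakly $(h,s)$-length $\phi$-expanding w.r.t. $G_{i}$ and $G$ if every $h$-length demand on $A_{i}$ that is unit-bounded per terminal can be routed in $G$ with congestion (vertex congestion, counting each vertex's load) at most $\tilde O(1/\phi)$ and dilation $hs$. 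Here an $h$-length demand is one supported on pairs with $\dist_{G_{i}}\le h$. All-length means this holds for every $h$ simultaneously, with $s=O(k)$ and $\phi=n^{-O(1/k)}$ or $1/\poly\log$; these parameters are tuned so that $\rho_{\heavy}=\tilde O(f n^{1/k})$ works.

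Given such a definition, the proof proceeds in three steps. First, fix $h_{\ed}$, the sets $R_{1},R_{2}$, and $F$ with $|F|\le f$. Build the demand $D$ as a perfect matching-like demand between $R_{1}$ and $R_{2}$, say $|R_{1}|\cdot|R_{2}|$ pairs scaled by $1/\max(|R_1|,|R_2|)$, so each terminal has load at most $1$ and the total demand is at least $\min(|R_{1}|,|R_{2}|)\ge\rho_{\heavy}$. By the second hypothesis, $D$ is an $h_{\ed}$-length demand in $G_{i}$. Second, by expansion, route $D$ in $G$ with paths of length at most $h_{\ed}\cdot s$ and vertex congestion $\kappa=\tilde O(n^{O(1/k)})$. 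Each failed vertex carries at most $\kappa$ units of flow, so $F$ destroys at most $f\kappa$ units. Choosing $\rho_{\heavy}>f\kappa$ (this is where $\rho_{\heavy}=\tilde O(f n^{1/k})$ comes from) leaves a surviving flow path of length at most $h_{\ed}s$ in $G\setminus F$ between some $u'\in R_{1}$ and $v'\in R_{2}$.

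Third, and this is the main obstacle, the statement claims a bound for \emph{every} pair $(u,v)$, not just some pair. To handle this, I would not use a single demand. Instead, for a fixed $u\in R_{1}$ not in $F$, I would argue as follows. Either route a demand from $u$ itself with total weight large compared to $f\kappa$, which is impossible since $u$'s load is unit-bounded. Or, better, use the cut characterization: if $u$ and $v$ were at distance more than $h_{\ed}k$ in $G\setminus F$, then $F$ plus the length-$h_{\ed}k$ "moving cut" separates $u$'s $h_{\ed}$-ball side from $v$'s side. The heavy sets give large demand across this cut, since every $u'\in R_{1}$ near $u$ and every $v'\in R_{2}$ near $v$ lie on opposite sides, contradicting the sparsity bound $|F|<\phi\cdot\rho_{\heavy}$. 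This requires defining the cut so that all of $R_{1}$ (resp.\ $R_{2}$) falls on one side, e.g.\ by ball-growing around $u$ in $G\setminus F$ with radius chosen among $k$ scales. The $k$ factor in $h_{\ed}\cdot k$ comes exactly from this layering: among $k$ nested radii, one has boundary growth at most $n^{1/k}$, matching the $n^{1/k}$ in $\rho_{\heavy}$. Making this ball-growing argument compatible with vertex cuts and with distances measured in $G_{i}$ versus $G$ is the step I would expect to need the most care.
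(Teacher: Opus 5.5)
Your Steps 1 and 2 are exactly the paper's argument; its formal version is \Cref{lemma:LCExpanding}. You route an $A_i$-respecting, matching-type demand between $R_1$ and $R_2$ of total value at least $\rho_{\heavy}$. This demand is $h_{\ed}$-length in $G_i$, and it is routed in $G$ with dilation $h_{\ed}\cdot k$ and vertex congestion $\gamma=O(\log n/\phi)$. Since $\rho_{\heavy}=\gamma f+1$, some flow path avoids $F$ entirely, including its endpoints. That already proves what the lemma means. The conclusion is the set distance $\dist_{G\setminus F}(R_1\setminus F,R_2\setminus F)\le h_{\ed}\cdot k$, i.e.\ \emph{some} surviving pair $u'\in R_1$, $v'\in R_2$ is close. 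The $u,v$ in the overview's conclusion are informal shorthand, and \Cref{lemma:LCExpanding} states the result in exactly this set form.

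Your Step 3 should be dropped, because the ``every pair'' claim it targets is false, so no ball-growing argument can prove it. Trivially, $u\in F$ is allowed. More convincingly, vertex expanders may have terminals of constant degree. If $u\in R_1$ has at most $f$ neighbors, take $F$ to be its neighborhood. Then $u$ is isolated in $G\setminus F$ and $\dist_{G\setminus F}(u,v)=\infty$ for every $v\in R_2$, even though every $G_i$-distance hypothesis still holds.

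Your contradiction argument breaks exactly here. The ball around $u$ in $G\setminus F$ need not contain any other vertex of $R_1$, so the demand separated by your cut is at most $u$'s single unit. Such a cut is not sparse, and there is no contradiction.

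When the paper needs a guarantee about specific vertices, it does not get it from this lemma. It uses propagation instead: it shows separately that every vertex of $R_k\setminus F$ is close in $G\setminus F$ to the object of interest (requirement (c) in the proof of \Cref{lemma:GeneralSolLowerBound}). This makes the existential pair sufficient.
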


We note that the above definition of the expander hierarchy is very informal, and it is more technical in the main body. In particular, when working with LC vertex expanders, the cuts $C_{i}$ are fractional. We will use standard sampling techniques to turn it into an integral terminal set, and this is the only randomized part in our algorithm. Moreover, this also leads to multiple expanding graphs $G_{i,\alpha}$ at one level, one for each length scale.

\paragraph{The Tree-Structured Hierarchy of Neighborhood Covers.} Given the above expander hierarchy, a natural next step is to compute sparse neighborhood covers at different levels and for different length scales $\{h_{\alpha}:=2^{\alpha}\mid \alpha\in[1,\alpha_{\max}]\}$ where $\alpha_{\max} = O(\log n)$. However, this will incur some fundamental issues as we will discuss in the next subsection. Here, we proceed slightly differently.

For each level $i$ from $i_{\max}$ down to $0$, for each level-$(i+1)$ cluster $\hat{S}$ (with corresponding scale $\hat{\alpha}$) and each scale $\alpha\leq \hat{\alpha}$, construct a neighborhood cover ${\cal N}_{i,\alpha,\hat{S}}$ in the subgraph $G_{i}[\hat{S}]$ with covering radius $h_{\cov,\alpha} = h_{\alpha}$, strong diameter $h_{\diam,\alpha} = h_{\cov,\alpha}\cdot k^{2}$, and width $\omega = \Otilde(n^{1/k^{2}})$.  In particular, for the top level $i:=i_{\max}$, we choose $\hat{S}=V$ and $\hat{\alpha} = \alpha_{\max}$. Note that one vertex can appear in at most $n^{O(1/k)}\cdot \log^{O(k)}n$ clusters in total. 

For each cluster $S\in {\cal N}_{i,\alpha,\hat{S}}$, its terminals are $V_{\tmn}(S):=S\cap A_{i}$, and its level and length scale are $\level(S) := i$ and $\scale(S) :=\alpha$. 
Lastly, we also compute the low-diameter low-degree Steiner trees for clusters using the LC-expansion property. The tree $T_{S}$ for a cluster $S$ has diameter $h_{\tree,\alpha} = h_{\diam,\alpha}\cdot k^{2} = h_{\alpha}\cdot k^{4}$, where the $k^{2}$ factor is from \Cref{lemma:OverviewSteinerTree}.

A major benefit of this approach is that the resulting hierarchy forms a tree structure among clusters, which is well suited to our analysis, in particular when performing the \cite{duan2020connectivity}-sparsification, as we will see. For convenience, write $\hat{S}\anc S$ if $\hat{S}$ is a strict ancestor of $S$.

\paragraph{Witness Path Decomposition.} Fix a query $(x,y,F)$, and an $x$-$y$ shortest path $P$ in $G\setminus F$ called the witness path. We first define \emph{affected clusters}, which are clusters hit by any vertex in $\{x,y\}\cup F$. Intuitively, as discussed above, vertices in affected clusters will appear in the query graph (possibly as part of a contracted subtree node). We will talk about the query graph later, and let us first see where shortcuts are needed.

Assume $x,y\in A_{i_{\max}}$ for simplicity, and consider the following recursive decomposition of $P$ into \emph{leaf segments}. At the beginning, we have an \emph{initial segment} $\sigma$ which is the whole path $P$. For each segment $\sigma$, its length scale $\alpha_{\sigma}$ is such that $h_{\alpha_{\sigma}}\approx \leng(\sigma)$, and its \emph{interior} $\sigma^{\circ}$ is the subpath obtained by deleting two endpoints of $\sigma$. 

Starting from the initial segment $\sigma$, we assign to it level $i_{\sigma}:=i_{\max}$ and cluster $S_{\sigma}$ which is an arbitrary cluster in ${\cal N}_{i_{\max},\alpha_{\sigma},V}$ that covers the whole $\sigma$. Suppose $S_{\sigma}$ is affected; otherwise we are done. Next, we break $\sigma$ at the $A_{i_{\max}}$-vertices (which include $x$ and $y$ by our assumption) on it. This produces several child segments $\check{\sigma}$ whose endpoints lie in $A_{i_{\max}}$, while their interiors are disjoint from $A_{i_{\max}}$. For each $\check{\sigma}$, we assign to it level $i_{\check{\sigma}}:=i_{\max}-1$ and cluster $S_{\check{\sigma}}$ from ${\cal N}_{i_{\check{\sigma}},\alpha_{\check{\sigma}},S_{\sigma}}$ that covers its whole interior $\check{\sigma}^{\circ}$.

In general, every segment $\sigma$ other than the initial segment has interior $\sigma^{\circ}$ disjoint from $A_{> i_{\sigma}}$ and fully contained in its cluster $S_{\sigma}$. The endpoints of $\sigma$ are terminals of ancestor clusters of $S_{\sigma}$. If $S_{\sigma}$ is affected, we recursively decompose $\sigma$; otherwise, we declare it to be a leaf segment.

At the end, $P$ is decomposed into a collection of leaf segments $\sigma$. Let ${\cal S}_{\up}:=\{S\mid S\anc S_{\sigma}\text{ for some leaf segment $\sigma$}\}$ collect all strict ancestors of clusters of leaf segments. Observe that clusters in ${\cal S}_{\up}$ are all affected, and that each endpoint of a leaf segment are terminals in some $S\in{\cal S}_{\up}$. More importantly, ${\cal S}_{\up}$ are clusters of the non-leaf segments in the decomposition. Let $\cost({\cal S}_{\up})$ denote the sum of $h_{\alpha}$ for $\alpha:=\scale(S)$ over all clusters in ${\cal S}_{\up}$. Then $\cost({\cal S}_{\up})$ can be upper bounded by $O(i_{\max}\cdot \leng(P))$, which is important for our later analysis.

\paragraph{Adding Shortcuts.} Since each leaf segment $\sigma$ has its interior contained in the unaffected cluster $S_{\sigma}$, while each of its endpoints is a terminal of some affected cluster $\hat{S}\anc S_{\sigma}$. We naturally want to add shortcuts between ``neighbors'' of an unaffected cluster. 

Formally, for each cluster $S$, we create a copy $\vpt:=\point(v,S)$ for each $v\in V_{\tmn}(S)$, called \emph{points}. We say $\vpt$ has $\level(\vpt)$ and $\scale(\vpt)$ the same as those of its $\cluster(\vpt):=S$. Let $\core(S)$ denote the set of all points of $S$. Then the ``neighbors'' of an unaffected cluster $S$, denoted by $\Lambda(S)$ and called the \emph{surrounding set} of $S$, are those points of clusters $\hat{S}\anc S$ which are connected to $S$ by a short edge of length $\leq h_{\alpha}$. We simply add a clique on $\Lambda(S)$ to shortcut $S$, and each shortcut of this clique should have length $h_{\diam,\alpha} + 2h_{\alpha}$ for $\alpha = \scale(S)$, so that we do not underestimate the real distance.

Let $H$ be the \emph{shortcut graph} whose vertices are all the points and edges are all the shortcuts (we also transform original edges into shortcuts in a similar way). It is clear that the subgraph of $H$ induced by \emph{affected points}, denoted by $K:=\bigsqcup_{\text{affected $S$}}\core(S)$, can approximate $\dist_{G\setminus F}(x,y)$, as $H[K]$ preserves the shortcut version of $P$, denoted by $P_{H}$, up to a factor-$k$ approximation incurred by the neighborhood covers.

\paragraph{Exploit LC-Expansion Property.} Lastly, we construct the query graph $Q$ from the shortcut graph $H$. Recall that to further reduce the number of points, we have to contract points in the same heavy residual subtree $\tau$ into a subtree node $\tau_{\nd}$, and this subtree node has vertex length $h_{\tree,\alpha}$. If we only perform the contraction, the length of $P_{H}$ in $Q$ may become unbounded since it can pass through many subtree nodes. But recall that we have another step exploiting the LC-expansion property, which adds a star on subtree nodes belonging to the same affected cluster. Recall our discussion on leaf segments, which implies that $P_{H}$ will only pass through subtree nodes belonging to clusters in ${\cal S}_{\up}$. Using the stars, in fact we can simply modify $P_{H}$, to make sure that it pass through at most two subtree nodes of the same cluster in ${\cal S}_{\up}$. Therefore, we can upper bound the extra cost incurred by the subtree nodes by 
\[
k^{4}\cdot \cost({\cal S}_{\up}) = O(k^{4}\cdot i_{\max}\cdot \leng(P)) = O(k^{5})\cdot \leng(P),
\]
where $k^{4}$ is the ratio between $h_{\tree,\alpha}$ and $h_{\alpha}$.

This completes the definition of $Q$ and the analysis of one direction of the approximation guarantee. We do not discuss the construction of $Q$ or the underlying data structures here, as these are standard from \cite{duan2020connectivity}. The oracle size will be proportional to the size of the shortcut graph $H$, which is now roughly $n^{2+O(1/k)}\cdot \log^{O(k)}n$, since $H$ has at most $n^{1+O(1/k)}\cdot \log^{O(k)}n$ points.

\paragraph{Perform Shortcut Sparsification in \cite{duan2020connectivity}.} This sparsification simply replaces each clique on $\Lambda(S)$ (suppose $|\Lambda(S)|\geq f+1$) with a biclique between $\Lambda(S)$ and arbitrary $f+1$ points $B(S)\subseteq \Lambda(S)$. Note that there will be at least one non-failed point $w_{\ppt}$ in $B(S)$ under any failures $F$, since the points in $\Lambda(S)$ correspond to distinct vertices. Therefore, any old clique shortcut connecting $\upt,\vpt\in \Lambda(S)$ can be simulated by a two-hop path $\Ptilde_{H}:=(\upt,w_{\ppt})\circ (w_{\ppt},\vpt)$ using two new biclique shortcuts. We want to use the old shortcut $(\upt,\vpt)$ only when there is a corresponding leaf segment $\sigma$, whose $S_{\sigma} = S$. In this scenario, $w_{\ppt}$ belongs to an affected cluster and must appear in the query graph $Q$ (either individually or inside a subtree node).

When $w_{\ppt}$ belongs to a subtree node, passing through it needs to pay a large node length. This is precisely where the tree-structured hierarchy becomes useful.  Now, if $w_{\ppt}$ is contained in a subtree node $\tau_{\nd}$, then $\tau$ must belong to some cluster $\hat{S}\anc S$, which means $\hat{S}\in{\cal S}_{\up}$. Recall that our earlier argument shows that we can pass through subtree nodes of clusters in ${\cal S}_{\up}$ arbitrarily many times ``effectively for free''.

After the sparsification, the number of shortcuts is
\[
O(f\cdot \sum_{S}|\Lambda(S)|)\leq f\cdot m\cdot n^{O(1/k)}\cdot \log^{O(k)}n.
\]
To eliminate the dependence on $m$, we can replace the input graph with a vertex-fault-tolerant spanner with $m=O(f^{1-1/k}\cdot n^{1+1/k})$ in the very beginning, incurring only an extra factor of $k$ in the approximation. Finally, to obtain polynomial construction time, we in fact need to pay one additional factor of $k$ in the length slack of the LC-expansion property. Therefore, this simpler approach gives approximation $O(k^{6})$ for exponential-time construction and $O(k^{7})$ for polynomial-time construction.

\subsection{An Improved Oracle}

In this subsection, we discuss how to replace the tree-structured hierarchy of neighborhood covers with a natural collection of neighborhood covers built independently on each expanding graph $G_{i}$. This requires a new shortcut sparsification technique that exploits the LC-expansion property in an \emph{inter-cluster} manner, rather than solely within individual clusters. Furthermore, this new shortcut sparsification turns out to be crucial for labeling schemes.

First, we formally define the new neighborhood covers. For each level $i\in[0,i_{\max}]$ and each scale $\alpha\in[1,\alpha_{\max}]$, we construct a neighborhood cover ${\cal N}_{i,\alpha}$ with covering radius $h_{\cov,\alpha}=h_{\alpha}$, diameter $h_{\diam,\alpha} = h_{\alpha}\cdot k$, and width $\omega = \Otilde(n^{1/k})$. Naturally, we still define affected clusters to be those hit by $F\cup\{x,y\}$.

The earlier argument for decomposing the witness path $P$ still applies. Consequently, if we add a clique of shortcuts on each $\Lambda(S)$, we still obtain an oracle with $\Otilde(n^{2+1/k})$ space. The difficulty arises when one attempts to apply the shortcut sparsification technique of \cite{duan2020connectivity}. First, without the tree-structured hierarchy, when $S$ serves as $S_{\sigma}$ of a leaf segment, points in $\Lambda(S)$ may still belong to unaffected clusters. This means that choosing $f+1$ points as $B(S)$ may not hit any non-failed point that appears in the query graph. Even if we are lucky enough to hit such a point $w_{\ppt}$, there is no guarantee that $w_{\ppt}$ must belong to a cluster in ${\cal S}_{\up}$, and thus the case where $w_{\ppt}$ is inside a subtree node will be problematic (since now passing through $w_{\ppt}$ is no longer ``effectively for free'').

\paragraph{Propagate Heaviness to Denseness.}
To motivate our techniques, we first discuss a more flexible way of applying \Cref{lemma:OverviewWeakExpanding}. The standard application considers two sets $R_{1}$ and $R_{2}$ that are close in $G_{i}$ and are each \emph{$A_{i}$-heavy} (i.e. containing many $A_{i}$-vertices). We use the following extension: if two sets $Z_{1}$ and $Z_{2}$ are close in $G_{i}$ (but could be \emph{$A_{i}$-light}), and each $Z_{k}$ is close to $R_{k}$ in $G_{i}\setminus F$, then we can still conclude that $Z_{1}$ and $Z_{2}$ are close in $G\setminus F$. In other words, the heaviness of $R_{k}$ can propagate to $Z_{k}$ through ``$G_{i}\setminus F$-closeness''. To distinguish this extended setting from the standard one, we say that $Z_{k}$ is \emph{$A_{i}$-dense}. Moreover, we point out that $A_{i}$-heaviness/denseness can also be propagated using $G_{\check{i}}\setminus F$-closeness for any $\check{i}\leq i$, since $G_{\check{i}}\setminus F$-closeness implies $G_{i}\setminus F$-closeness.

\paragraph{Nearby Denseness of Clusters.} We use the above idea to define the \emph{nearby denseness of clusters}. Consider a cluster $S$ with level $i$ and scale $\alpha$. We use $\Lambda(S)$ to denote ``neighbor'' points and vertices simultaneously, depending on the context, i.e., it now includes all (points of) $A_{i'}$-vertices for all $i'>i$ that connect to $S$ via a short edge of length $\leq h_{\alpha}$. Let $\Lambda_{i'}(S) = \Lambda(S)\cap A_{i'}$. We say $S$ is nearby $A_{i'}$-dense if $|\Lambda_{i'}(S)|\geq \rho_{\heavy}$, meaning that when $S$ is unaffected, the whole cluster obtains $A_{i'}$-denseness from the $A_{i'}$-heavy set $\Lambda_{i'}(S)$.

\paragraph{Adjacent-Cluster Denseness of Points.} Next, we define \emph{adjacent-cluster denseness} for points. For a point $\vpt$ of a cluster $S$ with level $i$ and scale $\alpha$, we say $\vpt$ is adjacent-cluster dense if there exists an unaffected cluster $\check{S}$ with level $\check{i}<i$ and scale $\check{\alpha}\leq \alpha$ such that $\vpt\in\Lambda(\check{S})$ and $\check{S}$ is nearby $A_{i}$-dense. 

\paragraph{Shortcut Sparsification.} Consider a cluster $S$ at level $i$ and scale $\alpha$. There are three cases.

\medskip

\noindent{\underline{Case 1.}} First, for two levels $i_{1},i_{2}>i$, suppose $S$ is both nearby $A_{i_{1}}$- and $A_{i_{2}}$-sparse. We simply add a biclique between points in $\Lambda_{i_{1}}(S)$ and $\Lambda_{i_{2}}(S)$. This introduces only a small number of shortcuts.

\medskip

\noindent{\underline{Case 2.}} For two levels $i_{1}\geq i_{2}>i$, suppose $S$ is $A_{i_{1}}$-dense. In this case, we need one more observation: if a shortcut $(\upt,\vpt)$ with $\upt\in \Lambda_{i_{1}}(S)$ and $\vpt \in \Lambda_{i_{2}}(S)$ is used by some leaf segment, let $S_{1}$ and $S_{2}$ be the clusters of $\upt$ and $\vpt$, then we must have $\alpha_{1}:=\scale(S_{1})\geq \alpha_{2}:=\scale(S_{2})$. Namely, the larger-smaller relations of $(i_{1},i_{2})$ and $(\alpha_{1},\alpha_{2})$ are consistent. Furthermore, $S_{1}$ must intersect $S_{2}$.

Therefore, from the $\vpt$ side, when $\vpt$ is adjacent-cluster $A_{i_{1}}$-dense, we add an \emph{expansion-based link} from $\vpt$ to the cluster node of $S'_{1}$ (i.e. the center of the $S_{1}$-star) for all $S'_{1}$ at level $i_{1}$ intersecting $S_{2}$. By doing so, one can observe that, when we want to use the missing shortcut $(\upt,\vpt)$, there is a replacement path $(\vpt,S_{1,\nd})\circ(S_{1,\nd},\upt)$, and importantly $S_{1}\in {\cal S}_{\up}$ (and thus we can use $S_{1,\nd}$ freely).

\medskip

\noindent{\underline{Case 3.}} For two levels $i_{1}\geq i_{2}>i$, suppose $S$ is $A_{i_{1}}$-sparse but $A_{i_{2}}$-dense. In fact, this third case is more subtle because a low-level point can only obtain denseness at a high level, and not conversely. Therefore, we need new ideas involving \emph{satellite neighborhood covers} and \emph{satellite clusters}, which we will not explain here.

\medskip

Overall, since we only add bicliques between sparse levels of the surrounding set $\Lambda(S)$, now each cluster creates only $\Otilde(\rho_{\heavy}^{2}) = \Otilde(f^{2})\cdot n^{O(1/k)}$ shortcuts. The total number of shortcuts is then $\Otilde(f^{2})\cdot n^{1+O(1/k)}$, which is also the space complexity of our oracle. We note that the data structures used to add expansion-based links are separate from those used for shortcuts, and the former require even less space. 

For approximation, this approach removes one $k$ factor from the previous tree-structured hierarchy of neighborhood covers. Unexpectedly, however, there remain $\Otilde(m)\cdot n^{O(1/k)}$ “trivial” shortcuts from original edges, so the initial sparsification via a vertex-fault-tolerant spanner is still needed. Thus, this factor of $k$ cannot yet be avoided. We believe that our further sparsification techniques for labeling schemes may remove it, since our labeling schemes do not use vertex-fault-tolerant spanners.

In summary, the final approximation is $O(k^{5})$ for exponential-time construction and $O(k^{6})$ for polynomial-time construction. We point out that, for exponential-time construction, the remaining $k^{4}$ factor seems inherent to our approach. It comes from four different sources: sparse neighborhood covers, LC vertex-expander decompositions, low-degree low-diameter Steiner trees, and the number of levels in the hierarchy. Removing any of them would be an interesting open problem.

\subsection{The Labeling Schemes}
\label{sect:OverviewLabelingSchemes}

Lastly, we briefly discuss our approach to vertex-failure labeling schemes. Recall that a labeling scheme should distribute the data structure \emph{evenly} across vertices. In our oracle, the main source of unevenness lies in the data structures used to store shortcuts.

Our new sparsification attains some kind of evenness, in the sense that each cluster creates only about $f^{2}$ shortcuts. Therefore, it is affordable to store at each vertex the shortcuts created by clusters containing it, and in the query phase, we can obtain the full list of shortcuts created by affected clusters from the labels of $F\cup\{x,y\}$. This is particularly useful since we need to remove these invalid shortcuts in the query phase. Previous connectivity labeling schemes \cite{parter2024connectivity,long2025connectivity} achieved a similar goal using a different coloring technique, but at the cost of an additional factor of $f$ in the label size.

Therefore, to reduce the number of shortcuts incident to each vertex, we introduce the notions of \emph{upward denseness} and \emph{downward denseness}. While both notions are implicit in \cite{long2025connectivity}, we give a new definition of downward denseness here, together with a new method for detecting it. In both works, checking downward denseness is the most technically challenging part. As we explain below, the approach of \cite{long2025connectivity} does not extend to the distance setting. Our new approach is entirely expander-based and does not rely on connectivity certificates or fault-tolerant spanners. Thus it is simpler, more flexible, and even saves a factor of $f$ in the connectivity setting.

\paragraph{Upward Denseness of Points.} For technical reasons, in the labeling scheme we go back to the tree-structured hierarchy of neighborhood covers, which is easier to deal with. The overall idea is, if a point $\vpt$ at a low level $i$ connects to many points at a higher level $i'\geq i$ via unaffected shortcuts (i.e. shortcuts created by unaffected clusters or original edges). Then $\vpt$ will obtain $A_{i'}$-denseness from this \emph{upward neighbors} set. Naturally, if $\vpt$ is upward $A_{i'}$-sparse, then its upward shortcuts are stored in its label. If, on the other hand, $\vpt$ is upward $A_{i'}$-dense, then we add an expansion-based link from $\vpt$ to the cluster node corresponding to its level-$i'$ ancestor cluster. Again, the idea is to simulate the unstored shortcuts using a two-hop path of expansion-based links.

\paragraph{Downward Denseness of Points.} However, the above argument has an issue. Consider an unaffected shortcut $(\upt,\vpt)$ where $\upt$ is the higher endpoint at level $i'$ and $\vpt$ is the lower endpoint at level $i$. Let $S'$ and $S$ be the clusters of $\upt$ and $\vpt$ respectively. If this shortcut is missed because $\vpt$ is upward $A_{i'}$-dense, we have connected $\vpt$ to $S'_{\nd}$. But then how do we identify $\upt$, so that we can also connect $\upt$ to $S'_{\nd}$? Here comes the downward denseness. Very informally, we say $\upt$ is downward $A_{i'}$-dense if $\upt$ is connected to some upward $A_{i'}$-dense lower-level points via unaffected shortcuts. Now, if any such shortcut $(\upt,\vpt)$ is missed, it means $\upt$ is downward $A_{i'}$-dense, and we will connect $\upt$ to $S'_{\nd}$.

However, checking downward denseness is the most challenging part. \cite{long2025connectivity} uses a counting strategy, which requires some sort of low-arboricity property of the original shortcuts before the second-stage sparsification using upward and downward denseness. To obtain the low-arboricity property, they perform local sparsification by invoking a connectivity certificate \cite{DBLP:journals/algorithmica/NagamochiI92}. However, this approach does not work in the distance setting if we simply replace the connectivity certificate with a vertex-fault-tolerant spanner. Roughly speaking, the reason is that this sparsification simulates some shortcut with an unknown path in the spanner, and it will be problematic if this path goes into a cluster not in ${\cal S}_{\up}$. We are not going to explain our approach here, and we refer to \Cref{sect:SolveDownDense} for more details.

\medskip

In summary, the deterministic vertex-failure labeling scheme in \cite{long2025connectivity} has label size $\Otilde(f^{4})$. Our sparsification based on adjacent-cluster denseness removes an $f$ factor. In the distance setting, we use a new definition and approach for downward denseness, which leads to the final bound of $f^{3}\cdot n^{O(1/k)}\cdot\log^{O(k)}n$. In the connectivity setting, the new approach can even shave one more $f$ factor, achieving a deterministic bound of $\Otilde(f^{2})$.

\section{Preliminaries}

Throughout the paper, we use the standard graph theoretic notation. All graphs in this paper are undirected. For any graph, we use $V(\cdot)$ and $E(\cdot)$ to denote its vertex set and edge set, respectively. In particular, when $G$ is the input graph, we write $V$ and $E$ for its vertex and edge sets, and let $n:=|V|$ and $m:=|E|$. 

Given a graph $G=(V(G),E(G))$, for any $S\subseteq V(G)$, we let $G[S]$ denote the subgraph of $G$ induced by vertices in $S$. Also, for any $S\subseteq V(G)$, we use $G\setminus S$ to denote the graph after removing vertices in $S$ and edges incident to them. Similarly, for any $F\subseteq E(G)$, $G\setminus F$ denotes the graph after removing edges in $F$. 

\paragraph{Lengths.} A graph $G$ in this paper is generally weighted, meaning that it has both vertex lengths and edge lengths. We use $\ell_G: V(G)\cup E(G)\to \mathbb{R}_{\ge 0}$ to denote its length function on vertices and edges. When $G$ is clear from the context, we usually omit the subscript $G$. 

We require that the input graph has non-negative and integral lengths which are upper bounded by $W=\poly(n)$. In fact, the assumption $W=\poly(n)$ is essentially without loss of generality. Indeed, by paying an additional $\log(W)$ factor in the space, update time, and query time, standard scaling techniques allow us to reduce to the case $W=O(n^{3})$, since we allow a constant approximation. 

The following \Cref{assume:input} about the input graph is also without loss of generality.
\begin{assumption}
\label{assume:input}
we assume the following about vertex and edge lengths on the input graph $G$ without loss of generality.
\begin{itemize}
\item $G$ has only edge lengths, meaning that every vertex in $G$ has length $0$.
\item The edge lengths in $G$ are all positive integers.
\item There are at most $O(\log n)$ distinct edge lengths in $G$.
\end{itemize}
\end{assumption}

\paragraph{Distances and Diameters.} Let $G$ be a graph with length function $\ell$. This quantity includes the lengths of both endpoints, and when $P$ is not simple, counts each occurrence of a vertex or edge on $P$ with multiplicity. 

For each pair of vertices $u,v$ in $V(G)$, the \emph{distance} between them, denoted by $\dist_{G}(u,v)$, is the length of the shortest $u$-$v$ path. Note that $\dist_{G}(u,u) = \ell(u)$. More generally, for two vertex sets $S,T\subseteq V(G)$, $\dist_{G}(S,T) = \min_{u\in S,v\in T}\dist_{G}(s,t)$. The \emph{weak diameter} between $S$ and $T$, denoted by $\diam_{G}(S,T)$, is $\max_{u\in S,v\in T}\dist_{G}(s,t)$. For simplicity, we write $\diam_{G}(S)$ for $\diam_{G}(S)$. The \emph{strong diameter} of $S$ is $\diam_{G[S]}(S)$, i.e. the diameter of the induced subgraph.

\subsection{Length-Constrained Objects} 

Throughout this section, we consider a graph $G$ with length function $\ell$.

\paragraph{Moving Cuts.} A \emph{(vertex) moving cut} $C:E(G)\cup V(G)\to [0,1]$ assigns to each edge and vertex a non-negative real number between $0$ and $1$. A moving cut $C$ is \emph{$h$-length} if each value is further a multiple of $\frac{1}{h}$.
The size of $C$ is defined as $|C| := \sum_e C(e) +\sum_{v} C(v)$. 

Given a length parameter $h\geq 1$, we denote by $G - h\cdot C$ the graph $G$ with lengths $l_G + h\cdot C$, where $h\cdot C$ is $C$ scaled up by $h$. This notation can be used even when $C$ is not $h$-length. 

\begin{remark}
In literature involving length-constrained expanders, for an $h$-length moving cut, usually $G - C$ is used to denote the graph $G$ with lengths $\ell_{G} + h\cdot C$. We use a slightly different notation since we will use it even when $C$ is not $h$-length.
\end{remark}

\paragraph{Node Weightings.} A node weighting $A:V(G)\to \mathbb{R}_{\geq 0}$ assigns each vertex a non-negative real number. The size of $A$ is $|A| = \sum_{v}A(v)$. A node weighting is \emph{pure} if $A(v)$ at each vertex $v$ is either $0$ or $1$. Equivalently, a pure node weighting $A$ represents a vertex set $\{v\mid A(v) = 1\}$, so when $A$ is pure, we will use $A$ to denote this set instead. 

\paragraph{Demands.} A \textit{demand} $D : V \times V \rightarrow \mathbb{R}_{\geq 0}$ assigns a non-negative real number to each ordered pair of vertices $u, v \in V$. The size $|D|$ of a demand is defined as $|D| := \sum_{u, v \in V} D(u, v)$. A demand $D$ is called \textit{$A$-respecting} for a node weighting $A$ if for each vertex $u$, $\sum_{v}D(u,v) \leq A(u)$. A demand $D$ is called \textit{$h$-length-constrained} (or $h$-length for short) if it assigns positive demand values only to pairs of vertices that are within distance at most $h$, i.e. for all $(u, v) \in \supp(D)$, $\dist_G(u, v) \leq h$.

\subsubsection{Length-Constrained Vertex Expanders}

\paragraph{$h$-Length Separation.} Let $C$ be an $h$-length moving cut. We say a pair of vertices $u, v \in V$ are \emph{$h$-length separated} by $C$ if their distance in $G - h\cdot C$ is strictly larger than $h$, i.e. $\dist_{G - h\cdot C}(u, v) > h$. For an $h$-length demand $D$, the \emph{$h$-length separated demand value} $\sep_h(C, D)$ of $C$ and $D$ is the total demand value between pairs of vertices the cut $h$-separates, i.e.
\begin{equation*}
    \sep_h(C, D) := \sum_{\substack{u, v \in V\\\dist_{G - h\cdot C}(u, v) > h}} D(u, v).
\end{equation*}

\paragraph{$h$-Length Sparsity.} The \textit{$h$-length sparsity} of an $h$-length moving cut $C$ with respect to a \emph{demand} $D$ is the ratio of $C$'s size to the $h$-length separated demand value of $C$ and $D$, i.e.
\begin{equation*}
    \spars_h(C, D) := |C| / \sep_h(C, D).
\end{equation*}
For a \textit{length slack} $s \geq 1$, the \textit{$(h, s)$-length sparsity} of a $hs$-length moving cut $C$ with respect to a node weighting $A$ is defined as the \emph{minimum} $(hs)$-length sparsity of $C$ with respect to any \emph{$h$}-length $A$-respecting demand $D$, i.e.
\begin{equation*}
    \spars_{h, s}(C, A) := \min_{\text{$A$-respecting, $h$-length $D$}} \spars_{h s}(C, D).
\end{equation*}
We say a moving cut $C$ is $(h, s)$-length $\phi$-sparse with respect to $A$ if $\spars_{h, s}(C, A) \leq \phi$.

\paragraph{Length-Constrained Expansion.} For a length $h$, a length slack $s \geq 1$ and a sparsity $\phi$, a graph $G$ is $(h, s)$-length $\phi$-expanding for a node weighting $A$ if there exists no $(hs)$-length cut $C$ that has $(h, s)$-length sparsity with respect to $A$ strictly less than $\phi$, i.e. the following is satisfied:
\begin{equation*}
    \phi \leq \min_{\text{$(hs)$-length cut } C} \spars_{h, s}(C, A).
\end{equation*}
In this case, we equivalently say $A$ is $(h, s)$-length $\phi$-vertex-expanding in $G$, or $G$ is an $(h,s)$-length $\phi$-vertex expander w.r.t. $A$.

\paragraph{Length-Constrained Flows.} A \textit{multicommodity flow} $F$ in $G$ is a function that assigns a \textit{flow value} $F(P) \geq 0$ to each simple path $P$ in $G$. A path $P$ is a \textit{flow path} of $F$ if $F(P) > 0$, i.e. $P \in \supp(F)$. The \textit{value} $|F|$ of the flow is the total flow across all paths, i.e. $|F| = \sum_P F(P)$.

The \textit{vertex congestion} $\mathrm{cong}_F(e)$ of a vertex is the ratio of total flow of paths using $v$, i.e. $\mathrm{cong}_F(e) := \sum_{v \in P} F(P)$, and the vertex congestion of the flow is the maximum congestion of any vertex. The \textit{length} of the flow is the maximum length $\leng_G(P)$ of a flow path $P \in \supp(F)$ of $F$. A flow may be referred to as congestion-$\eta$ and length-$h$ even if it has congestion \emph{at most} $\eta$ and length \emph{at most} $h$.

The \textit{demand routed by the flow} $D_F$ is the demand where $D_F(u, v) = \sum_{P \text{ a $(u, v)$-path}} F(P)$, i.e. the $(u, v)$-demand is the total flow from $u$ to $v$. A demand $D$ is said to be \textit{routable} with congestion $\eta$ and length $h$ if there exists a congestion-$\eta$, length-$h$ flow $F$ routing $D$ (i.e. $D = D_F$).

\paragraph{Routing Characterization of Length-Constrained Vertex Expanders.}

\begin{theorem}[Theorem 4.2 in \cite{haeupler2025length}]\label{thm:lce-routing}
    For any graph $G$, node weighting $A$, length $h \geq 1$, length slack $s \geq 1$ and sparsity $\phi < 1$, the following hold:
    \begin{itemize}
        \item If $A$ is $(h, s)$-length $\phi$-vertex-expanding in $G$, then any $h$-length $A$-respecting demand on $G$ can be routed with vertex congestion $O(\log(n) / \phi)$ and length $hs$.
        \item If $A$ is not $(h, s)$-length $\phi$-vertex-expanding in $G$, then there exists a $h$-length $A$-respecting demand on $G$ that cannot be routed with congestion $1 / (6\phi)$ and length $(hs) / 2$.
    \end{itemize}
\end{theorem}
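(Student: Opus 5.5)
For the first bullet I would use LP duality between length-constrained multicommodity routing and vertex length functions, followed by a length-constrained ball-growing rounding. For the second bullet I would use a direct averaging argument over flow paths that are cut.

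\emph{First bullet.} Fix an $h$-length $A$-respecting demand $D$. Write the LP that routes $D$ along paths of $G$-length at most $hs$ while minimizing the maximum vertex congestion. Its dual yields a weighting $w:V(G)\to\mathbb{R}_{\ge 0}$ with $\sum_v w(v)=1$. Let $d^{hs}_w(u,v)$ be the minimum $w$-weight of a $u$-$v$ path of $\ell$-length at most $hs$. Duality then says: if $D$ is not routable with congestion $\eta$ and length $hs$, then
\[
\textstyle\sum_{u,v} D(u,v)\, d^{hs}_w(u,v) > \eta .
\]
The task is to round $w$ into an $(hs)$-length moving cut $C$ with $|C| \lesssim \log n \cdot \sum_v w(v)$ that $hs$-separates a constant fraction of the $D$-weighted mass. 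I would do this with a length-constrained ball-growing/region-growing step in the combined metric ($\ell$ together with $w$): pick random radii, cut the vertices on the ball boundaries fractionally, and round to multiples of $1/(hs)$. The standard analysis loses a factor $O(\log n)$ because the ball volumes grow geometrically. This gives $\spars_{h,s}(C,A) \le O(\log n)/\eta$. Choosing $\eta = \Theta(\log n/\phi)$ makes this strictly below $\phi$, which contradicts expansion. I expect this rounding to be the main obstacle. It must respect the $h$ versus $hs$ distinction: the demand pairs are within distance $h$, while separation is measured at $hs$. It must also treat vertices as the cut objects. I would follow the edge-version rounding of Haeupler--R\"acke and charge to vertices instead.

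\emph{Second bullet.} Suppose $A$ is not expanding. Then there is an $(hs)$-length cut $C$ and an $h$-length $A$-respecting demand $D$ with $|C| < \phi\cdot \sep_{hs}(C,D)$. Let $D'$ be the restriction of $D$ to the pairs that $C$ separates, so $|D'| = \sep_{hs}(C,D)$; $D'$ is still $h$-length and $A$-respecting. Take any flow routing $D'$ with length $hs/2$. Each flow path $P$ has $\leng_G(P)\le hs/2$, but its length in $G-hs\cdot C$ exceeds $hs$. Hence $hs\cdot C(P) > hs/2$, i.e.\ $P$ carries cut mass $C(P) > 1/2$ on its vertices and edges.

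Summing over paths gives
\[
|D'|/2 < \textstyle\sum_P F(P)\, C(P) .
\]
The right-hand side is at most $3\cdot\mathrm{cong}\cdot |C|$: vertex terms contribute $\mathrm{cong}\cdot|C|$ directly, and each edge's cut value is charged to its endpoints, which loses only a constant. Combining, $\mathrm{cong} > |D'|/(6|C|) > 1/(6\phi)$. So $D'$ cannot be routed with congestion $1/(6\phi)$ and length $(hs)/2$, which proves the second bullet.
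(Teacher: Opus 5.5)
Your LP-duality setup for the first bullet is right, and your second bullet is correct; it is the standard averaging argument. The factor $3$ is not even needed there. The flow through an edge is at most the vertex congestion at either endpoint, so $\sum_P F(P)\,C(P)\le \mathrm{cong}\cdot|C|$, which gives $1/(2\phi)$.

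The gap is in the first bullet, at exactly the step you flag as the main obstacle. You never produce the sparse moving cut from the dual weighting $w$, and the tool you propose does not fit the problem.

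The length-bounded distance $d^{hs}_w$ is not a metric: concatenating two $hs$-length paths only bounds $d^{2hs}_w$. So metric ball growing does not apply as stated. More importantly, ball growing exists to produce \emph{consistent} cuts (partitions). Moving cuts never need consistency, because $hs$-separation is decided pair by pair. If you used it as a multicut-type rounding, it would also pay $O(\log n)$ on top of the threshold choice below, so it would not obviously give $O(\log n/\phi)$. Your diagnosis that the logarithm comes from geometric growth of ball volumes is therefore misplaced.

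The missing idea is that $w$ is already, after scaling, a moving cut. The steps are:
\begin{enumerate}
\item For $\delta\in(0,1]$, let $C_\delta$ put value $\min\{1,3w(x)/\delta\}$ on each vertex $x$, rounded down to a multiple of $1/(hs)$. With positive integral edge lengths, a path of length at most $hs$ has at most $hs+1$ vertices, so rounding costs less than $2$ in $C_\delta(P)$.
\item Hence every path of length at most $hs$ between a pair with $d^{hs}_w(u,v)\ge\delta$ gets $C_\delta(P)\ge 1$, so its length in $G-hs\cdot C_\delta$ exceeds $hs$. This gives $|C_\delta|\le 3/\delta$ and $\sep_{hs}(C_\delta,D)\ge D_{\ge\delta}:=\sum_{d^{hs}_w(u,v)\ge\delta}D(u,v)$.
\item The only loss is in choosing $\delta$. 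Because $D$ is $h$-length, $d^{hs}_w\le\sum_x w(x)=1$ on its support. Pairs with $d^{hs}_w<1/n$ contribute less than $|D|/n$ to $\sum D(u,v)\,d^{hs}_w(u,v)$.
\item If $|D|/n\ge\eta/2$, the cut with value $1$ on every vertex separates all of $D$ and has sparsity $n/|D|\le 2/\eta$.
\item Otherwise, dyadic bucketing over the $O(\log n)$ scales in $[1/n,1]$ yields some $\delta$ with $\delta\,D_{\ge\delta}=\Omega(\eta/\log n)$. That is an $(hs)$-length cut of sparsity $O(\log n)/\eta$ with respect to the $h$-length $A$-respecting demand $D$.
\item Taking $\eta=c\log n/\phi$ with $c$ large enough contradicts expansion.
\end{enumerate}
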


\paragraph{Length-Constrained (LC) Vertex-Expander Decompositions.}

\begin{definition}[LC Vertex-Expander Decomposition]
\label{def:StrongLCED}
Given a graph $G$ with length function $\ell$, a node weighting $A$, a length parameter $h\geq 1$, and a sparsity parameter $0<\phi< 1$, an \emph{$(h,s)$-length $\phi$-vertex-expander decomposition} w.r.t. $A$ with length slack $s$ and cut slack $\kappa$ is an $(hs)$-length moving cut $C$ of size at most $\kappa\cdot \phi|A|$ such that $A$ is $(h,s)$-length $\phi$-vertex-expanding in $G-(hs)\cdot C$.
\end{definition}

\begin{theorem}[Existence of LC Vertex-Expander Decompositions, Theorem 4.1 in \cite{haeupler2025length}]
\label{thm:VertexED}
Let $G$ be a graph with node-weighting $A$. Given a length parameter $h\geq 1$, a length slack $s$ and a sparsity parameter $0<\phi< 1$, there exists an $(h,s)$-length $\phi$-vertex expander decomposition $C$ of $A$ in $G$ with cut slack 
\[
\kappa = n^{O(1/s)}\cdot\log n.
\]
\end{theorem}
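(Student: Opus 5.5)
The plan is the standard ``repeatedly cut sparse cuts, then bound the total'' argument from the length-constrained edge-expander literature, adapted to vertex moving cuts. Initialize $C_0:=0$. While $A$ is not $(h,s)$-length $\phi$-vertex-expanding in $G-(hs)\cdot C_{j}$, the definition gives an $(hs)$-length moving cut $C'_{j+1}$ that is $(h,s)$-length $\phi$-sparse w.r.t.\ $A$ in the current graph. Let $D_{j+1}$ be the $h$-length $A$-respecting demand witnessing this, so that $|C'_{j+1}|\le \phi\cdot \sep_{hs}(C'_{j+1},D_{j+1})$. Set $C_{j+1}:=C_j+C'_{j+1}$, capped at $1$ per element.

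Each step strictly increases some length, and all values are multiples of $1/(hs)$. Hence the process terminates, and the final cut $C$ is an $(hs)$-length moving cut with respect to which $A$ is expanding. What remains is to prove $|C|\le n^{O(1/s)}\log n\cdot \phi|A|$. Since $|C|\le\sum_j|C'_j|\le \phi\sum_j \sep(C'_j,D_j)$, it suffices to bound the total separated demand by $n^{O(1/s)}\log n\cdot|A|$.

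The key step, and the main obstacle, is this ``sequence of sparse cuts'' bound. My plan is to build a single witness demand $D^\ast:=\sum_j D'_j$, where $D'_j$ is the part of $D_j$ that $C'_j$ actually separates. The crucial property is that every pair in $D'_j$ is at distance at most $h$ in $G-(hs)\cdot C_{j-1}$, but at distance more than $hs$ after adding $C'_j$.

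I then argue by a ball-growing / layering argument over the $s$ available length scales. For a vertex $u$, I consider the nested balls of radii $h, 2h,\dots,sh$ around $u$ in the final graph $G-(hs)\cdot C$, and measure the $A$-weight of each ball. Because the radii grow over $s$ layers but the total weight is at most $|A|\le n$, some layer must grow by a factor of at most $n^{O(1/s)}$.

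I plan to use this to charge the demand that vertex $u$ sends across all cuts in the sequence. The charge goes to the $A$-weight at $u$, multiplied by the growth factor $n^{O(1/s)}$. The idea is that once a pair has been separated to distance $hs$, the later cuts can only increase distances. So the same $A$-weight cannot be separated more than a bounded number of times in each distance layer without causing the ball around $u$ to blow up.

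The $\log n$ factor comes from bucketing the demand by geometric scales of weight. Alternatively, I would take it from the routing characterization in \Cref{thm:lce-routing}: the union demand $D^\ast$, suitably scaled, is $h$-length and $O(n^{O(1/s)})\cdot A$-respecting. Each $C'_j$ must pay for its separated part, because the demand $D^\ast$ would otherwise be routable at low congestion.

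I expect the details of the vertex (as opposed to edge) version to require care, but they should not change the structure of the argument. In particular, vertex moving cuts affect every path through a vertex, and demands need to be $A$-respecting only at endpoints. The cleanest route is to reduce to a potential-function formulation: define $\Phi(C):=\sum_u \log|\Ball_{G-(hs)C}(u,\cdot)\cap A|$, averaged over scales. I would then show that each $\phi$-sparse cut step reduces $\Phi$ by at least $\Omega(|C'_j|/(\phi\, n^{O(1/s)}))$. Since $\Phi$ is at most $|A|\log n$ initially and at least $0$ throughout, the claimed cut slack $\kappa=n^{O(1/s)}\log n$ follows.
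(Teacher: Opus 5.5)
Your outer loop (greedily apply $(h,s)$-length $\phi$-sparse cuts until none remains, then bound $\sum_j|C'_j|$) is the right reduction. It is also how the cited result is obtained. The paper gives no proof of its own and imports the theorem from \cite{haeupler2025length}. Its real content, quoted in the appendix as Lemmas 3.17/4.8 there, is this: a sequence of moving cuts, each $(h,s)$-length $\phi$-sparse w.r.t.\ $A$ in the graph already cut by its predecessors, has total size at most $\phi\, n^{O(1/s)}\log n\cdot|A|$. That lemma is the whole theorem, and your proposal does not prove it. The concrete claims you offer for it are either false or unsupported.

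\emph{Per-vertex charging fails.} The claims ``the demand $u$ sends across all cuts is at most $n^{O(1/s)}$ times $A(u)$'' and ``$D^\ast$ suitably scaled is $n^{O(1/s)}\cdot A$-respecting'' are false. Take unit edge lengths, $h=2$ and $A=V$. Build a star with centre $c$ and leaves $u,w_1,\dots,w_m$, plus $m$ disjoint dumbbells; the $j$-th consists of two $t$-cliques joined through one vertex $b_j$, with $t\ge 2/\phi$. At step $j$ put cut value $1$ on $w_j$ and on $b_j$. This cut has size $2$. It entirely separates a $2$-length $A$-respecting demand: one unit $u\to w_j$ plus a perfect matching across the $j$-th dumbbell. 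So it is $\phi$-sparse in the current graph. Yet $u$, with $A(u)=1$, carries $m=\Theta(\phi n)$ units of separated demand overall (the global total is still $O(n)$). Meanwhile its balls lose one vertex per step and never ``blow up''. What is needed is a global rule deciding which endpoint of each separated pair pays (here $w_j$, whose ball collapses), i.e.\ a length-constrained smaller-side argument.

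\emph{The potential does not supply that rule.} $\Phi$ is meant to provide it, but the per-step drop $\Omega(\sep_{hs}(C'_j,D_j)/n^{O(1/s)})$ is only asserted, and it is precisely the missing lemma. Write $\Ball_j(u,r)$ for balls in $G-(hs)\cdot C_{j-1}$. The natural verification uses two facts about a separated pair $(u,v)$: the balls $\Ball_{j+1}(u,r)$ and $\Ball_{j+1}(v,r)$ are disjoint for $r<hs/2$, and both lie in $\Ball_j(u,r+h)\cap\Ball_j(v,r+h)$. This forces one endpoint to lose a constant fraction of its ball only at a scale where both balls grow by a factor below, say, $3/2$. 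With $O(s)$ scales, such a scale is guaranteed only when $s=\Omega(\log n)$; getting $n^{O(1/s)}$ for general $s$ needs an additional idea. Also, $\Phi$ as written is unweighted, so its initial value can be $n\log n$ rather than $|A|\log n$. A bound in terms of $n$ is useless for \Cref{lemma:Hierarchy}, where $|A'_i|$ must shrink geometrically.

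\emph{The routing route points the wrong way.} Routability of a demand gives lower bounds on cuts that separate it, not upper bounds on $\sum_j|C'_j|$.
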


\begin{theorem}[Polynomial-Time Construction]
\label{thm:PolyED}
Let $G$ be a graph with node-weighting $A$. Given a length parameter $h\geq 1$, a length slack $s$ which is at least some constant and a sparsity parameter $0<\phi< 1$, there exists an algorithm that computes an $(h,s)$-length $\phi$-vertex expander decomposition $C$ of $A$ in $G$ with cut slack 
\[
\kappa = n^{O(1/\sqrt{s})}\cdot\Otilde(1).
\]
The algorithm runs in $\poly(n,h)$ time and it is deterministic.
\end{theorem}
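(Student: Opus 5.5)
The plan is to turn the existential argument behind \Cref{thm:VertexED} into an algorithm. The exact search for a sparsest $(hs)$-length moving cut is replaced by an approximate certify-or-cut subroutine based on length-constrained flows, whose errors are paid for in the length slack. Concretely, I would split the slack as $s=s_1\cdot s_2$ with $s_1,s_2\approx\sqrt{s}$. The subroutine, given the current graph $G'=G-(hs)\cdot C$ and $A$, should do one of two things:
\begin{enumerate}[label=(\roman*)]
\item certify that $A$ is $(h,s)$-length $\phi$-vertex-expanding in $G'$; or
\item output an $(hs)$-length moving cut $C'$ that is $(h,s_1)$-length $\phi\cdot n^{O(1/\sqrt{s})}$-sparse with respect to $A$, and separates a set of demand pairs that the algorithm can charge to.
\end{enumerate}
The outer loop adds $C'$ to $C$ and repeats until case (i) occurs. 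The output is then a decomposition by definition, provided the total size of $C$ is small.

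\textbf{The subroutine.} Length-constrained flows can be written as an LP of size $\poly(n,h)$: since lengths are integral, expand $G'$ into an $(hs)$-layered graph and route through it. To avoid quantifying over all $A$-respecting $h$-length demands, I would use a deterministic cut-matching style game, in the spirit of the routing characterization \Cref{thm:lce-routing}. In each of $O(\log^2 n)$ rounds, the cut player proposes a matching-type $h$-length demand among $A$-vertices that are $h$-close. We then solve the congestion-$O(\log n/\phi)$, length-$hs/2$ multicommodity flow LP for this demand.
\begin{itemize}
\item If the LP is feasible in every round, the union of the routed flows embeds a length-constrained expander. Composing its paths gives a routing of any $h$-length $A$-respecting demand, with extra length factor at most $s_2$, which yields case (i).
\item If some round is infeasible, LP duality gives a fractional length-increase vector, i.e.\ a fractional moving cut, that is sparse with respect to the unrouted demand. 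We round it to an $(hs)$-length moving cut by ball-growing around the demand endpoints inside the layered graph. By the standard region-growing calculation, restricting to $\sqrt{s}$ distance thresholds gives rounding loss $n^{O(1/\sqrt{s})}$.
\end{itemize}

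\textbf{Bounding the total cut.} I would follow the potential argument of the existential proof. Each applied cut separates (at length $hs_1$) pairs of $A$-mass that it is charged against. I would use the potential $\sum_{v}\log|\Ball_{G'}(v,h s_1)\cap A|$, or more precisely the $A$-weighted neighborhood sizes at $\sqrt{s}$ geometric length scales. Each cut decreases some scale's potential by an amount proportional to $|C'|/(\phi\, n^{O(1/\sqrt s)})$. Since there are only $\sqrt{s}$ scales, each of total potential $O(|A|\log n)$, the sum of all cut sizes is at most $n^{O(1/\sqrt{s})}\cdot\Otilde(1)\cdot\phi|A|$, which is the claimed $\kappa$. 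The number of iterations is polynomial because each round strictly decreases an integral, polynomially bounded potential. Every step (LP solving, deterministic cut-player strategies, region growing) is deterministic and runs in $\poly(n,h)$ time.

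\textbf{Main obstacle.} The hard part is the interplay in the subroutine between approximation and length slack. The cut found from the dual at length $hs/2$ must still be certified sparse at the \emph{stronger} length $hs_1$ relevant for the potential. Conversely, the certificate in case (i) loses an extra factor $s_2$ in length when composing embedded paths. Choosing $s_1=s_2=\sqrt{s}$ so that both losses fit inside the budget $s$ is exactly what forces $n^{O(1/\sqrt s)}$ instead of $n^{O(1/s)}$. My first attempt at making this rigorous would be to reduce the vertex setting to the edge setting by vertex splitting, so that the known polynomial-time length-constrained edge-expander decomposition machinery applies, and then check that the split graph's length parameters stay within a constant factor.
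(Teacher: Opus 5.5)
\textbf{Gap in step (i): the cut-matching game does not certify expansion with only an $s_2$ length loss.} Your outer loop (find a sparse cut, apply it, repeat until certified) matches the paper's. The standard cut-matching analysis only shows that the embedded graph $W$ is an expander in the ordinary sense. That gives routing through $W$ with $\Theta(\log n)$ hops, not $O(s_2)$.

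This loss is unavoidable for your construction. If the proposed matchings are integral, $W$ has maximum degree $O(\log^2 n)$. A vertex with $n^{\Omega(1)}$ many $h$-close $A$-vertices then needs $\Omega(\log n/\log\log n)$ hops in $W$ to reach all of them. Composing that many flow paths loses a factor $\Omega(\log n/\log\log n)$ in length in the worst case. Here $s$ is only a constant or $\poly(k)$ ($s_{\ed}\in\{k,k^2\}$), so this is unaffordable. Your own accounting also overshoots: $s_2$ paths of length $hs/2$ already have length $s_2 hs/2>hs$.

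Two further pieces are missing:
\begin{itemize}
\item a cut-player strategy that works when demands may only join $h$-close pairs with overlapping neighborhoods (there is no global bisection to act on);
\item an argument that pairs never proposed by the cut player become routable.
\end{itemize}
To lose only $O(s_2)$ hops, you would have to embed, inside each cluster of a low-diameter decomposition, a fixed router of hop-diameter $O(s_2)$ and degree $n^{O(1/s_2)}$, paying $n^{O(1/s_2)}$ in congestion. That means routing one fixed demand rather than playing an adaptive game. This is the paper's route. It computes a single ``worst-case'' LDD demand $D_A$ whose routability certifies LC vertex expansion, and feeds it to the cutmatch of \Cref{thm:CutMatch} (stated for directed graphs, then vertex splitting). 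The cutmatch either routes $D_A$ or returns an $(hs)$-length vertex moving cut that is sparse with respect to the unrouted part. Your idea of splitting $s$ into two $\approx\sqrt{s}$ factors is compatible with this; the problem is the certification mechanism, not the bookkeeping.

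\textbf{The total-cut bound and the fallback also do not hold as written.} A $\phi$-sparse cut only guarantees separated demand mass at least $|C'|/\phi$. Separating one pair $(u,v)$ removes a single element from $\Ball(u,\cdot)\cap A$, lowering $\log|\Ball(u,\cdot)\cap A|$ by only about $1/|\Ball(u,\cdot)\cap A|$. So the claimed drop proportional to $|C'|/(\phi\, n^{O(1/\sqrt{s})})$ does not follow. This potential is also not integral, so it does not bound the iteration count. That bound instead follows from a bound on total cut size, since each nonzero cut has size at least its granularity.

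The paper invokes the union-of-sparse-moving-cuts lemma (Lemmas 3.17 and 4.8 of \cite{haeupler2025length}). Suppose each $C_i$ is an $(hs_1)$-length cut that is $(h,s_1)$-length $\phi$-sparse with respect to $A$ in the graph after applying $C_1,\dots,C_{i-1}$. Then $\sum_i|C_i|\le\phi\, n^{O(1/s_1)}\log n\,|A|$.

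Your fallback of reducing to undirected LC edge-expander decompositions via vertex splitting fails for a reason unrelated to lengths. In the undirected split graph, a path $u_{\mathrm{out}}$--$v_{\mathrm{in}}$--$w_{\mathrm{out}}$ passes through $v$ without using the edge $v_{\mathrm{in}}v_{\mathrm{out}}$, so vertex congestion is not captured. The paper applies vertex splitting only to the \emph{directed} cutmatch of \cite{haeupler2023maximum}, where it is valid.
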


The above \Cref{thm:PolyED} states that an LC vertex-expander decomposition can be computed in polynomial time, but with slightly worse cut slack compared to the existential result. Although this constructive result does not appear explicitly in the literature, it follows by a straightforward generalization of the polynomial-time construction for LC edge-expander decompositions. We refer to \cite{HLRS26} for a proof of the latter. We sketch the proof of \Cref{thm:PolyED}, highlighting the modifications needed relative to the proof in \cite{HLRS26} in \Cref{sect:PolyEDProof} rather than giving a complete argument.

\paragraph{Weakly LC Vertex-Expanding.} 

\begin{definition}[Weakly LC-Vertex Expanding] 
\label{def:WeakExpanding}
Consider two graphs $G'$ and $G$ that differ only in their length functions $\ell'$ and $\ell$. A node weighting $A$ is \emph{weakly $(h,s)$-length $\phi$-expanding w.r.t. $G'$ and $G$} if any $h$-length $A$-respecting demand on $G'$ can be routed in $G$ with vertex congestion $O(\log n/\phi)$ and length $hs$.
\end{definition}

In fact, in the following sections, we only use this definition with $G$ as the input graph. In other words, when we say $A$ is weakly $(h,s)$-length $\phi$-expanding in $G'$, it means $A$ is weakly expanding w.r.t. $G'$ and $G$.

We note that, in contrast to saying that $A$ is (strongly) $(h,s)$-length $\phi$-vertex-expanding in $G'$ (according to the routing characterization \Cref{thm:lce-routing}), the weak variant here considers $h$-length $A$-respecting demands on $G'$, but only requires routability in $G$ (rather than in $G'$ as in the strong variant). In other words, $A$ remains weakly expanding under arbitrary length increases to $G'$. This property generally does not hold for the strong variant.

\subsection{Length-Constrained Vertex-Expander Routing}

\begin{lemma}[Expander Routing]
\label{lemma:ExpanderRouting}
Let $G$ be a graph with length function $\ell$, and a node weighting $A$ that is (strongly) $(h,s)$-length $\phi$-vertex-expanding in $G$. Given parameters $h,\phi$ and an $h$-length $A$-respecting demand $D$, there is an algorithm that computes an integral flow $F$ that routes $D$ in $G$ with length $hs$ and vertex congestion $\gamma = \Otilde(1/\phi)$. The algorithm runs in $\poly(n,h)$ time and it is deterministic.
\end{lemma}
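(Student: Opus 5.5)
The plan is to reduce \Cref{lemma:ExpanderRouting} to a deterministic, polynomial-time computation of a fractional length-constrained multicommodity flow, followed by deterministic rounding.

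\textbf{Step 1: obtain a fractional flow.} Since $A$ is (strongly) $(h,s)$-length $\phi$-vertex-expanding in $G$, the first item of \Cref{thm:lce-routing} guarantees that $D$ is routable fractionally with vertex congestion $O(\log n/\phi)$ and length $hs$. To make this constructive, I would write the length-constrained routing problem as a linear program over $hs$-length-bounded paths. It has exponentially many path variables, but it can be solved in $\poly(n,h)$ time. One option is to use an explicit edge-flow formulation on the $hs$-layered "length-expanded" graph: lengths are integral after scaling, and only $h$-length pairs matter, so there are $\poly(n,h)$ layers. Another option is to solve the dual by the ellipsoid method, where the separation oracle is a shortest-path computation restricted to length $\le hs$. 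Because the optimum is at most $O(\log n/\phi)$, we get a fractional flow $F^{*}$ with that congestion, length $hs$, and support of polynomial size (a basic solution).

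\textbf{Step 2: round deterministically to an integral flow.} The demand $D$ is $A$-respecting, and I would assume it is integral, as it is in the paper's usage: integer pair demands from pure node weightings. Rounding then reduces to an integral multicommodity problem: choose, for each unit of demand, one path from the support of $F^{*}$, so that the vertex congestion stays $\Otilde(1/\phi)$. I would use the standard derandomized randomized rounding (Raghavan's method of pessimistic estimators). For each demand unit we pick one path with probability proportional to $F^{*}$. A Chernoff bound shows every vertex's load is $O(\log n/\phi + \log n)=\Otilde(1/\phi)$ with high probability. The conditional-expectation pessimistic estimators then derandomize this in polynomial time, since the number of vertices and support paths is polynomial. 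Path lengths are preserved, because each chosen path already lies in the support and so has length at most $hs$.

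\textbf{Main obstacle.} The delicate point is Step 1's polynomiality: making sure the path LP can be solved exactly with length constraint $hs$ in $\poly(n,h)$ time, rather than only approximately in length. I expect the layered-graph formulation to settle this cleanly, because vertex capacities transfer directly to the copies of each vertex summed across layers. The rounding step is routine, and the $\log n$ loss it incurs is absorbed into the $\Otilde(\cdot)$ in $\gamma$.
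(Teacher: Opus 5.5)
Your proof is correct, but it takes a different route from the paper.

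\textbf{The paper's route.} The paper never uses the flow side of \Cref{thm:lce-routing}. It converts each demand pair of $D$ into a pair of point node-weightings. It then makes one call to the deterministic length-constrained cutmatch algorithm of \cite{haeupler2023maximum} (\Cref{thm:CutMatch}), with length $hs$ and sparsity $\phi$. Any unmatched demand would, together with the returned moving cut, witness an $(hs)$-length cut that is sparse with respect to an $h$-length $A$-respecting demand, and expansion forbids this. So all demand is matched, and the flow part of the cutmatch is the routing; integrality and the $\Otilde(1/\phi)$ congestion come from the cutmatch primitive itself.

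\textbf{Your route.} You take the existential guarantee of \Cref{thm:lce-routing} as given and compute a fractional flow of congestion $O(\log n/\phi)$ exactly. You do this either by an LP on the length-expanded layered graph or by a path LP whose separation oracle is a pseudo-polynomial restricted shortest path. You then get integrality from Raghavan-style derandomized rounding, whose additive $O(\log n)$ loss is dominated by $\log n/\phi$ since $\phi<1$.

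\textbf{What each buys.} Your argument uses only generic LP and rounding tools plus a theorem the paper already cites, and it yields the explicit bound $O(\log n/\phi)$. Because it uses the flow side directly, it also avoids the boundary bookkeeping in the paper's contradiction. A cut of sparsity exactly $\phi$, or a separation of distance exactly $hs$, does not formally contradict $\phi$-expansion, so the cutmatch call implicitly needs slightly adjusted parameters. The paper's route is a single combinatorial call with running time about $\Otilde(m\cdot|\supp(D)|\cdot(hs)^{17})$, independent of the magnitudes in $D$. It reuses the same primitive as \Cref{thm:PolyED} and amounts to a constructive proof of the routing characterization itself.

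\textbf{One detail to add.} If $A$, and hence the integral demand $D$, may carry large integer weights, then choosing one path per demand unit is not polynomial. Route $\lfloor F^{*}(P)\rfloor$ units integrally on each support path first, and randomly round only the fractional remainders, of which there are polynomially many.
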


The above lemma on LC vertex-expander routing, although not stated explicitly in the literature, is a straightforward corollary of the multi-commodity length-constrained cutmatches in \cite{haeupler2023maximum}. We includes a proof of \Cref{lemma:ExpanderRouting} in \Cref{sect:ProofRouting} for completeness.

We note that the above integral flow $F$ is outputted in its \emph{path representation}, which includes each flow path $P$ explicitly and its flow value $F(P)$.

\subsection{Sparse Neighborhood Covers}

\begin{definition}[Sparse Neighborhood Covers]
Let $G$ be a graph with length function $\ell$. A \emph{cluster} $S$ is a subset of vertices. A \emph{(sparse) neighborhood cover} ${\cal N}$ is a collection of \emph{clusters}. A neighborhood cover ${\cal N}$ has covering radius $h_{\cov}$, (strong) diameter $h_{\diam}$, width/load $\omega$, if it satisfies the following.
\begin{itemize}
\item For each vertex $v$, its $h_{\cov}$-neighborhood, i.e. the set $\{u\mid \dist_{G}(u,v)\leq h_{\cov}\}$, is fully contained in some cluster in ${\cal N}$. We use ${\cal N}(v)$ to denote this cluster.
\item Each cluster $S\in{\cal N}$ has strong diameter at most $h_{\diam}$, i.e. $\diam_{G[S]}(S)\leq h_{\diam}$.
\item Each vertex $v$ is inside at most $\omega$ clusters.
\end{itemize}
\end{definition}

\begin{theorem}[Constructive Sparse Neighborhood Covers \cite{awerbuch1998near}]
\label{thm:ConstructNeighborhoodCover}
Let $G$ be a graph with length function $\ell$. Given a length parameter $h\geq 1$ and a length slack $s\geq 1$, there is an algorithm that 
\begin{itemize}
\item computes a neighborhood cover ${\cal N}$ in $G$ with covering radius $h_{\cov} = h$, diameter $h_{\diam} = hs$ and width $\omega = O(s\cdot n^{1/s})$.
\item for each vertex $v$, identifies a cluster ${\cal N}(v)\in{\cal N}$ that fully contains the $h_{\cov}$-neighborhood of $v$. 
\end{itemize}
The algorithm runs in $\poly(n)$ time and it is deterministic.
\end{theorem}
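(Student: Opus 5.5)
The plan is to follow the coarsening construction of Awerbuch and Peleg, applied to the family of $h$-balls $\mathcal{B}=\{\Ball(u):=\{w\mid \dist_G(u,w)\le h\}\mid u\in V\}$. We build ${\cal N}$ in phases. Each phase outputs a family of pairwise vertex-disjoint clusters, each cluster a union of balls. Every ball absorbed into a cluster's \emph{kernel} is marked covered and removed from further consideration. With this design, the width is at most the number of phases, the cover property is witnessed by the kernel, and the strong diameter follows because each cluster is a connected union of balls grown from a single center.

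\emph{One phase.} Let $\mathcal{R}$ be the balls not yet covered, and let $\mathcal{R}'\subseteq\mathcal{R}$ be the balls still available in this phase. While $\mathcal{R}'\neq\emptyset$, pick any $B\in\mathcal{R}'$ and set $Y_0:=B$. Grow iteratively: let $\mathcal{Z}_j$ be the set of balls of $\mathcal{R}'$ meeting $Y_{j}$, and put $Y_{j+1}:=\bigcup\mathcal{Z}_j$. Stop at the first $j$ with $|\mathcal{Z}_{j}|\le n^{1/s}|\mathcal{Z}_{j-1}|$, counting balls (with $\mathcal{Z}_{-1}:=\{B\}$). Output $Y_{j+1}$ as a cluster. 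Declare the balls of $\mathcal{Z}_{j-1}$ covered, since each of them lies inside $Y_{j}\subseteq Y_{j+1}$; these form the kernel. Delete from $\mathcal{R}'$ every ball meeting $Y_{j+1}$, which are exactly the balls of $\mathcal{Z}_j$. This deletion is what keeps the clusters of one phase vertex-disjoint: a later cluster in the same phase is a union of balls disjoint from $Y_{j+1}$. Since the count of balls multiplies by more than $n^{1/s}$ at every non-stopping step and never exceeds $n$, the growth stops after $j\le s$ steps. Every vertex of $Y_{j+1}$ is then joined to the center of $B$ by a chain of at most $j+1$ balls, each ball having strong radius $h$ (shortest paths to a ball's center stay in the ball). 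Hence $\diam_{G[Y_{j+1}]}\le O(s)h$. Replacing $s$ by a constant multiple of $s$ at the start gives diameter $hs$ while changing the width by only a constant in the exponent.

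\emph{Counting phases.} In one phase, the balls deleted from $\mathcal{R}'$ are $\bigsqcup\mathcal{Z}_j$, while the balls covered are $\bigsqcup\mathcal{Z}_{j-1}$, which have total size at least $n^{-1/s}$ times that. Every ball of $\mathcal{R}$ is eventually deleted, so each phase covers at least an $n^{-1/s}$ fraction of the remaining uncovered balls. A naive count gives $O(n^{1/s}\log n)$ phases. To reach the stated $O(s\cdot n^{1/s})$, I would use the standard refined counting of Awerbuch and Peleg: charge with respect to the number of balls each \emph{vertex} lies in, and choose the growth threshold per vertex so that each vertex participates in only $O(s\,n^{1/s})$ clusters. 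Alternatively, one can fall back on the per-vertex kernel argument from their paper.

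I expect the main obstacle to be matching the precise width $O(s\,n^{1/s})$ rather than the easier $O(s\,n^{1/s}\log n)$ bound. The identification of ${\cal N}(v)$ is immediate, namely the cluster whose kernel absorbed $\Ball(v)$. The polynomial running time and determinism are clear, since every step is a deterministic ball computation.
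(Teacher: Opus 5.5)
Your plan is the standard Awerbuch--Peleg coarsening of the $h$-balls, which underlies the cited construction; the paper gives no proof of its own, only the citation. However, two steps fail as written.

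\textbf{An off-by-one error in a phase.} By definition, $\mathcal{Z}_j$ is the set of available balls meeting $Y_j$. The balls meeting $Y_{j+1}$ form $\mathcal{Z}_{j+1}$, so your identity ``balls meeting $Y_{j+1}$ $=\mathcal{Z}_j$'' is false. As a result, your disjointness paragraph and your counting paragraph use incompatible deletion rules.
\begin{itemize}
\item If you output $Y_{j+1}$ and delete only $\mathcal{Z}_j$, a ball meeting $Y_{j+1}\setminus Y_j$ stays available, and a later cluster of the same phase can overlap $Y_{j+1}$. For example, take balls $B,B_1,B_2$ with $B\cap B_1\neq\emptyset\neq B_1\cap B_2$ and $B\cap B_2=\emptyset$, and stop at $j=0$. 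Then the bound ``width $\le$ number of phases'' breaks.
\item If you delete $\mathcal{Z}_{j+1}$ instead, disjointness holds. But the stopping rule gives no bound on $|\mathcal{Z}_{j+1}|$, so you lose the per-phase covered fraction.
\end{itemize}
The fix is to output the kernel $Y_j=\bigcup\mathcal{Z}_{j-1}$.
\begin{itemize}
\item $Y_j$ contains the covered balls $\mathcal{Z}_{j-1}$.
\item The deleted balls $\mathcal{Z}_j$ are exactly those meeting $Y_j$, so later clusters of the phase avoid $Y_j$.
\item The stopping rule gives $|\mathcal{Z}_j|\le n^{1/s}|\mathcal{Z}_{j-1}|$.
\item The failed tests give $n\ge|\mathcal{Z}_{j-1}|>n^{j/s}$ whenever $j\ge 1$, so $j<s$. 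Your chain-of-balls argument then puts every vertex of $Y_j$ within $(2j+1)h<(2s+1)h$ of the center of $B$ inside $G[Y_j]$.
\end{itemize}

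\textbf{The $O(s\,n^{1/s})$ count.} Per-vertex charging with per-vertex thresholds is not the needed idea, and as stated it is not an argument. What works is an adaptive threshold. In a phase that starts with $r$ uncovered balls, stop at the first $j$ with $|\mathcal{Z}_j|\le r^{1/s}|\mathcal{Z}_{j-1}|$.
\begin{itemize}
\item Growth still stops with $j<s$, because $|\mathcal{Z}_j|\le r$.
\item All $r$ balls are deleted during the phase, so the phase covers at least $r^{1-1/s}$ of them.
\item Hence at most $(R/2)^{1/s}+1$ phases start with more than $R/2$ and at most $R$ uncovered balls.
\item Summing over $R=n,n/2,n/4,\dots$ gives $n^{1/s}/(2^{1/s}-1)+O(\log n)=O(s\,n^{1/s})$ phases, using $2^{1/s}-1\ge(\ln 2)/s$ and $s\,n^{1/s}\ge\ln n$.
\end{itemize}

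\textbf{The constants.} Your caveat about rescaling $s$ is not a defect of your proof. With strong diameter exactly $hs$, the statement is false for $s<2$: in a two-edge path with edge lengths $h$, the $h$-ball of the middle vertex already has strong diameter $2h$. So the theorem must be read as diameter $\Theta(sh)$ with width $O(s\,n^{1/s})$, which is what the corrected argument gives. Equivalently, it gives width $n^{O(1/s)}$ at diameter $hs$, which is all the paper uses.
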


We note that in fact the sparse neighborhood cover algorithm in \cite{awerbuch1998near} runs in nearly linear time.

\section{Vertex-Failure Approximate Distance Oracles}

In this section, we present our vertex-failure approximate distance oracles result,  as stated below.

\begin{theorem}
\label{thm:OracleMain}
Let $G$ be an undirected graph with integral, nonnegative, polynomially bounded edge and vertex lengths $\ell$. Given parameters $f\geq 1$ and $k\geq 1$, there is a data structure that supports the following operations.
\begin{itemize}
\item (Update) Set the set of failed vertices to some given $F\subseteq V$ of size at most $f$.
\item (Query) Given two vertices $x,y$, outputs their approximate distance under the current failed set $F$, i.e., a value $d$ such that
\[
\dist_{G\setminus F}(x,y)\leq d\leq s\cdot \dist_{G\setminus F}(x,y).
\]
\end{itemize}
The approximation is $s=O(k^{6})$ for a polynomial-time construction, and $s=O(k^{5})$ for an exponential-time construction. The data structure takes $\Otilde(f^{2}\cdot n^{1+1/k})$ space, $\Otilde(f^{5}\cdot n^{1/k})$ update time, and $\Otilde(f)$ query time. Moreover, the data structure is randomized which can answer each query correctly with probability $1-n^{-10}$.
\end{theorem}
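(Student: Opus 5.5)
The plan is to prove \Cref{thm:OracleMain} along the improved-oracle route of the overview. The proof has four parts: the hierarchy, per-level covers with Steiner backbones, a sparsified shortcut graph, and a small query graph built at update time.

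\textbf{Preprocessing.} First replace $G$ by a vertex-fault-tolerant $(2k-1)$-spanner with $O(f^{1-1/k}n^{1+1/k})$ edges. This costs a factor $O(k)$ in approximation and makes $m$ near-linear. Then, for every scale $h_\alpha=2^\alpha$, $\alpha\le\alpha_{\max}=O(\log n)$, build the LC-vertex-expander hierarchy bottom-up. Repeatedly apply \Cref{thm:PolyED} (or \Cref{thm:VertexED} in the exponential-time variant) with sparsity $\phi\approx 1/(f n^{1/k})$. Randomly round the fractional moving cut into an integral set of "cut" vertices, which are promoted to the next level. The cut slack $n^{O(1/\sqrt s)}$ shrinks $|A|$ geometrically, giving $i_{\max}=O(k)$ levels. \Cref{thm:lce-routing} then yields that each $A_i$ is weakly all-length expanding w.r.t. $G_i$ and $G$ (\Cref{def:WeakExpanding}). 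From this I derive \Cref{lemma:OverviewWeakExpanding}: route a matching demand between $R_1$ and $R_2$ of size $\rho_{\heavy}$ with congestion $\Otilde(1/\phi)$, so $f$ failures kill fewer than $\rho_{\heavy}$ flow paths and some path survives. Sampling is the only randomness and succeeds w.h.p.

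\textbf{Covers, trees, and shortcuts.} For each $(i,\alpha)$, build ${\cal N}_{i,\alpha}$ in $G_i$ via \Cref{thm:ConstructNeighborhoodCover}. For each cluster $S$, build a Steiner tree $T_S$ on $S\cap A_i$ as in \Cref{lemma:OverviewSteinerTree}. I would obtain it by the \cite{long2022near}-style argument made length-aware with \Cref{lemma:ExpanderRouting}: iteratively route the current max-degree tree vertices' neighbors elsewhere to reduce degree, with total overlap bounded by routing congestion. Next define points, surrounding sets $\Lambda(S)$ and shortcut lengths $h_{\diam,\alpha}+2h_\alpha$. Sparsify the shortcuts as in the three cases of the overview. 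Case 1 uses sparse-sparse bicliques, at most $\rho_{\heavy}^2$ per cluster. Case 2 uses expansion-based links from adjacent-cluster-dense points to cluster nodes of intersecting higher-level clusters. Case 3 uses satellite covers. The space bound $\Otilde(f^2 n^{1+1/k})$ follows by summing $\rho_{\heavy}^2$ over the $\Otilde(n^{1+1/k})$ clusters, after rescaling $k$ by a constant.

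\textbf{Update and query.} On update $F$, collect the affected clusters, of which there are $\Otilde(f n^{1/k})$ by width. In each such cluster, delete $F$ from $T_S$ to get at most $\Delta f$ residual subtrees. Contract the heavy ones into subtree nodes of length $h_{\tree}$, and add a star with center length $h_{\ed}k$. Insert light-subtree points, surviving stored shortcuts and expansion links, then compute an approximate distance structure on the resulting $Q$. This has $\Otilde(f^5 n^{1/k})$ size, and queries then look up $x,y$ in $Q$ in $\Otilde(f)$ time.

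\textbf{Correctness.} The lower bound, that every $Q$-path corresponds to a real path in $G\setminus F$, follows from \Cref{lemma:OverviewWeakExpanding} for stars and links and from tree diameters. For the upper bound, decompose the witness path into leaf segments and bound $\cost({\cal S}_{\up})=O(i_{\max}\leng(P))$. Then replace each missing shortcut by either a stored shortcut or a two-hop path through a cluster node in ${\cal S}_{\up}$, and ensure at most two subtree nodes per cluster of ${\cal S}_{\up}$ are used. This gives $O(k^4)\cdot\cost$ on top of the spanner factor.

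\textbf{Main obstacle.} I expect the hard part to be Case 3 (sparse at the high level but dense at the low level) together with proving that every replacement two-hop path goes through a cluster lying in ${\cal S}_{\up}$. The scale/level consistency claim ($\alpha_1\ge\alpha_2$, $S_1\cap S_2\neq\emptyset$) must be established from the decomposition. For Case 3 I would try defining satellite clusters as level-$i_2$ clusters at scale $\alpha$ covering $\Lambda_{i_2}(S)$. Links would then go from high points to those satellites' cluster nodes, so that denseness propagates downward through $G_{i_2}\setminus F$-closeness.
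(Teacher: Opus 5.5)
Your outline follows the paper's improved-oracle route: spanner, hierarchy with sampled integral terminals, per-level covers with Steiner backbones, nearby and adjacent-cluster denseness, and witness-path decomposition with portals. However, two load-bearing pieces are missing.

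\textbf{Case~3 is unresolved, and your suggested fix fails.} A level-$i_2$ cover at the scale $\alpha$ of $S$ only guarantees covering radius $2h_\alpha$, while $\Lambda_{i_2}(S)$ can have diameter $\Theta(kh_\alpha)$, so no such cluster need contain it. More importantly, the representative of $\Lambda_{i_2}(S)$ must be \emph{guaranteed affected} (hence present in $Q$) whenever the missing shortcut is needed. It must also be close to the endpoint node, whose cluster $S_z$ lives at scale $\alpha_z\ge\alpha$. The paper builds a separate satellite cover $\hat{\cal N}_{i,\alpha}$ in $G_{i,\alpha}$ with covering radius $3h_{\diam,\alpha}$, and sets $\satellite_{i_2,\alpha_2}(S)=\hat{\cal N}_{i_2,\alpha_2}(\pivot(S))$ for \emph{every} $\alpha_2\ge\alpha$. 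By \Cref{ob:SatelliteCover} this contains $\Lambda_{i_2}(S)$ and every level-$i_2$, scale-$\alpha_2$ cluster meeting it, hence $S_z$, hence it is affected. Since the satellite diameter $\Theta(k^2h_\alpha)$ exceeds $h_{\ed,\alpha}$, the hierarchy also needs a second family of decompositions at length $\hat h_{\ed,\alpha}=\Theta(k^2h_\alpha)$. With these, \Cref{lemma:LCExpanding} justifies satellite nodes of length $\Theta(k^2s_{\ed}h_\alpha)$, reached by type-3 links from sparse-level points and type-6 links from adjacent-cluster $A_{i_2}$-dense nodes. This costs only an $O(k)$ factor over the portal length, so the final bound survives.

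\textbf{Your $\phi$ breaks the space bound.} With $\phi\approx 1/(fn^{1/k})$ you get $\rho_{\heavy}=f\cdot\Otilde(1/\phi)\approx f^2n^{1/k}$, so summing $\rho_{\heavy}^2$ over clusters yields $f^4$, not $f^2$. The sparsity $\phi$ must be independent of $f$, with $f$ entering only through $\rho_{\heavy}$.

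\textbf{The query step is undefined.} ``Look up $x,y$ in $Q$'' does not work, because $Q$ is built at update time from the clusters hit by $F$ alone, so $x$ and $y$ generally have no node in $Q$. Your decomposition comes from the overview, where clusters hit by $x,y$ count as affected, so it does not apply here. The paper needs three ingredients:
\begin{itemize}
\item a check whether some unaffected ${\cal N}_{i,\alpha}(x)$ already contains $y$ (\Cref{lemma:Unaffectedxy});
\item precomputed proxy lists of $f+1$ vertices of $V_{\tmn}({\cal N}_{i'_x,\alpha'_x}(x))\cap\Lambda({\cal N}_{i_x,\alpha_x}(x))$, from which a non-failed proxy yields an entry node of the affected cluster ${\cal N}_{i'_x,\alpha'_x}(x)$ (this is the source of the $\Otilde(f)$ query time);
\item \Cref{lemma:FindingProxy,lemma:UpperBoundSolution}, which track leftmost and rightmost segments. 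They show the guessed entry node is within $O(s_{\tree}s_{\nc}s_{\ed})\cdot h_{\alpha'_x}$ of the true proxy node, via a type-2 link or a type-5 two-hop, because both proxies lie in $\Lambda$ of the same unaffected cluster.
\end{itemize}

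\textbf{Smaller corrections.}
\begin{itemize}
\item Build the covers in the fractional graphs $G_{i,\alpha}$, not in $G\setminus A_{>i}$. Weak expansion is only established for $G_{i,\alpha}$, and \Cref{lemma:HighProbabilityEvent} holds only for a path fixed in advance (the witness path).
\item Your degree-reduction sketch for $T_S$ does not obviously control diameter or overlap across a whole cover. The paper instead routes, simultaneously for all clusters, a virtual tree of degree $n^{1/k}$ and diameter $2k$ on $V_{\tmn}(S)$, matching by matching via \Cref{lemma:ExpanderRouting}, and takes a shortest-path tree of the union.
\item Detecting surviving shortcuts within the update bound needs the Euler-tour 2D range-counting tables (global table minus the affected clusters' tables), not explicit insertion.
\end{itemize}
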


At the very beginning, we first replace the input graph $G$ by an $f$-vertex-fault-tolerant spanner. In \cite{parter2022ptimal}, there is an randomized algorithm that computes w.h.p. an $f$-vertex-fault-tolerant spanner with stretch $2k-1$ and size $\Otilde(f^{1-1/k}\cdot n^{1+1/k})$ in time $\Otilde(m)$. This step will increase the approximation by a multiplicative factor of $2k-1$, and the construction time by an additive factor of $\Otilde(m)$. Throughout this section, we assume the input graph $G$ is this spanner.

We first introduce some building blocks in \Cref{sect:OracleBlocks}. In \Cref{sect:OracleQueryGraph}, we will define the query graph $Q$. In \Cref{sect:OracleQueryAlgo}, we will show how to answer queries assuming the query graph and its all-pair distances are given. Lastly, in \Cref{sect:OracleConstruction}, we construct the oracle and discuss the update algorithm for computing $Q$ and its all-pair distances. We give a proof of \Cref{thm:OracleMain} in \Cref{sect:ProofOracleMain}, which includes pointer to each individual lemmas.

\subsection{Building Blocks}
\label{sect:OracleBlocks}

We first define some parameters as follows.

\paragraph{Basic Parameters.} Define \emph{based length scales} 
\[
\{h_{\alpha} = 2^{\alpha}\mid 1\leq \alpha\leq \alpha_{\max}\},
\]
where $\alpha_{\max} = \lceil \log(nW)\rceil = O(\log n)$. Recall that $W = \poly(n)$ upper bounds the lengths in $G$.

We also define
\[
\rho_{\heavy} = f\cdot O(\log n/\phi) + 1 \leq \Otilde(f)\cdot n^{O(1/k)}
\]
for some $\phi^{-1} = \Otilde(n^{O(1/k)})$ defined in \Cref{sect:OracleHierarchy}, and the constant factor behind $O(\log n/\phi)$ is determined by the first item of \Cref{thm:lce-routing}. This parameter $\rho_{\heavy}$ will be used in all concepts about \emph{heaviness} or \emph{denseness}.

\paragraph{The level function $\level(\cdot)$ and scale function $\scale(\cdot)$.} For any object (e.g., vertices, clusters, or Steiner trees, as introduced later) that is naturally associated with a \emph{level} (we will define level in the following \Cref{sect:OracleHierarchy}), we use $\level(\cdot)$ denote to its level. Similarly, we use $\scale(\cdot)$ to denote its corresponding length scale (between $1$ and $\alpha_{\max}$).

\subsubsection{LC Vertex Expander Hierarchy} 
\label{sect:OracleHierarchy}

\paragraph{Parameters for Expander Hierarchy.}

The parameters we define here rely on some other parameters 
\[
s_{\nc} = k,\qquad h_{\diam,\alpha} = \Theta(s_{\nc}\cdot h_{\alpha}),\qquad \hat{h}_{\diam,\alpha} = \Theta(s_{\nc}^{2}\cdot h_{\alpha}).
\]
which will be define in \Cref{sect:OracleNeighborhoodCover}. Define length parameters $\{h_{\ed,\alpha}\mid 1\leq \alpha\leq \alpha_{\max}\}$, length slack $s_{\ed}$ and expansion $\phi$, where
\begin{align*}
h_{\ed,\alpha} &:= 5 h_{\diam,\alpha} + 8h_{\alpha} = \Theta(s_{\nc}\cdot h_{\alpha}) = \Theta(k\cdot h_{\alpha}),\\
\hat{h}_{\ed,\alpha} &:= \hat{h}_{\diam,\alpha} + 4h_{\diam,\alpha} + 8h_{\alpha} = \Theta(s_{\nc}^{2}\cdot h_{\alpha}) = \Theta(k^{2}\cdot h_{\alpha})\\
s_{\ed} &:= \left\{
\begin{aligned}
&k, &&\text{for existence}\\
&k^{2}, &&\text{for polynomial-time construction}
\end{aligned}
\right.\\
\phi &:= 1/(n^{1/k}\cdot \kappa_{\ed}\cdot 3000k\cdot \log n \cdot s^{2}_{\nc}\cdot s_{\ed}\cdot \alpha_{\max}) = (\Otilde(1)\cdot n^{O(1/k)})^{-1},
\end{align*}
where $\kappa_{\ed} = \Otilde(1)\cdot n^{O(1/k)}$ by \Cref{thm:VertexED} or \Cref{thm:PolyED} and our choice of $s_{\ed}$. For better understanding, the reader may assume $s_{\ed} = k$ throughout this section.

\paragraph{Construction.} We construct an \emph{expander hierarchy} ${\cal H}$ with expansion $\phi$ and maximum level $i_{\max}$, where $i_{\max}$ will be bounded by \Cref{lemma:Hierarchy} later. The hierarchy ${\cal H}$ consists of pure node-weightings $\{A'_{i}\mid 0\leq i\leq i_{\max}\}$ and vertex moving cuts $\{C_{i,\alpha}\mid 1\leq i\leq i_{\max},1\leq\alpha\leq \alpha_{\max}\}$.

The construction of the hierarchy is standard. Roughly speaking, starting with the entire vertex set as the node-weighting, at each level, we perform expander decompositions on the current node-weighting, and let the new cuts be the next-level node-weighting. The only nonstandard aspect is that, at each level, we perform expander decompositions with two types of length parameters $h_{\ed,\alpha}$ and $\hat{h}_{\ed,\alpha}$. This corresponds to the two types of neighborhood covers $\mathcal{N}_{i,\alpha}$ and $\hat{\cal N}_{i,\alpha}$ introduced later in \Cref{sect:OracleNeighborhoodCover}. The formal construction algorithm of the hierarchy $\mathcal{H}$ is as follows.

\medskip

\noindent\underline{Initialization.} Initialize the bottom node-weighting $A'_{0}$ to be $V(G)$.

\medskip

\noindent\underline{Loop.} For each level $i=1,2,3,...$,
\begin{enumerate}
\item for each length scale $\alpha$, let $C'_{i,\alpha}$ be an $(h_{\ed,\alpha},s_{\ed})$-length $\phi$-vertex expander decomposition of $A'_{i-1}$ in $G$, and let $\hat{C}'_{i,\alpha}$ be an $(\hat{h}_{\ed,\alpha},s_{\ed})$-length $\phi$-vertex expander decomposition of $A'_{i-1}$ in $G$. Both $C'_{i,\alpha}$ and $\hat{C}'_{i,\alpha}$ can be computed by \Cref{thm:PolyED}.
\item Modified $C'_{i,\alpha}$ into $C_{i,\alpha}$ as follows.
\begin{equation*}
C_{i,\alpha}(v\text{ or }e) =\left\{
\begin{aligned}
&C'_{i,\alpha}(v) + \sum_{e \text{ incident to }v}C'_{i,\alpha}(e)/2,\ &&\text{for vertex $v$}\\
&0, &&\text{for edge $e$}
\end{aligned}
\right.,
\end{equation*}
Obtain $\hat{C}_{i,\alpha}$ from $\hat{C}'_{i,\alpha}$ in a similar way.
\item Let $A'_{i}$ be a pure node-weighting obtained by sampling each vertex $v\in V(G)$ independently with probability 
\[
1500k\cdot \log n\cdot s^{2}_{\nc}\cdot s_{\ed}\cdot \sum_{1\leq \alpha\leq \alpha_{\max}}\left(C_{i,\alpha}(v) + \hat{C}_{i,\alpha}(v)\right).
\]
\end{enumerate}

\noindent{\underline{Terminate.}} We reach the top level $i_{\max}:=i$ once $A'_{i}$ is empty.

\medskip

We make several remarks about the algorithm above. Step 1 is the main step of building the hierarchy by performing the decompositions. Step 2 moves the cut values on edges to their incident vertices, which is mainly for simplifying the analysis. For example, this ensures that when applying the modified cuts $C_{i,\alpha}$ and $\hat{C}_{i,\alpha}$ on $G$ (i.e., the weakly expanding graphs defined in \Cref{sect:OracleExpandingGraphs}), the edge lengths will not change. Step 3 is a sampling procedure that rounds the fractional moving cuts to integral vertex cuts. We note that this is a standard technique in the literature of LC-expanders (see e.g. \cite{DBLP:conf/stoc/HaeuplerR022}), and this is the only randomized part of our algorithm\footnote{The edge-failure distance oracles in \cite{HLRS26} also rely on (some stronger form of) LC-expander hierarchies, and they shows how to derandomize this part using the path restoration lemma of \cite{DBLP:journals/dc/AfekBKCM02}. However, in the vertex-failure setting, there is no analogous lemma.}. We choose the sampling rate so as to establish the following \Cref{lemma:Hierarchy} and \Cref{lemma:HighProbabilityEvent}.

\paragraph{Bounding the Number of Levels.} We conclude the construction of the hierarchy by bounding its number of levels, as shown in \Cref{lemma:Hierarchy}. The bound follows from the simple observation that the number of terminals decreases geometrically with each level.

\begin{lemma}
\label{lemma:Hierarchy}
With probability $1-n^{-10}$, the hierarchy ${\cal H}$ has maximum level
\[
i_{\max} \leq 15k.
\]
\end{lemma}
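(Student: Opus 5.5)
We show that the number of terminals shrinks geometrically in expectation, by a factor of $n^{-1/k}$ per level, and then conclude with Markov's inequality and a union bound.

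\textbf{Step 1 (one level, in expectation).} Fix a level $i\geq 1$ and condition on $A'_{i-1}$. By \Cref{thm:PolyED} (or \Cref{thm:VertexED}), each decomposition $C'_{i,\alpha}$ and $\hat{C}'_{i,\alpha}$ has size at most $\kappa_{\ed}\phi|A'_{i-1}|$. Step 2 of the construction only moves edge values onto endpoints, so it preserves total size:
\[
\sum_v C_{i,\alpha}(v)=|C'_{i,\alpha}|,
\]
and likewise for $\hat{C}_{i,\alpha}$. Summing over the $\alpha_{\max}$ scales and both cut types, the expected number of sampled vertices is at most
\[
1500k\log n\, s_{\nc}^{2}s_{\ed}\cdot 2\alpha_{\max}\cdot\kappa_{\ed}\phi|A'_{i-1}|.
\]
The definition of $\phi$ contains the factor $1/(n^{1/k}\kappa_{\ed}\cdot 3000k\log n\, s_{\nc}^2 s_{\ed}\alpha_{\max})$, so this is exactly at most $|A'_{i-1}|/n^{1/k}$. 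The sampling probability is the minimum of that expression and $1$; capping only decreases it, so the bound still holds. Hence $\mathbb{E}[|A'_i|\mid A'_{i-1}]\leq |A'_{i-1}|\,n^{-1/k}$.

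\textbf{Step 2 (iterate).} By the tower property, $\mathbb{E}[|A'_i|]\leq n\cdot n^{-i/k}$. Taking $i=15k$ gives $\mathbb{E}[|A'_{15k}|]\leq n^{-14}$.

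\textbf{Step 3 (conclude).} Since $|A'_{15k}|$ is a nonnegative integer, Markov's inequality gives
\[
\Pr[A'_{15k}\neq\emptyset]\leq n^{-14}\leq n^{-10}.
\]
Once some $A'_i$ is empty, all later cuts are empty and sampling stops. So with probability at least $1-n^{-10}$ the loop terminates by level $15k$, giving $i_{\max}\leq 15k$.

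\textbf{Expected obstacle.} The only delicate point is Step 1: checking that the constants in $\phi$ exactly absorb the sampling rate, the factor 2 from the two cut types, and the $\alpha_{\max}$ scales. The size bound $\kappa\phi|A|$ applies to a decomposition of the whole node weighting $A'_{i-1}$ in $G$, as the construction uses it. Everything else is routine.
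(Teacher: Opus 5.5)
Your proposal is correct and follows essentially the same route as the paper: bound each cut's size by $\kappa_{\ed}\phi|A'_{i-1}|$, use the constant built into $\phi$ to get an expected shrinkage factor of $n^{-1/k}$ per level, and finish with Markov's inequality. The only differences are cosmetic. The paper stops at level $12k$, where the expected size is at most $n^{-11}$, which already implies the stated $15k$ bound. You also spell out the conditioning and tower-property step and the capping of the sampling probability at $1$, which the paper leaves implicit.
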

\begin{proof}
For each level $i\geq 1$, by \Cref{thm:VertexED}, we have 
\[
|C_{i,\alpha}| = |C'_{i,\alpha}|\leq \kappa_{\ed}\cdot \phi\cdot |A'_{i-1}|,
\]
and similarly $|\hat{C}_{i,\alpha}|\leq \kappa_{\ed}\cdot\phi \cdot |A'_{i-1}|$.

By the construction of $A'_{i}$, in expectation, we have
\begin{align*}
|A'_{i}|&\leq 1500k\cdot \log n\cdot s_{\nc}^{2}\cdot s_{\ed}\cdot\sum_{\alpha}\left(|C_{i,\alpha}| + |\hat{C}_{i,\alpha}|\right)\\
&\leq 3000k\cdot \log n\cdot \alpha_{\max}\cdot s^{2}_{\nc}\cdot s_{\ed}\cdot \kappa_{\ed}\cdot \phi \cdot |A'_{i-1}|\\
&\leq |A'_{i-1}|/n^{1/k}.
\end{align*}
Therefore, in expectation, we have $|A'_{12k}| \leq n^{-11}$. By Markov, $|A'_{12k}|<1$, i.e., $|A'_{12k}| = 0$, with probability at least $1-n^{-10}$. Thus, we can conclude that $i_{\max}\leq 12 k$.

\end{proof}

\subsubsection{The Expanding Graphs} 
\label{sect:OracleExpandingGraphs}

Next, we define two classes of graphs obtained by applying the moving cuts in the hierarchy: strongly expanding graphs and weakly expanding graphs. Weakly expanding graphs are more central, and we will work with them throughout, except when constructing low-degree Steiner trees (\Cref{sect:SteinerTrees}).

\paragraph{The Strongly Expanding Graphs.} For each level $i\in[0,i_{\max}]$ and length scale $\alpha\in[1,\alpha_{\max}]$, define
\[
G^{\star}_{i,\alpha}:= G - (h_{\ed,\alpha}\cdot s_{\ed})\cdot C'_{i,\alpha}.
\]
By \Cref{thm:PolyED}, $A'_{i-1}$ is strong $(h_{\ed,\alpha},s_{\ed})$-length $\phi$-vertex-expanding in $G^{\star}_{i,\alpha}$. %
The only reason we need strongly expanding graphs in the construction of low-degree Steiner trees is that the expander-routing result in \Cref{lemma:ExpanderRouting} applies only to such graphs.

\paragraph{The Weakly Expanding Graphs.} Before defining the weakly expanding graphs, we first define the terminals $\{A_{i}\mid 0\leq i\leq i_{\max}\}$. For each level $i\in[0,i_{\max}]$, define the \emph{level-$i$ terminals} to be 
\[
A_{i} := A'_{i}\setminus A'_{>i}.
\]
We introduce the new terminal sets $\{A_i\}$, rather than using the old pure node-weightings $\{A'_i\}$, is because we want to partition the vertices into distinct levels. Formally speaking, $\{A_{i}\mid 0\leq i\leq i_{\max}\}$ forms a partition of $V$, so naturally, for each vertex $v\in V$, its $\level(v)$ is the unique level $i$ with $A_{i}\ni v$.

Now we define the weakly expanding graphs. For each length scale $\alpha\in[1,\alpha_{\max}]$, define the \emph{level-$i$ scale-$\alpha$ expanding graph} to be
\[
G_{i,\alpha} := G - ((h_{\ed,\alpha}\cdot s_{\ed})\cdot C_{>i} + (\hat{h}_{\ed,\alpha}\cdot s_{\ed})\cdot \hat{C}_{>i})
\]
where $C_{>i} = \sum_{i'>i,\alpha}C_{i',\alpha}$ and $\hat{C}_{>i} = \sum_{i'>i,\alpha}\hat{C}_{i',\alpha}$. This graph $G_{i,\alpha}$ is weakly expanding as formalized in \Cref{ob:WeakExpanding}.

\begin{observation}[Weakly Expanding]
\label{ob:WeakExpanding}
For each level $i$ and length scale $\alpha$, $A_{i}$ is weakly $(h_{\ed,\alpha},s_{\ed})$-length $\phi$-vertex-expanding and also weakly $(\hat{h}_{\ed,\alpha},s_{\ed})$-length $\phi$-vertex-expanding in $G_{i,\alpha}$.
\end{observation}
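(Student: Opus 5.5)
The plan is to reduce the weak expansion of $A_i$ in $G_{i,\alpha}$ to the \emph{strong} expansion that the decomposition computed at loop iteration $i+1$ guarantees for $A'_i$. Fix a level $i$ and a scale $\alpha$, and write $h:=h_{\ed,\alpha}$; the $\hat{h}_{\ed,\alpha}$ case is identical, with $\hat{C}'_{i+1,\alpha}$ replacing $C'_{i+1,\alpha}$. The degenerate case $i=i_{\max}$ is vacuous: the construction stops exactly when $A'_{i_{\max}}=\emptyset$, so $A_{i_{\max}}=\emptyset$ and only the zero demand is $A_{i_{\max}}$-respecting.

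For $i<i_{\max}$, recall that $C'_{i+1,\alpha}$ is an $(h,s_{\ed})$-length $\phi$-vertex-expander decomposition of $A'_i$ in $G$. Hence $A'_i$ is strongly $(h,s_{\ed})$-length $\phi$-vertex-expanding in $G^\star_{i+1,\alpha}=G-(h s_{\ed})\cdot C'_{i+1,\alpha}$. The key step is the length comparison
\[
\leng_{G}(P)\le \leng_{G^\star_{i+1,\alpha}}(P)\le \leng_{G_{i,\alpha}}(P)\quad\text{for every path }P.
\]
The first inequality holds because moving cuts only increase lengths. For the second, note that $G_{i,\alpha}$ adds $(h s_{\ed})\cdot C_{>i}$, and $C_{>i}$ contains the term $C_{i+1,\alpha}$ with exactly the same multiplier $h s_{\ed}$; all other terms are nonnegative. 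By Step 2 of the construction, $C_{i+1,\alpha}(v)=C'_{i+1,\alpha}(v)+\sum_{e\ni v}C'_{i+1,\alpha}(e)/2$. Since path length counts the lengths of all vertices on $P$, including both endpoints, every edge $e=(u,w)$ used by $P$ contributes at least $C'_{i+1,\alpha}(e)/2$ through $u$ and again through $w$, i.e.\ at least $C'_{i+1,\alpha}(e)$ in total. Each vertex $v$ on $P$ likewise contributes at least $C'_{i+1,\alpha}(v)$. Summing over $P$, the extra length of $P$ in $G_{i,\alpha}$ dominates its extra length in $G^\star_{i+1,\alpha}$. The main care needed is exactly this edge-to-vertex charging with the path-length convention; a non-simple path only helps, since contributions are counted with multiplicity.

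Given the comparison, let $D$ be an $h$-length $A_i$-respecting demand on $G_{i,\alpha}$. Every pair in $\supp(D)$ is within distance $h$ in $G_{i,\alpha}$, and hence within distance $h$ in $G^\star_{i+1,\alpha}$, so $D$ is $h$-length there as well. Since $A_i\subseteq A'_i$, $D$ is also $A'_i$-respecting. The first item of \Cref{thm:lce-routing} then routes $D$ in $G^\star_{i+1,\alpha}$ with vertex congestion $O(\log n/\phi)$ and length $h s_{\ed}$. Because lengths in $G$ are no larger than in $G^\star_{i+1,\alpha}$, the same flow routes $D$ in $G$ with the same congestion and length at most $h s_{\ed}$. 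This is precisely the requirement of \Cref{def:WeakExpanding}, so $A_i$ is weakly $(h_{\ed,\alpha},s_{\ed})$-length $\phi$-vertex-expanding in $G_{i,\alpha}$, and the same argument with $\hat{C}'_{i+1,\alpha}$ gives the $\hat{h}_{\ed,\alpha}$ statement.
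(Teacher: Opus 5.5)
Your argument is correct and is essentially the paper's proof: you use strong expansion of $A'_i$ in $G-(h_{\ed,\alpha}s_{\ed})\cdot C'_{i+1,\alpha}$, together with $A_i\subseteq A'_i$ and the fact that lengths in $G_{i,\alpha}$ dominate lengths in that graph, and you justify this more carefully than the paper does (the edge-to-vertex charging, the correct index $i+1$, and the vacuous top level). The only slip is your aside that non-simple paths ``only help'': the charging can fail for a walk that immediately backtracks along the same edge, but this is harmless because shortest paths and flow paths may be taken simple.
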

\begin{proof}
We focus on proving that $A_i$ is weakly
$(h_{\ed,\alpha},s_{\ed})$-length $\phi$-vertex-expanding in
$G_{i,\alpha}$, and the proof of the second half is identical.

Recall that $A'_{i}$ is strongly $(h_{\ed,\alpha},s_{\ed})$-length $\phi$-vertex-expanding in $G - (h_{\ed,\alpha}\cdot s_{\ed})\cdot C'_{i+1}$, i.e. the strongly expanding graph $G^{\star}_{i,\alpha}$. Now, observe that $A_i\subseteq A'_i$ and that distances in
$G_{i,\alpha}$ are no smaller than the corresponding distances in
$G^\star_{i,\alpha}$, so $A_{i}$ is weakly $(h_{\ed,\alpha},s_{\ed})$-length $\phi$-vertex-expanding in $G_{i,\alpha}$ by \Cref{def:WeakExpanding}.
\end{proof}

After defining the weakly expanding graphs, \Cref{lemma:HighProbabilityEvent} shows that the terminal set $A_{>i}$ can be viewed as an integral analogue of the moving cuts $C_{>i}$ and $\hat{C}_{>i}$. More precisely, for any path disjoint from $A_{>i}$ (or equivalently, any path that survives the integral cut $A_{>i}$), with high probability, it has length in $G_{i,\alpha}$ at most an additive $h_{\alpha}$ larger than its length in the original graph $G$. Thus, such a path is cut only slightly by the moving cuts $C_{>i}$ and $\hat{C}_{>i}$. In fact, \Cref{lemma:HighProbabilityEvent} fixes a path $P$ in advance and shows that the above statement holds simultaneously for all subpaths $P'$ of $P$ with high probability using a standard union bound. When we apply \Cref{lemma:HighProbabilityEvent} in the analysis of the query algorithm, the path $P$ will be chosen as an optimal shortest path.

\begin{lemma}
\label{lemma:HighProbabilityEvent}
Fix a path $P$ in $G$. The following events happen simultaneously with probability $1-n^{-10}$.
For all levels $i$, length scales $\alpha$ and all subpaths $P'$ of $P$, if $P'$ is disjoint from $A_{>i}$, then
\[
\leng_{G_{i,\alpha}}(P')\leq \leng_{G}(P') + h_{\alpha}\cdot \frac{i_{\max}-i}{i_{\max}}
\]
\end{lemma}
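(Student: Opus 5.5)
The idea is a sampling-versus-cut-mass argument plus a union bound. Fix $i$, $\alpha$ and a subpath $P'$ of $P$. Since $C_{i',\alpha'}$ and $\hat C_{i',\alpha'}$ live only on vertices (Step 2 of the construction moved all edge mass to endpoints), the additive increase is
\[
\leng_{G_{i,\alpha}}(P')-\leng_G(P') \;=\; s_{\ed}\sum_{v\in P'}\bigl(h_{\ed,\alpha}\,C_{>i}(v)+\hat h_{\ed,\alpha}\,\hat C_{>i}(v)\bigr)\;\le\; \Theta(k^{2})\,s_{\ed}\,h_{\alpha}\cdot X(P'),
\]
where $X(P'):=\sum_{v\in P'}\sum_{i'>i,\alpha'}\bigl(C_{i',\alpha'}(v)+\hat C_{i',\alpha'}(v)\bigr)$. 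Here I use $h_{\ed,\alpha}=\Theta(s_{\nc}h_\alpha)$, $\hat h_{\ed,\alpha}=\Theta(s_{\nc}^2h_\alpha)$ and $s_{\nc}=k$.

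So it suffices to show the following with high probability: every subpath $P'$ with $X(P')>\frac{1}{\Theta(k^2 s_{\ed} i_{\max})}$ contains some vertex of $A_{>i}$. I would use threshold $\frac{1}{\Theta(k^3 s_{\ed})}$, absorbing $(i_{\max}-i)/i_{\max}\ge 1/i_{\max}$ with $i_{\max}=O(k)$. This is the case $i<i_{\max}$; for $i=i_{\max}$ the cut sum is empty and the bound is trivial.

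For a fixed level $i'>i$, each vertex $v$ is put into $A'_{i'}$ independently with probability $p_{i'}(v)=\min\{1,\,1500k\log n\,s_{\nc}^2 s_{\ed}\sum_{\alpha'}(C_{i',\alpha'}(v)+\hat C_{i',\alpha'}(v))\}$. Summing over $i'>i$, the probability that no vertex of $P'$ lands in $\bigcup_{i'>i}A'_{i'}$ is at most
\[
\exp\Bigl(-\sum_{i'>i}\sum_{v\in P'}p_{i'}(v)\Bigr)\le \exp\bigl(-1500k\log n\,s_{\nc}^2 s_{\ed}X(P')\bigr)
\]
(a single $p=1$ term gives probability $0$ outright). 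Once $X(P')\ge \frac{c}{k^3 s_{\ed}}$, the exponent is $-\Omega(\log n)$ with a large constant, so this probability is at most $n^{-20}$.

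Three points need care, and I expect the first to be the main obstacle.

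\textbf{Levels are built adaptively.} The cuts at level $i'$ depend on the earlier samples, so I would argue level by level, conditioning on the history. At level $i'$ the cuts are fixed given $A'_{i'-1}$, and the fresh independent samples at that level hit $P'$ except with probability $\exp(-\Theta(\cdot)\sum_{v\in P'}p_{i'}(v))$. Chaining these conditional bounds multiplies out to the bound above.

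\textbf{Termination ties to $i_{\max}$.} The number of levels is itself random, so the statement should be combined with \Cref{lemma:Hierarchy} in the failure probability.

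\textbf{From $A'$ to $A$.} Since $A_{>i}=A'_{>i}\setminus A'_{>\cdot}$, any $v\in\bigcup_{i'>i}A'_{i'}$ has level greater than $i$. Hence $\bigcup_{i'>i}A'_{i'}=A_{>i}$.

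Finally, I union bound over the $O(n^2)$ subpaths of $P$, the $O(k)$ levels and the $O(\log n)$ scales. The total failure probability is at most $n^{-10}$ after choosing constants, which is consistent with the $1500$ in the sampling rate.
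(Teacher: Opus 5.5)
Your proof is correct, but it is organized differently from the paper's.

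\emph{The paper's route.} The paper thresholds one level at a time. With high probability, every subpath of $P$ carrying more than $1/\Theta(s_{\nc}^{2}s_{\ed}i_{\max})$ of the level-$(i+1)$ cut mass $C_{i+1}+\hat{C}_{i+1}$ is hit by $A'_{i+1}$. A deterministic induction from $i_{\max}$ downward then adds roughly $h_{\alpha}/i_{\max}$ per level, which is exactly where the factor $(i_{\max}-i)/i_{\max}$ comes from. Each bad event involves a single round of fresh samples, so conditioning on the history before that round handles adaptivity in one step.

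\emph{Your route.} You aggregate all cut mass above level $i$ into $X(P')$ and threshold once, using $\bigcup_{i'>i}A'_{i'}=A_{>i}$, which is correct. The miss probability depends only on the total sampling mass, so this costs the same failure probability. It even yields a stronger, level-independent additive term $h_{\alpha}/i_{\max}$.

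\emph{The step that needs care.} The price is the cross-level dependence you flag. ``Multiplying the conditional bounds'' is not literally meaningful, because the event $\{X(P')\geq\tau\}$ is not determined by the history at any single level. Here is a clean formalization.
\begin{enumerate}
\item Let $Y_{i'}$ be the unclipped sampling mass on $P'$ at level $i'$. It is fixed by $A'_{i'-1}$, and $\Pr[P'\cap A'_{i'}=\emptyset\mid\text{history}]\leq e^{-Y_{i'}}$ holds even when some probability is clipped to $1$.
\item Then $M_{j}=\prod_{i<i'\leq j}\mathbf{1}[P'\cap A'_{i'}=\emptyset]\,e^{Y_{i'}}$ is a nonnegative supermartingale with $M_{i}=1$.
\item Markov's inequality gives $\Pr[P'\cap A_{>i}=\emptyset\text{ and }\sum_{i'>i}Y_{i'}\geq L]\leq e^{-L}$.
\end{enumerate}

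With this written out, and with $i_{\max}\leq 15k$ imported from \Cref{lemma:Hierarchy} (the paper also relies on this), your argument is complete. The paper's route is lighter on probability, while yours gives a slightly stronger conclusion.
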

\begin{proof}
We first show that with probability $1-n^{-10}$, for all levels $i\in[1,i_{\max}]$ and subpaths $P'$ such that
\[
C_{i+1}(P')>1/(10\cdot s^{2}_{\nc}\cdot s_{\ed}\cdot i_{\max})\qquad\text{or}\qquad \hat{C}_{i+1}(P')>1/(10\cdot s^{2}_{\nc}\cdot s_{\ed}\cdot i_{\max}),
\]
we have $P'$ intersects $A'_{i}$. Recall that a vertex $v$ is sampled into $A'_{i}$ with probability 
\[
1500k\cdot \log n\cdot s^{2}_{\nc}\cdot s_{\ed}\cdot \sum_{1\leq \alpha\leq \alpha_{\max}}\left(C_{i,\alpha}(v) + \hat{C}_{i,\alpha}(v)\right).
\]
It means $A'_{i}\cap P$ has expected size $E[|A'_{i}\cap P|]\geq 10\log n$ (recall that $i_{\max}\leq 15k$). By Chernoff, $A'_{i}$ intersects $P$ with probability $1-n^{-13}$. Applying the union bound on all $i$ and $P'$ gives what we want.

The original lemma is a simple corollary of the above. To prove it, we use induction on the levels. The inequality holds for the level $i=i_{\max}$, all scales $\alpha$ and all subpaths $P'$ disjoint from $A_{>i}$, because $G_{i_{\max},\alpha} = G$. Now, consider a level $i<i_{\max}$, length scale $\alpha$, and a subpath $P'$ disjoint from $A_{>i}$. We have
\begin{align*}
\leng_{G_{i,\alpha}}(P') &= \leng_{G_{i+1,\alpha}}(P') + h_{\ed,\alpha}\cdot s_{\ed}\cdot C_{i+1}(P') + \hat{h}_{\ed,\alpha}\cdot s_{\ed}\cdot \hat{C}_{i+1}(P')\\
&\leq \leng_{G}(P') + h_{\alpha}\cdot \frac{i_{\max}-i-1}{i_{\max}} + 10\cdot h_{\alpha}\cdot s^{2}_{\nc}\cdot s_{\ed}\cdot (C_{i+1}(P') + \hat{C}_{i+1}(P'))\\
&\leq \leng_{G}(P') + h_{\alpha}\cdot \frac{i_{\max}-i}{i_{\max}}.
\end{align*}
The first inequality is by the induction hypothesis, $h_{\ed,\alpha} = h_{\alpha}\cdot s_{\nc}$ and $\hat{h}_{\ed,\alpha} = h_{\alpha}\cdot s_{\nc}^{2}$. The second inequality is by $C_{i+1}(P'),\hat{C}_{i+1}(P')\leq 1/(10 \cdot s^{2}_{\nc}\cdot s_{\ed}\cdot i_{\max})$. 
\end{proof}

Lastly, we conclude with two additional observations about weakly expanding graphs. \Cref{ob:LargestVertexLength} states that, all uncut vertices in $A_{\leq i}$ have small vertex length in the weakly expanding graph $G_{i,\alpha}$ (if $A_{>i}$ is viewed as an integral vertex cut). Moreover, the edge lengths remain unchanged from the original graph $G$ to the weakly expanding graph $G_{i,\alpha}$.

\begin{observation}
\label{ob:LargestVertexLength}
For each $G_{i,\alpha}$, any vertex $v\subseteq A_{\leq i}$ has vertex length $\ell_{G_{i,\alpha}}(v)\leq h_{\alpha}$, and any edge $e$ has edge length $\ell_{G_{i,\alpha}}(e) = \ell_{G}(e)$.
\end{observation}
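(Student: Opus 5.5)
The plan is to unfold the definition of $G_{i,\alpha}$ and check vertices and edges separately, using Step 2 of the hierarchy construction and the sampling rule in Step 3. By definition,
\[
\ell_{G_{i,\alpha}} = \ell_G + (h_{\ed,\alpha}\cdot s_{\ed})\cdot C_{>i} + (\hat{h}_{\ed,\alpha}\cdot s_{\ed})\cdot \hat{C}_{>i}.
\]
So the first step is to read off the edge and vertex values of the modified cuts $C_{i',\alpha'}$ and $\hat{C}_{i',\alpha'}$.

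\textbf{Edges.} Step 2 sets $C_{i',\alpha'}(e)=0$ and $\hat{C}_{i',\alpha'}(e)=0$ for every edge $e$. Hence $C_{>i}(e)=\hat{C}_{>i}(e)=0$, and $\ell_{G_{i,\alpha}}(e)=\ell_G(e)$ follows immediately.

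\textbf{Vertices.} Take $v\in A_{\le i}$. By \Cref{assume:input}, $\ell_G(v)=0$, so we must show
\[
s_{\ed}\bigl(h_{\ed,\alpha}C_{>i}(v)+\hat{h}_{\ed,\alpha}\hat{C}_{>i}(v)\bigr)\le h_{\alpha}.
\]
Since $v\in A_{\le i}$, we have $v\notin A'_{i'}$ for every $i'>i$. Otherwise $v$ would lie in $A'_{>i}$, and $v$ belongs to $A_j$ with $j$ the largest index having $v\in A'_j$, giving $\level(v)>i$.

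For each such $i'$, Step 3 samples $v$ with probability
\[
p_{i'}(v)=1500k\log n\cdot s_{\nc}^2 s_{\ed}\sum_{\alpha'}\bigl(C_{i',\alpha'}(v)+\hat{C}_{i',\alpha'}(v)\bigr).
\]
If this expression is at least $1$, then $v$ is sampled with certainty. So $v\notin A'_{i'}$ forces $p_{i'}(v)<1$, which gives
\[
C_{i'}(v),\ \hat{C}_{i'}(v)<\frac{1}{1500k\log n\, s_{\nc}^2 s_{\ed}}.
\]

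Next, sum over the at most $i_{\max}\le 15k$ levels $i'>i$, using \Cref{lemma:Hierarchy}. This yields
\[
C_{>i}(v),\ \hat{C}_{>i}(v)\le \frac{15k}{1500k\log n\, s_{\nc}^2 s_{\ed}}=\frac{1}{100\log n\, s_{\nc}^2 s_{\ed}}.
\]
Finally, plug in $h_{\ed,\alpha}=\Theta(s_{\nc}h_\alpha)$ and $\hat{h}_{\ed,\alpha}=\Theta(s_{\nc}^2h_\alpha)$. The added vertex length is then at most $O(h_\alpha/\log n)\le h_\alpha$, since the hidden constants in $h_{\ed,\alpha}$ and $\hat{h}_{\ed,\alpha}$ are fixed (about $5\cdot\Theta(1)+8$) and are absorbed by the factor $1500$.

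The main obstacle is interpretive rather than technical. The argument needs the deterministic fact that sampling probabilities $\ge 1$ mean certain inclusion, and it needs the bound $i_{\max}\le 15k$, which holds only with high probability. If one prefers to avoid the latter, the alternative is to bound the number of levels by $O(k)$ in the same high-probability event already assumed throughout.
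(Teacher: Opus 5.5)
Your proof is correct and is essentially the paper's argument in contrapositive form. The paper assumes $\ell_{G_{i,\alpha}}(v)>h_\alpha$, pigeonholes to a level $i'>i$ whose cut contributes at least $h_\alpha/i_{\max}$, and concludes that $v$ would then be sampled into $A'_{i'}$ with probability $1$; it reads off the edge claim from Step 2 exactly as you do. Your version is somewhat more careful, since it also handles $\hat{C}$ and checks the constants explicitly, and like the paper's proof it relies on the high-probability bound on $i_{\max}$ from \Cref{lemma:Hierarchy}.
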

\begin{proof}
First we show that a vertex $v\subseteq A_{\leq i}$ has $\ell_{G_{i,\alpha}}(v)\leq h_{\alpha}$. Assume the contrary. Then for some $i'>i$, $\sum_{\alpha} C_{i',\alpha}(v)\cdot(h_{\ed,\alpha}\cdot s_{\ed})\geq h_{\alpha}/i_{\max}$, but then $v$ will be sampled in $A'_{i'}$ with probability $1$, a contradiction. The statement on edge length is by our definition of cuts $C$ and $\hat{C}$.
\end{proof}

The following \Cref{ob:ScaleChange} states that for two weakly expanding graphs $G_{i_{1},\alpha_{1}}$ and $G_{i_{2},\alpha_{2}}$, if $G_{i_1,\alpha_1}$ has both level and length scale at least those of $G_{i_2,\alpha_2}$, then its vertex and edge lengths, normalized by the corresponding length scale, are no larger. The proof follows directly from the definitions and is therefore omitted. This observation is particularly useful for exploiting the LC-expansion property across different weakly expanding graphs.

\begin{observation}[Scale Changing]
\label{ob:ScaleChange}
For each level $i_{1}\geq i_{2}$ and length scale $\alpha_{1}\geq \alpha_{2}$, we have
\[
\ell_{G_{i_{1},\alpha_{1}}}(\cdot)\leq (h_{\alpha_{1}}/h_{\alpha_{2}})\cdot \ell_{G_{i_{2},\alpha_{2}}}(\cdot).
\]
\end{observation}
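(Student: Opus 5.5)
The statement to prove is \Cref{ob:ScaleChange}: for levels $i_1\geq i_2$ and scales $\alpha_1\geq\alpha_2$, every length in $G_{i_1,\alpha_1}$ is at most $(h_{\alpha_1}/h_{\alpha_2})$ times the corresponding length in $G_{i_2,\alpha_2}$. The plan is to compare the two length functions termwise, separately for edges and for vertices, straight from the definition of $G_{i,\alpha}$.

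\textbf{Edges.} By \Cref{ob:LargestVertexLength}, every edge $e$ satisfies $\ell_{G_{i_1,\alpha_1}}(e)=\ell_G(e)=\ell_{G_{i_2,\alpha_2}}(e)$. Since $h_{\alpha_1}/h_{\alpha_2}\geq 1$ and lengths are nonnegative, the inequality for edges follows at once.

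\textbf{Vertices.} For a vertex $v$, the definition gives
\[
\ell_{G_{i,\alpha}}(v)=\ell_G(v)+s_{\ed}\bigl(h_{\ed,\alpha}\, C_{>i}(v)+\hat{h}_{\ed,\alpha}\,\hat{C}_{>i}(v)\bigr).
\]
I will handle the three summands one at a time.
\begin{enumerate}
\item The term $\ell_G(v)$ is $0$ by \Cref{assume:input}, and in any case it is unchanged between the two graphs.
\item Because $i_1\geq i_2$ and all cut values are nonnegative, $C_{>i_1}(v)\leq C_{>i_2}(v)$ and $\hat{C}_{>i_1}(v)\leq \hat{C}_{>i_2}(v)$. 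Raising the level therefore only removes nonnegative terms.
\item It remains to compare the multipliers. Since $h_{\ed,\alpha}=5h_{\diam,\alpha}+8h_\alpha$ and $h_{\diam,\alpha}=\Theta(s_{\nc}h_\alpha)$, I will take $h_{\diam,\alpha}$ to be exactly proportional to $h_\alpha$, say $c\, s_{\nc}h_\alpha$ for a fixed constant $c$. Then $h_{\ed,\alpha}$ is a fixed multiple of $h_\alpha$, and so $h_{\ed,\alpha_1}/h_{\ed,\alpha_2}=h_{\alpha_1}/h_{\alpha_2}$. The same holds for $\hat{h}_{\ed,\alpha}$, using $\hat{h}_{\diam,\alpha}=\Theta(s_{\nc}^2 h_\alpha)$.
\end{enumerate}
Combining the three points, each cut term of $\ell_{G_{i_1,\alpha_1}}(v)$ is at most $(h_{\alpha_1}/h_{\alpha_2})$ times the corresponding cut term of $\ell_{G_{i_2,\alpha_2}}(v)$, which gives the bound for vertices.

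\textbf{Main obstacle.} The only delicate point is step 3: the parameters are specified only up to $\Theta(\cdot)$, so exact proportionality to $h_\alpha$ has to be enforced. I will fix these constants when the parameters are defined in \Cref{sect:OracleNeighborhoodCover}. If a definition were instead affine in $h_\alpha$ with a nonnegative offset, the ratio $h_{\ed,\alpha_1}/h_{\ed,\alpha_2}$ would still be at most $h_{\alpha_1}/h_{\alpha_2}$, so the inequality would survive. Everything else in the argument is pure monotonicity.
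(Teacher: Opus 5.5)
Your proof is correct and is the direct-from-definitions argument the paper has in mind (the paper omits the proof): edge lengths are unchanged, $C_{>i}$ and $\hat{C}_{>i}$ are nonincreasing in $i$ and independent of the graph's scale, and the vertex multipliers scale with $h_\alpha$. Your ``main obstacle'' does not arise, because the paper fixes $h_{\cov,\alpha}=2h_\alpha$ and $h_{\diam,\alpha}=2s_{\nc}h_\alpha$ exactly, so $h_{\ed,\alpha}=(10s_{\nc}+8)h_\alpha$ and $\hat{h}_{\ed,\alpha}=(6s_{\nc}^{2}+8s_{\nc}+8)h_\alpha$ are exact multiples of $h_\alpha$.
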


\subsubsection{The Neighborhood Covers}
\label{sect:OracleNeighborhoodCover}

\paragraph{Parameters for Neighborhood Covers.} For each length scale $\alpha$, define parameters
\begin{align*}
s_{\nc} &= k,\\
\omega &= O(n^{1/s_{\nc}}\cdot s_{\nc}) = \Otilde(n^{1/k}),\\
h_{\cov,\alpha} &= 2\cdot h_{\alpha}, %
&&h_{\diam,\alpha} = h_{\cov,\alpha}\cdot s_{\nc} = \Theta(k\cdot h_{\alpha}),\\
\hat{h}_{\cov,\alpha} &= 3\cdot h_{\diam,\alpha} = \Theta(k\cdot h_{\alpha}),%
&&\hat{h}_{\diam,\alpha} = \hat{h}_{\cov,\alpha}\cdot s_{\nc} = \Theta(k^{2}\cdot h_{\alpha}).
\end{align*}

\paragraph{Construction.} For each level $i\in[0,i_{\max}]$ and length scale $\alpha\in[1,\alpha_{\max}]$, construct the following using \Cref{thm:ConstructNeighborhoodCover}. 
\begin{itemize}
\item Construct a neighborhood cover ${\cal N}_{i,\alpha}$ in $G_{i,\alpha}$ with covering radius $h_{\cov,\alpha}$, diameter $h_{\diam,\alpha}$ and width $\omega$. Also for each vertex $v$, identify a cluster ${\cal N}_{i,\alpha}(v)\in{\cal N}_{i,\alpha}$ that fully contains the $h_{\cov,\alpha}$-neighborhood of $v$ in $G_{i,\alpha}$.

\item Construct a \emph{satellite neighborhood cover} $\hat{\cal N}_{i,\alpha}$ in $G_{i,\alpha}$ with covering radius $\hat{h}_{\cov,\alpha}$, diameter $\hat{h}_{\diam,\alpha}$ and width $\omega$. Similarly, for each vertex $v$, identify a satellite cluster $\hat{\cal N}_{i,\alpha}(v)\in \hat{\cal N}_{i,\alpha}$ containing $v$'s $\hat{h}_{\cov,\alpha}$-neighborhood.
\end{itemize}
Whenever we refer to a cluster $S$, we mean a cluster in some cluster ${\cal N}_{i,\alpha}$. By contrast, any $\hat{S}\in \hat{\cal N}_{i,\alpha}$ will be explicitly called a \emph{satellite cluster}. We explain the intuition behind satellite neighborhood covers in \Cref{ob:SatelliteCover}, after introducing the necessary notation.

\paragraph{Notations.} For each cluster $S\in {\cal N}_{i,\alpha}$, we define its \emph{terminal vertices} to be 
\[
V_{\tmn}(S):=S\cap A_{i}.
\]
We also pick an arbitrary vertex in $S$, denoted by $\pivot(S)$, as its \emph{pivot}.

For each level $i$, length scale $\alpha$, and cluster $S\in{\cal N}_{i,\alpha}$, and for every level $i'>i$ and length scale $\alpha'\geq\alpha$, we assign to $S$ the satellite cluster $\hat{\cal N}_{i',\alpha'}(\pivot(S))$, which we denote by
\[
\satellite_{i',\alpha'}(S):=\hat{\cal N}_{i',\alpha'}(\pivot(S)).
\]

\paragraph{The Surrounding Set $\Lambda(S)$.}
For each level $i$, length scale $\alpha$ and cluster $S\in{\cal N}_{i,\alpha}$, define
\[
\Lambda(S) := A_{>i} \cap\{v\mid v\in S\text{ or there is an edge in $G$ connecting $v$ and $S$ of length $\leq h_{\alpha}$}\}.
\]
Note that the input graph $G$ and all weakly expanding graphs $G_{i,\alpha}$ have the same edge lengths (see \Cref{ob:LargestVertexLength}), so the second term remains unchanged if we consider $G_{i,\alpha}$ instead of $G$. Intuitively, $\Lambda(S)$ consists of the high-level vertices that either belong to $S$ or are connected to $S$ by short edges. For convenience, for each level $i'>i$, we use
\begin{align*}
\Lambda_{i'}(S)&:= \Lambda(S)\cap A_{i'}.
\end{align*}
to denote the level-$i'$ vertices in the surrounding set of $S$. Naturally, define $\Lambda_{\leq i'}(S):=\Lambda(S)\cap A_{\leq i'}$

The following \Cref{ob:SurroundingDiam} basically bounds the diameter of the surrounding set of a cluster in different graphs, including the original graph after failures and the weakly expanding graphs.

\begin{observation}
\label{ob:SurroundingDiam}
Consider a level $i$, a length scale $\alpha$, and a cluster $S\in {\cal N}_{i,\alpha}$. For each level $i'>i$ and length scale $\alpha'\geq \alpha$, we have 
\[
\diam_{G_{i',\alpha'}}(\Lambda_{\leq i'}(S))\leq h_{\diam,\alpha'} + 4h_{\alpha'}.
\]
For any failed set $F$ that is disjoint from $S$, we have $\diam_{G\setminus F}(\Lambda(S)\setminus F)\leq h_{\diam,\alpha'} + 2h_{\alpha'}$.
\end{observation}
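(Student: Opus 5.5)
The plan is to bound the distance between any two surrounding vertices $u,v$ by routing through $S$ itself. Fix $u\in\Lambda_{\leq i'}(S)$. By definition either $u\in S$, or there is an edge $e_u$ of length at most $h_{\alpha}$ joining $u$ to some $u'\in S$; in the first case set $u'=u$. Define $v'$ for $v$ in the same way. The candidate $u$-$v$ walk is $e_u$, then a shortest $u'$-$v'$ path inside $G_{i,\alpha}[S]$, then $e_v$. What remains is to bound the length of each piece in the target graph.

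For the first claim, we measure this walk in $G_{i',\alpha'}$ and treat edges and vertices separately.
\begin{itemize}
\item \emph{Edges.} By \Cref{ob:LargestVertexLength}, edge lengths in $G_{i',\alpha'}$ equal those in $G$. So $e_u$ and $e_v$ each contribute at most $h_{\alpha}\leq h_{\alpha'}$.
\item \emph{Endpoints $u,v$.} These lie in $A_{\leq i'}$, so by \Cref{ob:LargestVertexLength} each has vertex length at most $h_{\alpha'}$ in $G_{i',\alpha'}$.
\item \emph{Inner path.} Since $i'>i$ and $\alpha'\geq\alpha$, \Cref{ob:ScaleChange} gives $\ell_{G_{i',\alpha'}}\leq (h_{\alpha'}/h_{\alpha})\cdot\ell_{G_{i,\alpha}}$. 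The strong diameter of $S$ in $G_{i,\alpha}$ is at most $h_{\diam,\alpha}$, so the inner path costs at most $(h_{\alpha'}/h_{\alpha})h_{\diam,\alpha}=h_{\diam,\alpha'}$, because $h_{\diam,\alpha}$ is linear in $h_{\alpha}$.
\end{itemize}
Summing gives $h_{\diam,\alpha'}+2h_{\alpha'}+2h_{\alpha'}$, as claimed. One bookkeeping point needs care: the inner path already counts the vertex lengths of $u'$ and $v'$, so $u$ and $v$ must be charged separately, which is where the second $2h_{\alpha'}$ comes from. If $u\in S$, its length is already included in the inner path, and the bound only improves.

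For the second claim, take $u,v\in\Lambda(S)\setminus F$ and use the same walk. Since $F\cap S=\emptyset$, the inner path avoids $F$, and $u,v\notin F$ by assumption, so the whole walk survives in $G\setminus F$. Now measure it in $G$. Vertex lengths in $G$ are $0$ by \Cref{assume:input}, and $\ell_G\leq\ell_{G_{i,\alpha}}$, so the inner path has length at most $h_{\diam,\alpha}$. The two edges contribute at most $2h_{\alpha}$. Hence the total is at most $h_{\diam,\alpha}+2h_{\alpha}\leq h_{\diam,\alpha'}+2h_{\alpha'}$.

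The only delicate step is the scale change: we must confirm that \Cref{ob:ScaleChange} applies to vertex lengths inside $S$, including the cut-induced lengths of non-terminal vertices. It does, since the observation is stated for all of $\ell(\cdot)$. We must also confirm that $h_{\diam,\alpha}$ scales exactly proportionally with $h_{\alpha}$, which holds by the definition $h_{\diam,\alpha}=2s_{\nc}h_{\alpha}$.
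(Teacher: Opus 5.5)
Your proof is correct and follows the same route as the paper. You route $u$ to $v$ through $S$ via the short edges, bound the inner piece by rescaling the diameter of $S$ with \Cref{ob:ScaleChange}, and charge at most $h_{\alpha'}$ to each of $u,v$ via \Cref{ob:LargestVertexLength}. For the failure case you use zero vertex lengths in $G$ together with $S\cap F=\emptyset$. You are just more explicit than the paper about the endpoint bookkeeping and about why the walk survives in $G\setminus F$, and there you in fact get the slightly sharper bound $h_{\diam,\alpha}+2h_{\alpha}$, which is the form the paper later uses for type-2 links.
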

\begin{proof}
We show the first inequality. Consider two vertices $u,v\in \Lambda_{\leq i'}(S)$. we have
\[
\dist_{G_{i',\alpha'}}(u,v)\leq \diam_{G_{i',\alpha'}}(S) + 2h_{\alpha} + \ell_{G_{i',\alpha'}}(u) + \ell_{G_{i',\alpha'}}(v),
\]
since each of $u$ and $v$ is either inside $S$ or connected to $S$ through an edge of length at most $h_{\alpha}$. Then $\dist_{G_{i',\alpha'}}(u,v)$ is further bounded by $h_{\diam,\alpha'} + 4h_{\alpha'}$ by the following reasons. First $\ell_{G_{i',\alpha'}}(w)\leq h_{\alpha'}$ for each vertex $w\subseteq A_{\leq i'}$ by \Cref{ob:LargestVertexLength}. Moreover, since $\diam_{G_{i,\alpha}}(S)\leq h_{\diam,\alpha}$, we have $\diam_{G_{i',\alpha'}}(S)\leq h_{\diam,\alpha'}$ by \Cref{ob:ScaleChange}.

The second inequality can be proven similarly. In particular, the second term is now $2h_{\alpha'}$ instead of $4h_{\alpha'}$ because the input graph $G$ has zero vertex length.
\end{proof}

The following \Cref{ob:SatelliteCover} illustrates the motivation for defining satellite clusters: we use the satellite cluster $\hat{S}$ to cover both the cluster $S$ itself and all clusters $S'$ ``adjacent'' to $S$.

\begin{observation}
\label{ob:SatelliteCover}
For each level $i$, length scale $\alpha$, and cluster $S\in {\cal N}_{i,\alpha}$, let $\hat{S}=\satellite_{i',\alpha'}(S)$ be the satellite cluster assigned to $S$ for some other level $i'>i$ and length scale $\alpha'\geq \alpha$. Then 
\begin{itemize}
\item $\Lambda_{i'}(S)\subseteq \hat{S}$,
\item for each vertex $v\in \Lambda_{i'}(S)$, every cluster $S'\in{\cal N}_{i',\alpha'}$ containing $v$ has $S'\subseteq \hat{S}$.
\end{itemize}
\end{observation}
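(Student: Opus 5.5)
Both items reduce to distance bounds in $G_{i',\alpha'}$ measured from $\pivot(S)$, combined with the covering property of $\hat{\cal N}_{i',\alpha'}$. Recall that $\hat{S}=\hat{\cal N}_{i',\alpha'}(\pivot(S))$ contains the entire $\hat{h}_{\cov,\alpha'}$-neighborhood of $\pivot(S)$ in $G_{i',\alpha'}$, where $\hat{h}_{\cov,\alpha'}=3h_{\diam,\alpha'}$. It therefore suffices to show that every vertex we need lies within distance $3h_{\diam,\alpha'}$ of $\pivot(S)$ in $G_{i',\alpha'}$.

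For the first item, take $v\in\Lambda_{i'}(S)$. Since $\pivot(S)\in S$ and $S\in{\cal N}_{i,\alpha}$, either $v\in S$, or $v$ is joined to some $w\in S$ by an edge of length at most $h_\alpha$. I will mimic the proof of \Cref{ob:SurroundingDiam} to get
\[
\dist_{G_{i',\alpha'}}(\pivot(S),v)\le \diam_{G_{i',\alpha'}}(S)+h_{\alpha}+\ell_{G_{i',\alpha'}}(v).
\]
The pivot's own vertex length is already counted inside the diameter, since the diameter includes endpoints. By \Cref{ob:ScaleChange}, $\diam_{G_{i',\alpha'}}(S)\le h_{\diam,\alpha'}$. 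By \Cref{ob:LargestVertexLength}, $\ell_{G_{i',\alpha'}}(v)\le h_{\alpha'}$, because $v\in A_{i'}\subseteq A_{\le i'}$. Also $h_\alpha\le h_{\alpha'}$. The total is at most $h_{\diam,\alpha'}+2h_{\alpha'}\le 3h_{\diam,\alpha'}$, using $h_{\diam,\alpha'}=2kh_{\alpha'}\ge 2h_{\alpha'}$. Hence $v\in\hat S$.

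For the second item, take $v\in\Lambda_{i'}(S)$ and a cluster $S'\in{\cal N}_{i',\alpha'}$ with $v\in S'$, and let $u\in S'$ be arbitrary. Since $S'$ has strong diameter at most $h_{\diam,\alpha'}$ in $G_{i',\alpha'}$, there is a $v$-$u$ path inside $S'$ of length at most $h_{\diam,\alpha'}$, and this length already counts $\ell(v)$. Concatenating it with the $\pivot(S)$-$v$ path from the first item, and counting $v$ only once, gives
\[
\dist_{G_{i',\alpha'}}(\pivot(S),u)\le \bigl(h_{\diam,\alpha'}+2h_{\alpha'}\bigr)+h_{\diam,\alpha'}\le 3h_{\diam,\alpha'}.
\]
So $u\in\hat S$, and therefore $S'\subseteq\hat S$.

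No step is a genuine obstacle; the only thing to get right is the vertex-length bookkeeping. The paper's length of a path counts its endpoint vertices, so I must make sure the shared vertex $v$ is not charged twice in the concatenation, or else allow a spare $h_{\alpha'}$ of slack. The constant $3$ in $\hat h_{\cov}$ gives room for that slack. Finally, the use of \Cref{ob:ScaleChange} needs $i'\ge i$ and $\alpha'\ge\alpha$, both of which hold by hypothesis.
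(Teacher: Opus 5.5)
Your proof is correct and follows essentially the same route as the paper: bound $\dist_{G_{i',\alpha'}}(\pivot(S),v)$ by $h_{\diam,\alpha'}+2h_{\alpha'}$ using \Cref{ob:ScaleChange} and \Cref{ob:LargestVertexLength}, then extend by the strong diameter of $S'$ and compare against $\hat{h}_{\cov,\alpha'}=3h_{\diam,\alpha'}$. The only difference is that the paper subtracts $\ell_{G_{i',\alpha'}}(v)$ explicitly to avoid counting $v$ twice, whereas you absorb the possible double count into the slack; this also works because $2h_{\alpha'}\le h_{\diam,\alpha'}$.
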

\begin{proof}
Let $u:=\pivot(S)$ be the pivot vertex of $S$. By definition, $\hat{S}$ covers the $\hat{h}_{\cov,\alpha'}$-neighborhood of $u$ in $G_{i',\alpha'}$. Recall that $\hat{h}_{\cov,\alpha'} = 3h_{\diam,\alpha'}$.

For each vertex $v\in \Lambda_{i'}(S)$, we have 
\[
\dist_{G_{i',\alpha'}}(u,v) - \ell_{G_{i',\alpha'}}(v)\leq h_{\diam,\alpha'} + h_{\alpha'}.
\]
Here, the term $h_{\diam,\alpha'}$ accounts for the diameter of $S$, after applying the scale-changing rule in \Cref{ob:ScaleChange}, while the term $h_{\alpha'}$ accounts for the possibility that $v$ is connected to $S$ by an edge of length at most $h_{\alpha}\leq h_{\alpha'}$. Therefore, we have $\dist_{G_{i',\alpha'}}(u,v)\leq h_{\diam,\alpha'} + 2h_{\alpha'}\leq \hat{h}_{\cov,\alpha'}$ and thus $v\in \hat{S}$.

Now we show the second claim. Consider a cluster $S'\in{\cal N}_{i',\alpha'}$ containing $v$. For each vertex $w\in S'$, we have
\[
\dist_{G_{i',\alpha'}}(u,w)\leq \dist_{G_{i',\alpha'}}(u,v) - \ell_{G_{i',\alpha'}}(v) + \diam_{G_{i',\alpha'}}(S') \leq (h_{\diam,\alpha'} + h_{\alpha'}) + h_{\diam,\alpha'}\leq \hat{h}_{\cov,\alpha'},
\]
as desired.
\end{proof}

\subsubsection{Low-Degree Low-Diameters Steiner Trees}
\label{sect:SteinerTrees}

\paragraph{Parameters for Steiner Trees.} Define 
\begin{align*}
s_{\tree} &= k,\\
h_{\tree,\alpha} &= 4h_{\ed,\alpha}\cdot s_{\ed}\cdot s_{\tree} = \Theta(k^{3}\cdot h_{\alpha}),\\
\Delta &= \Otilde(n^{1/s_{\tree}}\cdot \omega/\phi) = \Otilde(1)\cdot n^{O(1/k)},\\
\gamma_{\tree} &= \Otilde(n^{1/s_{\tree}}\cdot \omega/\phi)\cdot i_{\max}\cdot \alpha_{\max} = \tilde{O}(1)\cdot n^{O(1/k)}.
\end{align*}

\paragraph{Construction.} For each level $i\in[0,i_{\max}]$ and each length scale $\alpha\in [1,\alpha_{\max}]$, using the following \Cref{thm:SteinerTree}, construct a collection of Steiner trees $\{T_{S}\mid S\in{\cal N}_{i,\alpha}\}$ in $G$, each of which spans the terminals of its cluster and has diameter $h_{\tree,\alpha}$ and maximum degree $\Delta$. Moreover, each vertex appears in at most $\gamma_{\tree}$ many Steiner trees (over all levels, length scales and clusters).

\begin{theorem}
\label{thm:SteinerTree}
For each level $i$ and scale $\alpha$, there is an algorithm that computes in $G_{i,\alpha}$ a collection of Steiner trees $\{T_{S}\mid S\in{\cal N}_{i,\alpha}\}$ one-to-one corresponding to clusters in ${\cal N}_{i,\alpha}$ with the following properties.
\begin{itemize}
\item Each Steiner tree $T_{S}$ spans the terminals $V_{\tmn}(S)$ of its cluster $S$.
\item Each $T_{S}$ has diameter $h_{\tree,\alpha} = 4h_{\ed,\alpha}\cdot s_{\ed}\cdot s_{\tree}$.
\item Each tree has maximum degree $\Delta = \Otilde(n^{1/s_{\tree}}\cdot \omega/\phi)$.
\item Each vertex appears in at most $\gamma' = \Otilde(n^{1/s_{\tree}}\cdot \omega/\phi)$ Steiner trees.
\end{itemize}
\end{theorem}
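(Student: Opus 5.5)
The plan is to build each $T_S$ by an expander-routing argument in the strongly expanding graph, then lower the degree with a Fürer–Raghavachari-style local improvement adapted to lengths. Fix a level $i$ and a scale $\alpha$, and work in the strongly expanding graph $G^{\star}_{i+1,\alpha}$. There $A'_{i}\supseteq A_i$ is strongly $(h_{\ed,\alpha},s_{\ed})$-length $\phi$-vertex-expanding, so \Cref{lemma:ExpanderRouting} applies. Distances in $G^{\star}$ are at most those in $G_{i,\alpha}$. Each cluster $S\in{\cal N}_{i,\alpha}$ has diameter $h_{\diam,\alpha}\le h_{\ed,\alpha}$ in $G_{i,\alpha}$, so every pair of terminals of $S$ is at $G^\star$-distance at most $h_{\ed,\alpha}$. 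Any demand among the terminals of a single cluster is therefore an $h_{\ed,\alpha}$-length demand.

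\textbf{Step 1: a routing skeleton for all clusters at once.} For each cluster $S$, build an $O(\log n)$-depth structure over $V_{\tmn}(S)$: a random or recursive-halving matching sequence, i.e. a cut-matching-game-like binary tree of pairings. In round $j$, pair up current representatives of subsets of $V_{\tmn}(S)$. The union over all clusters of one round's pairings is a demand in which each terminal $v$ participates once per cluster containing it. It is therefore $\omega\cdot A_i$-respecting, and we split it into $\omega$ $A_i$-respecting demands. Route each with \Cref{lemma:ExpanderRouting}, obtaining integral paths of length $h_{\ed,\alpha}s_{\ed}$ and vertex congestion $\Otilde(1/\phi)$; summed over pieces this is $\Otilde(\omega/\phi)$.

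The number of merge rounds controls the diameter. With $n^{1/s_{\tree}}$-ary merges instead of binary, there are $s_{\tree}$ rounds, each adding path length $h_{\ed,\alpha}s_{\ed}$. A path from any terminal to the root of the merge hierarchy then has length at most $s_{\tree}\cdot h_{\ed,\alpha}s_{\ed}$, which gives diameter at most $2s_{\tree}h_{\ed,\alpha}s_{\ed}\le h_{\tree,\alpha}$, with slack for the pruning below. In each round, a group representative is matched to $n^{1/s_{\tree}}$ others, so the per-round load is at most $n^{1/s_{\tree}}\omega$. Congestion is then $\Otilde(n^{1/s_{\tree}}\omega/\phi)$ per vertex, summed over $s_{\tree}=O(k)$ rounds (absorbed into $\Otilde$ if $k\le\log n$; otherwise $n^{1/k}=O(1)$ and we use binary merges).

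\textbf{Step 2: extract trees.} For each $S$, the union of its routing paths is a connected subgraph spanning $V_{\tmn}(S)$. Take a BFS/shortest-path tree from the root representative inside this union, using lengths of $G_{i,\alpha}$, which are at least those of $G^\star$. Distances in the union are bounded by the concatenated path lengths, so the diameter bound $h_{\tree,\alpha}$ holds. Each vertex lies on paths of at most $\Otilde(n^{1/s_{\tree}}\omega/\phi)$ routings, so it appears in at most $\gamma'$ trees.

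\textbf{Step 3: degree.} The degree of $v$ in $T_S$ is at most twice the number of $S$-paths through $v$, which is at most its congestion. Hence $\Delta\le 2\gamma'=\Otilde(n^{1/s_{\tree}}\omega/\phi)$. This step needs the tree to be taken as a subgraph of the path union in which each path contributes at most two edges at $v$, which the BFS tree respects.

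The main obstacle is that the length-$h_{\ed,\alpha}s_{\ed}$ guarantee is measured in $G^{\star}_{i+1,\alpha}$, while the trees must have diameter in $G_{i,\alpha}$, and $G_{i,\alpha}$ has extra cut lengths from $C_{>i+1}$ and $\hat C$. I would handle this by routing instead in the graph with all higher cuts applied. Weak expansion (\Cref{ob:WeakExpanding}) only guarantees routability in $G$, so I would apply \Cref{lemma:ExpanderRouting} level by level, adding the higher-level cuts to the current graph as length increases. If that fails, I would accept diameter measured in $G$, which is all the later application needs since $T_S$ is constructed "in $G$". The $n^{1/s_{\tree}}$-ary merge schedule and its load balancing is the second delicate point.
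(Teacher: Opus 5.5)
Your plan is essentially the paper's proof. The paper fixes, for each cluster, a virtual tree on $V_{\tmn}(S)$ of maximum degree $n^{1/s_{\tree}}$ and diameter $2s_{\tree}$, which is your $n^{1/s_{\tree}}$-ary merge hierarchy. It splits this tree into $2n^{1/s_{\tree}}$ matchings and routes each matching class, united over all clusters and hence $\omega A'_i$-respecting, with \Cref{lemma:ExpanderRouting} in the strongly expanding graph. It then takes a shortest-path tree in the union of each cluster's paths, bounding degree and overlap by congestion and diameter by depth times path length. Using one fixed virtual tree instead of reusing representatives across rounds simply avoids your extra $s_{\tree}$ congestion factor. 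The F\"urer--Raghavachari step you announce is never needed.

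Two things in your write-up should change. The diameter bound you actually obtain holds in $G^{\star}$, and hence in $G$, which is what the later sections use; this is also all the paper's proof gives. It does not hold with the lengths of $G_{i,\alpha}$, which are larger than those of $G^{\star}$, so drop that claim from Step 2. Also drop the level-by-level rerouting idea, which strong expansion in $G^{\star}$ does not support, and go with your fallback.
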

\begin{proof}
For each cluster $S\in{\cal N}_{i,\alpha}$, we fix a \emph{virtual tree} $\wtilde{T}_{S}$, which is an arbitrary unit-edge-length tree (no need to be a subgraph of $G$) with vertex set $V(\wtilde{T}_{S}) = V_{\tmn}(S)$, maximum degree $\Delta' = n^{1/s_{\tree}}$, and diameter $2s_{\tree}$. It is clear that $\wtilde{T}_{S}$ can be partitioned into at most $j_{\max}:=2\Delta'$ matchings $\{M_{S,j}\}_{j\in[j_{\max}]}$.

For each $j\in j_{\max}$, we define a demand $D_{j}$ by setting $D_{j}(u,v)$ to be the number of (virtual) edges between $u$ and $v$ in matchings $M_{S,j}$ over all $S$. By definition, 
\begin{itemize}
\item $D_{j}$ is a $(\omega\cdot A_{i})$-respecting demand, so it is also $(\omega\cdot A'_{i})$-respecting since $A_{i}\subseteq A'_{i}$.
\item $D_{j}$ is $h_{\diam,\alpha}$-length in $G_{i,\alpha}$ since ${\cal N}_{i,\alpha}$ has diameter $h_{\diam,\alpha}$. Thus, $D_{j}$ is also $h_{\diam,\alpha}$-length in $G^{\star}_{i,\alpha}$, since the length function of $G_{i,\alpha}$ is entry-wise no smaller than that of $G^{\star}_{i,\alpha}$.
\end{itemize}
We then apply \Cref{lemma:ExpanderRouting} on the strong expanding graph $G^{\star}_{i,\alpha}$ with node weighting $\omega\cdot A'_{i}$, the demand $D_{j}$, length parameter $h_{\ed,\alpha}$ and sparsity parameter $\phi/\omega$. Recall that $h_{\ed,\alpha}\geq h_{\diam,\alpha}$, and that $A'_{i}$ is strongly $(h_{\ed,\alpha},s_{\ed})$-length $\phi$-vertex-expanding in $G^{\star}_{i,\alpha}$, or equivalently $(\omega\cdot A'_{i})$ is strongly $(h_{\ed,\alpha},s_{\ed})$-length $(\phi/\omega)$-vertex-expanding in $G^{\star}_{i,\alpha}$. Thus, we obtain (the path representation of) an integral flow $F_{j}$ that routes $D_{j}$ in $G^{\star}_{i,\alpha}$ with length $h_{\ed,\alpha}\cdot s_{\ed}$ and vertex congestion $\Otilde(\omega/\phi)$. We assign the flow paths of $F_{j}$ back to the matching edges in all matchings $\{M_{S,j}\mid S\}$. 

After doing this for every $j$, for each virtual tree $\wtilde{T}_{S}$ and each virtual edge $(u,v)$ on it, we have a flow path connecting $u$ and $v$ with length $h_{\ed,\alpha}\cdot s_{\ed}$. Moreover, the collection of flow paths over all virtual edges on all virtual tree has vertex congestion $\Otilde(\Delta'\cdot\omega/\phi)$.

Finally, for each cluster $S$, we compute the Steiner tree $T_{S}$ by selecting an arbitrary shortest path tree in the graph $R_{S}$, where $R_{S}$ is the union of the flow paths of all virtual edges in $\wtilde{T}_{S}$. We have the following. 

\begin{itemize}
\item 

$T_{S}$ has diameter $4h_{\ed,\alpha}\cdot s_{\ed}\cdot s_{\tree}$. Note that the graph $R_{S}$ has diameter $2h_{\ed,\alpha}\cdot s_{\ed}\cdot s_{\tree}$: each flow path has length at most $h_{\ed,\alpha}\cdot s_{\ed}$, and the diameter of $\wtilde{T}_{S}$ is $2s_{\tree}$. Therefore, $T_{S}$, as a shortest path tree in $R_{S}$, has a diameter at most twice the diameter of $R_{S}$.

\item 
$T_{S}$ has maximum degree $\Otilde(\Delta'\cdot\omega/\phi) = \Otilde(n^{1/s_{\tree}}\cdot \omega/\phi)$. This is because $R_{S}$ has maximum degree $\Otilde(\Delta'\cdot\omega/\phi)$. Recall that $R_{S}$ is the union of flow paths, and these flow paths have vertex congestion $\Otilde(\Delta'\cdot\omega/\phi)$.

\item 
A vertex can appear in at most $\Otilde(\Delta'\cdot\omega/\phi) = \Otilde(n^{2/s_{\tree}}\cdot \omega/\phi)$ many $T_{S}$. This is again due to the low congestion of flow paths.
\end{itemize}
\end{proof}

\subsection{The Query Graph}
\label{sect:OracleQueryGraph}

In this section, we define the query graph $Q$ for a given set $F$ of failed vertices with size $|F|\leq f$. To distinguish between vertices and edges in the input graph $G$ and the query graph $Q$, we call the vertices and edges in $Q$ \emph{nodes} and \emph{links}.

At a high level, the query graph $Q$ captures the distances among a small number of ``affected regions'' under the failures $F$. Since $Q$ is small, we can afford to construct it explicitly and precompute all pairwise distances during the update phase. During the query phase, the algorithm identifies two nodes of $Q$ representing affected regions close to the two query vertices and returns their precomputed distance (with some additive adjustment).

In \Cref{sect:OracleAffected}, we introduce \emph{affected clusters} and \emph{residual subtrees}, which provide the key structural ingredients underlying the notion of ``affected regions.'' We formally construct the query graph $Q$ in \Cref{sect:OracleQNode} and \Cref{sect:OracleQLink}, in which we specify the nodes and links of $Q$ respectively. Finally, in \Cref{sect:OracleLowerBound}, we show that the query graph $Q$ will never underestimate distances in $G\setminus F$.

\subsubsection{Affected Clusters and Residual Subtrees}
\label{sect:OracleAffected}

\paragraph{Affected Clusters.} For each neighborhood cover ${\cal N}_{i,\alpha}$ and cluster $S\in {\cal N}_{i,\alpha}$, $S$ is \emph{affected} if $S$ intersects $F$, otherwise it is \emph{unaffected}. Similarly, a satellite cluster $\hat{S}\in\hat{\cal N}_{i,\alpha}$ is affected if $\hat{S}$ intersects $F$, and unaffected if not.

\begin{lemma}
\label{lemma:NumAffectedCluster}
The number of affected clusters and affected satellite clusters is at most 
\[
O(i_{\max}\cdot\alpha_{\max}\cdot\omega)=\Otilde(f)\cdot n^{O(1/k)}.
\]
\end{lemma}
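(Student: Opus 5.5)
The plan is a direct charging argument. A cluster or satellite cluster is affected exactly when it contains some failed vertex $v\in F$, so I will count, for each $v\in F$, the clusters and satellite clusters that contain $v$, and then sum over $F$.

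Fix $v\in F$. For each level $i\in[0,i_{\max}]$ and each length scale $\alpha\in[1,\alpha_{\max}]$, the cover ${\cal N}_{i,\alpha}$ has width $\omega$ by \Cref{thm:ConstructNeighborhoodCover}. Hence $v$ lies in at most $\omega$ clusters of ${\cal N}_{i,\alpha}$, and by the same width bound in at most $\omega$ satellite clusters of $\hat{\cal N}_{i,\alpha}$. Summing over all $(i+1)\cdot\alpha_{\max}$ pairs, $v$ lies in at most $2(i_{\max}+1)\alpha_{\max}\omega$ clusters and satellite clusters in total.

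Every affected cluster contains at least one failed vertex, so a union bound over $F$ gives at most
\[
2|F|(i_{\max}+1)\alpha_{\max}\omega \;\le\; O(f\cdot i_{\max}\cdot\alpha_{\max}\cdot\omega)
\]
affected clusters and affected satellite clusters. The stated expression suppresses the factor $f$ in the first form, but the charging clearly produces it, consistent with the second form $\Otilde(f)\cdot n^{O(1/k)}$.

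To get the asymptotic form, I plug in the parameters:
\begin{itemize}
\item $i_{\max}\le 15k$ with high probability by \Cref{lemma:Hierarchy};
\item $\alpha_{\max}=O(\log n)$;
\item $\omega=O(k\,n^{1/k})$.
\end{itemize}
The factor $k$ is absorbed into $\Otilde(1)\cdot n^{O(1/k)}$, since either $k=O(\log n)$ or $n^{1/k}$-type bounds dominate; more precisely, for $k\le\log n$ we have $k^2=\Otilde(1)$, and for larger $k$ one caps $k$ at $\log n$ without loss. This yields $\Otilde(f)\cdot n^{O(1/k)}$.

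There is no real obstacle here. The only point of care is to charge both kinds of covers at every $(i,\alpha)$ pair, and to note that the bound on $i_{\max}$ is a high-probability statement.
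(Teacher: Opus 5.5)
Your proof is correct and is the same charging argument the paper uses: each failed vertex lies in at most $O(i_{\max}\cdot\alpha_{\max}\cdot\omega)$ clusters and satellite clusters, since both covers have width $\omega$ at every $(i,\alpha)$, and summing over $F$ gives the bound. You are also right that the first expression in the statement omits the factor $f$; the paper itself later uses the bound as $i_{\max}\cdot\alpha_{\max}\cdot\omega\cdot f$ in the proof of \Cref{lemma:NumQNodes}. One small fix: the number of pairs you sum over should be $(i_{\max}+1)\alpha_{\max}$, not $(i+1)\alpha_{\max}$.
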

\begin{proof}
This is because each vertex can appear in at most $O(i_{\max}\cdot \alpha_{\max})\cdot\omega$ many (satellite) clusters. Recall that $\omega = \Otilde(n^{1/k})$ is the width of (satellite) neighborhood covers.
\end{proof}

\paragraph{Residual Subtrees.} For each neighborhood cover ${\cal N}_{i,\alpha}$ and \underline{affected} cluster $S\in{\cal N}_{i,\alpha}$, under failures $F$, its corresponding tree $T_{S}$ may break into several (possibly just one) subtrees $\tau$ in $T_{S}\setminus F$, called \emph{residual subtrees}. We note that although the tree $T_{S}$ of an unaffected cluster $S$ may also be broken by the failures $F$ (since $T_{S}$ may go outside $S$), we do \underline{not} define residual subtrees for this unaffected $S$.

For a residual subtree $\tau$ of some cluster $S$, we define its terminal set to be
\[
V_{\tmn}(\tau) := V_{\tmn}(S)\cap \tau.
\]

\subsubsection{Nodes in the Query Graph}
\label{sect:OracleQNode}

We will create four types of nodes to $Q$ and assign lengths to these nodes. The construction relies on the following notions of heavy and light residual subtrees.

\begin{definition}[Heaviness of Residual Subtrees]
\label{def:Heaviness}
A residual subtree $\tau$ is \emph{heavy} if $|V_{\tmn}(\tau)|\geq \rho_{\heavy}$, otherwise $\tau$ is \emph{light}.
\end{definition}

With \Cref{def:Heaviness} in place, we create a \emph{singleton node} for each non-failed terminal in a light residual subtree, a \emph{subtree node} for each heavy residual subtree, a \emph{cluster node} for each affected cluster, and a \emph{satellite node} for each satellite cluster. Formally, consider each level $i$ and length scale $\alpha$. For each affected cluster $S\in{\cal N}_{i,\alpha}$, add the following nodes. All such nodes $q_{\nd}$ have $\level(q_{\nd}):= i$, $\scale(q_{\nd}):=\alpha$ and $\cluster(q_{\nd}):=S$.

\begin{itemize}
\item (Singleton Nodes) For each vertex $v\in V_{\tmn}(S)\setminus F$ such that $v$ is inside a light residual subtree of $S$, create a \emph{singleton node} $v_{\nd}$ with node length $0$.

\item (Subtree Nodes) For each heavy residual subtree $\tau$ of $S$, create a \emph{subtree node} $\tau_{\nd}$ with node length $h_{\tree,\alpha}$.

\item (Cluster Nodes) Create a \emph{cluster node} $S_{\nd}$ with node length $h_{\ed,\alpha}\cdot s_{\ed} + 2h_{\diam,\alpha} + 4h_{\alpha} = \Theta(s_{\nc}\cdot s_{\ed}\cdot h_{\alpha})$.  %
\end{itemize}
For each affected satellite cluster $\hat{S}\in\hat{\cal N}_{i,\alpha}$, add the following node.
\begin{itemize}
\item(Satellite Nodes) create a \emph{satellite node} $\hat{S}_{\nd}$ with node length $\hat{h}_{\ed,\alpha}\cdot s_{\ed} + 2h_{\diam,\alpha} + 4h_{\alpha} = \Theta(s^{2}_{\nc}\cdot s_{\ed}\cdot h_{\alpha})$. Set $\level(\hat{S}_{\nd}):= i$ and $\scale(\hat{S}_{\nd}):= \alpha$. 
\end{itemize}
The following \Cref{lemma:NumQNodes} bounds the number of nodes in the query graph $Q$. 
\begin{lemma}
\label{lemma:NumQNodes}
The number of nodes in $Q$ is at most 
\[
O((i_{\max}\cdot \alpha_{\max}\cdot \omega\cdot f + \Delta\cdot\gamma_{\tree}\cdot f)\cdot \rho_{\heavy}) = \Otilde(1)\cdot n^{O(1/k)}\cdot f^{2}.
\]
\end{lemma}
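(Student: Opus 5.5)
We bound each of the four node types separately and sum the bounds.

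\emph{Cluster and satellite nodes.} There is exactly one such node per affected cluster or affected satellite cluster. By \Cref{lemma:NumAffectedCluster}, and counting per failed vertex, a single vertex lies in at most $O(i_{\max}\cdot\alpha_{\max}\cdot\omega)$ (satellite) clusters. Hence there are at most $O(i_{\max}\cdot\alpha_{\max}\cdot\omega\cdot f)$ such nodes, which is dominated by the claimed bound since $\rho_{\heavy}\geq 1$.

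\emph{Subtree and singleton nodes.} Fix an affected cluster $S$ and its tree $T_S$. Removing a vertex of degree at most $\Delta$ from a forest increases the number of components by at most $\Delta-1$. Therefore $T_S\setminus F$ has at most $1+\Delta\cdot|F\cap V(T_S)|$ residual subtrees.

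Summing over all affected clusters, we split the residual subtrees into two groups:
\begin{itemize}
\item trees with $F\cap V(T_S)=\emptyset$, each of which contributes one residual subtree, totalling at most the number of affected clusters, $O(i_{\max}\alpha_{\max}\omega f)$;
\item the remaining terms, which total at most $\Delta\cdot\sum_{S}|F\cap V(T_S)|$.
\end{itemize}
For the second group, each failed vertex lies in at most $\gamma_{\tree}$ Steiner trees by \Cref{thm:SteinerTree} and the definition of $\gamma_{\tree}$. So this sum is at most $\Delta\cdot\gamma_{\tree}\cdot f$. The total number of residual subtrees is thus $O(i_{\max}\alpha_{\max}\omega f+\Delta\gamma_{\tree}f)$.

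Each residual subtree yields at most one subtree node, if it is heavy. If it is light, it yields fewer than $\rho_{\heavy}$ singleton nodes, since singleton nodes are created only for terminals of light subtrees and those number less than $\rho_{\heavy}$. So each residual subtree contributes at most $\rho_{\heavy}$ nodes. Multiplying the subtree count by $\rho_{\heavy}$ gives the claimed $O((i_{\max}\alpha_{\max}\omega f+\Delta\gamma_{\tree}f)\rho_{\heavy})$.

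\emph{Asymptotic form.} We substitute $i_{\max}=O(k)$ from \Cref{lemma:Hierarchy}, together with $\alpha_{\max}=O(\log n)$, $\omega=\Otilde(n^{1/k})$, $\Delta,\gamma_{\tree}=\Otilde(1)n^{O(1/k)}$, and $\rho_{\heavy}=\Otilde(f)n^{O(1/k)}$. This gives $\Otilde(1)\cdot n^{O(1/k)}\cdot f^2$, absorbing $k$ into $n^{O(1/k)}\cdot\Otilde(1)$ when $k\le\log n$; otherwise we cap $k$ at $\log n$.

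The only delicate step is the per-cluster count of residual subtrees. There we must use that the degree bound $\Delta$ holds within $T_S$, so a failed vertex splits $T_S$ into at most $\Delta$ pieces. We also need the additive $+1$ per affected cluster, which is why the term with $\omega$ appears alongside the term with $\gamma_{\tree}$.
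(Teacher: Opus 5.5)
Your proof is correct and follows the paper's argument: the cluster and satellite nodes are bounded by the number of affected (satellite) clusters, $O(i_{\max}\alpha_{\max}\omega f)$; the residual subtrees are bounded by $\Delta\gamma_{\tree} f$ plus one per affected cluster; and each residual subtree is charged at most $\rho_{\heavy}$ nodes. Your version is a bit more careful than the paper's sketch, since you give the explicit forest-component count and you include the factor $f$ that the displayed bound in \Cref{lemma:NumAffectedCluster} omits, but it is the same route.
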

\begin{proof}
\Cref{lemma:NumAffectedCluster} shows that the number of affected (satellite) cluster is at most $\lambda:=i_{\max}\cdot\alpha_{\max}\cdot \omega\cdot f$.
The number of residual subtrees is at most $f\cdot \gamma_{\tree}\cdot \Delta + \lambda$ by the following reasons. First, removing one vertex from one Steiner tree can contribute $\Delta$ residual subtrees (due to the maximum degree $\Delta$) and a failed vertex can appear in at most $\gamma_{\tree}$ many Steiner trees. In addition, an affected cluster could have its Steiner tree disjoint from $F$, which contributes $1$ residual subtree. 

Finally, each light residual subtree can create at most $\rho_{\heavy}$ many singleton nodes. This gives the final bound.
\end{proof}

\paragraph{Mappings Between Vertices and Concrete Nodes.} We classify nodes into \emph{concrete nodes} and \emph{virtual nodes}: singleton nodes and subtree nodes are concrete, while cluster nodes and satellite nodes are virtual.

Each concrete node $q_{\nd}$ in $Q$ \emph{represents} a set of original vertices $V(q_{\nd})$ defined as follows. For each singleton node $v_{\nd}$, it naturally represents a single vertex $V(v_{\nd}):=\{v\}$. For each subtree node $\tau_{\nd}$, it represents $V(\tau_{\nd}):= V_{\tmn}(\tau)$, i.e., the terminals in this heavy residual subtree $\tau$.

Inversely, for each vertex $v\in V(G)$, we use $V_{\nd}(v)$ to denote the set of concrete nodes representing it. Namely,
\[
V_{\nd}(v):=\{q_{\nd}\mid v\in V(q_{\nd})\}.
\]
Note that a vertex $v$ may be represented by multiple singleton nodes and subtree nodes simultaneously. Moreover, each failed vertex $v\in F$ must have an empty $V_{\nd}(v)$.

\subsubsection{Links in the Query Graph} 
\label{sect:OracleQLink}

There are seven types of links, which can be divided into two categories (except for the last type): \emph{shortcut-based links} and \emph{expansion-based links}.

For better understanding, let us first explain the intuition behind the classification of shortcut-based and expansion-based links. Recall that we have created nodes to represent the ``affected regions'' (i.e., terminals of affected clusters) of $G$ under failures. However, a large portion of $G$ remains unaffected, and we want to add shortcut-based links to encode distances through these unaffected regions. 

More formally, the scenario of adding shortcut-based links consists of three clusters $S_{1}, S_{2}$ and $S$ ($S_{1}$ and $S_{2}$ can be the same). We are also provided the following two guarantees of $S_{1},S_{2}$ and $S$ which come from how we exploit shortcut-based links in the proof of \Cref{lemma:FindingProxy}. For clarity, let
\[
(i_{1},i_{2},i)=(\level(S_{1}),\level(S_{2}),\level(S)),\qquad (\alpha_{1},\alpha_{2},\alpha) = (\scale(S_{1}),\scale(S_{2}),\scale(S)).
\]
Then it is guaranteed that $i_{1}\geq i_{2}>i$ and $\alpha_{1}\geq \alpha_{2}\geq \alpha$. In other words, as we move from $S_{1}$ through $S_{2}$ to $S$ and the level decreases, the scale also goes down. Moreover, it is also guaranteed that $S_{1}$ intersects $S_{2}$. 

The most standard way to add shortcut-based links is that, for each pair of vertices $u\in V_{\tmn}(S_{1})\cap \Lambda(S)$ and $v\in V_{\tmn}(S_{2})\cap \Lambda(S)$, add a link between (i) node $u_{\nd}\in V_{\nd}(u)$ with $\cluster(u_{\nd})=S_{1}$ and (ii) node $v_{\nd}\in V_{\nd}(v)$ with $\cluster(v_{\nd})=S_{2}$. This means we want to add a biclique between (nodes of) terminals of $S_{1}$ and $S_{2}$ that are inside the surrounding set $\Lambda(S)$. Intuitively, this biclique can ``shortcut'' any path between $S_{1}$ and $S_{2}$ going through $S$.

However, this standard construction may create too many shortcut-based links. Later in \Cref{sect:OracleConstruction}, such links will be stored in a 2D-range counting table in the oracle, so the number of links directly affects space usage. Therefore, we need to perform sparsification, which will introduce the seven types of links below.

We classify the first three types as shortcut-based links, since they are all created with respect to some unaffected cluster $S$ (or an edge $e$ which can be viewed as a degenerated unaffected cluster) and two higher levels $i_{1},i_{2}$. Also, they are all stored inside the 2D-range counting table. In contrast, links of types \ref{Item:Type4Link}, \ref{Item:Type5Link} and \ref{Item:Type6Link} are classified as expansion-based. A distinguishing feature of these links is that they are not created with respect to any specific unaffected cluster $S$ and are not stored in the 2D range-counting table.

\paragraph{Shortcut-Based Links.} We first formally describe the shortcut-based links. Note that some shortcut-based links below are unnecessary, since we sometimes, for simplicity, do not require both linked levels to be higher than that of $S$. 

\medskip

\setcounter{Oracletypectr}{0}
\refstepcounter{Oracletypectr}\label{Item:Type1Link}
\noindent\underline{\textcolor{DarkRed}{Type \arabic{Oracletypectr}}.} The first-type links are added by the original edges. Note that we can view an original edge as a degenerated unaffected cluster. %

\begin{tcolorbox}[
  colback=black!3,
  colframe=black,
  boxrule=1.5pt,
  arc=0pt,
  left=6pt,
  right=6pt,
  top=4pt,
  bottom=4pt
]
For each original edge $e=(u,v)\in G$, add a link between each node pair
$(u_{\nd},v_{\nd})\in V_{\nd}(u)\times V_{\nd}(v)$ with link length
$\ell_{G}(e)$.
\end{tcolorbox}

\medskip

\refstepcounter{Oracletypectr}\label{Item:Type2Link}
\noindent\underline{\textcolor{DarkRed}{Type \arabic{Oracletypectr}}.} The second-type links are the standard shortcut-based links we discussed above, i.e., the bicliques. However, to perform sparsification, we this type of links are only between the levels $i$ at which the unaffected cluster $S$ is ``nearby $A_{i}$-sparse''. \Cref{def:nearbyDensenes} formally defines nearby denseness and sparseness of $S$.

\begin{definition}[Nearby Denseness]
\label{def:nearbyDensenes}
For a cluster $S\in {\cal N}_{i,\alpha}$ and a level $i'>i$, $S$ is \emph{nearby $A_{i'}$-dense} if $|\Lambda_{i'}(S)|\geq \rho_{\heavy}$, otherwise it is \emph{nearby $A_{i'}$-sparse}.
\end{definition}

According to the definition, the number of $A_{i}$-terminals in the surrounding set $\Lambda(S)$ is quite small when $S$ is nearby $A_{i}$-sparse, so we will not create too many second-type links.

\begin{tcolorbox}[
  colback=black!3,
  colframe=black,
  boxrule=1.5pt,
  arc=0pt,
  left=6pt,
  right=6pt,
  top=4pt,
  bottom=4pt
]
For each
\begin{itemize}
\item level $i$, length scale $\alpha$ and unaffected cluster $S\in{\cal N}_{i,\alpha}$, 
\item higher levels $i_{1},i_{2}>i$ such that $S$ is both nearby $A_{i_{1}}$- and $A_{i_{2}}$-sparse, 
\item vertex pair $(u,v)\in \Lambda_{i_{1}}(S)\times \Lambda_{i_{2}}(S)$,
\item node pair $(u_{\nd},v_{\nd})\in V_{\nd}(u)\times V_{\nd}(v)$,
\end{itemize}
add a link between $u_{\nd}$ and $v_{\nd}$ with link length $h_{\diam,\alpha} + 2h_{\alpha}$.
\end{tcolorbox}

\medskip

\refstepcounter{Oracletypectr}\label{Item:Type3Link}
\noindent\underline{\textcolor{DarkRed}{Type \arabic{Oracletypectr}}.} The third type is different from the second type in the following ways. First, the third-type shortcut-based links for an unaffected cluster $S$ are between a level $i_{1}$ at which $S$ is nearby $A_{i_{1}}$-sparse and a lower level $i_{2}<i_{1}$ at which $S$ is nearby $A_{i_{2}}$-dense. However, adding a biclique is too costly in this case, as the surrounding set $\Lambda(S)$ may contain many $A_{i_{2}}$-vertices. This is where the satellite clusters come into play. 

Roughly speaking, we will use the satellite clusters $\hat{S}:=\satellite_{i_{2},\alpha_{2}}(S)$ (for all $\alpha_{2}\geq \scale(S)$) as ``representatives'' of $A_{i_{2}}$-vertices in $\Lambda(S)$, meaning that we connect the shortcuts to the satellite nodes $\hat{S}_{\nd}$ of these $\hat{S}$. Later, the type-\ref{Item:Type6Link} links will connect $\hat{S}$ to $\Lambda_{i_{2}}(S)$. We note that in this case, the condition that $S$ is nearby $A_{i_{2}}$-dense is crucial, as the feasibility argument (\Cref{lemma:GeneralSolLowerBound}) relies on both nearby denseness and the LC-expansion property.

\begin{tcolorbox}[
  colback=black!3,
  colframe=black,
  boxrule=1.5pt,
  arc=0pt,
  left=6pt,
  right=6pt,
  top=4pt,
  bottom=4pt
]
For each
\begin{itemize}
\item level $i$, length scale $\alpha$ and unaffected cluster $S\in {\cal N}_{i,\alpha}$,
\item higher levels $i_{1},i_{2}$ s.t. $i_{1}>i_{2}>i$ and $S$ is nearby $A_{i_{1}}$-sparse but nearby $A_{i_{2}}$-dense,
\item vertex $u\in \Lambda_{i_{1}}(S)$,
\item node $u_{\nd}\in V_{\nd}(u)$,
\end{itemize}
add a link between $u_{\nd}$ and the satellite node $\hat{S}_{\nd}$ created for the satellite cluster $\hat{S}:=\satellite_{i_{2},\alpha_{2}}(S)$ for all $\alpha_{2}\geq \alpha$. The link length is $0$. 
\end{tcolorbox}

\paragraph{Expansion-Based Links.} Next, we will introduce expansion-based links. Such links are constructed based on different notions of heaviness and denseness, which essentially exploit the LC-expansion property in different ways.

\medskip

\refstepcounter{Oracletypectr}\label{Item:Type4Link}
\noindent\underline{\textcolor{DarkRed}{Type \arabic{Oracletypectr}}.} The fourth type is based on the heaviness of residual subtrees (see \Cref{def:Heaviness}). Intuitively, all heavy residual subtrees of the same cluster are close to each other, so we can safely link them to the cluster node. 

\begin{tcolorbox}[
  colback=black!3,
  colframe=black,
  boxrule=1.5pt,
  arc=0pt,
  left=6pt,
  right=6pt,
  top=4pt,
  bottom=4pt
]
For each
\begin{itemize}
\item level $i_{1}$, length scale $\alpha_{1}$, and affected cluster $S_{1}\in{\cal N}_{i_{1},\alpha_{1}}$, 
\item
subtree node $\tau_{\nd}$ with $\cluster(\tau_{\nd}) = S_{1}$,
\end{itemize}
add a link between $\tau_{\nd}$ and the cluster node $S_{1,\nd}$ created for $S_{1}$ with link length $0$.
\end{tcolorbox}

\medskip

\refstepcounter{Oracletypectr}\label{Item:Type5Link}
\noindent\underline{\textcolor{DarkRed}{Type \arabic{Oracletypectr}}.}  
The fifth type relies on the notion of adjacent-cluster denseness in \Cref{def:OracleAdjClusterDenseness}. Roughly speaking, a level-$i_{2}$ node $q_{\nd}$ can obtain $A_{i_{1}}$-denseness of a high level $i_{1}\geq i_{2}$ if $V(q_{\nd})$ is ``adjacent'' to a lower-level nearby $A_{i_{1}}$-dense unaffected cluster.

\begin{definition}[Adjacent-Cluster Denseness]
\label{def:OracleAdjClusterDenseness}
Consider a level $i_{2}$, length scale $\alpha_{2}$, a cluster $S_{2}\in{\cal N}_{i_{2},\alpha_{2}}$, and a concrete node $q_{\nd}$ with $\cluster(q_{\nd}) = S_{2}$. For each level $i_{1}\geq i_{2}$, we say $q_{\nd}$ is \emph{adjacent-cluster $A_{i_{1}}$-dense} if 
\begin{itemize}
\item there exist level $i<i_{2}$, length scale $\alpha\leq \alpha_{2}$ and an unaffected cluster $S\in{\cal N}_{i,\alpha}$ such that $S$ is nearby $A_{i_{1}}$-dense and $V(q_{\nd})$ intersects $\Lambda(S)$.
\end{itemize}
Otherwise $q_{\nd}$ is \emph{adjacent-cluster $A_{i}$-sparse}. %
\end{definition}

\begin{tcolorbox}[
  colback=black!3,
  colframe=black,
  boxrule=1.5pt,
  arc=0pt,
  left=6pt,
  right=6pt,
  top=4pt,
  bottom=4pt
]
For each 
\begin{itemize}
\item level $i_{2}$, length scale $\alpha_{2}$, affected cluster $S_{2}\in {\cal N}_{i_{2},\alpha_{2}}$, and concrete node $q_{\nd}$ with $\cluster(q_{\nd}) = S_{2}$, 
\item level $i_{1}\geq i_{2}$, length scale $\alpha_{1}\geq \alpha_{2}$,
\end{itemize}
if $q_{\nd}$ is adjacent-cluster $A_{i_{1}}$-dense,
then for each 
\begin{itemize}
\item affected cluster $S_{1}\in{\cal N}_{i_{1},\alpha_{1}}$ such that $S_{1}$ intersects $S_{2}$,
\end{itemize}
add a link between $q_{\nd}$ and the cluster node $S_{1,\nd}$ created for $S_{1}$. The link length is $0$.
\end{tcolorbox}
\medskip

\refstepcounter{Oracletypectr}\label{Item:Type6Link}
\noindent\underline{\textcolor{DarkRed}{Type \arabic{Oracletypectr}}.} The sixth type is also based on adjacent-cluster denseness. The difference is that, this time we only exploit the same-level adjacent-cluster denseness of a node $q_{\nd}$, and then link $q_{\nd}$ to all the affected satellite clusters that cover the cluster of $q_{\nd}$.

\begin{tcolorbox}[
  colback=black!3,
  colframe=black,
  boxrule=1.5pt,
  arc=0pt,
  left=6pt,
  right=6pt,
  top=4pt,
  bottom=4pt
]
For each 
\begin{itemize}
\item level $i_{2}$, length scale $\alpha_{2}$, affected cluster $S_{2}\in{\cal N}_{i_{2},\alpha_{2}}$, and concrete node $q_{\nd}$ with $\cluster(q_{\nd}) = S_{2}$, 
\end{itemize}
if $q_{\nd}$ is adjacent-cluster $A_{i_{2}}$-dense, then for each
\begin{itemize}
\item affected satellite cluster $\hat{S}_{2}\in\hat{\cal N}_{i_{2},\alpha_{2}}$ such that $\hat{S}_{2}\ni \pivot(S_{2})$,
\end{itemize}
add a link between $q_{\nd}$ and the satellite node $\hat{S}_{2,\nd}$ created for $\hat{S}_{2}$. The link length is $10\cdot \hat{h}_{\ed,\alpha_{2}}\cdot s_{\ed}$.
\end{tcolorbox}

\paragraph{Links Based on Common Vertices.} \ 

\refstepcounter{Oracletypectr}\label{Item:Type7Link}
\noindent\underline{\textcolor{DarkRed}{Type \arabic{Oracletypectr}}.} The last type of links simply connects nodes that share a common vertex. We note that links of this type are rarely used.
\begin{tcolorbox}[
  colback=black!3,
  colframe=black,
  boxrule=1.5pt,
  arc=0pt,
  left=6pt,
  right=6pt,
  top=4pt,
  bottom=4pt
]
For each pair of concrete nodes $q_{\nd}$ and $q'_{\nd}$, if $V(q_{\nd})$ intersects $V(q_{\nd})$, then add a link between $q_{\nd}$ and $q'_{\nd}$ with link length 0.
\end{tcolorbox}

\subsubsection{Lower Bounding Distances in the Query Graph}
\label{sect:OracleLowerBound}

Now we prove \Cref{lemma:GeneralSolLowerBound} for establishing correctness of our (later) query algorithm on the lower-bound side. Roughly speaking, the query graph $Q$ constructed above never underestimates the real distances after failures.

Before proving \Cref{lemma:GeneralSolLowerBound}, we need the following \Cref{lemma:LCExpanding} which captures the key property of weak LC vertex expansion that we will exploit.

\begin{lemma}[LC-Expansion Property]
\label{lemma:LCExpanding}
Consider a weak expanding graph $G_{i,\alpha}$.
For any two vertex subsets $T_{1},T_{2}\subseteq A_{i}$, if
\begin{itemize}
\item $|T_{1}|,|T_{2}|\geq \rho_{\heavy}$,
\item $\diam_{G_{i,\alpha}}(T_{1},T_{2})\leq h_{\ed,\alpha}$,%
\end{itemize}
then for any vertex failures $F\subseteq V$ of size at most $f$, we have
\[
\dist_{G\setminus F}(T_{1}\setminus F,T_{2}\setminus F)\leq h_{\ed,\alpha}\cdot s_{\ed}.
\]
\end{lemma}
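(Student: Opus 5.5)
We argue by contradiction through the routing characterization of weak expansion. Suppose some failure set $F$ with $|F|\leq f$ has $\dist_{G\setminus F}(T_{1}\setminus F,T_{2}\setminus F)> h_{\ed,\alpha}\cdot s_{\ed}$. Since $|T_{1}|,|T_{2}|\geq \rho_{\heavy}$ and $\rho_{\heavy} > f\cdot c\log n/\phi$, where $c$ is the constant from \Cref{thm:lce-routing}, the sets $T_{1}\setminus F$ and $T_{2}\setminus F$ both still have size more than $f\cdot c\log n/\phi$. Pick subsets $R_{1}\subseteq T_{1}\setminus F$ and $R_{2}\subseteq T_{2}\setminus F$ of equal size $r:=\lceil f\cdot c\log n/\phi\rceil+1\leq \rho_{\heavy}-f$. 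Fix any perfect matching between $R_{1}$ and $R_{2}$, and let $D$ put demand $1$ on each matched pair.

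\textbf{Step 1: $D$ is admissible.} Each vertex is an endpoint of at most one demand pair, and all of them lie in $A_{i}$. If a vertex lies in both $R_{1}$ and $R_{2}$ it could appear in two pairs; to avoid this, choose $R_{1}$ and $R_{2}$ disjoint, or simply halve $D$, which only changes constants absorbed into $\rho_{\heavy}$. So $D$ is $A_{i}$-respecting. Every pair has $\dist_{G_{i,\alpha}}(u,v)\leq \diam_{G_{i,\alpha}}(T_{1},T_{2})\leq h_{\ed,\alpha}$, so $D$ is $h_{\ed,\alpha}$-length on $G_{i,\alpha}$.

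\textbf{Step 2: route $D$.} By \Cref{ob:WeakExpanding}, $A_{i}$ is weakly $(h_{\ed,\alpha},s_{\ed})$-length $\phi$-vertex-expanding in $G_{i,\alpha}$. By \Cref{def:WeakExpanding}, $D$ can therefore be routed in $G$ by a flow of length $h_{\ed,\alpha}s_{\ed}$ and vertex congestion at most $c\log n/\phi$. The total flow value is $r$.

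\textbf{Step 3: count flow through $F$.} By assumption, every path of length at most $h_{\ed,\alpha}s_{\ed}$ in $G$ from a vertex of $R_{1}$ to a vertex of $R_{2}$ must contain a vertex of $F$. Otherwise that path would survive in $G\setminus F$, and since $G$ has zero vertex lengths its length there is unchanged, giving a distance at most $h_{\ed,\alpha}s_{\ed}$. Hence all $r$ units of flow pass through $F$. But the total flow through $F$ is at most $|F|\cdot c\log n/\phi\leq f c\log n/\phi<r$, a contradiction.

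The only delicate points are bookkeeping. First, the choice of $\rho_{\heavy}$ must leave enough surviving terminals after deleting $F$, which is why $\rho_{\heavy}=f\cdot O(\log n/\phi)+1$ is fixed using the constant from \Cref{thm:lce-routing}. Second, the respecting condition must hold when $T_{1}$ and $T_{2}$ overlap. If necessary, enlarge the constant in $\rho_{\heavy}$ by a factor of $2$ (or handle the overlap directly: a surviving common vertex already gives distance $0$). I expect no step to be a real obstacle.
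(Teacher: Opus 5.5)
Your argument is essentially the paper's: route a matching demand between $T_{1}$ and $T_{2}$ in $G$ using the weak expansion of $A_{i}$ in $G_{i,\alpha}$, then observe that $F$ can block at most $\gamma f$ units of flow, where $\gamma=O(\log n/\phi)$ is the congestion bound. One accounting slip: with the paper's exact choice $\rho_{\heavy}=\gamma f+1$, the set $T_{j}\setminus F$ is only guaranteed size $\rho_{\heavy}-f$, which need not exceed $\gamma f$, so your claim $r\leq \rho_{\heavy}-f$ fails. The paper avoids this without enlarging $\rho_{\heavy}$ by matching $\rho_{\heavy}$ terminals of each $T_{j}$ without first discarding failed ones: a flow path with a failed endpoint also contains a vertex of $F$, so it is already charged to the $\gamma f$ budget.
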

\begin{proof}
Recall from \Cref{ob:WeakExpanding} that $A_{i}$ is weakly $(h_{\ed,\alpha},s_{\ed})$-length $\phi$-vertex-expanding w.r.t. $G_{i,\alpha}$ and $G$, meaning that for any $A_{i}$-respecting demand $D$ that is $h_{\ed,\alpha}$-length in $G_{i,\alpha}$, it can be routed in $G$ with length $h_{\ed,\alpha}\cdot s_{\ed}$ and vertex congestion $\gamma = O(\log n/\phi)$.

Now we define this demand $D$ to be an arbitrary matching between $\rho_{\heavy}$ many $T_{1}$-vertices and $\rho_{\heavy}$ many $T_{2}$-vertices. Recall that we set $\rho_{\heavy} = \gamma\cdot f+1$. The routing of $D$ in $G$ has vertex congestion $\gamma$, so removing $f$ vertices can destroy at most $\gamma\cdot f$ units of flows. Therefore, there are still non-zero $(h_{\ed,\alpha}\cdot s_{\ed})$-length flows between $T_{1}$ and $T_{2}$ in $G\setminus F$. This completes the proof.
\end{proof}

\begin{lemma}
\label{lemma:GeneralSolLowerBound}
Let $x_{\nd}$ and $y_{\nd}$ be two arbitrary concrete nodes in $Q_{}$. For each vertices $x\in V(x_{\nd})$ and $y\in V(y_{\nd})$, we have
\[
\dist_{G\setminus F}(x,y)\leq \dist_{Q_{}}(x_{\nd},y_{\nd}).
\]
\end{lemma}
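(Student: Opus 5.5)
We argue by a potential-style assignment. To every node $q_{\nd}$ of $Q$ we attach a set of non-failed vertices $W(q_{\nd})\subseteq V\setminus F$, its \emph{witness set}:
\begin{itemize}
\item for a concrete node, $W(q_{\nd}):=V(q_{\nd})$;
\item for a cluster node $S_{1,\nd}$, $W$ is the union of the terminal sets of all heavy residual subtrees of $S_{1}$, together with the $\Lambda(S)\setminus F$ of every unaffected nearby $A_{i_1}$-dense cluster $S$ that can cause a type-\ref{Item:Type5Link} link into $S_{1,\nd}$;
\item for a satellite node $\hat{S}_{\nd}$, $W$ is the analogous union of $\Lambda_{i_2}(S)\setminus F$ over the relevant nearby $A_{i_2}$-dense clusters, together with the type-\ref{Item:Type6Link} endpoints.
\end{itemize}
The claim then follows from two invariants.

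\begin{itemize}
\item[(a)] Every link $(p_{\nd},q_{\nd})$ of length $L$ satisfies $\dist_{G\setminus F}(a,b)\le \ell(p_{\nd})/2+L+\ell(q_{\nd})/2$ for all $a\in W(p_{\nd})$ and $b\in W(q_{\nd})$. Concretely, we charge half of each node length to each side, and concrete singleton nodes have length $0$.
\item[(b)] Any two vertices in the same witness set are within distance $\ell(q_{\nd})$ in $G\setminus F$.
\end{itemize}
Given (a) and (b), a walk $x_{\nd}=q_0,\dots,q_t=y_{\nd}$ in $Q$ yields a walk in $G\setminus F$ from $x$ to $y$, obtained by chaining representatives through each node. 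Its length is at most the $Q$-length, because the node lengths are counted once per visit. For the endpoints: $x$ is in $W(x_{\nd})$ and, if $x_{\nd}$ is a subtree node, the $h_{\tree,\alpha}$ node length pays the intra-subtree move.

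We verify the link types in order. Type~\ref{Item:Type1Link} is an actual edge of $G\setminus F$, since nodes only represent non-failed vertices. Type~\ref{Item:Type7Link} shares a vertex. Type~\ref{Item:Type2Link} uses \Cref{ob:SurroundingDiam}: since $S$ is unaffected, $\diam_{G\setminus F}(\Lambda(S)\setminus F)\le h_{\diam,\alpha}+2h_\alpha$, which is exactly the link length. Subtree nodes satisfy (b) because $\tau\subseteq T_S\setminus F$ has diameter at most $h_{\tree,\alpha}$. Type~\ref{Item:Type4Link} requires that any two heavy residual subtrees of $S_1$ are close in $G\setminus F$. Their terminal sets lie in $S_1\cap A_{i_1}$, have size at least $\rho_{\heavy}$, and have diameter at most $h_{\diam,\alpha_1}\le h_{\ed,\alpha_1}$ in $G_{i_1,\alpha_1}$. \Cref{lemma:LCExpanding} then gives distance at most $h_{\ed,\alpha_1}s_{\ed}$ between some pair of them, and adding $2h_{\tree}$ of intra-tree movement handles arbitrary pairs. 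To avoid double-paying, we fold the $h_{\tree}$ into the subtree node lengths and keep the cluster node length $h_{\ed}s_{\ed}+2h_{\diam}+4h_\alpha$ for the jump itself.

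The crux is type~\ref{Item:Type5Link} (and symmetrically types~\ref{Item:Type3Link}/\ref{Item:Type6Link}), where we propagate heaviness to denseness. Let $S$ be the unaffected nearby $A_{i_1}$-dense cluster witnessing adjacent-cluster density of $q_{\nd}$. Then $R:=\Lambda_{i_1}(S)$ has at least $\rho_{\heavy}$ vertices. The plan has four steps:
\begin{itemize}
\item[(i)] Every vertex of $V(q_{\nd})\cap\Lambda(S)$ reaches all of $R\setminus F$ within $h_{\diam,\alpha}+2h_\alpha$ in $G\setminus F$, by \Cref{ob:SurroundingDiam}.
\item[(ii)] Compare $R$ with the terminal set $T_1$ of any heavy residual subtree of $S_1$, or with $R'$ from another such dense cluster $S'$ adjacent to $S_1$. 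Using that $S_1$ intersects $S_2\ni q$, we bound $\diam_{G_{i_1,\alpha_1}}(R,T_1)\le h_{\ed,\alpha_1}$. This combines the bound $h_{\diam,\alpha_1}+4h_{\alpha_1}$ from \Cref{ob:SurroundingDiam} applied at $(i_1,\alpha_1)$ (valid since $i_1>i$ and $\alpha_1\ge\alpha$), with the strong diameters of $S_2$ and $S_1$ rescaled via \Cref{ob:ScaleChange}. The constant $5h_{\diam}+8h$ in $h_{\ed,\alpha}$ is designed for exactly this sum.
\item[(iii)] Apply \Cref{lemma:LCExpanding} to get some $R$–$T_1$ pair at distance at most $h_{\ed,\alpha_1}s_{\ed}$ in $G\setminus F$.
\item[(iv)] Chain these bounds, charging the slack to the cluster node length.
\end{itemize}
Satellite links follow the same argument using $\hat h_{\ed}$ and \Cref{ob:SatelliteCover}, with the generous type-\ref{Item:Type6Link} length $10\hat h_{\ed,\alpha_2}s_{\ed}$ absorbing the extra diameter.

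I expect the main obstacle to be step (ii): getting the diameter bookkeeping across different levels and scales to fit exactly under $h_{\ed,\alpha_1}$ (resp.\ $\hat h_{\ed,\alpha_2}$). In particular, the terminals of $S_1$ are not necessarily in $\Lambda(S)$, and vertex lengths in $G_{i_1,\alpha_1}$ must be bounded by \Cref{ob:LargestVertexLength}.
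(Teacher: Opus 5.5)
\paragraph{Gap: the all-pairs invariants fail at virtual nodes.} Your invariants (a) and (b) are false for virtual nodes, so the chaining step does not go through. \Cref{lemma:LCExpanding} is purely existential. For two heavy sets it only certifies that \emph{some} surviving pair is within $h_{\ed,\alpha_1}s_{\ed}$.

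Take two heavy residual subtrees $\tau,\tau'$ of $S_1$, both contained in your $W(S_{1,\nd})$. For arbitrary $a\in V_{\tmn}(\tau)$ and $b\in V_{\tmn}(\tau')$, the best available bound is $2h_{\tree,\alpha_1}+h_{\ed,\alpha_1}s_{\ed}$. But (b) demands $\ell(S_{1,\nd})=h_{\ed,\alpha_1}s_{\ed}+2h_{\diam,\alpha_1}+4h_{\alpha_1}$. Likewise, (a) for the type-\ref{Item:Type4Link} link $(\tau_{\nd},S_{1,\nd})$ demands $h_{\tree,\alpha_1}/2+\ell(S_{1,\nd})/2$. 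Since $h_{\tree,\alpha_1}=4h_{\ed,\alpha_1}s_{\ed}s_{\tree}$, both fail. In fact a single heavy subtree already has only the bound $h_{\tree,\alpha_1}\gg \ell(S_{1,\nd})$ on its diameter in $G\setminus F$.

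The same problem arises for type-\ref{Item:Type5Link} links, since every vertex of $\Lambda(S)\setminus F$ would have to be close to every vertex of every heavy subtree. It also arises for satellite nodes. Your remark about folding $h_{\tree}$ into the subtree node lengths is the right instinct, but it is incompatible with an all-pairs, per-link invariant with half-charging. A concrete node's length can only pay for moving from the specific vertex where the walk enters it to the specific vertex where it leaves. No choice of a ``hub'' witness set for $S_{1,\nd}$ rescues this, because the expansion guarantee is pairwise and cannot be split into two half-length links.

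\paragraph{Missing idea: never give virtual nodes a location.} No link of $Q$ joins two virtual nodes: types \ref{Item:Type1Link}, \ref{Item:Type2Link} and \ref{Item:Type7Link} are concrete--concrete, and types \ref{Item:Type3Link}--\ref{Item:Type6Link} are concrete--virtual. So a shortest $Q$-path splits at its concrete nodes into pieces that are either a single link or concrete--virtual--concrete. For each piece from $u_{\nd}$ to $v_{\nd}$, prove only the set bound $\dist_{G\setminus F}(V(u_{\nd}),V(v_{\nd}))\le$ (piece length) $-\ell(u_{\nd})-\ell(v_{\nd})$.

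For a piece through $S_{1,\nd}$, proceed as follows.
\begin{enumerate}
\item Pick heavy sets $V_u,V_v\subseteq A_{i_1}$ on the two sides. Each is either the terminals of a heavy subtree or $\Lambda_{i_1}(S)$ for the witnessing unaffected cluster.
\item Check that they are within $\theta_u,\theta_v$ of $S_1$ in $G_{i_1,\alpha_1}$, with $\theta_u+\theta_v+h_{\diam,\alpha_1}\le h_{\ed,\alpha_1}$.
\item Check that they are within $\eta_u,\eta_v$ of $V(u_{\nd})$, $V(v_{\nd})$ in $G\setminus F$, with $\eta_u+\eta_v+h_{\ed,\alpha_1}s_{\ed}\le \ell(S_{1,\nd})$.
\item Apply \Cref{lemma:LCExpanding} once to the pair $(V_u,V_v)$.
\end{enumerate}
Satellite nodes are handled the same way using $\hat h_{\ed,\alpha_2}$.

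Then chain the pieces using only $\diam_{G\setminus F}(V(q_{\nd}))\le \ell(q_{\nd})$ for \emph{concrete} nodes. Your steps (ii)--(iii) are essentially this computation and can be reused inside that framework. The framework itself has to be this existential, piecewise one, which is exactly how the paper argues.
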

\begin{proof}
Let $P$ be a shortest path in $Q$ between $x_{\nd}$ and $y_{\nd}$. We decompose $P$ into maximal subpaths that are internally disjoint from concrete nodes. Let $P_{\sub}$ be such a subpath. By definition, $P_{\sub}$'s endpoints, denoted by $u_{\nd}$ and $v_{\nd}$, are both concrete nodes, and all middle nodes of $P_{\sub}$ are virtual nodes. To prove the lemma, it suffices to show that
\begin{equation*}
\dist_{G\setminus F}(V(u_{\nd}),V(v_{\nd}))\leq \leng_{Q}(P) - \ell(u_{\nd}) - \ell(v_{\nd}),
\end{equation*}
since for each concrete node $v_{\nd}$, $\diam_{G\setminus F}(V(v_{\nd}))\leq \ell(v_{\nd})$.

By our construction of $Q$, there is no link directly connecting two virtual nodes. Hence, $P_{\sub}$ contains exactly one link or two links. 

\medskip

\noindent{\underline{Case 1.}} If $P_{\sub}$ is a single link $e_{\lk}:=(u_{\nd},v_{\nd})$, observe that $\leng_{Q}(P)-\ell(u_{\nd}) - \ell(v_{\nd})$ is exactly $\ell(e_{\lk})$, so we only need to show $\dist_{G\setminus F}(V(u_{\nd}),V(v_{\nd}))\leq \ell(e_{\lk})$. There are several subcases.
\begin{itemize}
\item Suppose $e_{\lk}$ has type \ref{Item:Type1Link}. By the rule of adding type-\ref{Item:Type1Link} link, there exists an edge $e=(u,v)$ such that $u\in V(u_{\nd})$, $v\in V(v_{\nd})$ and $\ell_{G}(e) = \ell_{}(e_{\lk})$, which implies $\dist_{G\setminus F}(V(u_{\nd}),V(v_{\nd}))\leq \ell_{}(e_{\lk})$.
\item Suppose $e_{\lk}$ has type \ref{Item:Type2Link}. By the rule, there exist vertices $u\in V(u_{\nd})$, $v\in V(v_{\nd})$ and an unaffected cluster $S\in{\cal N}_{i,\alpha}$ such that $u,v\in \Lambda(S)$. Since $S$ is unaffected, by \Cref{ob:SurroundingDiam} we have
\[
\dist_{G\setminus F}(V(u_{\nd}),V(v_{\nd}))\leq \diam_{G\setminus F}(\Lambda(S))\leq h_{\diam,\alpha} + 2h_{\alpha} = \ell_{}(e_{\lk}).
\]
\item The case where $e_{\lk}$ has type \ref{Item:Type7Link} is trivial.
\end{itemize}

\noindent{\underline{Case 2.}} Suppose $P_{\sub}$ has two links $e_{\lk,1}:= (u_{\nd}, S_{1,\nd})$ and $e_{\lk,2}:=(S_{1,\nd}, v_{\nd})$ with a cluster node $S_{1,\nd}$ in the middle. It suffices to show $\dist_{G\setminus F}(V(u_{\nd}),V(v_{\nd}))\leq \ell(S_{1,\nd})$. Observe that only links with types \ref{Item:Type4Link} and \ref{Item:Type5Link} can be adjacent to the cluster node $S_{1,\nd}$. Let $S_{1} = \cluster(S_{1,\nd}),i_{1}=\level(S_{1,\nd})$ and $\alpha_{1} = \scale(S_{1,\nd})$.

For better understanding and to avoid tedious case analysis, let us first discuss our strategy of exploiting the LC-expansion property \Cref{lemma:LCExpanding} w.r.t. $A_{i_{1}}$ in $G_{i_{1},\alpha_{1}}$. Take the expansion-based link $(u_{\nd},S_{1,\nd})$ as an example. Regardless it has which type, at the end we want to find a vertex set $V_{u}$ such that
\begin{enumerate}[label=(\alph*)]
\item $V_{u}\subseteq A_{i_{1}}$ and $|V_{u}|\geq \rho_{\heavy}$.
\item each vertex $u'\in V_{u}$ is close to $S$ \underline{in $G_{i_{1},\alpha_{1}}$}, or formally, $\dist_{G_{i_{1},\alpha_{1}}}(u',S)\leq \theta$ for some $\theta$ such that $2\theta + h_{\diam,\alpha_{1}}\leq h_{\ed,\alpha_{1}}$ %
\item each vertex $u'\in V_{u}\setminus F$ is close to $V(u_{\nd})$ \underline{in $G\setminus F$}, or formally, $\dist_{G\setminus F}(u',V(u_{\nd}))\leq \eta$ for some $\eta$ such that $2\eta + h_{\ed,\alpha_{1}}\cdot s_{\ed}\leq \ell(S_{1,\nd})$. %
\end{enumerate}
We first argue that why finding such a set $V_{u}$ (together with such a set $V_{v}$ for the ``$v$-side'') will upper bound $\dist_{G\setminus F}(V(u_{\nd}),V(v_{\nd}))$. First, we have 
\[
\diam_{G_{i_{1},\alpha_{1}}}(V_{u},V_{v})\leq 2\theta + \diam_{G_{i_{1},\alpha_{1}}}(S_{1})\leq 2\theta + h_{\diam,\alpha_{1}}\leq h_{\ed,\alpha_{1}}.
\]
Therefore, %
we can apply \Cref{lemma:LCExpanding} on graph $G_{i_{1},\alpha_{1}}$, node-weighting $A_{i_{1}}$, and the two sets $V_{u},V_{v}$, which gives
\[
\dist_{G\setminus F}(V_{u}\setminus F,V_{v}\setminus F)\leq h_{\ed,\alpha_{1}}\cdot s_{\ed}.
\]
Finally, using (c), we have 
\[
\dist_{G\setminus F}(V(u_{\nd}),V(v_{\nd}))\leq 2\eta + h_{\ed,\alpha_{1}}\cdot s_{\ed}\leq \ell(S_{1,\nd}),
\]
and this completes the proof.

Now we define $V_{u}$, $\theta$ and $\eta$ for $e_{\lk,1}=(u_{\nd},S_{1,\nd})$ depending on whether $e_{\lk,1}$ has type \ref{Item:Type4Link} or type \ref{Item:Type5Link}, and show that they satisfies all requirements (a), (b) and (c). %
For the link $e_{\lk,2}=(v_{\nd},S_{1,\nd})$ on the $v$-side, the argument is symmetric and thus omitted.

\medskip

\noindent\underline{Subcase: $e_{\lk,1}$ has type \ref{Item:Type4Link}.} Select $V_{u} = V(u_{\nd})$, $\theta = 0$ and $\eta = 0$. Note that $|V_{u}|\geq \rho_{\heavy}$ because $u_{\nd}$ is a subtree node created for a heavy residual subtree of $S_{1}$, so (a) is satisfied. Requirements (b) and (c) are obviously satisfied.

\medskip

\noindent\underline{Subcase: $e_{\lk,1}$ has type \ref{Item:Type5Link}.} Let $(i_{2},\alpha_{2},S_{2}) = (\level(u_{\nd}),\scale(u_{\nd}),\cluster(u_{\nd}))$. Before selecting $V_{u}$, recall that a type-\ref{Item:Type5Link} link guarantees that $V(u_{\nd})$ is adjacent-cluster $A_{i_{1}}$-dense and that $S_{2}$ intersects $S_{1}$. The adjacent-cluster $A_{i_{1}}$-denseness means there is a cluster $S\in{\cal N}_{i,\alpha}$ at level $i<i_{2}$ and length scale $\alpha<\alpha_{2}$ such that $\Lambda(S)$ intersects $V(u_{\nd})$ and $S$ is nearby $A_{i_{1}}$-dense. Select 
\[
V_{u} = \Lambda_{i_{1}}(S),\qquad \theta = 2h_{\diam,\alpha_{1}} + 4h_{\alpha_{1}},\qquad \eta = h_{\diam,\alpha_{1}} + 2h_{\alpha_{1}}.
\] 
Now we show that our choices satisfy all (a), (b) and (c). First, (a) is satisfied since $S$ is nearby $A_{i_{1}}$-dense. For (b), observe that for each $u'\in V_{u}$
\begin{align*}
&~~~~\dist_{G_{i_{1},\alpha_{1}}}(u',S_{1})\\
&\leq \diam_{G_{i_{1},\alpha_{1}}}(\Lambda_{\leq i_{1}}(S)) + \diam_{G_{i_{1},\alpha_{1}}}(S_{2})&&\text{(by $\Lambda(S)$ intersects $V(u_{\nd})\subseteq S_{2}$, and}\\& &&\text{\ \ \ \ \ $S_{2}$ intersects $S_{1}$)}\\
&\leq (h_{\diam,\alpha_{1}} + 4h_{\alpha_{1}})+ h_{\diam,\alpha_{1}}&&\text{(by \Cref{ob:SurroundingDiam})}\\
&\leq 2h_{\diam,\alpha_{1}} + 4h_{\alpha_{1}}\\
&=\theta,
\end{align*}
and also $2\theta + h_{\diam,\alpha_{1}}\leq h_{\ed,\alpha_{1}}$ by definition. For (c), observe that for each $u'\in V_{u}\setminus F\subseteq \Lambda(S)\setminus F$,
\begin{align*}
&~~~~\dist_{G\setminus F}(u',V(u_{\nd}))\\
&\leq \diam_{G\setminus F}(\Lambda(S)\setminus F)&&\text{(by $\Lambda(S)\setminus F$ intersects $V(u_{\nd})$)}\\
&\leq h_{\diam,\alpha_{1}} + 2h_{\alpha_{1}}&&\text{(by \Cref{ob:SurroundingDiam} and $S$ is unaffected)}\\
&=\eta,
\end{align*}
and also $2\eta + h_{\ed,\alpha_{1}}\cdot s_{\ed}\leq \ell(S_{1,\nd})$ by definition.

\medskip

\noindent\underline{Case 3.} Suppose $P_{\sub}$ has two links $e_{\lk,1}:=(u_{\nd},\hat{S}_{2,\nd})$ and $e_{\lk,2}:=(\hat{S}_{2,\nd},v_{\nd})$ with a satellite node $\hat{S}_{2,\nd}$ in the middle. It suffices to show $\dist_{G\setminus F}(V(u_{\nd}),V(v_{\nd}))\leq \ell_{}(\hat{S}_{2,\nd})$. Only links with types \ref{Item:Type3Link} and \ref{Item:Type6Link} can be adjacent to the satellite node $\hat{S}_{2,\nd}$. Let $i_{2}=\level(\hat{S}_{2,\nd}),\alpha_{2} = \scale(\hat{S}_{2,\nd})$ and $\hat{S}_{2}\in \hat{\cal N}_{i_{2},\alpha_{2}}$ be the satellite cluster for which $\hat{S}_{2,\nd}$ is created.

Similar to Case 2, we want to exploit \Cref{lemma:LCExpanding} w.r.t. $A_{i_{2}}$ in $G_{i_{2},\alpha_{2}}$. Again it suffices to find a vertex set $V_{u}$ such that
\begin{enumerate}[label=(\alph*)]
\item $V_{u}\subseteq A_{i_{2}}$ and $|V_{u}|\geq \rho_{\heavy}$.
\item each vertex $u'\in V_{u}$ is close to $\hat{S}_{2}$ \underline{in $G_{i_{2},\alpha_{2}}$}, or formally, $\dist_{G_{i_{2},\alpha_{2}}}(u',\hat{S}_{2})\leq \theta$ for some $\theta$ such that $2\theta + \hat{h}_{\diam,\alpha_{2}}\leq \hat{h}_{\ed,\alpha_{2}}$.
\item each vertex $u'\in V_{u}\setminus F$ is close to $V(u_{\nd})$ \underline{in $G\setminus F$}, or formally, $\dist_{G\setminus F}(u',V(u_{\nd}))\leq \eta$ for some $\eta$ such that $2\eta + \hat{h}_{\ed,\alpha_{2}}\cdot s_{\ed}\leq \ell(\hat{S}_{2,\nd})$.
\end{enumerate}
We emphasize that, this time the restrictions on $\theta$ and $\eta$ are slightly different.

\medskip

\noindent\underline{Subcase: $e_{\lk,1}$ has type \ref{Item:Type3Link}.} This type-\ref{Item:Type3Link} link guarantees that there exist level $i<i_{2}$, length scale $\alpha\leq \alpha_{2}$ and an unaffected cluster $S\in \mathcal{N}_{i,\alpha}$ such that $\Lambda(S)$ intersects $V(u_{\nd})$, $S$ is nearby $A_{i_{2}}$-dense, and $\hat{S}_{2} = \satellite_{i_{2},\alpha_{2}}(S)$. Select
\[
V_{u} = \Lambda_{i_{2}}(S),\qquad \theta = 0,\qquad \eta = h_{\diam,\alpha_{2}} + 2h_{\alpha_{2}}.
\]
We have (a) since $S$ is nearby $A_{i_{2}}$-dense. Next, we have (b) follows from $V_{u}\subseteq \hat{S}_{2}$ (by \Cref{ob:SatelliteCover}, i.e., $\hat{S}_{2}$ covers the $A_{i_{2}}$-vertices in the surrounding set $\Lambda(S)$). Finally, (c) follows from the fact that $V(u_{\nd})$ intersects $\Lambda(S)$, together with \Cref{ob:SurroundingDiam} (i.e., the surrounding set $\Lambda(S)$ has low diameter). %

\medskip

\noindent\underline{Subcase: $e_{\lk,1}$ has type \ref{Item:Type6Link}.} This type-\ref{Item:Type6Link} link guarantees that 
\begin{itemize}
\item $\level(u_{\nd}) = i_{2}$, $\scale(u_{\nd}) = \alpha_{2}$, and $\hat{S}_{2}\ni\pivot(S_{2})$ for $S_{2} = \cluster(u_{\nd})$, which implies $\hat{S}_{2}$ intersects $S_{2}$.
\item there exist level $i<i_{2}$, length scale $\alpha\leq \alpha_{2}$ and an unaffected cluster $S$ such that $\Lambda(S)$ intersects $V(u_{\nd})$ and $S$ is nearby $A_{i_{2}}$-dense.
\end{itemize}
Therefore, select
\[
V_{u} = \Lambda_{i_{2}}(S),\qquad \theta = 2h_{\diam,\alpha_{2}} + 4h_{\alpha_{2}},\qquad \eta = h_{\diam,\alpha_{2}} + 2h_{\alpha_{2}}.
\]
Again, (a) follows that $S$ is nearby $A_{i_{2}}$-dense. For (b), for each $u'\in V_{u}$,
\begin{align*}
&~~~~\dist_{G_{i_{2},\alpha_{2}}}(u',\hat{S}_{2})\\
&\leq \diam_{G_{i_{2},\alpha_{2}}}(\Lambda_{\leq i_{2}}(S)) + \diam_{G_{i_{2},\alpha_{2}}}(S_{2})&&\text{(by $\Lambda(S)$ intersects $V(u_{\nd})\subseteq S_{2}$, and}\\& &&\text{\ \ \ \ \ $S_{2}$ intersects $\hat{S}_{2}$)}\\
&\leq (h_{\diam,\alpha_{2}} + 4h_{\alpha_{2}})+ h_{\diam,\alpha_{2}}&&\text{(by \Cref{ob:SurroundingDiam})}\\
&\leq 2h_{\diam,\alpha_{2}} + 4h_{\alpha_{2}}\\
&=\theta.
\end{align*}
For (c), observe $u'\in V_{u}\setminus F\subseteq \Lambda(S)\setminus F$,
\begin{align*}
&~~~~\dist_{G\setminus F}(u',V(u_{\nd}))\\
&\leq \diam_{G\setminus F}(\Lambda(S)\setminus F)&&\text{(by $\Lambda(S)\setminus F$ intersects $V(u_{\nd})$)}\\
&\leq h_{\diam,\alpha_{2}} + 2h_{\alpha_{2}}&&\text{(by \Cref{ob:SurroundingDiam} and $S$ is unaffected)}\\
&=\eta.
\end{align*}

\end{proof}

\subsection{The Query Algorithm}
\label{sect:OracleQueryAlgo}

We first present the query algorithm, assuming that we know the explicit query graph $Q$ and the all-pair distances in $Q$.

\begin{algorithm}[htbp]
\caption{The Query Algorithm: \textsc{Query}$(\x,\y,F)$}
\label{algo:QueryAlgorithm}
\begin{algorithmic}[1]
\State Initialize $\ans = \infty$.
\Statex
\For{$i\in[0,i_{\max}],\alpha\in[1,\alpha_{\max}]$}
\If{(${\cal N}_{i,\alpha}(x)$ is unaffected and $y\in {\cal N}_{i,\alpha}(x)$) or (${\cal N}_{i,\alpha}(y)$ is unaffected and $x\in {\cal N}_{i,\alpha}(y)$)}
\State $\ans\gets \min\{\ans,h_{\diam,\alpha}\}$
\EndIf
\EndFor

\Statex

\For{$i_{x},i'_{x}\in[\level(x),i_{\max}]$, $i_{y},i'_{y}\in[\level(y),i_{\max}],\alpha_{x},\alpha_{y},\alpha'_{x},\alpha'_{y}\in[1,\alpha_{\max}]$}

\Statex

\If{${\cal N}_{i_{x},\alpha_{x}}(x)$ is affected}
\State Let $x'_{\px,\nd}$ be the node in $V_{\nd}(x)$ with $\cluster(x'_{\px,\nd}) = {\cal N}_{i_{x},\alpha_{x}}(x)$. \Comment{Pick $x'_{\px} = x$}
\Else
\If{$i'_{x}>i_{x}$, $\alpha'_{x}\geq\alpha_{x}$, ${\cal N}_{i'_{x},\alpha'_{x}}(x)$ is affected}
\State Pick an $x'_{\px}\in \ProxyList(x,i_{x},\alpha_{x},i'_{x},\alpha'_{x})\setminus F$. \Comment{If unsuccessful, skip this iteration}
\State Take the node $x'_{\px,\nd}\in V_{\nd}(x'_{\px,\nd})$ with $\cluster(x'_{\px,\nd})={\cal N}_{i'_{x},\alpha'_{x}}(x)$.
\EndIf
\EndIf

\Statex

\If{${\cal N}_{i_{y},\alpha_{y}}(y)$ is affected}
\State Let $y'_{\px,\nd}$ be the node in $V_{\nd}(y)$ with $\cluster(y'_{\px,\nd}) = {\cal N}_{i_{y},\alpha_{y}}(y)$. \Comment{Pick $y'_{\px} = y$}
\Else
\If{$i'_{y}\geq i_{y}$, $\alpha'_{y}\geq \alpha_{y}$, ${\cal N}_{i'_{y},\alpha'_{y}}(y)$ is affected}
\State Pick a $y'_{\px}\in\ProxyVertex(y,i_{y},\alpha_{y},i'_{y},\alpha'_{y})\setminus F$. \Comment{If unsuccessful, skip this iteration}
\State Take the node $y'_{\px,\nd}\in V_{\nd}(y'_{\px,\nd})$ with $\cluster(y'_{\px,\nd})={\cal N}_{i'_{y},\alpha'_{y}}(y)$.

\EndIf
\EndIf

\Statex

\State If both $x'_{\px,\nd}$ and $y'_{\px,\nd}$ are picked successfully,
\Statex ~~~~~~~~~~~~~~$\ans\gets\min\{\ans,2h_{\diam,\alpha_{x}} + 2h_{\diam,\alpha_{y}} + \dist_{Q_{}}(x'_{\px,\nd},y'_{\px,\nd})\}$.

\Statex

\EndFor
\State Return $\ans$.

\end{algorithmic}
\end{algorithm}

\paragraph{The Proxy Lists.}  The query algorithm also requires the following additional data structures called the \emph{proxy lists}, which should be precomputed and stored in the construction phase.

For each vertex $x$, each levels $i_{x},i'_{x}$ s.t. $\level(x)\leq i_{x}<i'_{x}\leq i_{\max}$, and each length scale $\alpha_{x},\alpha'_{x}$ s.t. $\alpha_{x}\leq \alpha'_{x}\leq \alpha_{\max}$, we will store a set of at most $f+1$ vertices, denoted by $\ProxyList(x, i_{x}, \alpha_{x}, i'_{x}, \alpha'_{x})$, chosen from 
\[
V_{\tmn}({\cal N}_{i'_{x},\alpha'_{x}}(x))\cap \Lambda({\cal N}_{i_{x},\alpha_{x}}(x)).
\]
Recall that ${\cal N}_{i_{x},\alpha_{x}}(x)$ is the cluster covering the neighborhood of $x$ identified by \Cref{thm:ConstructNeighborhoodCover}.
If the intersection has size at most $f+1$, then $\ProxyList(x, i_{x}, \alpha_{x}, i'_{x}, \alpha'_{x})$ is exactly this set. Otherwise, it is an arbitrary subset of size $f+1$.

\paragraph{The Query Algorithm.} Consider a query between vertices $\x$ and $\y$ under vertex failures $F$. Fix an arbitrary simple shortest path $P$, called the \emph{witness path}, connecting $x$ and $y$ in $G\setminus F$. This means $\leng_{G}(P) = \dist_{G\setminus F}(x,y)$. For clarity, we first include the pseudocode of the query algorithm in \Cref{algo:QueryAlgorithm}.

From a very high level, the intuition of \Cref{algo:QueryAlgorithm} is as follows. Just for analysis, we let $\alpha^{\star}$ be the unique length scale such that
\[
\leng_{G}(P)\leq \alpha^{\star}<2\cdot\leng_{G}(P).
\]
The first For loop (Lines 2 to 6) is checking the corner case (i.e. the following \Cref{lemma:Unaffectedxy}) that $P$ is fully contained in some unaffected cluster with length scale $\alpha^{\star}$. If so, the diameter of that cluster is already a good solution. Otherwise, the second For loop (lines 7 to 19) will find two nodes in $Q_{}$ that are ``close'' to the vertices $x$ and $y$, and then output the precomputed distance between these two nodes in $Q_{}$ (with some additive adjustment).

In fact, the second loop corresponds to \Cref{lemma:FindingProxy} right below, where guarantees 1(a), 1(b), 2(a), and 2(b) correspond one-to-one to the branches in the pseudocode. This \Cref{lemma:FindingProxy} may be difficult to digest at first glance. However, we note that part of the complexity in the lemma statement and its proof arises from separating the update phase from the query phase to obtain faster query time. Indeed, if the query vertices $x$ and $y$ were already known in the update phase, and we defined the affected clusters to be all clusters hit by $F\cup\{x,y\}$ rather than just $F$, then the simpler guarantees 1(a) and 2(a) would always hold. 

Finally, we note that the nodes $x'_{\px,\nd}$ and $y'_{\px,\nd}$ guessed by the algorithm may not be the same as the nodes $x_{\px,\nd}$ and $y_{\px,\nd}$ selected by \Cref{lemma:FindingProxy} (this only happens in cases 1(b) and 2(b)). Therefore, we need an additional \Cref{lemma:UpperBoundSolution} which finally establishes the correctness of upper-bound direction by showing $x_{\px,\nd}$ and $x'_{\px,\nd}$ (and also $y_{\px,\nd}$ and $y'_{\px,\nd}$) are close in the query graph. 

The correctness of the lower-bound direction is straightforward given \Cref{lemma:GeneralSolLowerBound}. We deferred the proof to  \Cref{sect:OracleSolLowerBound}.

\subsubsection{Upper Bounding the Solution}

\begin{lemma}
\label{lemma:Unaffectedxy}
If there exist either
\begin{itemize}
\item level $i_{x}$ such that ${\cal N}_{i_{x},\alpha^{\star}}$ is unaffected and $y\in {\cal N}_{i_{x},\alpha^{\star}}(x)$, or
\item level $i_{y}$ such that ${\cal N}_{i_{y},\alpha_{\star}}$ is unaffected and $x\in {\cal N}_{i_{y},\alpha^{\star}}(y)$,
\end{itemize}
then $\ans\leq 2\cdot s_{\nc}\cdot \dist_{G\setminus F}(x, y)$.
\end{lemma}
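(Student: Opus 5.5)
The plan is to read the hypothesis as saying that the cluster ${\cal N}_{i_x,\alpha^{\star}}(x)$ (respectively ${\cal N}_{i_y,\alpha^{\star}}(y)$) is unaffected and contains the other endpoint. This is the form in which Line 3 of \Cref{algo:QueryAlgorithm} tests it. By symmetry it suffices to treat the first bullet. Write $S:={\cal N}_{i_x,\alpha^{\star}}(x)$. Since $S$ is unaffected and $y\in S$, the condition on Line 3 holds in the iteration $(i,\alpha)=(i_x,\alpha^{\star})$. That iteration sets $\ans\le h_{\diam,\alpha^{\star}}$. Because $\ans$ only decreases, this bound survives to the end.

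It then remains to compare $h_{\diam,\alpha^{\star}}$ with $\dist_{G\setminus F}(x,y)=\leng_G(P)$. By definition, $h_{\diam,\alpha^{\star}}=s_{\nc}\cdot h_{\cov,\alpha^{\star}}=2s_{\nc}\cdot h_{\alpha^{\star}}$. The scale $\alpha^{\star}$ was chosen so that $h_{\alpha^{\star}}$ is within a factor $2$ of $\leng_G(P)$. Here I read the definition as $\leng_G(P)\le h_{\alpha^{\star}}<2\leng_G(P)$, since $\alpha^{\star}$ indexes $h_{\alpha}=2^{\alpha}$.

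The main obstacle is the constant. The naive bound gives $h_{\diam,\alpha^{\star}}=2s_{\nc}h_{\alpha^{\star}}<4s_{\nc}\leng_G(P)$, which is off by a factor of $2$ from the claimed $2s_{\nc}$. I would first try to repair this by choosing $\alpha^{\star}$ as the smallest scale with $h_{\alpha^{\star}}\ge \leng_G(P)/2$. Then $h_{\alpha^{\star}}<\leng_G(P)$ whenever $\alpha^{\star}>1$, and the covering radius $h_{\cov,\alpha^{\star}}=2h_{\alpha^{\star}}\ge\leng_G(P)$ is still large enough for covering arguments elsewhere. With this choice the bound becomes $2s_{\nc}h_{\alpha^{\star}}\le 2s_{\nc}\leng_G(P)$. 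The boundary case $\alpha^{\star}=1$ needs separate care: there $\leng_G(P)\ge1$ if $x\ne y$, since lengths are positive integers, and if $x=y$ the answer is trivial. Failing that, I would settle for the claimed bound up to the constant, which only affects the $O(\cdot)$ in the final approximation.

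Note that this route does not need \Cref{lemma:GeneralSolLowerBound} or any covering property of $P$. The hypothesis directly supplies the unaffected cluster, and the only content is the arithmetic relating $h_{\diam,\alpha^{\star}}$ to $\leng_G(P)$.
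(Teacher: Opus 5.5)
Your core argument is the same as the paper's. In the iteration $(i,\alpha)=(i_x,\alpha^\star)$ the test on Line~3 succeeds, so $\ans\le h_{\diam,\alpha^\star}$, and the rest is arithmetic.

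Your factor-of-$2$ observation is also correct. The paper reaches $2s_{\nc}$ only by writing $h_{\diam,\alpha^\star}\le s_{\nc}h_{\alpha^\star}$. That contradicts its own definition $h_{\diam,\alpha}=s_{\nc}h_{\cov,\alpha}=2s_{\nc}h_{\alpha}$, which is used in that form in the proof of \Cref{lemma:UpperBoundLengPqry}. Since $h_{\alpha^\star}<2\leng_G(P)$, the argument actually gives $\ans\le 4s_{\nc}\cdot\dist_{G\setminus F}(x,y)$.

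You should drop the proposed repair, because $\alpha^\star$ is not free to choose. The analysis splits on whether this lemma's hypothesis holds at the initial segment's scale. Its negation (\Cref{assume:HighProbability}) is what makes ${\cal N}_{i_\sigma,\alpha_\sigma}(x)$ affected in Property~\ref{Item:SegProperty5}, using $y\in{\cal N}_{i_\sigma,\alpha_\sigma}(x)$ from Property~\ref{Item:SegProperty2}.

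That covering step needs $\leng_G(P)+h_{\alpha_\sigma}\le h_{\cov,\alpha_\sigma}=2h_{\alpha_\sigma}$, i.e.\ $\leng_G(P)\le h_{\alpha_\sigma}$. The reason is that \Cref{lemma:HighProbabilityEvent} already spends an additive $h_{\alpha}$ of the covering radius. So $h_{\cov,\alpha^\star}\ge\leng_G(P)$ is not enough. Your choice gives $h_{\alpha^\star}<\leng_G(P)$ whenever $\alpha^\star>1$, which breaks the decomposition.

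The boundary case cannot be rescued either. If $\leng_G(P)=1$, every scale has $h_\alpha\ge 2$, so $h_{\diam,\alpha}\ge 4s_{\nc}\leng_G(P)$ at any scale.

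Your fallback is the correct resolution: prove the lemma with $4s_{\nc}$ in place of $2s_{\nc}$. That is all the $O(\cdot)$ in \Cref{thm:OracleMain} requires.
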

\begin{proof}
Suppose $i_{x}$ exists (the case in which $i_{y}$ exists is analogous). Then $\dist_{G_{i_{x},\alpha^{\star}}\setminus F}(x,y)\leq h_{\diam,\alpha^{\star}}$, since the cluster ${\cal N}_{i_{x},\alpha^{\star}}(x)$ includes both $x$ and $y$ and has no failed vertex. Therefore, $\ans\leq h_{\diam,\alpha^{\star}}\leq s_{\nc}\cdot h_{\alpha^{\star}}\leq 2\cdot s_{\nc}\cdot \dist_{G\setminus F}(x,y)$.
\end{proof}

\begin{assumption}
\label{assume:HighProbability}
Assume the condition of \Cref{lemma:Unaffectedxy} is not fulfilled. Also, assume the high probability event in \Cref{lemma:HighProbabilityEvent} happens.
\end{assumption}

\begin{restatable}{lemma}{FindingProxy}
\label{lemma:FindingProxy}
Under \Cref{assume:HighProbability}, there exist 
\begin{itemize}
\item two vertices $\x_{\px}$ and $\y_{\px}$ on $P$ called proxy vertices,
\item levels $i_{x}\geq \level(x_{\px}),i_{y}\geq \level(y_{\px})$, length scales $\alpha_{x},\alpha_{y}\leq \alpha^{\star}$, and clusters $S_{x}:={\cal N}_{i_{x},\alpha_{x}}(x)$, $S_{y}:={\cal N}_{i_{y},\alpha_{y}}(y)$,
\item  two nodes $\x_{\px,\nd}\in V_{\nd}(x_{\px})$ and $\y_{\px,\nd}\in V_{\nd}(y_{\px})$ called proxy nodes
\end{itemize}
further satisfying the following.
\begin{enumerate}
\item For $x_{\px}$ and $x_{\px,\nd}$, either
\begin{enumerate}[label=(\alph*)]
\item $x_{\px} = x$, $S_{x}$ is affected, and $\cluster(x_{\px,\nd})$ is exactly $S_{x}$, or
\item $x_{\px}\in \Lambda(S_{x})$, $S_{x}$ is unaffected, and $\cluster(x_{\px,\nd})$ is exactly the affected cluster $S'_{x}:={\cal N}_{i'_{x},\alpha'_{x}}(x)$ for some $i'_{x}>i_{x}$ and $\alpha'_{x}\in[\alpha'_{x},\alpha^{\star}]$.
\end{enumerate}

\item Similarly, for $y_{\px}$ and $y_{\px,\nd}$, either
\begin{enumerate}[label=(\alph*)]
\item $y_{\px} = y$, $S_{y}$ is affected, and $\cluster(y_{\px,\nd})$ is exactly $S_{y}$, or
\item $y_{\px}\in \Lambda(S_{y})$, $S_{y}$ is unaffected, and $\cluster(y_{\px,\nd})$ is exactly the affected cluster $S'_{y}:={\cal N}_{i'_{y},\alpha'_{y}}(y)$ for some $i'_{y}>i_{y}$ and $\alpha'_{y}\in[\alpha'_{y},\alpha^{\star}]$.
\end{enumerate}

\item $\dist_{Q_{}}(x_{\px,\nd},y_{\px,\nd})\leq O((s_{\tree}+s_{\nc})\cdot s_{\nc}\cdot s_{\ed}\cdot i_{\max})\cdot \dist_{G}(x,y)$.

\end{enumerate}
\end{restatable}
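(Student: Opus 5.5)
The plan is to follow the witness-path decomposition sketched in the overview, now carried out with the independent covers ${\cal N}_{i,\alpha}$. Then I convert the decomposition into a path in $Q$ and pay for each affected cluster via its cluster node or satellite node.

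\textbf{Step 1: decomposition into leaf segments.} Start from the initial segment $\sigma=P$ with level $i_{\max}$ and scale $\alpha^\star$. Recursively, a segment $\sigma$ at level $i_\sigma$ is broken at its $A_{i_\sigma}$-vertices. Each child segment gets level $i_\sigma-1$ and scale $\alpha_{\check\sigma}$ with $h_{\alpha_{\check\sigma}}\approx\leng(\check\sigma)$. It also gets a cluster $S_{\check\sigma}={\cal N}_{i,\alpha}(w)$ for an interior vertex $w$.

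The interior of $\check\sigma$ avoids $A_{>i}$. So by \Cref{lemma:HighProbabilityEvent} its length in $G_{i,\alpha}$ is at most $\leng_G+h_\alpha\le 2h_\alpha=h_{\cov,\alpha}$. Hence the interior lies in $S_{\check\sigma}$, and both endpoints lie in $\Lambda(S_{\check\sigma})$, since they are joined to the interior by edges of length at most $h_\alpha$.

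Recursion stops at segments whose cluster is unaffected; these are the leaf segments. Scales are monotone along the recursion tree. The non-leaf clusters ${\cal S}_{\up}$ are affected and have total cost $\sum h_{\scale(S)}=O(i_{\max}\cdot\leng(P))$, because each level partitions $P$.

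The proxy $x_{\px}$ is determined by the first leaf segment touching $x$, together with the smallest cluster ${\cal N}_{i_x,\alpha_x}(x)$ in the chain containing $x$. If that cluster is affected we are in case 1(a). Otherwise $x$ lies in an unaffected leaf cluster $S_x$, and $x_{\px}$ is the far endpoint of that leaf segment. This endpoint lies in $\Lambda(S_x)$ and is a terminal of the parent affected cluster, which gives case 1(b). Do the same for $y$.

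\textbf{Step 2: building a $Q$-path.} Every endpoint of a leaf segment is a terminal of an affected cluster in ${\cal S}_{\up}$, so it is represented by a singleton node or a subtree node of that cluster. Consecutive endpoints of a leaf segment $\sigma$ with unaffected $S_\sigma$ lie in $\Lambda_{i_1}(S_\sigma)$ and $\Lambda_{i_2}(S_\sigma)$, where the levels and scales satisfy $i_1\ge i_2>i$ and $\alpha_1\ge\alpha_2\ge\alpha$. Moreover, the two parent clusters intersect. I connect them by cases.

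If the leaf is a single edge, use a type-\ref{Item:Type1Link} link. If $S_\sigma$ is nearby-sparse at both levels, use a type-\ref{Item:Type2Link} link, of length $O(h_{\diam,\alpha})=O(s_{\nc}\leng(\sigma))$. If $S_\sigma$ is nearby $A_{i_1}$-dense, the lower endpoint's node is adjacent-cluster $A_{i_1}$-dense. A type-\ref{Item:Type5Link} link then goes to $S_{1,\nd}$, and the higher endpoint's node reaches $S_{1,\nd}$ the same way, or via type \ref{Item:Type4Link} if it is a subtree node. If $S_\sigma$ is $A_{i_1}$-sparse but $A_{i_2}$-dense, use a type-\ref{Item:Type3Link} link to the satellite node $\satellite_{i_2,\alpha_2}(S_\sigma)$, then a type-\ref{Item:Type6Link} link back to the $i_2$-side node. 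This requires checking that this satellite cluster contains $\pivot(S_2)$ and is affected, using \Cref{ob:SatelliteCover}.

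Type-\ref{Item:Type7Link} links glue together nodes of the same vertex across different clusters.

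\textbf{Step 3: the charging argument, which I expect to be the main obstacle.} The naive path may pass through many subtree nodes, cluster nodes and satellite nodes, each carrying a large node length. First, for each $S\in{\cal S}_{\up}$, I shortcut the path so that it visits $S_{\nd}$, and the subtree nodes of $S$, at most once. Any two subtree nodes of $S$ are adjacent to $S_{\nd}$ via type-\ref{Item:Type4Link} links, so the path can jump from the first to the last occurrence. This costs $O(h_{\tree,\alpha}+s_{\nc}s_{\ed}h_\alpha)$ per cluster. Summing over ${\cal S}_{\up}$ gives $O((s_{\tree}+s_{\nc})s_{\nc}s_{\ed})\cdot\cost({\cal S}_{\up})=O((s_{\tree}+s_{\nc})s_{\nc}s_{\ed}i_{\max})\leng(P)$.

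Satellite nodes need a parallel charging step. Each satellite node used at level $i_2$ and scale $\alpha_2$ must be charged to the ancestor segment of that level. That segment has scale at least $\alpha_2$, so again each one is used at most once per ancestor segment, after a similar shortcutting. The cluster-node links require the relevant $S_1$ to be affected and to actually lie in ${\cal S}_{\up}$. I would verify this carefully, since Step 1 chooses clusters as ${\cal N}(w)$, and the scale monotonicity $\alpha_1\ge\alpha_2$ must be arranged by that choice.
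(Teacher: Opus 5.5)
Your decomposition and your four-way treatment of a leaf match the paper's. The four cases are: type~\ref{Item:Type1Link}; type~\ref{Item:Type2Link}; type~\ref{Item:Type5Link}/\ref{Item:Type4Link} into $S_{1,\nd}$; and type~\ref{Item:Type3Link} followed by type~\ref{Item:Type6Link} through $\satellite_{i_2,\alpha_2}(S_\sigma)$. The charging differs. The paper gives each expensive per-leaf path a \emph{portal}, namely the cluster node of a recursive segment, and in \Cref{algo:ConstructPqry} jumps from the first to the last per-leaf path sharing that portal. You instead do, for each $S\in{\cal S}_{\up}$, a global first-to-last jump through $S_{\nd}$ over the occurrences of $S_{\nd}$ and of subtree nodes of $S$. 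This works equally well for paying for cluster and subtree nodes.

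The things you planned to ``verify carefully'' are automatic. $S_1$ is the cluster of a recursive segment, so it is affected and lies in ${\cal S}_{\up}$. The inequality $\alpha_1\ge\alpha_2$ follows from nesting, whichever interior vertex selects the cluster. What Items~1--2 do force is the following:
\begin{itemize}
\item You must use ${\cal N}(x)$ (resp.\ ${\cal N}(y)$) for every segment whose interior contains $x$ (resp.\ $y$).
\item The initial segment has only one cluster, so on one side you need a type-\ref{Item:Type7Link} link to reach the node of ${\cal N}_{i,\alpha^\star}(y)$. That cluster is affected under \Cref{assume:HighProbability}.
\end{itemize}

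The step whose justification does not work is the satellite charging. The satellite node of a case-4 leaf $\sigma$ is $\hat{\cal N}_{i_2,\alpha_2}(\pivot(S_\sigma))$, which depends on $\sigma$. No link joins two virtual nodes, so there is no hub through which a ``similar shortcutting'' could merge the satellite nodes of different leaves. Shortcutting only deletes repeated copies of a single node.

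The bound you assert is nevertheless true, for a structural reason. Take a middle leaf $\sigma=P[w,z]$ with $\level(w)>\level(z)$, and let $\sigma_z$ be the recursive segment having $z$ as a breakpoint. Then $w\in\sigma\subseteq\sigma_z$, but $w\notin\sigma_z^\circ$, because $\sigma_z^\circ$ avoids $A_{>\level(z)}$. So $w$ is an endpoint of $\sigma_z$. Also $\sigma^\circ$ contains no $A_{\level(z)}$-vertex, so $z$ is the breakpoint of $\sigma_z$ nearest to $w$. Hence each recursive segment is the lower-level side of at most two leaves with unequal endpoint levels.

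Case~4 forces $i_1>i_2$. So each $\sigma_z$ pays for at most two satellite nodes, each of length $\Theta(s_{\nc}^2 s_{\ed} h_{\alpha_z})$, and this gives the $O(s_{\nc}^2 s_{\ed} i_{\max})\cdot\leng(P)$ term.

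Alternatively, do what the paper does in subcase~4(b). After the type-\ref{Item:Type6Link} link, append $z_{\nd}\to S_{z,\nd}\to z_{\nd}$ using type-\ref{Item:Type5Link} links; these exist because $z_{\nd}$ is adjacent-cluster $A_{i_z}$-dense. Then $S_{z,\nd}$ serves as a hub that your first-to-last jump can use.
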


\begin{restatable}{lemma}{UpperBoundSolution}
\label{lemma:UpperBoundSolution}
Under \Cref{assume:HighProbability}, there exist levels $i_{x},i'_{x},i_{y},i'_{y}$ and length scales $\alpha_{x},\alpha'_{x},\alpha_{y},\alpha'_{y}$ s.t. $\level(x)\leq i_{x}<i'_{x}\leq i_{\max}$, $\level(y)\leq i_{y}<i'_{y}\leq i_{\max}$, $1\leq \alpha_{x}\leq \alpha'_{x}\leq \alpha^{\star}$, and $1\leq \alpha_{y},\alpha'_{y}\leq \alpha^{\star}$, so that on line 24, $\ans$ will be updated, and
\[
2h_{\diam,\alpha_{x}} + 2h_{\diam,\alpha_{y}} + \dist_{Q_{}}(x'_{\px,\nd},y'_{\px,\nd}) \leq O((s_{\tree} + s_{\nc})\cdot s_{\nc}\cdot s_{\ed}\cdot i_{\max})\cdot \dist_{G\setminus F}(x,y).
\]
\end{restatable}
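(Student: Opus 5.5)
We derive \Cref{lemma:UpperBoundSolution} from \Cref{lemma:FindingProxy}. We show that one iteration of the second loop of \Cref{algo:QueryAlgorithm} picks nodes $x'_{\px,\nd},y'_{\px,\nd}$ that are close in $Q$ to the proxy nodes $x_{\px,\nd},y_{\px,\nd}$ given by that lemma. Fix the objects from \Cref{lemma:FindingProxy} and set the iteration parameters to $(i_{x},\alpha_{x})$ and $(i_{y},\alpha_{y})$. In case 1(b) also set $(i'_{x},\alpha'_{x})$; in case 1(a) choose $i'_{x},\alpha'_{x}$ arbitrarily, since they are ignored. Treat $y$ symmetrically. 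The two sides are independent, so it suffices to handle the $x$-side and then use the triangle inequality in $Q$.

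\textbf{Case 1(a).} The algorithm sets $x'_{\px}=x$ and takes the node of $V_{\nd}(x)$ with cluster $S_{x}$. This is the node $x_{\px,\nd}$ itself, except for one subtlety: $V_{\nd}(x)$ might contain two nodes of the same cluster, but a terminal lies in exactly one residual subtree, so the node is unique. Here we need $x$ to be a terminal of $S_{x}$, i.e.\ $i_{x}=\level(x)$. I expect the proof of \Cref{lemma:FindingProxy} to guarantee this; otherwise we adjust via a type-\ref{Item:Type7Link} link. So $x'_{\px,\nd}=x_{\px,\nd}$.

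\textbf{Case 1(b).} Here $S_{x}$ is unaffected, $S'_{x}$ is affected, and $x_{\px}\in V_{\tmn}(S'_{x})\cap\Lambda(S_{x})$. Hence the candidate set behind $\ProxyList(x,i_{x},\alpha_{x},i'_{x},\alpha'_{x})$ contains the non-failed vertex $x_{\px}$. If the set has at most $f+1$ elements, the list is the whole set, so picking succeeds. Otherwise the list has $f+1$ distinct vertices, at least one of which avoids $F$. Either way the algorithm finds some $x'_{\px}\in V_{\tmn}(S'_{x})\cap\Lambda(S_{x})\setminus F$. Since $S'_{x}$ is affected, both $x_{\px}$ and $x'_{\px}$ are non-failed terminals of $S'_{x}$. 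Each therefore lies in a residual subtree of $T_{S'_{x}}$ and is represented by a singleton or subtree node of cluster $S'_{x}$.

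The key step, and the expected obstacle, is bounding $\dist_{Q}(x_{\px,\nd},x'_{\px,\nd})$ by $O(h_{\tree,\alpha'_{x}}+s_{\ed}h_{\ed,\alpha'_{x}})$. Both vertices lie in $\Lambda(S_{x})$ with $S_{x}$ unaffected, so I would try to realize a type-\ref{Item:Type2Link} link between them, or a two-hop path through a satellite or cluster node via types \ref{Item:Type3Link}/\ref{Item:Type5Link}/\ref{Item:Type6Link}. This mirrors the case analysis used to prove \Cref{lemma:FindingProxy}, with $S=S_{x}$ and $i_{1}=i_{2}=i'_{x}$:
\begin{itemize}
\item If $S_{x}$ is nearby $A_{i'_{x}}$-sparse, a type-\ref{Item:Type2Link} link of length $h_{\diam,\alpha_{x}}+2h_{\alpha_{x}}$ joins the two nodes.
\item If $S_{x}$ is nearby $A_{i'_{x}}$-dense, both nodes are adjacent-cluster $A_{i'_{x}}$-dense through $S_{x}$, since $\level(S_{x})<i'_{x}$ and $\alpha_{x}\le\alpha'_{x}$. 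Type-\ref{Item:Type5Link} links then connect both to the cluster node $S'_{x,\nd}$, because $S'_{x}$ intersects itself. The path length is at most $\ell(S'_{x,\nd})=O(s_{\nc}s_{\ed}h_{\alpha'_{x}})$.
\end{itemize}
Concrete node lengths add at most $2h_{\tree,\alpha'_{x}}$.

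Since $\alpha'_{x}\le\alpha^{\star}$ and $h_{\alpha^{\star}}=O(\dist_{G\setminus F}(x,y))$, this additive loss is $O((s_{\tree}+s_{\nc})s_{\nc}s_{\ed})\cdot\dist_{G\setminus F}(x,y)$. The terms $2h_{\diam,\alpha_{x}}+2h_{\diam,\alpha_{y}}$ are also $O(s_{\nc})\dist_{G\setminus F}(x,y)$. Combining these with item 3 of \Cref{lemma:FindingProxy} (note $\dist_G\le\dist_{G\setminus F}$) gives the bound.
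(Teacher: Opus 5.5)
Your proposal is correct and matches the paper's proof. Both fix the iteration given by \Cref{lemma:FindingProxy} and argue that the $\ProxyVertex$ pick succeeds because the list either has $f+1$ entries or contains the non-failed $x_{\px}$. In case 1(b), both join $x_{\px,\nd}$ and $x'_{\px,\nd}$ by a type-\ref{Item:Type2Link} link when $S_x$ is nearby $A_{i'_x}$-sparse, and otherwise by two type-\ref{Item:Type5Link} links through $S'_{x,\nd}$. Your worry in case 1(a) is unnecessary: $x_{\px,\nd}\in V_{\nd}(x)$ with $\cluster(x_{\px,\nd})=S_x$ already forces $x\in V_{\tmn}(S_x)$, so no type-\ref{Item:Type7Link} adjustment is needed.
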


\subsubsection{Decompose the Witness Path $P$: Proof of \Cref{lemma:FindingProxy,lemma:UpperBoundSolution}}
\label{sect:OraclePathDecompose}

Consider the following recursive decomposition procedure, which recursively decomposes the witness path $P$ into subpaths called \emph{segments}. Naturally, the procedure starts with the whole path $P$, which is the unique \emph{initial segment}. Each of the other segments appearing in this procedure will be classified according to two independent criteria:
\begin{itemize}
\item as \emph{recursive} or \emph{leaf}, and
\item as \emph{leftmost}, \emph{middle}, or \emph{rightmost}.
\end{itemize}
In particular, the initial segment is also viewed as a recursive segment.

\paragraph{Notations on Segments.} Before describing the procedure, we introduce some general notations. For two vertices $w$ and $z$ on $P$, we let $P[w,z]$ denotes the subpath of $P$ with $w$ and $z$ as the left and right endpoints. We use $\sigma$ to denote segments, which is always of the form $\sigma=P[w,z]$. %

\medskip

\noindent\underline{The Corresponding Level, Length Scale, and Clusters (or Edge).} Each segment $\sigma$ corresponds to a level $i_{\sigma}$, a length scale $\alpha_{\sigma}$, and a cluster $S_{\sigma}\in {\cal N}_{i_{\sigma}, \alpha_{\sigma}}$. In a few corner cases, the corresponding cluster $S_{\sigma}$ degenerates into a single original edge $e_{\sigma}$. The level $i_{\sigma}$ and the cluster $S_{\sigma}$ (or edge $e_{\sigma}$) are determined by the procedure, whereas the length scale $\alpha_{\sigma}$ is known in advance, which is the unique length scale such that
\[
\leng_{G}(\sigma)\leq h_{\alpha_{\sigma}}<2\leng_{G}(\sigma).
\]
We note that two segments may represent the same subpath of $P$ while having different corresponding levels or corresponding clusters (or edges).

\medskip

\noindent\underline{The Interior of a Segment.} For each segment $\sigma = P[w,z]$, its \emph{interior}, denoted by $\sigma^{\circ}$, is a (possibly empty) subpath of this segment, defined as follows. If $\sigma$ is an initial, leftmost, rightmost, or middle segment, then its interior $\sigma^{\circ}$ is $P[w_{}, z_{}]$, $P(w_{}, z_{}]$, $P[w_{}, z_{})$, and $P(w_{}, z_{})$, respectively. Here, $P(w_{}, z_{}]$ denotes the subpath of $P$ obtained by deleting from $P[w_{},z_{}]$ the left endpoint $w_{}$. The other notations are defined analogously.

\paragraph{The Decomposition Procedure.} Now, we are ready to describe the decomposition procedure. 

\medskip

\noindent\underline{The Initial Segment.} The initial segment $\sigma = P[w,z]$ is the entire path $P$ (i.e., $w=x$ and $z=y$). %
Its corresponding level $i_{\sigma}$ is the maximum level such that $\sigma$ intersects $A_{i_{\sigma}}$. The cluster $S_{\sigma}$ is defined to be $S_{\sigma}:={\cal N}_{i_{\sigma},\alpha_{\sigma}}(w_{})$. We note that choosing such $i_{\sigma}$ (rather than starting from $i_{\max}$) is purely for convenience and is not essential (see \Cref{remark:initialI} for more discussion).

For better understanding, we point out that the above definition guarantees that the interior $\sigma^{\circ}$ (which is exactly $\sigma$ since $\sigma$ is the initial segment) intersects $A_{i_{\sigma}}$ and is disjoint from $A_{>i_{\sigma}}$. Moreover, we have $\alpha_{\sigma} = \alpha^{\star}$ by definition, and thus \Cref{assume:HighProbability} guarantees that both ${\cal N}_{i_{\sigma},\alpha_{\sigma}}(w_{})$ and ${\cal N}_{i_{\sigma},\alpha_{\sigma}}(z_{})$ are affected. 

\medskip

\noindent\underline{Decomposing an Recursive Segment.} Let $\sigma$ be a recursive segment. The procedure will guarantee that, at this moment, the level $i_{\sigma}$, length scale $\sigma_{\sigma}$ and cluster $S_{\sigma}$ have been determined.

\begin{itemize}
\item 
Let $r_{1},r_{2},...,r_{k}$ be the $A_{i_{\sigma}}$-vertices on $\sigma^{\circ}$ (ordered from the left to the right), called the \emph{breakpoints} of $\sigma$. It is possible that $\sigma^{\circ}$ is disjoint from $A_{i_{\sigma}}$ and thus such $r_{1},...,r_{k}$ do not exist. %

\item The segments generated in this decomposition step are naturally $P[w_{},r_{1}]$, $P[r_{1},r_{2}]$, ..., $P[r_{k-1},r_{k}]$, and $P[r_{k},z_{}]$ (excluding all the empty segments containing no edge). We use $\check{\sigma}$ to denote such a \emph{generated segment}.

\end{itemize}
For each generated segment $\check{\sigma}$, we define its level $i_{\check{\sigma}}:= i_{\sigma}-1$. Next, we classify the generated segments as leftmost, middle, or rightmost segments (define their corresponding clusters or edges along the way), and then classify them as recursive or leaf segments.

\medskip

\noindent\underline{Leftmost/Middle/Rightmost Classification.}
\begin{itemize}
\item If (i) $\sigma$ is the initial segment or a leftmost segment, and (ii) $w_{}$ and $r_{1}$ are different, then we classify $\check{\sigma}:=P[w_{},r_{1}]$ as a leftmost segment, and define its corresponding cluster $S_{\check{\sigma}}:={\cal N}_{i_{\check{\sigma}},\alpha_{\check{\sigma}}}(w_{})$.
\item If (i) $\sigma$ is the initial segment or a rightmost segment, and (ii) $z_{}$ and $r_{k}$ are different, then we classify $\check{\sigma}:=P[r_{k},z_{}]$ as a rightmost segment, and define its corresponding cluster $S_{\check{\sigma}}:={\cal N}_{i_{\check{\sigma}},\alpha_{\check{\sigma}}}(z_{})$.
\item For each of the remaining generated segment $\check{\sigma}:=P[\check{w},\check{z}]$, we classify them as an middle segment. If $\check{\sigma}$ is a single edge $(\check{w},\check{z})$, then define its corresponding edge $e_{\check{\sigma}}:=(\check{w},\check{z})$, otherwise, define its corresponding cluster $S_{\check{\sigma}}:={\cal N}_{i_{\check{\sigma}},\alpha_{\check{\sigma}}}(\check{w}^{\circ})$, where $\check{w}^{\circ}$ is the vertex on $\check{\sigma}$ adjacent to the left endpoint $\check{w}$ (i.e., $\check{w}^{\circ}$ is the left endpoint of $\check{\sigma}$'s interior). 
\end{itemize}
\begin{remark}
\label{remark:initialI}
We emphasize when $\sigma$ is the initial segment, since we define $i_{\sigma}$ to be the maximum level s.t. $\sigma$ intersects $A_{i_{\sigma}}$, there will be at least one breakpoint $r_{1},...,r_{k}$, and thus a generated segment $\check{\sigma}$ will not be classified as a leftmost segment and a rightmost segment simultaneously. This is purely for convenience and is not essential. The proof still goes through if we set $i_{\sigma}:=i_{\max}$ for the initial segment.
\end{remark}

\medskip

\noindent\underline{Recursive/Leaf Classification.} For each generated segment $\check{\sigma}$, if it has a corresponding cluster $S_{\check{\sigma}}$ and $S_{\check{\sigma}}$ is affected, then $\check{\sigma}$ is a recursive segment (and thus it will be further decomposed), otherwise, it is a leaf segment.

\medskip

This finishes the description of the decomposition process. By considering the recursion tree, we can naturally define the \emph{ancestor-descendant relationships} among segments. For better understanding, we note that the leftmost segments only appear in the ``leftmost root-to-leaf path'' in the recursion tree, and this “leftmost root-to-leaf path” begins with the initial segment, followed by zero or more leftmost segments and then zero or more middle segments. Similar observations hold for the rightmost segments and the ``rightmost root-to-leaf path''.

\paragraph{Properties of Segments.} Next, we state some properties of the segments that follow directly from the description of the procedure.
\begin{enumerate}
\item\label{Item:SegProperty1} For each segment $\sigma$, its endpoints outside the interior $\sigma^{\circ}$ belong to $A_{>i_{\sigma}}$, and its interior $\sigma^{\circ}$ is disjoint from $A_{>i_{\sigma}}$. %

\item\label{Item:SegProperty2} For each segment $\sigma$ whose $S_{\sigma}$ exists, its interior $\sigma^{\circ}$ is fully contained in $S_{\sigma}$. To see this, note that 
\[
\leng_{G_{i_{\sigma},\alpha_{\sigma}}}(\sigma^{\circ})\leq \leng_{G}(\sigma^{\circ}) + h_{\alpha_{\sigma}}\leq 2h_{\alpha_{\sigma}} = h_{\cov,\alpha_{\sigma}},
\]
where the first inequality is by \Cref{assume:HighProbability}, \Cref{lemma:HighProbabilityEvent} and that $\sigma^{\circ}$ is disjoint from $A_{>i_{\sigma}}$ (Property \ref{Item:SegProperty1}). Also, for the endpoints $w^{\circ}$ and $z^{\circ}$ of $\sigma^{\circ}$, by definition the cluster $S_{\sigma}$ covers either $\Ball_{G_{i_{\sigma},\alpha_{\sigma}}}(w^{\circ}, h_{\cov,\sigma})$ or $\Ball_{G_{i_{\sigma},\alpha_{\sigma}}}(z^{\circ}, h_{\cov,\sigma})$. Hence, $S_{\sigma}$ covers the whole $\sigma^{\circ}$.

\item\label{Item:SegProperty3} For each segment $\sigma$ whose $S_{\sigma}$ exists, its endpoints outside the interior $\sigma^{\circ}$ belong to $\Lambda(S_{\sigma})$. To see this, let $w$ be such an endpoint. Note that $\sigma^{\circ}\subseteq S_{\sigma}$ (Property \ref{Item:SegProperty2}) and $w$ has an edge of length at most $\leng_{G}(\sigma)\leq h_{\alpha_{\sigma}}$ connecting $\sigma^{\circ}$, so by definition, $w\in \Lambda(S_{\sigma})$. 

\item\label{Item:SegProperty4} Each recursive segment has an affected $S_{\sigma}$, while each leaf segment has either an unaffected $S_{\sigma}$ or a corresponding edge $e_{\sigma}$.

\item\label{Item:SegProperty5} If $\sigma$ is the initial segment, both ${\cal N}_{i_{\sigma},\alpha_{\sigma}}(x)$ and ${\cal N}_{i_{\sigma},\alpha_{\sigma}}(y)$ are affected. The initial segment has a corresponding cluster $S_{\sigma} = {\cal N}_{i_{\sigma},\alpha_{\sigma}}(x)$. Each leftmost segment $\sigma$ has $S_{\sigma} = {\cal N}_{i_{\sigma},\alpha_{\sigma}}(x)$. Similarly, each rightmost segment has $S_{\sigma} = {\cal N}_{i_{\sigma},\alpha_{\sigma}}(y)$.

\end{enumerate}

In the following analysis, we focus on the leaf segments. We have following property about the leftmost/middle/rightmost classification of leaf segments.
\begin{enumerate}[start=6]
\item\label{remark:LeafSegClassify}
Concatenating the leaf segments reconstructs the original path $P$. Furthermore, only the first (resp. last) leaf segment in this concatenation can be a leftmost (resp. rightmost) leaf segment. It is also possible, however, that the first or last leaf segment is instead classified as a middle segment.
\end{enumerate}

\noindent{\underline{Endpoint Nodes of Leaf Segments.}} Now, for each leaf segment $\sigma$, and each endpoint $w$ of $\sigma$ that lies outside its interior $\sigma^{\circ}$, we will define its \emph{endpoint node} $w_{\nd}\in V_{\nd}(w)$ as follows. In other words, we define endpoint nodes for all leaf-segment endpoints, except for the left endpoint of the leftmost leaf segment and the right endpoint of the rightmost leaf segment.

It is clear from the decomposition that such an endpoint $w$ must be a breakpoint of a unique recursive segment $\sigma_{w}$, and $\sigma_{w}$ must be an ancestor of $\sigma$. Then naturally, we define $w_{\nd}$ to be the node in $V_{\nd}(w)$ with $\cluster(w_{\nd}) = S_{\sigma_{w}}$. Note that $w_{\nd}$ is well-defined because $w\in V_{\tmn}(S_{\sigma_{w}})$ (since both $w$ and $S_{\sigma_{w}}$ are at level $i_{\sigma_{w}}$) and $S_{\sigma_{w}}$ is affected (by Property \ref{Item:SegProperty4}).

\paragraph{Mapping Middle Leaf Segments to the Query Graph.} Now, we map each middle leaf segment $\sigma = P[w,z]$ to a path $P_{\qry,\sigma}$ on the query graph $Q_{}$. The endpoints of $P_{\qry,\sigma}$ will be exactly the endpoint nodes $w_{ \nd}$ and $z_{ \nd}$ of $\sigma$. Furthermore, when defining $P_{\qry,\sigma}$, we may designate a node on it as its \emph{portal}, denoted by $\portal(\sigma)$. Our goal is to define $P_{\qry,\sigma}$ and $\portal(\sigma)$ satisfies the following \Cref{lemma:SingleMappingPathLength}. 
\begin{lemma}
\label{lemma:SingleMappingPathLength}
If $P_{\qry,\sigma}$ has no portal, then 
\[
\leng_{Q_{}}(P_{\qry,\sigma})\leq (h_{\diam,\alpha_{\sigma}}/h_{\alpha_{\sigma}}+2)\cdot \leng_{G}(\sigma).
\]
If $P_{\qry,\sigma}$ has a portal $\portal(\sigma)$, then $\portal(\sigma)$ is the cluster node for the $S_{\sigma'}$ of some recursive segment $\sigma'$ in the decomposition, and 
\[
\leng_{Q_{}}(P_{\qry,\sigma})\leq 5\cdot \max\left\{1,\frac{h_{\tree,\alpha}}{h_{\ed,\alpha}\cdot s_{\ed}},\frac{\hat{h}_{\ed,\alpha}}{h_{\ed,\alpha}}\right\}\cdot \ell_{}(\portal(\sigma))
\]
for $\alpha := \scale(\portal(\sigma))$.
\end{lemma}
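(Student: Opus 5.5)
The lemma is really a specification of a construction, so the plan is to define $P_{\qry,\sigma}$ and $\portal(\sigma)$ by cases on $\sigma$ and check the length bound in each case. First I record the structure of a middle leaf segment $\sigma=P[w,z]$. Both endpoints lie outside $\sigma^{\circ}$, so they have endpoint nodes $w_{\nd},z_{\nd}$. Their clusters are $S_{\sigma_w}$ and $S_{\sigma_z}$, where $\sigma_w,\sigma_z$ are the recursive ancestors having $w,z$ as breakpoints. One of these ancestors is the parent of $\sigma$; the other is a (weak) ancestor of it.

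By symmetry, let $S_1$ be the cluster of the endpoint of higher level $i_1$, with scale $\alpha_1$, and $S_2$ the other, with level $i_2$ and scale $\alpha_2$. Then $i_1\geq i_2>i_\sigma$ by Property \ref{Item:SegProperty1}. Since ancestor segments contain descendant segments and scales are defined from segment lengths, we get $\alpha_1\geq\alpha_2\geq\alpha_\sigma$. Moreover, the breakpoint belonging to the lower ancestor lies in the interior of the higher ancestor, so by Property \ref{Item:SegProperty2} it lies in $S_1\cap S_2$; hence $S_1$ intersects $S_2$. Establishing these monotonicity and intersection facts carefully is the step I expect to be most delicate, especially the orientation issues when $w$ is the lower endpoint and when $i_1=i_2$.

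With these facts in hand, I split into cases. If $\sigma$ has a corresponding edge $e_\sigma$, a type-\ref{Item:Type1Link} link $(w_{\nd},z_{\nd})$ of length $\ell_G(e_\sigma)=\leng_G(\sigma)$ serves as $P_{\qry,\sigma}$, with no portal. Otherwise $S_\sigma$ is unaffected (Property \ref{Item:SegProperty4}), and $w,z\in\Lambda(S_\sigma)$ (Property \ref{Item:SegProperty3}). There are three subcases.
\begin{enumerate}
\item If $S_\sigma$ is nearby $A_{i_1}$- and $A_{i_2}$-sparse, take the single type-\ref{Item:Type2Link} link. Its length $h_{\diam,\alpha_\sigma}+2h_{\alpha_\sigma}$ is bounded using $h_{\alpha_\sigma}<2\leng_G(\sigma)$; any constant slack is absorbed, adjusting constants if needed. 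There is no portal.
\item If $S_\sigma$ is nearby $A_{i_1}$-dense, set $\portal(\sigma):=S_{1,\nd}$. The lower node is adjacent-cluster $A_{i_1}$-dense, witnessed by $S_\sigma$ (level $i_\sigma<i_2$, scale $\alpha_\sigma\leq\alpha_2$, containing that endpoint in $\Lambda(S_\sigma)$). Since $S_1$ is affected and meets $S_2$, it receives a type-\ref{Item:Type5Link} link to $S_{1,\nd}$. The higher node is joined to $S_{1,\nd}$ by type \ref{Item:Type4Link} if it is a subtree node. If it is a singleton, I use type \ref{Item:Type5Link} again, now with $S_2:=S_1$; it is again dense via $S_\sigma$, because $i_1\geq i_1$ and $\alpha_\sigma\leq\alpha_1$. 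The path has length at most $\ell(S_{1,\nd})+2h_{\tree,\alpha_1}$, which is within $5\max\{1,h_{\tree,\alpha_1}/(h_{\ed,\alpha_1}s_{\ed})\}\cdot\ell(S_{1,\nd})$.
\item If $S_\sigma$ is nearby $A_{i_1}$-sparse but $A_{i_2}$-dense (so $i_1>i_2$), let $\hat S:=\satellite_{i_2,\alpha_2}(S_\sigma)$. The higher node gets a type-\ref{Item:Type3Link} link of length $0$ to $\hat S_{\nd}$. By \Cref{ob:SatelliteCover}, the lower endpoint lies in $\Lambda_{i_2}(S_\sigma)$, so $S_2\subseteq\hat S$. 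Consequently $\hat S$ is affected (as $S_2$ is) and contains $\pivot(S_2)$. The lower node is adjacent-cluster $A_{i_2}$-dense via $S_\sigma$, so it has a type-\ref{Item:Type6Link} link to $\hat S_{\nd}$. I set $\portal(\sigma):=S_{2,\nd}$, the cluster node of the recursive segment $\sigma_z$ or $\sigma_w$. The path cost is at most $\ell(\hat S_{\nd})+10\hat h_{\ed,\alpha_2}s_{\ed}+2h_{\tree,\alpha_2}$. Comparing with $\ell(S_{2,\nd})=\Theta(h_{\ed,\alpha_2}s_{\ed})$ gives the factor $\hat h_{\ed,\alpha_2}/h_{\ed,\alpha_2}$ in the statement.
\end{enumerate}
The remaining work is routine: verify the constant $5$ from the explicit node lengths, and check that each required link exists by matching the level/scale side conditions in its defining box.
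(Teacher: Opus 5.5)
Your case split and your preliminary facts ($i_1\ge i_2>i_\sigma$, $\alpha_1\ge\alpha_2\ge\alpha_\sigma$, $S_1\cap S_2\neq\emptyset$, and existence of the type-\ref{Item:Type3Link}, \ref{Item:Type5Link}, \ref{Item:Type6Link} links) are fine. The gap is that you mishandle endpoint nodes that are \emph{subtree} nodes. Write $q_1,q_2$ for the endpoint nodes with clusters $S_1,S_2$.

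The quantity $\leng_{Q}(P_{\qry,\sigma})$ includes the node lengths of $q_1,q_2$; you use this yourself when charging $2h_{\tree,\cdot}$ in your second and third cases. A subtree node of $S_1$ has length $h_{\tree,\alpha_1}=4h_{\ed,\alpha_1}s_{\ed}s_{\tree}=\Omega(k^{3})\cdot h_{\alpha_1}$, where $\alpha_1\ge\alpha_\sigma$ and possibly much larger. So in your edge case and your both-sparse case, the lone link with no portal violates the no-portal bound $(h_{\diam,\alpha_\sigma}/h_{\alpha_\sigma}+2)\cdot\leng_G(\sigma)=O(k)\cdot h_{\alpha_\sigma}$ whenever $q_1$ or $q_2$ is a subtree node.

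This is not a constant-factor matter. Many consecutive middle leaf segments can have their endpoints in one heavy residual subtree of a high-scale cluster. That node length must be paid only once per portal, via the splicing in \Cref{algo:ConstructPqry}.

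The fix is to keep only the both-singleton subcase portal-free. If $q_1$ is a subtree node, use $q_1\to S_{1,\nd}\to q_1\to q_2$: two type-\ref{Item:Type4Link} links, then the type-\ref{Item:Type1Link} or type-\ref{Item:Type2Link} link. Set $\portal(\sigma):=S_{1,\nd}$, and do the same with $S_{2,\nd}$ if only $q_2$ is a subtree node. Then every node and link on the path is at most $\max\{1,h_{\tree,\alpha}/(h_{\ed,\alpha}s_{\ed})\}\cdot\ell(\portal(\sigma))$. This holds because the edge or shortcut length is at most $h_{\diam,\alpha_\sigma}+2h_{\alpha_\sigma}$ with $\alpha_\sigma\le\alpha$.

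Your satellite case has the same defect. You always set $\portal(\sigma):=S_{2,\nd}$ and charge $2h_{\tree,\alpha_2}$ for the endpoints. But a subtree node $q_1$ has length $h_{\tree,\alpha_1}$, which is not $O(\max\{\cdots\})\cdot\ell(S_{2,\nd})$ when $h_{\alpha_1}\gg h_{\alpha_2}$. Moreover, $S_{2,\nd}$ does not even lie on your path $q_1\to\hat S_{\nd}\to q_2$. A portal must be a node of $P_{\qry,\sigma}$, since that is where \Cref{algo:ConstructPqry} cuts and splices.

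Split this case as the paper does:
\begin{itemize}
\item If $q_1$ is a subtree node, prepend the type-\ref{Item:Type4Link} detour $q_1\to S_{1,\nd}\to q_1$ and take $\portal(\sigma):=S_{1,\nd}$.
\item If $q_1$ is a singleton (length $0$), append $q_2\to S_{2,\nd}\to q_2$ via two type-\ref{Item:Type5Link} links and take $\portal(\sigma):=S_{2,\nd}$. These links exist because $q_2$ is adjacent-cluster $A_{i_2}$-dense through $S_\sigma$, with the linked cluster taken to be $S_2$ itself.
\end{itemize}

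Separately, you are right to count the type-\ref{Item:Type6Link} link's length $10\hat h_{\ed,\alpha_2}s_{\ed}$; the paper's proof treats all links in this case as having length $0$. With that link counted, the constant $5$ must be enlarged, which is harmless downstream.
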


For clarity, we write $i_{w},i_{z},\alpha_{w},\alpha_{z},S_{w},S_{z}$ for $\level(w_{ \nd})$, $\level(z_{ \nd})$, $\scale(w_{ \nd})$, $\scale(z_{ \nd})$, $\cluster(w_{ \nd})$, $\cluster(z_{ \nd})$. Also recall that, if $\sigma$ has a corresponding cluster $S_{\sigma}$, then the level and length scale of $\sigma$ and $S_{\sigma}$ are the same, i.e., $\level(S_{\sigma}) = i_{\sigma}$ and $\scale(S_{\sigma}) = \alpha_{\sigma}$. The following \Cref{ob:LevelScaleConsistent} will be important to show the existence of different types of links.

\begin{observation}
\label{ob:LevelScaleConsistent}
When $\sigma$ has a corresponding cluster $S_{\sigma}$, the following holds for the clusters $S_{w},S_{z}$ and $S_{\sigma}$.
\begin{itemize}
\item $i_{w},i_{z}>i_{\sigma}$, and $\alpha_{w},\alpha_{z}\geq \alpha_{\sigma}$.
\item If $i_{w}\geq i_{z}$, then $\alpha_{w}\geq \alpha_{z}$; otherwise $\alpha_{w}\leq \alpha_{z}$.
\item $S_{w}$ and $S_{z}$ are affected, while $S_{\sigma}$ is unaffected.
\item Both endpoint vertices $w$ and $z$ are inside $\Lambda(S_{\sigma})$.
\item $S_{w}$ intersects $S_{z}$.
\end{itemize}
\end{observation}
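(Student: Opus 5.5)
The plan is to read every bullet off the recursion tree of the decomposition. First I would record three facts about a single decomposition step, from a recursive segment $\sigma'$ to a generated segment $\check\sigma$.
\begin{enumerate}[label=(\roman*)]
\item $i_{\check\sigma}=i_{\sigma'}-1$, so levels strictly decrease along any root-to-leaf path.
\item $\check\sigma$ is a subpath of $\sigma'$, so $\leng_G(\check\sigma)\le\leng_G(\sigma')$. Since $\alpha_\sigma$ is the unique scale with $\leng_G(\sigma)\le h_{\alpha_\sigma}<2\leng_G(\sigma)$, it is monotone in the length, and hence $\alpha_{\check\sigma}\le\alpha_{\sigma'}$.
\item $\check\sigma^{\circ}\subseteq\sigma'^{\circ}$.
\end{enumerate}
For (iii), each endpoint of $\check\sigma$ is either a breakpoint of $\sigma'$, which lies in $\sigma'^{\circ}$ by definition, or an endpoint of $\sigma'$. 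In the latter case $\check\sigma$ is leftmost or rightmost. By the interior rules, $\check\sigma^{\circ}$ contains the shared endpoint only when $\sigma'$ is the initial segment, whose interior is all of $P$. Otherwise the shared endpoint is excluded from $\check\sigma^{\circ}$ exactly as it is excluded from $\sigma'^{\circ}$. By induction, all three facts extend from parent–child pairs to arbitrary ancestor–descendant pairs.

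Now fix the middle leaf segment $\sigma=P[w,z]$ with cluster $S_\sigma$. By definition of endpoint nodes, $w$ is a breakpoint of a recursive ancestor $\sigma_w$ of $\sigma$, and $w_{\nd}$ has cluster $S_w=S_{\sigma_w}$, level $i_w=i_{\sigma_w}$ and scale $\alpha_w=\alpha_{\sigma_w}$; the same holds for $z$ with $\sigma_z$. The first bullet then follows from (i) and (ii), because $\sigma_w$ and $\sigma_z$ are strict ancestors of $\sigma$.

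For the second bullet, both $\sigma_w$ and $\sigma_z$ lie on the single root-to-$\sigma$ chain, so they are comparable. By (i), $i_w\ge i_z$ forces $\sigma_w$ to be an ancestor of, or equal to, $\sigma_z$; then (ii) gives $\alpha_w\ge\alpha_z$. The case $i_w<i_z$ is symmetric.

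The third bullet is Property~\ref{Item:SegProperty4}: $\sigma_w,\sigma_z$ are recursive, so $S_w,S_z$ are affected, while $\sigma$ is a leaf with a cluster, so $S_\sigma$ is unaffected. The fourth bullet is Property~\ref{Item:SegProperty3}, since $\sigma$ is a middle segment and both its endpoints lie outside $\sigma^{\circ}$.

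For the last bullet, assume without loss of generality that $\sigma_w$ is an ancestor of, or equal to, $\sigma_z$. If they are equal, then $S_w=S_z$ and there is nothing to prove. Otherwise, $z$ is a breakpoint of $\sigma_z$, so $z\in\sigma_z^{\circ}\subseteq S_{\sigma_z}=S_z$ by Property~\ref{Item:SegProperty2}. By (iii), also $z\in\sigma_z^{\circ}\subseteq\sigma_w^{\circ}\subseteq S_w$. Hence $z\in S_w\cap S_z$.

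The step needing the most care is the interior-containment fact (iii), specifically the bookkeeping for leftmost and rightmost children of the initial segment. The rest is direct from the segment properties already listed.
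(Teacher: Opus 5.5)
Your proposal is correct and is essentially the paper's proof. The first two bullets come from $\sigma_w,\sigma_z$ being comparable strict ancestors of $\sigma$. The middle bullets are Properties~\ref{Item:SegProperty4} and~\ref{Item:SegProperty3}. The last bullet rests on the same nesting of interiors together with Property~\ref{Item:SegProperty2}; the only difference is that you take the deeper breakpoint $z$ as the common vertex, whereas the paper takes any vertex of the nonempty $\sigma^{\circ}\subseteq\sigma_w^{\circ}\cap\sigma_z^{\circ}$.

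One small correction to your justification of (iii). A child can share an endpoint with its parent while being a middle segment (e.g.\ any child of a middle parent). Also, along the leftmost chain $x$ stays in every interior (as it must, since those clusters are ${\cal N}(x)$). So the accurate statement is this: a child's interior contains a shared endpoint only if the child is leftmost (resp.\ rightmost), which forces the parent to be initial or leftmost (resp.\ rightmost), so the parent's interior contains that endpoint too.
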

\begin{proof}
For the first two items, recall that $\sigma_w$ and $\sigma_z$, the unique recursive segments having $w$ and $z$ as breakpoints, respectively, are both strict ancestors of $\sigma$. Therefore, we have $i_{\sigma_{w}},i_{\sigma_{z}}>i_{\sigma}$ and $\alpha_{\sigma_{w}},\alpha_{\sigma_{z}}>\alpha_{\sigma}$ (since $\sigma$ is a subpath of $\sigma_{w}$ and $\sigma_{z}$). By similar reason, if $i_{\sigma_{w}}\geq i_{\sigma_{z}}$, then $\alpha_{\sigma_{w}}\geq \alpha_{\sigma_{z}}$. Lastly, by definition, we have $\level(w_{\nd}) = i_{\sigma_{w}}$, $\scale(w_{\nd}) = \alpha_{\sigma_{w}}$, $\level(z_{\nd}) = i_{\sigma_{z}}$ and $\scale(z_{\nd}) = \alpha_{\sigma_{z}}$.

The third item follows from Property \ref{Item:SegProperty4}. The fourth item follows from Property \ref{Item:SegProperty3}. The last item is because, when $\sigma$ has a corresponding cluster $S_{\sigma}$, its interior $\sigma^{\circ}$ is not empty. Furthermore, $\sigma^{\circ}\subseteq \sigma_{w}^{\circ}\subseteq S_{w}$ by Property \ref{Item:SegProperty2}, and similarly, $\sigma^{\circ}\subseteq S_{z}$. Therefore, $S_{w}$ intersects $S_{z}$.
\end{proof}

We will prove \Cref{lemma:SingleMappingPathLength} by considering the following four cases. Let $S_{w,\nd}$ and $S_{z,\nd}$ be the cluster nodes created for $S_{w}$ and $S_{z}$ respectively. We assume $i_{w}\geq i_{z}$ without loss of generality, and thus $\alpha_{w}\geq \alpha_{z}$ by \Cref{ob:LevelScaleConsistent}.

\medskip

\noindent\underline{Case 1.} Suppose $\sigma=P[w_{},z_{}]$ is a single edge $(w_{},z_{})$. Then there is a type-\ref{Item:Type1Link} link connecting $w_{ \nd}$ and $z_{ \nd}$ with link length $ \leng_{G}(\sigma)$.
\begin{enumerate}[label=(\alph*)]
\item If both $w_{ \nd}$ and $z_{ \nd}$ are singleton nodes, define
\[
P_{\qry,\sigma} := \underbrace{(w_{ \nd},z_{ \nd})}_{\text{Type-\ref{Item:Type1Link}}},
\]
and $P_{\qry,\sigma}$ has no portal.
\item If $w_{ \nd}$ is a subtree node (the case in which $z_{ \nd}$ is a subtree node but $w_{ \nd}$ is not can be handled analogously), %
define
\[
P_{\qry,\sigma} := \underbrace{(w_{ \nd},S_{w,\nd})}_{\text{Type-\ref{Item:Type4Link}}}
\circ
\underbrace{(S_{w,\nd},w_{ \nd})}_{\text{Type-\ref{Item:Type4Link}}}
\circ
\underbrace{(w_{ \nd},z_{ \nd})}_{\text{Type-\ref{Item:Type1Link}}},
\]
and define $\portal(\sigma):=S_{w,\nd}$.
\end{enumerate}

\noindent\underline{Proof of \Cref{lemma:SingleMappingPathLength} in Case 1.} In subcase (a), $\leng_{Q_{}}(P_{\qry,\sigma}) = \leng_{G}(\sigma)$ as desired, since $w_{ \nd}$ and $z_{ \nd}$ have node length $0$. In subcase (b), it suffices to show that each node and link on $P_{\qry,\sigma}$ has length at most $\max\{1,h_{\tree,\alpha_{w}}/(h_{\ed,\alpha_{w}}\cdot s_{\ed})\}\cdot \ell_{}(\portal(\sigma))$. To see this, recall that $\ell_{}(\portal(\sigma)) = h_{\ed,\alpha_{w}}\cdot s_{\ed} + 2h_{\diam,\alpha_{w}} + 4h_{\alpha_{w}}$, $w_{ \nd}$ and $z_{ \nd}$ have node length at most $h_{\tree,\alpha_{w}}$, type-\ref{Item:Type4Link} links have length zero, and the type-\ref{Item:Type1Link} link $(w_{ \nd},z_{ \nd})$ has length $\leng_{G}(\sigma)\leq 2h_{\alpha_{\sigma}}\leq 2h_{\alpha_{w}}$.

\medskip

\noindent\underline{Case 2.}
Suppose $S_{\sigma}$ exists and it is both nearby $A_{i_{w}}$-sparse and nearby $A_{i_{z}}$-sparse. Then there is a type-\ref{Item:Type2Link} link connecting $w_{ \nd}$ and $z_{ \nd}$ of length $h_{\diam,\alpha_{\sigma}} + 2h_{\alpha_{\sigma}}$. \Cref{ob:LevelScaleConsistent} and the case assumptions show that all conditions for adding this link are fulfilled. The definitions of $P_{\qry,\sigma}$ and $\portal(\sigma)$ are exactly the same as in Case 1, except that we replace the type-\ref{Item:Type1Link} link $(w_{ \nd},z_{ \nd})$ with this type-\ref{Item:Type2Link} link (in particular, we still need to consider the two subcases).

\medskip

\noindent\underline{Proof of \Cref{lemma:SingleMappingPathLength} in Case 2.} Since the type-\ref{Item:Type2Link} has length $h_{\diam,\alpha_{\sigma}} + 2h_{\alpha_{\sigma}}$, in subcase (a), we have $\leng_{Q_{}}(P_{\qry,\sigma}) = h_{\diam,\alpha_{\sigma}} + 2h_{\alpha_{\sigma}} \leq (h_{\diam,\alpha_{\sigma}}/h_{\alpha_{\sigma}}+2)\cdot \leng_{G}(\sigma)$.
Regarding the subcase (b), the proof is analogous to that in Case 1.

\medskip

\noindent\underline{Case 3.} Suppose $S_{\sigma}$ exists, and it is nearby $A_{i_{w}}$-dense. Define
\[
P_{\qry,\sigma} := \underbrace{(w_{ \nd},S_{w,\nd})}_{\text{Type-\ref{Item:Type5Link}}}\circ \underbrace{(S_{w,\nd},z_{ \nd})}_{\text{Type-\ref{Item:Type5Link}}},
\]
and define $\portal(\sigma):=S_{w,\nd}$.

Before moving on, let us justify the existence of these two type-\ref{Item:Type5Link} links. For the link $(w_{ \nd},S_{w,\nd})$, we can easily verify that the conditions for adding this link are all fulfilled if we plug in 
\[
(i_{2},\alpha_{2},S_{2}):=(i_{w},\alpha_{w},S_{w}),\qquad q_{\nd}:=w_{ \nd},\qquad (i_{1},\alpha_{1},S_{1}):=(i_{w},\alpha_{w},S_{w}).
\]
In particular, $w_{\nd}$ is adjacent-cluster $A_{i_{w}}$-dense because $S_{\sigma}$ is nearby $A_{i_{w}}$-dense and $\Lambda(S_{\sigma})$ intersects $V(q_{\nd})$ (recall that $w_{}\in\Lambda(S_{\sigma})$ from \Cref{ob:LevelScaleConsistent}). 

For the link $(S_{w,\nd},z_{ \nd})$, we can also verify all the conditions by plugging in 
\[
(i_{2},\alpha_{2},S_{2}):=(i_{z},\alpha_{z},S_{z}),\qquad q_{\nd}:=z_{ \nd},\qquad (i_{1},\alpha_{1},S_{1}):=(i_{w},\alpha_{w},S_{w}).
\]
In particular, $z_{\nd}$ is adjacent-cluster $A_{i_{w}}$-dense (by a similar reason), and we also have $S_{1}$ intersects $S_{2}$ by \Cref{ob:LevelScaleConsistent}.

\medskip

\noindent\underline{Proof of \Cref{lemma:SingleMappingPathLength} in Case 3.} Similar to Case 1, we can see that each node and link on $P_{\qry,\sigma}$ has length at most $\max\{1,h_{\tree,\alpha_{w}}/(h_{\ed,\alpha_{w}}\cdot s_{\ed})\}\cdot \ell_{}(\portal(\sigma))$. Note that type-\ref{Item:Type5Link} links have length $0$.

\medskip

\noindent\underline{Case 4.} Suppose $S_{\sigma}$ exists, and it is nearby $A_{i_{w}}$-sparse but nearby $A_{i_{z}}$-dense. Let $\hat{S}_{\sigma}:=\satellite_{i_{z},\alpha_{z}}(S_{\sigma})$. Note that $\hat{S}_{\sigma}$ is the satellite cluster of the cluster $S_{\sigma}$ but it has different level $i_{z}>i_{\sigma}$ and scale $\alpha_{z}>\alpha_{\sigma}$, rather than $i_{\sigma}$ and $\alpha_{\sigma}$. Let $\hat{S}_{\nd}$ be the satellite node created for the satellite cluster $\hat{S}_{\sigma}$. The following \Cref{claim:AffectedSatelliteCluster} shows that $S_{\sigma}$ is affected, and thus $\hat{S}_{\nd}$ indeed exists.

\begin{claim}
\label{claim:AffectedSatelliteCluster}
We have $S_{z}\subseteq \hat{S}_{\sigma}$. Therefore, the satellite cluster $\hat{S}_{\sigma}:=\satellite_{i_{z},\alpha_{z}}(S_{\sigma})$ is affected since $S_{z}$ is affected.
\end{claim}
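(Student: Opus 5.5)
The plan is to apply \Cref{ob:SatelliteCover} directly. Recall that $\hat{S}_{\sigma}=\satellite_{i_{z},\alpha_{z}}(S_{\sigma})=\hat{\cal N}_{i_{z},\alpha_{z}}(\pivot(S_{\sigma}))$. This is a legitimate satellite cluster assignment because $i_{z}>i_{\sigma}$ and $\alpha_{z}\geq\alpha_{\sigma}$ by the first item of \Cref{ob:LevelScaleConsistent}. \Cref{ob:SatelliteCover} (applied with $S:=S_{\sigma}$, $i':=i_{z}$, $\alpha':=\alpha_{z}$) says that every cluster $S'\in{\cal N}_{i_{z},\alpha_{z}}$ containing some vertex of $\Lambda_{i_{z}}(S_{\sigma})$ is entirely contained in $\hat{S}_{\sigma}$. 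It therefore suffices to exhibit such a vertex inside $S_{z}$.

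The natural witness is the endpoint $z$ itself. First, $S_{z}=S_{\sigma_{z}}\in{\cal N}_{i_{z},\alpha_{z}}$, where $\sigma_{z}$ is the recursive segment having $z$ as a breakpoint. By construction $z$ is an $A_{i_{\sigma_{z}}}$-vertex, so $z\in A_{i_{z}}$. It also lies on $\sigma_{z}^{\circ}\subseteq S_{z}$ by Property~\ref{Item:SegProperty2}, so $z\in V_{\tmn}(S_{z})\subseteq S_{z}$. Second, the fourth item of \Cref{ob:LevelScaleConsistent} gives $z\in\Lambda(S_{\sigma})$. Combined with $z\in A_{i_{z}}$, this yields $z\in\Lambda_{i_{z}}(S_{\sigma})$. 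Hence $S_{z}$ is a cluster of ${\cal N}_{i_{z},\alpha_{z}}$ containing a vertex of $\Lambda_{i_{z}}(S_{\sigma})$, and \Cref{ob:SatelliteCover} gives $S_{z}\subseteq\hat{S}_{\sigma}$.

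For the second sentence, $S_{z}$ is affected by the third item of \Cref{ob:LevelScaleConsistent}, so $S_{z}\cap F\neq\emptyset$. Then $\hat{S}_{\sigma}\supseteq S_{z}$ also intersects $F$, so $\hat{S}_{\sigma}$ is affected and the satellite node $\hat{S}_{\nd}$ exists. (The claim text says ``$S_{\sigma}$ is affected'', but what is meant and needed is that $\hat{S}_{\sigma}$ is affected.)

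The only delicate point is confirming that $z\in V_{\tmn}(S_{z})$, i.e.\ that $z$ really is a level-$i_{z}$ vertex lying in $S_{z}$. This follows from how breakpoints are chosen: they are $A_{i_{\sigma_{z}}}$-vertices on the interior $\sigma_{z}^{\circ}$. This is exactly the fact already used to make the endpoint node $z_{\nd}$ well-defined, so no new argument is needed.
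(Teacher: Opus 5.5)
Your proof is correct and is the same argument as the paper's, which likewise obtains $S_z\subseteq\hat{S}_\sigma$ from the second item of \Cref{ob:SatelliteCover} using $z\in\Lambda(S_\sigma)$ from \Cref{ob:LevelScaleConsistent}. Your explicit check that $z\in A_{i_z}$ and $z\in S_z$ (via the breakpoint definition and Property~\ref{Item:SegProperty2}) fills in details the paper leaves implicit. One small correction: the slip ``$S_\sigma$ is affected'' occurs in the sentence introducing the claim, not in the claim itself.
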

\begin{proof}
The inclusion $S_{z}\subseteq \hat{S}_{\sigma}$ follows directly from \Cref{ob:SatelliteCover}, together with $z\in \Lambda(S_{\sigma})$ as established in \Cref{ob:LevelScaleConsistent}.
\end{proof}

We consider two subcases.
\begin{enumerate}[label=(\alph*)]
\item If $w_{ \nd}$ is a subtree node, define 
\[
P_{\qry,\sigma} := 
\underbrace{(w_{ \nd},S_{w,\nd})}_{\text{Type-\ref{Item:Type4Link}}}
\circ
\underbrace{(S_{w,\nd},w_{ \nd})}_{\text{Type-\ref{Item:Type4Link}}}
\circ
\underbrace{(w_{ \nd},\hat{S}_{\nd})}_{\text{Type-\ref{Item:Type3Link}}}
\circ 
\underbrace{(\hat{S}_{\nd},z_{ \nd})}_{\text{Type-\ref{Item:Type6Link}}},
\]
and define $\portal(\sigma)$ to be the cluster node $S_{w,\nd}$.
\item If $w_{ \nd}$ is a singleton node, define 
\[
P_{\qry,\sigma} := 
\underbrace{(w_{ \nd},\hat{S}_{\nd})}_{\text{Type-\ref{Item:Type3Link}}}
\circ 
\underbrace{(\hat{S}_{\nd},z_{ \nd})}_{\text{Type-\ref{Item:Type6Link}}}
\circ
\underbrace{(z_{ \nd},S_{z,\nd})}_{\text{Type-\ref{Item:Type5Link}}}
\circ
\underbrace{(S_{z,\nd},z_{ \nd})}_{\text{Type-\ref{Item:Type5Link}}},
\]
and define $\portal(\sigma)$ to be the cluster node $S_{z,\nd}$.
\end{enumerate}
Again, let us justify the existence of each link above. First, consider the type-\ref{Item:Type3Link} link $(w_{ \nd},\hat{S}_{\nd})$ in both subcases (a) and (b). We can verify all the conditions for it by plugging in 
\[
(i,\alpha,S):=(i_{\sigma},\alpha_{\sigma},S_{\sigma}),\qquad (i_{1},i_{2}):=(i_{w},i_{z}),\qquad (u,u_{\nd}):=(w_{},w_{ \nd}),\qquad \text{$\hat{S}_{\nd}$ remains the same}
\]
In particular, the case assumptions guarantee that $S$ is nearby $A_{1}$-sparse but nearby $A_{2}$-dense. %

Regarding the type-\ref{Item:Type6Link} link $(\hat{S}_{\nd},z_{ \nd})$ in both subcases (a) and (b), we can verify the conditions by plugging in 
\[
(i_{2},\alpha_{2},S_{2}):=(i_{z},\alpha_{z},S_{z}),\qquad q_{\nd}:=z_{ \nd},\qquad \hat{S}_{2}:=\hat{S}_{\sigma}.
\]
In particular, $q_{\nd}$ is adjacent-cluster $A_{i_{2}}$-dense since $z\in\Lambda(S_{\sigma})$ (\Cref{ob:LevelScaleConsistent}). Also, $\hat{S}_{2}\ni \pivot(S_{z})$ since $\pivot(S_{z})\in S_{z}\subseteq \hat{S}_{\sigma}$ (\Cref{claim:AffectedSatelliteCluster}).

The type-\ref{Item:Type4Link} links in subcase (a) exist since $w_{ \nd}$ is a subtree node.

The existence of type-\ref{Item:Type5Link} links in subcase (b) follows the argument in Case 3 and the fact that $S_{\sigma}$ is nearby $A_{i_{z}}$-dense.

\medskip

\noindent\underline{Proof of \Cref{lemma:SingleMappingPathLength} in Case 4.} In this case, all the links in $P_{\qry,\sigma}$ have $0$ length. For the node on $P_{\qry,\sigma}$, we will only argue that $\hat{S}_{\nd}$ has node length $\ell_{}(\hat{S}_{\nd})\leq (\hat{h}_{\ed,\alpha}/h_{\ed,\alpha})\cdot \ell_{}(\portal(\sigma))$ for $\alpha := \scale(\portal(\sigma))$. The node lengths of other nodes can be bounded using arguments in previous cases.

Recall that $\ell_{}(\hat{S}_{\nd})=\hat{h}_{\ed,\alpha_{z}}\cdot s_{\ed} + 2h_{\diam,\alpha_{z}} + 4h_{\alpha_{z}}$, and that $\ell_{}(\portal(\sigma))=h_{\ed,\alpha}\cdot s_{\ed} + 2h_{\diam,\alpha} + 4h_{\alpha}$, where $\alpha$ is either $\alpha_{w}$ (subcase (a)) or $\alpha_{z}$ (subcase (b)). Since $\alpha_{w}\geq \alpha_{z}$, the inequality $\ell_{}(\hat{S}_{\nd})\leq (\hat{h}_{\ed,\alpha}/h_{\ed,\alpha})\cdot \ell_{}(\portal(\sigma))$ clearly holds.

\paragraph{Defining the Query Path $P_{\qry}$.}

Finally, we define the query path $P_{\qry}$, the proxy vertices $x_{\px},y_{\px}$, and their proxy nodes $x_{\px,\nd},y_{\px,\nd}$. 

Recall the discussion about leaf segments in Property \ref{remark:LeafSegClassify}. We use $\sigma_{\ell}$ and $\sigma_{r}$ to denote the leftmost and rightmost leaf segments respectively ($\sigma_{\ell},\sigma_{r}$ not necessarily exist), and let $\sigma_{1},\sigma_{2},...,\sigma_{k_{\max}}$ to denote the middle leaf segments in between in the left-to-right order.

Define
\begin{equation*}
(x_{\px},x_{\px,\nd}) := 
\left\{
\begin{aligned}
&(x,\ \text{the left endpoint node of $\sigma_{1}$})&&\text{if $\sigma_{\ell}$ does not exist}\\
&(\text{the right endpoint of $\sigma_{\ell}$},\ \text{the right endpoint node of $\sigma_{\ell}$})&&\text{if $\sigma_{\ell}$ exists}
\end{aligned}
\right.,
\end{equation*}
and similarly
\begin{equation*}
(y_{\px},y_{\px,\nd}) := 
\left\{
\begin{aligned}
&(y,\ \text{the right endpoint node of $\sigma_{k_{\max}}$})&&\text{if $\sigma_{r}$ does not exist}\\
&(\text{the left endpoint of $\sigma_{r}$}\ \text{the left endpoint node of $\sigma_{r}$})&&\text{if $\sigma_{r}$ exists}
\end{aligned}
\right..
\end{equation*}
Intuitively, $x_{\px}$ is the leftmost breakpoint and $y_{\px}$ is the rightmost breakpoint during the decomposition.

The query path $P_{\qry}$ is defined by the following procedure \Cref{algo:ConstructPqry}. Roughly speaking, we construct $P_{\qry}$ by first concatenating $P_{\qry,\sigma_{1}},P_{\qry,\sigma_{2}}...,P_{\qry,\sigma_{k_{\max}}}$ and then shortcut it using portals. 

\begin{algorithm}[H]
\caption{Construction of $P_{\qry}$}
\label{algo:ConstructPqry}
\begin{algorithmic}[1]
\State Initialize $P_{\qry}$ to be empty, $k\gets 1$
\While{$k\leq k_{\max}$}
\If{$P_{\qry,\sigma_{k}}$ has $\portal(\sigma_{k})$, and some later path $P_{\qry,\sigma_{k'}}$ with $k'>k$
\Statex ~~~~~~~has the same $\portal(\sigma_{k'}) = \portal(\sigma_{k})$}
\State Choose the largest such $k'$
\State Let $P'$ be the concatenation of 
\Statex
\begin{center}
~~~the prefix of $P_{\qry,\sigma_{k}}$ before $\portal(\sigma_{k})$, ~~and~~ the suffix of $P_{\qry,\sigma_{k'}}$ after $\portal(\sigma_{k'})$
\end{center}
\State Append $P'$ to the end of $P_{\qry}$
\State $k\gets k'+1$
\Else
\State Append $P_{\qry,\sigma_{k}}$ to the end of $P_{\qry}$
\State $k\gets k+1$
\EndIf
\EndWhile

\end{algorithmic}
\end{algorithm}

The motivation of constructing $P_{\qry}$ using this particular procedure is that, simply concatenating these $P_{\qry,\sigma}$ may give a $P_{\qry}$ that is too long, because those $P_{\qry,\sigma}$ with portals may not have their lengths competitive to the original segment lengths (see \Cref{lemma:SingleMappingPathLength}). To resolve this issue, we will ensure that, for each portal node, $P_{\qry}$ only walks through at most two $P_{\qry,\sigma_{k}}$ with this portal node. Hence, we can charge the lengths of these ``bad'' $P_{\qry,\sigma}$ to the lengths of portal nodes (using \Cref{lemma:SingleMappingPathLength}). The following \Cref{lemma:UpperBoundLengPqry} formally bounds the length of $P_{\qry}$. 

\begin{lemma}
\label{lemma:UpperBoundLengPqry}
$\leng_{Q_{}}(P_{\qry})\leq O((s_{\tree}+s_{\nc})\cdot s_{\nc}\cdot s_{\ed}\cdot i_{\max})\cdot \leng_{G}(P)$
\end{lemma}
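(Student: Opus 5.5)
We bound $\leng_{Q}(P_{\qry})$ by splitting the pieces appended in \Cref{algo:ConstructPqry} into two groups. \emph{Plain} pieces are the paths $P_{\qry,\sigma_k}$ appended whole when $\sigma_k$ has no portal. \emph{Portal} pieces are the rest: either a whole $P_{\qry,\sigma_k}$ that has a portal, or a spliced path $P'$ built around a portal. For plain pieces, \Cref{lemma:SingleMappingPathLength} gives length at most $(h_{\diam,\alpha_{\sigma}}/h_{\alpha_{\sigma}}+2)\leng_G(\sigma)=O(s_{\nc})\leng_G(\sigma)$. The leaf segments are disjoint subpaths of $P$ (Property~\ref{remark:LeafSegClassify}), so these contribute $O(s_{\nc})\leng_G(P)$ in total.

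For portal pieces, the first step is to show that each portal node $p$ is charged at most twice. The spliced path $P'$ consists of a prefix of $P_{\qry,\sigma_k}$ and a suffix of $P_{\qry,\sigma_{k'}}$. Its length is at most $\leng(P_{\qry,\sigma_k})+\leng(P_{\qry,\sigma_{k'}})$, since the prefix and suffix both contain $p$, which is counted once. Because $k'$ is chosen as the \emph{largest} index with the same portal, after the jump $k\gets k'+1$ no later segment has portal $p$. Hence every distinct portal node appears in at most one appended piece, and that piece is charged to at most two mapped paths. Applying \Cref{lemma:SingleMappingPathLength}, the portal pieces contribute at most
\[
\sum_{p}10\cdot\max\Big\{1,\tfrac{h_{\tree,\alpha}}{h_{\ed,\alpha}s_{\ed}},\tfrac{\hat h_{\ed,\alpha}}{h_{\ed,\alpha}}\Big\}\,\ell(p),
\]
where the sum runs over distinct portal nodes. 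With the parameter settings, the max is $O(s_{\tree}+s_{\nc})$, and $\ell(p)=\Theta(s_{\nc}s_{\ed}h_{\alpha})$ for $\alpha=\scale(p)$.

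The main obstacle is bounding $\sum_p h_{\scale(p)}$ over distinct portals. By \Cref{lemma:SingleMappingPathLength}, each portal is the cluster node of $S_{\sigma'}$ for some recursive segment $\sigma'$. Distinct portals correspond to distinct clusters, and so to distinct recursive segments; we pick one such segment per portal. This gives $\sum_p h_{\scale(p)}\le\sum_{\sigma'\text{ recursive}}h_{\alpha_{\sigma'}}\le 2\sum_{\sigma'}\leng_G(\sigma')$. It remains to show $\sum_{\sigma'}\leng_G(\sigma')\le (i_{\max}+1)\leng_G(P)$. The recursive segments generated at one recursion depth have levels all equal to $i_{\max}$ minus the depth, so each depth gives a family of edge-disjoint subpaths of $P$: the children of a segment partition it at breakpoints. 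There are at most $i_{\max}+1$ depths, since the level drops by one per step and leaves are reached by level $0$. One subtlety to check is the initial segment, whose level $i_\sigma$ may be below $i_{\max}$; this only reduces the depth count. Two distinct portals could also come from segments at the same depth, but that is harmless because we sum over all recursive segments anyway.

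Combining the two groups,
\[
\leng_Q(P_{\qry})\le O(s_{\nc})\leng_G(P)+O(s_{\tree}+s_{\nc})\cdot O(s_{\nc}s_{\ed})\cdot 2(i_{\max}+1)\leng_G(P),
\]
which is $O((s_{\tree}+s_{\nc})s_{\nc}s_{\ed}i_{\max})\leng_G(P)$, as claimed.
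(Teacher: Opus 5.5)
Your proof is correct and takes essentially the same approach as the paper. You split the visited pieces into portal-free ones, which sum to $O(s_{\nc})\cdot\leng_G(P)$ by \Cref{lemma:SingleMappingPathLength}, and portal pieces, with each portal charged at most twice; the portal lengths are then bounded by the cluster-node lengths of recursive segments, which total $O(s_{\nc}s_{\ed}i_{\max})\cdot\leng_G(P)$. You also usefully spell out two steps the paper only asserts: why the largest-$k'$ rule caps each portal at two visited paths, and why the recursive segments at a fixed recursion depth are edge-disjoint subpaths of $P$.
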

\begin{proof}
For each $\sigma\in\{\sigma_{1},...,\sigma_{k_{\max}}\}$, we say $P_{\qry,\sigma}$ is \emph{visited} if $P_{\qry,\sigma}$, either in whole or in part, is appended to $P_{\qry}$. Therefore, $\leng(P_{\qry})$ is upper bounded by the total length of visited $P_{\qry,\sigma}$.

We classify the visited $P_{\qry,\sigma}$ into two groups: those without portals and those with portals. For those without portals, their total length can be simply bounded by
\[
\sum_{\substack{\text{visited}\ P_{\qry,\sigma}\\\text{ without portals}}}\leng_{Q_{}}(P_{\qry,\sigma})\leq \sum_{\text{leaf segments }\sigma} (h_{\diam,\alpha_{\sigma}}/h_{\alpha_{\sigma}} + 2)\cdot \leng_{G}(\sigma)\leq (5 s_{\nc} + 10)\cdot \leng_{G}(P),
\]
where the first inequality uses \Cref{lemma:SingleMappingPathLength}.

For those with portals, as we discussed, the key observation is that a portal is shared by at most two visited $P_{\qry,\sigma}$. Hence, the total length can be bounded by
\begin{align*}
&~~~~2\cdot\sum_{\text{distinct portals }p}5\cdot \max\left\{1,\frac{h_{\tree,\alpha}}{h_{\ed,\alpha}\cdot s_{\ed}},\frac{\hat{h}_{\ed,\alpha}}{h_{\ed,\alpha}}\right\}\cdot \ell_{}(p)~~~~\text{(for an arbitrary $\alpha$)}\\
&\leq 10\cdot\max\{s_{\tree},3s_{\nc}/5+1\}\cdot \sum_{\text{distinct portals }p}\ell_{}(p)~~~~~~~~~~~~(\text{\Cref{lemma:SingleMappingPathLength}})
\end{align*}
Next, since \Cref{lemma:SingleMappingPathLength} guarantees that portals are all cluster nodes of the clusters of recursive segments, we have
\begin{align*}
&~~~~\sum_{\text{distinct portals }p}\ell_{}(p)\\
&\leq \sum_{\substack{\text{recursive}\\\text{segments }\sigma}} \ell_{}(S_{\sigma,\nd})\\
&\leq \sum_{\substack{\text{recursive}\\\text{segments }\sigma}} (h_{\ed,\alpha_{\sigma}}\cdot s_{\ed} + 2h_{\diam,\alpha_{\sigma}} + 4h_{\alpha_{\sigma}})\\
&\leq\sum_{\substack{\text{recursive}\\\text{segments }\sigma}} ((5\cdot(2s_{\nc}\cdot h_{\alpha_{\sigma}})+8h_{\alpha_{\sigma}})\cdot s_{\ed} + 2\cdot(2\cdot s_{\nc}\cdot h_{\alpha_{\sigma}}) + 4h_{\alpha_{\sigma}})\\
&\leq\sum_{\substack{\text{recursive}\\\text{segments }\sigma}} (10s_{\nc}s_{\ed} + 8s_{\ed} + 4s_{\nc} + 4)\cdot (2\cdot\leng_{G}(\sigma))\\
&\leq (10s_{\nc}s_{\ed} + 8s_{\ed} + 4s_{\nc} + 4)\cdot (2\cdot i_{\max})\cdot\leng_{G}(P)
\end{align*}
Therefore, we can conclude that the total length of those visited $P_{\qry,\sigma}$ with portals can be bounded by $O((s_{\tree}+s_{\nc})\cdot s_{\nc}\cdot s_{\ed}\cdot i_{\max})\cdot \leng_{G}(P)$.

\end{proof}

We complete the proof of \Cref{lemma:FindingProxy} using \Cref{lemma:UpperBoundLengPqry}. For convenience, we restate \Cref{lemma:FindingProxy} here.

\FindingProxy*

\begin{proof}
In general, the proxy vertices and nodes $x_{\px},x_{\px,\nd},y_{\px},y_{\px,\nd}$ are exactly what we defined above (and thus Item 3 follows from \Cref{lemma:UpperBoundLengPqry}), except that in a corner case, we may choose a different $y_{\px,\nd}$.

\medskip

\noindent\underline{Item 1.} We first show Item 1 by picking appropriate $i_{x}, \alpha_{x}$ (and possibly $i'_{x},\alpha'_{x}$). 

Item 1(a) corresponds to the case where the leftmost leaf segment $\sigma_{\ell}$ does not exist. Namely, the first middle leaf segment $\sigma_{1}$ has $x$ as its left endpoint. In this case, $x_{\px} = x$ holds by definition. Naturally, we choose 
\[
i_{x} = \level(x_{\px,\nd}),\qquad\alpha_{x} = \level(x_{\px,\nd}).
\]
It suffices to show that $\cluster(x_{\px,\nd})$ is the same as $S_{x}:=\mathcal{N}_{i_{x},\alpha_{x}}(x)$ (which will also imply that $S_{x}$ is affected). Recall the definition of the endpoint node $x_{\px,\nd}$. $\cluster(x_{\px,\nd})$ must be the corresponding cluster $S_{\hat{\sigma}}$ for some ancestral segment $\hat{\sigma}$ of $\sigma_{1}$. Furthermore, $x$ must be a breakpoint of $\hat{\sigma}$, meaning that $\hat{\sigma}$ is the initial segment or a leftmost segment. Therefore, by Property \ref{Item:SegProperty5}, $S_{\hat{\sigma}}$ is exactly $\mathcal{N}_{i_{x},\alpha_{x}}(x)$.

Item 1(b) corresponds to the case that the leftmost leaf segment $\sigma_{\ell}$ exists. Recall that we defined $x_{\px}$ (resp. $x_{\px,\nd}$) to be the right endpoint (resp. endpoint node) of $\sigma_{\ell}$. This time, we choose
\[
i_{x}:=i_{\sigma_{\ell}},\qquad\alpha_{x}:=\alpha_{\sigma_{\ell}},\qquad i'_{x}:=\level(x_{\px,\nd}),\qquad \alpha'_{x} = \scale(x_{\px,\nd}).
\]
Because $S_{x}:=\mathcal{N}_{i_{x},\alpha_{x}}(x)$ is exactly $S_{\sigma_{\ell}}$ (by Property \ref{Item:SegProperty5}), we have $x_{\px}\in \Lambda(S_{x})$ by Property \ref{Item:SegProperty3}, and $S_{x}$ is unaffected by Property \ref{Item:SegProperty4}. 
Moreover, similar to the argument for Item 1(a), $\cluster(x_{\px,\nd})$ is $S_{\hat{\sigma}}$ for an ancestral segment $\hat{\sigma}$ of $\sigma_{\ell}$, meaning that $x$ is in its interior $\hat{\sigma}^{\circ}$. Therefore, $\hat{\sigma}$ is the initial segment or a leftmost segment, and thus $S'_{x}:={\cal N}_{i'_{x},\alpha'_{x}}(x)$ is exactly $S_{\hat{\sigma}}$.

\medskip

\noindent{\underline{Item 2.}} Next, we show Item 2. In general, we can simply follow the proof of Item 1 by replacing all ``leftmost'' with ``rightmost'', and exploit each rightmost segment $\hat{\sigma}$ has $S_{\hat{\sigma}} = {\cal N}_{i_{\hat{\sigma}},\alpha_{\hat{\sigma}}}(y)$ (by Property \ref{Item:SegProperty5}). 

The proof could fail only in the corner case where $\hat{\sigma}$ is the initial segment which has $S_{\hat{\sigma}} = {\cal N}_{i_{\hat{\sigma}},\alpha_{\hat{\sigma}}}(x)$ instead of ${\cal N}_{i_{\hat{\sigma}},\alpha_{\hat{\sigma}}}(y)$. In this case, we change the old $y_{\px,\nd}$ to a new node $y'_{\px,\nd}\in V_{\nd}(y_{\px})$ with $\cluster(y'_{\px,\nd})={\cal N}_{i_{\hat{\sigma}},\alpha_{\hat{\sigma}}}(y)$ which is exactly $S_{y}$ in Item 2(a) or $S'_{y}$ in Item 2(b). Note that Property \ref{Item:SegProperty5} guarantees that the initial segment $\hat{\sigma}$ have both ${\cal N}_{i_{\hat{\sigma}},\alpha_{\hat{\sigma}}}(x)$ and ${\cal N}_{i_{\hat{\sigma}},\alpha_{\hat{\sigma}}}(y)$ affected, so $y'_{\px,\nd}$ is well-defined.

\medskip

\noindent{\underline{Item 3.}} If we did not change the proxy node $y_{\px,\nd}$ in the proof of Item 2, then Item 3 follows directly from \Cref{lemma:UpperBoundLengPqry}. If the old $y_{\px,\nd}$ is changed to the new $y'_{\px,\nd}$, Item 3 still holds because $\dist_{Q}(y_{\px,\nd}, y'_{\px,\nd}) = 0$. Note that there is a type-\ref{Item:Type7Link} link connecting $y_{\px,\nd}$ and $y'_{\px,\nd}$ because $y_{\px}$ is a common vertex in both $V(y_{\px,\nd})$ and $V(y'_{\px,\nd})$.

\end{proof}

Finally, we finish the proof of \Cref{lemma:UpperBoundSolution} using \Cref{lemma:FindingProxy}. 

\UpperBoundSolution*

\begin{proof}
Consider the iteration where $i_{x},i'_{x},i_{y},i'_{y},\alpha_{x},\alpha'_{x},\alpha_{y},\alpha'_{y}$ are those picked by \Cref{lemma:FindingProxy}. Let $x_{\px},x_{\px,\nd},y_{\px},y_{\px,\nd}$ be those proxy vertices and nodes picked by \Cref{lemma:FindingProxy}. We remark that in this iteration, the algorithm will branch into
\begin{itemize}
\item Line 9 if Item 1(a) of \Cref{lemma:FindingProxy} holds,
\item Lines 12 and 13 if Item 1(b) of \Cref{lemma:FindingProxy} holds,
\item Line 17 if Item 2(a) of \Cref{lemma:FindingProxy} holds,
\item Lines 20 and 21 if Item 2(b) of \Cref{lemma:FindingProxy} holds.
\end{itemize}
Furthermore, in this iteration, \Cref{algo:QueryAlgorithm} will successfully pick $x'_{\px,\nd}$ and $y'_{\px,\nd}$ due to the guarantees from \Cref{lemma:FindingProxy} (and thus $\ans$ will be updated on Line 22). In particular, when the algorithm branch into Lines 12 and 13, it will successfully go through Line 12 since either (i) the set $\ProxyList(x,i_{x},\alpha_{x},i'_{x},\alpha'_{x})$ has $f+1$ vertices and thus at least one of them is not in $F$, or (ii) the set $\ProxyList(x,i_{x},\alpha_{x},i'_{x},\alpha'_{x})$ has less than $f+1$ vertices, which must include $x_{\px}\notin F$ due to Item 1(b) of \Cref{lemma:FindingProxy}.

Now we want to prove the inequality in \Cref{lemma:UpperBoundSolution} holds. It suffices to prove that for $x'_{\px,\nd}$, we have either $x'_{\px,\nd} = x_{\px,\nd}$ or
\[
\dist_{Q_{}}(x'_{\px,\nd},x_{\px,\nd})\leq O(s_{\tree}\cdot s_{\nc}\cdot s_{\ed})\cdot \dist_{G\setminus F}(x,y),
\]
and prove the analogous claim for $y'_{\px,\nd}$.
Combining these with Item 3 in \Cref{lemma:FindingProxy} and that $h_{\diam,\alpha_{x}},h_{\diam,\alpha_{y}}\leq h_{\diam,\alpha^{\star}}\leq 2\dist_{G\setminus F}(x,y)$, the inequality in \Cref{lemma:UpperBoundSolution} immediately follows.

Suppose Item 1(a) of \Cref{lemma:FindingProxy} holds, then the algorithm will pick $x'_{\px}$ and $x'_{\px,\nd}$ on Line 9, which are exactly the same as the $x_{\px}$ and $x_{\px,\nd}$ picked by \Cref{lemma:FindingProxy}. Similarly, when Item 2(a) of \Cref{lemma:FindingProxy} holds, $y'_{\px,\nd}$ is the same as $y_{\px,\nd}$.

Suppose Item 2(a) of \Cref{lemma:FindingProxy} holds, and thus $x'_{\px}$ and $x'_{\px,\nd}$ are picked on Lines 12 and 13. We need some more arguments since $x'_{\px,\nd}$ might not be the same as $x_{\px,\nd}$ in this case. Let $S:={\cal N}_{i_{x},\alpha_{x}}(x)$ and $S':={\cal N}_{i'_{x},\alpha'_{x}}(x)$. By \Cref{lemma:FindingProxy} and the algorithm statement, we have the following.
\begin{itemize}
\item $S$ is unaffected and $S'$ is affected.
\item $x_{\px},x'_{\px}\in\Lambda(S)$, $x_{\px,\nd}\in V_{\nd}(x_{\px})$, $x'_{\px,\nd}\in V_{\nd}(x'_{\px})$, and $\cluster(x_{\px,\nd}) = \cluster(x'_{\px,\nd}) = S'$.
\end{itemize}
Namely, both $x_{\px}$ and $x'_{\px}$ are in the surrounding set of an unaffected cluster $S$, and $x_{\px,\nd},x'_{\px,\nd}$ are their corresponding nodes for the same affected cluster $S'$. Thus, we can find a short path $P_{\qry,x}$ in $Q_{}$ connecting them as follows. Let $i':=\level(S')$.
\begin{itemize}
\item Suppose $S$ is nearby $A_{i'}$-sparse. Define
\[
P_{\qry,x}:=\underbrace{(x_{\px,\nd},x'_{\px,\nd})}_{\text{Type-\ref{Item:Type2Link}}}.
\]
\item Suppose $S$ is nearby $A_{i'}$-dense. Define
\[
P_{\qry,x}:= \underbrace{(x_{\px,\nd},S'_{\nd})}_{\text{Type-\ref{Item:Type5Link}}}\circ\underbrace{(S'_{\nd},x'_{\px,\nd})}_{\text{Type-\ref{Item:Type5Link}}}.
\]
\end{itemize}
Similar to the proof of \Cref{lemma:SingleMappingPathLength}, we can show $P_{\qry,x}$ is valid. By the way we set the node and link lengths, $\leng_{Q_{}}(P_{\qry,x})\leq O(\max\{h_{\tree,\alpha'_{x}},h_{\ed,\alpha'_{x}}\cdot s_{\ed}\})\leq O(s_{\tree}\cdot s_{\nc}\cdot s_{\ed}\cdot h_{\alpha'_{x}})\leq O(s_{\tree}\cdot s_{\nc}\cdot s_{\ed})\cdot \dist_{G\setminus F}(x,y)$ as desired.

\end{proof}

\subsubsection{Lower Bounding the Solution}
\label{sect:OracleSolLowerBound}

\begin{lemma}
\label{lemma:OracleSolLowerBound}
The $\ans$ returned by \Cref{algo:QueryAlgorithm} satisfies $\ans\geq \dist_{G\setminus F}(x,y)$.
\end{lemma}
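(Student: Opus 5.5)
Every update of $\ans$ in \Cref{algo:QueryAlgorithm} will be shown to be at least $\dist_{G\setminus F}(x,y)$; since $\ans$ starts at $\infty$ and only takes minima, this gives the lemma. The updates come from two places, line 4 and the final update of the second loop, so I split into those two cases.

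\emph{Line 4.} Here some cluster $S={\cal N}_{i,\alpha}(x)$ (or the symmetric one for $y$) is unaffected and contains both $x$ and $y$. Since $S\cap F=\emptyset$, the induced subgraph $G_{i,\alpha}[S]$ survives in $G\setminus F$. Its strong diameter in $G_{i,\alpha}$ is at most $h_{\diam,\alpha}$. Every length in $G_{i,\alpha}$ is at least the corresponding length in $G$, because the moving cuts only add length. Hence $\dist_{G\setminus F}(x,y)\le \dist_{G_{i,\alpha}[S]}(x,y)\le h_{\diam,\alpha}$, which is exactly the value written to $\ans$.

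\emph{Second loop.} The algorithm writes $2h_{\diam,\alpha_x}+2h_{\diam,\alpha_y}+\dist_Q(x'_{\px,\nd},y'_{\px,\nd})$. By construction $x'_{\px}\in V(x'_{\px,\nd})$ and $y'_{\px}\in V(y'_{\px,\nd})$. \Cref{lemma:GeneralSolLowerBound} then gives $\dist_{G\setminus F}(x'_{\px},y'_{\px})\le \dist_Q(x'_{\px,\nd},y'_{\px,\nd})$. By the triangle inequality it remains to show $\dist_{G\setminus F}(x,x'_{\px})\le 2h_{\diam,\alpha_x}$, and symmetrically for $y$.
\begin{itemize}
\item If line 9 was used, then $x'_{\px}=x$, so the distance is $\ell_G(x)=0$ (vertex lengths are $0$ by \Cref{assume:input}).
\item If lines 12--13 were used, then $S:={\cal N}_{i_x,\alpha_x}(x)$ is unaffected, $x\in S$, and $x'_{\px}\in \Lambda(S)\setminus F$, since proxies are drawn from $V_{\tmn}({\cal N}_{i'_x,\alpha'_x}(x))\cap\Lambda(S)$. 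Apply the second part of \Cref{ob:SurroundingDiam} with $\alpha'=\alpha$. The argument there really bounds the distance between any two vertices of $S\cup(\Lambda(S)\setminus F)$, namely by $\diam(S)+2h_\alpha$, using strong diameter inside the unaffected cluster plus one edge of length at most $h_\alpha$. This gives $\dist_{G\setminus F}(x,x'_{\px})\le h_{\diam,\alpha_x}+2h_{\alpha_x}\le 2h_{\diam,\alpha_x}$, because $h_{\diam,\alpha}=2s_{\nc}h_\alpha\ge 2h_\alpha$.
\end{itemize}
The $y$ side is identical.

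\emph{Main obstacle.} The only delicate point is this use of \Cref{ob:SurroundingDiam}: it is stated for pairs inside $\Lambda(S)$, while $x$ itself need not lie in $\Lambda(S)$ (it may have level $\le i_x$). I would therefore rerun its one-line proof for the pair $(x,x'_{\px})$: $x\in S$, and $x'_{\px}$ is either in $S$ or joined to $S$ by an unfailed edge of length at most $h_{\alpha}$. The failure-freeness of that edge needs $x'_{\px}\notin F$, which the algorithm guarantees, and $S\cap F=\emptyset$. Everything else is bookkeeping.
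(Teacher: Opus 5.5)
Your proof is correct and follows the same route as the paper. Both split into the Line~4 update, which uses an unaffected cluster containing both $x$ and $y$, and the second-loop update, which combines \Cref{lemma:GeneralSolLowerBound} with the triangle inequality and the bound $\dist_{G\setminus F}(x,x'_{\px})\le h_{\diam,\alpha_x}+O(h_{\alpha_x})\le 2h_{\diam,\alpha_x}$, obtained through the unaffected cluster ${\cal N}_{i_x,\alpha_x}(x)$ and one short unfailed edge. Your extra checks fill in details the paper leaves implicit: that $G_{i,\alpha}$-lengths dominate $G$-lengths, that $x$ need not lie in $\Lambda(S)$, and that $x'_{\px}\notin F$. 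Your $+2h_{\alpha_x}$ is simply a looser version of the paper's $+h_{\alpha_x}$.
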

\begin{proof}
If the final $\ans$ is obtained on Line 4, let $S\in\{{\cal N}_{i,\alpha}(x),{\cal N}_{i,\alpha}(y)\}$ be an unaffected cluster that contains both $x$ and $y$. Then clearly $\dist_{G\setminus F}(x,y) \leq \dist_{G[S]}(x,y)\leq h_{\diam,\alpha} = \ans$.

From now, suppose the final $\ans$ is obtained on Line 24. Consider the proxy vertices $x'_{\px},y'_{\px}$ and proxy nodes $x'_{\px,\nd}\in V(x'_{\px})$ and $y'_{\px,\nd}\in V(y'_{\px})$ picked in this iteration. By \Cref{lemma:GeneralSolLowerBound},
\[
\dist_{G\setminus F}(x'_{\px},y'_{\px})\leq \dist_{Q_{}}(x'_{\px,\nd},y'_{\px,\nd}).
\]
Thus it suffices to show that $\dist_{G\setminus F}(x,x'_{\px})\leq 2h_{\diam,\alpha'_{x}}$ and also $\dist_{G\setminus F}(y,y'_{\px})\leq 2h_{\diam,\alpha'_{y}}$. We prove the former below (the latter can be proved similarly). Note that we have either $x=x'_{\px}$ (when branching into Line 9), or for the unaffected cluster $S:={\cal N}_{i_{x},\alpha_{x}}(x)$, $x'_{\px}\in \Lambda(S)$ (when branching into Lines 12 and 13). In the latter case, by definition,
\[
\dist_{G\setminus F}(x,x'_{\px})\leq \dist_{G[S]}(x,x'_{\px}) + h_{\alpha_{x}}\leq h_{\diam,\alpha_{x}} + h_{\alpha_{x}}\leq 2h_{\diam,\alpha_{x}}.
\]

\end{proof}

\subsection{Oracle Construction and the Update Algorithm}
\label{sect:OracleConstruction}

In this section, we first describe how to precompute and store the data structures used in the update and query phases, and then present the update algorithm that extracts the query graph from them.

\subsubsection{The Euler Tours of Steiner Trees}

We first introduce a standard technique for ``linearizing'' each low-degree Steiner tree into a path using the Euler Tour  order. We note that this technique has been widely used in the literature on fault-tolerant connectivity and distance oracles (e.g., \cite{duan2010connectivity,duan2020connectivity,long2022near,HLRS26}).

\begin{definition}[Euler Tours]
Consider a cluster $S$ and its Steiner tree $T_{S}$. We define $\Euler(T_{S})$ to be its Euler tour, which is an ordered list that includes all \emph{occurrences} of vertices according to a DFS traversal of $T_{S}$, starting from an arbitrary root node. For convenience, we add two \emph{virtual occurrences} $\start(T_{S})$ and $\eend(T_{S})$ at the front and the end of $\Euler(T_{S})$ as the ``guard occurences''. For each vertex $v\in V_{\tmn}(S)$, the \emph{principal occurrences} of $v$ on $\Euler(T_{S})$ is its first occurrence. 
\end{definition}

\begin{definition}[Intervals]
For a cluster $S$ and its Steiner tree $T_{S}$, an \emph{interval} $I$ on $\Euler(T_{S})$ is a consecutive subsequence. The terminal vertices of $I$, denoted by $V_{\tmn}(I)$, are those vertices in $V_{\tmn}(S)$ whose principal occurrences fall in $I$. The left (resp. right) \emph{outer endpoints} of $I$ are the occurrences immediately to the left (resp. right) of $I$.
\end{definition}

\begin{definition}[Residual Intervals under Failures]
Consider the update with failed vertices $F$. For the Steiner tree $T$ of an \underline{affected} cluster, we use ${\cal I}_{T}$ to denote the \emph{residual intervals} of $T$, which are intervals obtained by removing $\start(T),\eend(T)$ and all occurrences of failed vertices from $\Euler(T)$.
\end{definition}

The following lemma is straightforward and describes the relationship between residual subtrees and residual intervals.

\begin{lemma}
\label{lemma:OracleUseIntervals}
In the update phase, for each Steiner tree $T$, there is an assignment that assigns a subset of residual intervals ${\cal I}_{\tau}\subseteq {\cal I}_{T}$ to each residual subtree $\tau\in T\setminus F$, satisfying the following.
\begin{itemize}
\item For each subtree $\tau\in T\setminus F$, $V_{\tmn}(\tau)=\bigsqcup_{I\in{\cal I}_{\tau}}V_{\tmn}(I)$.
\item $\{{\cal I}_{\tau}\mid \tau\in T\setminus F\}$ forms a partition of ${\cal I}_{T}$.
\end{itemize}
Moreover, we can compute the partition ${\Ical_{\tau}\mid \tau\in T\setminus F}$ in $\Otilde(|V(T)|)$ time.
\end{lemma}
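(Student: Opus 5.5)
The natural assignment is to give each residual interval $I\in{\cal I}_T$ to the residual subtree containing the vertices whose occurrences make up $I$, and then check that this is well defined and yields the two claimed properties. The key structural fact concerns a DFS Euler tour: a maximal consecutive run of occurrences containing no failed vertex corresponds to a walk in $T$ that never passes through a failed vertex. Each step of $\Euler(T)$ moves between adjacent tree vertices (to a child, or back to the parent). So consecutive occurrences in $I$ are either equal or adjacent in $T$, and since neither is failed they lie in the same component of $T\setminus F$. Hence all occurrences in a residual interval $I$ belong to a single residual subtree $\tau$, and we put $I$ into ${\cal I}_\tau$. Every residual interval is nonempty, so it receives exactly one $\tau$. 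This already shows that $\{{\cal I}_\tau\}$ partitions ${\cal I}_T$.

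For the first property, take $\tau\in T\setminus F$ and a terminal $v\in V_{\tmn}(\tau)$. Since $v$ is not failed, its principal occurrence survives the deletion. It therefore lies in exactly one residual interval $I$, and $v\in V_{\tmn}(I)$ by definition. By the previous paragraph, $I\in{\cal I}_\tau$ because $v\in\tau$. Conversely, any terminal whose principal occurrence lies in some $I\in{\cal I}_\tau$ is a vertex of $\tau$. Here "terminal" means a terminal of the cluster, i.e.\ an element of $V_{\tmn}(S)$, which is consistent with $V_{\tmn}(\tau)=V_{\tmn}(S)\cap\tau$. Principal occurrences are unique, so the sets $V_{\tmn}(I)$ are pairwise disjoint. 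Together these give $V_{\tmn}(\tau)=\bigsqcup_{I\in{\cal I}_\tau}V_{\tmn}(I)$.

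It remains to cover two edge cases.
\begin{itemize}
\item A residual subtree that receives no interval. This cannot occur: every nonfailed vertex of $T$ has some occurrence in the tour.
\item A failed root. The tour still visits every child subtree contiguously between occurrences of the root, so the argument is unchanged.
\end{itemize}

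For the running time, compute the components of $T\setminus F$ by a BFS/DFS in $O(|V(T)|)$ time, labelling each nonfailed vertex with its component. Then scan $\Euler(T)$, which has length $O(|V(T)|)$. Split it at failed occurrences and at the guards, and assign each resulting interval by the component label of any one of its occurrences. Recording interval boundaries takes $\Otilde(|V(T)|)$ in total.

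The only step that needs care is the adjacency property of consecutive Euler-tour occurrences. It is standard, so I do not expect a real obstacle.
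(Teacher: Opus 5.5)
Your argument is correct and is the reasoning the paper intends. The paper gives no proof and calls the lemma straightforward; the one substantive point is the one you isolate: consecutive Euler-tour occurrences are adjacent in $T$, so each maximal failure-free run of the tour lies in a single component of $T\setminus F$, and the fact that each non-failed terminal's principal occurrence lands in exactly one such run gives both properties and the linear-time computation.
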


Intuitively, \Cref{lemma:OracleUseIntervals} says that a residual subtree can be ``represented'' by a collection of residual intervals. As we will see soon, when we want to detect shortcut-based links incident to a subtree node, or determine different kinds of heaviness and denseness of a subtree node, we can work with each of its residual intervals individually, and then merge the answers. The following \Cref{lemma:OracleResidualIntervalNum} shows that the number of residual intervals is asymptotically the same as the number of residual subtrees, so for simplicity, we can assume each residual subtree is ``represented'' by only one residual interval. This assumption does not affect the space or update-time bounds asymptotically.

\begin{lemma}
\label{lemma:OracleResidualIntervalNum}
In the update phase, the number of residual intervals is at most $\Otilde(f)\cdot n^{O(1/k)}$.
\end{lemma}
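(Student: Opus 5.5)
The plan is to count residual intervals tree by tree, restricting attention to the Steiner trees of affected clusters, and then sum over those trees. Fix an affected cluster $S$ with Steiner tree $T:=T_{S}$. The residual intervals $\Ical_{T}$ are the maximal consecutive blocks of $\Euler(T)$ obtained by deleting $\start(T)$, $\eend(T)$ and every occurrence of a failed vertex. Hence $|\Ical_{T}|$ is at most one plus the total number of deleted occurrences of failed vertices in $\Euler(T)$.

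In a DFS Euler tour, a vertex $v$ appears at most $\deg_{T}(v)+1$ times. Since $T$ has maximum degree $\Delta$ by \Cref{thm:SteinerTree}, this gives
\[
|\Ical_{T}|\leq 1+\sum_{v\in F\cap V(T)}(\Delta+1).
\]

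Now sum over all affected clusters. The additive $1$ is paid once per affected cluster, which is at most $\Otilde(f)\cdot n^{O(1/k)}$ by \Cref{lemma:NumAffectedCluster}. For the second term, swap the order of summation. Each failed vertex $v\in F$ lies in at most $\gamma_{\tree}$ Steiner trees in total, over all levels, scales and clusters. So this term is at most $f\cdot\gamma_{\tree}\cdot(\Delta+1)$.

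Plugging in $\Delta,\gamma_{\tree}=\Otilde(1)\cdot n^{O(1/k)}$ gives the total bound $\Otilde(f)\cdot n^{O(1/k)}$. This mirrors the count of residual subtrees in \Cref{lemma:NumQNodes}.

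The only step needing care is the bound on occurrences per vertex in $\Euler(T)$. A vertex is listed on first entry and again after returning from each child, so its count is at most $\deg_{T}(v)+1$, which is $O(\Delta)$. The guard occurrences merely ensure that the blocks at the two ends are counted, and they contribute only an additive constant per tree. Nothing else is delicate.
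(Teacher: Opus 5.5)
Your proof is correct and follows essentially the same route as the paper: bound the residual intervals of each Steiner tree by $O(\Delta)$ per failed vertex on it, then sum using the fact that each vertex lies in at most $\gamma_{\tree}$ trees. Your additive $1$ per affected cluster, charged to the bound on the number of affected clusters, is a useful refinement. The paper's per-tree bound $\Delta\cdot|T\cap F|$ by itself misses affected clusters whose tree avoids $F$, and your extra term is exactly the one the paper's count of residual subtrees in the node-count lemma includes.
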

\begin{proof}
This is because, for each Steiner tree $T$, the number of residual intervals generated by $T$ is at most $\Delta\cdot |T\cap F|$, where $\Delta = \Otilde(1)\cdot n^{O(1/k)}$ is the maximum degree. Furthermore, recall that each vertex can appear in at most $\gamma_{\tree} = \Otilde(1)\cdot n^{O(1/k)}$ many Steiner trees.
\end{proof}

\begin{definition}[Canonical Intervals]
\label{def:CanonicalIntervals}
Consider a Steiner tree $T$. For each occurrence $v_{\oc}\in \Euler(T)$, let ${\cal I}^{\rightarrow}_{T}(v_{\oc})$ denote the \emph{canonical intervals} on the right side of $v_{\oc}$, defined as follows. Let $v'_{\oc}\in \Euler(T)$ be the occurrence next to $v_{\oc}$ on the right. Then ${\cal I}^{\rightarrow}_{T}(v_{\oc})$ includes all intervals starting from $v'_{\oc}$ of size $1,2,4,8,...$ until the largest such interval, namely the suffix beginning with $v'_{\oc}$. Similarly, we can define the canonical intervals $\Ical^{\larr}_{T}(v_{\oc})$.
\end{definition}

\begin{observation}
\label{ob:OracleUseCanonical}
In the update phase, for each residual interval $I\in\Ical_{T}$ with left and right outer endpoints $u_{\oc}$ and $v_{\oc}$, there exists canonical intervals $I^{\rarr}\in \Ical^{\rarr}(u_{\oc})$ and $I^{\larr}\in \Ical^{\larr}(v_{\oc})$ such that $V_{\tmn}(I^{\rarr})\cup V_{\tmn}(I^{\larr}) = V_{\tmn}(I)$.
\end{observation}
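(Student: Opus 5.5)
The plan is to cover the interval $I$ by two canonical intervals, one growing to the right from $u_{\oc}$ and one growing to the left from $v_{\oc}$. Write $I = [a,b]$ as a consecutive block of occurrences on $\Euler(T)$. Here $a$ is the occurrence immediately to the right of $u_{\oc}$, and $b$ is the occurrence immediately to the left of $v_{\oc}$. By the definition of residual intervals, the outer endpoints are either occurrences of failed vertices or the guard occurrences $\start(T),\eend(T)$. In particular they always exist, so $\Ical^{\rarr}_T(u_{\oc})$ and $\Ical^{\larr}_T(v_{\oc})$ are well defined.

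Let $L=|I|\geq 1$ and let $2^j$ be the largest power of two with $2^j\leq L$. Since $L\leq$ the length of the suffix starting at $a$, the interval of size $2^j$ starting at $a$ is one of the canonical intervals in $\Ical^{\rarr}_T(u_{\oc})$. Choose $I^{\rarr}$ to be this interval. Symmetrically, choose $I^{\larr}\in\Ical^{\larr}_T(v_{\oc})$ to be the interval of size $2^j$ ending at $b$; it exists because the prefix ending at $b$ has length at least $L$.

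Both chosen intervals lie inside $I$, because each has size at most $L$ and is anchored at an end of $I$. Their union is all of $I$, because $2\cdot 2^j > L$: the first covers positions $a,\dots,a+2^j-1$, the second covers $b-2^j+1,\dots,b$, and $b-2^j+1\leq a+2^j$. So $I^{\rarr}\cup I^{\larr}=I$ as sets of occurrences.

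Finally, pass to terminal sets. $V_{\tmn}(\cdot)$ of an interval is determined by which principal occurrences fall in it, so $V_{\tmn}$ of a union of intervals equals the union of their $V_{\tmn}$ sets. This gives $V_{\tmn}(I^{\rarr})\cup V_{\tmn}(I^{\larr})=V_{\tmn}(I)$.

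There is no real obstacle in this argument. The only points needing care are the existence of the outer endpoints, which the guard occurrences ensure, and the convention that canonical intervals reach the full suffix or prefix, which guarantees that size $2^j$ is available.
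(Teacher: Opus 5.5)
Your proof is correct and is the standard argument behind this observation, which the paper states without proof: take the two canonical intervals of size $2^{\lfloor \log_2 |I| \rfloor}$ anchored at the two ends of $I$. These lie inside $I$ and together cover it because $2\cdot 2^{\lfloor \log_2 |I|\rfloor} > |I|$, and since $V_{\tmn}(\cdot)$ is determined by principal occurrences it commutes with unions.
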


For convenience, we will use $I^{\rarr}$ and $I^{\larr}$ to denote the canonical intervals found for a residual interval $I$ by the above \Cref{ob:OracleUseCanonical}.

\subsubsection{2D-Counting Tables for Shortcut-Based Links}

As pointed out by previous work (e.g. \cite{duan2020connectivity,long2022near}), the shortcut-based links can be easily detected by precomputing 2D counting tables. Since this part is essentially the same as in those works, we only sketch the idea below.

\paragraph{Indices of the Tables.} For each Steiner tree $T$ of a cluster $S$, let $\Euler_{\tmn}(T)$ be the sublist of $\Euler(T)$ which includes only the principal occurrences. In other words, $\Euler_{\tmn}(T)$ is an ordered list of vertices in $V_{\tmn}(T)$. We call the elements in $\Euler_{\tmn}(T)$ \emph{points}.  

Let $U$ denote the ordered list obtained by concatenating $\Euler_{\tmn}(T_{S})$ for all clusters $S$, in an arbitrary order. Moreover, we append some \emph{satellite points} to the end of $U$: one for each satellite cluster $\hat{S}$. Then all our tables are indexed by $U \times U$. For each vertex $w\in V$, let $V_{\ppt}(w)\subseteq U$ be the set of all points corresponding to $w$ (a vertex $w$ may appear in multiple Steiner trees, and thus have multiple principal occurrences).

\paragraph{Tables.} We will construct a \emph{global table} $M_{\glo}$ and for each cluster $S$, a \emph{cluster table} $M_{S}$. In fact, there is another dimension corresponding to link lengths. Since there are at most $O(\log n)$ distinct link lengths, one can equivalently maintain a separate collection of the above tables for each link length. We omit this dimension for simplicity, and we will not talk about link lengths in what follows.

We only access these tables by orthogonal range counting queries. Thus these tables can be easily implemented using textbook data structures with query time $\Otilde(1)$ and space nearly linear to the number of non-zero entries.

\paragraph{Adding Entries into Tables.} \ 

\medskip

\noindent\underline{For Type-\ref{Item:Type1Link} Links.} For each original edge $(u,v)\in E$, add $1$ for all point pairs $(u_{\ppt},\vpt)\in V_{\ppt}(u)\times V_{\ppt}(v)$ in the global table $M_{\glo}$.

\medskip

\noindent\underline{For Type-\ref{Item:Type2Link} Links.} For each cluster $S$, each levels $i_{1}\geq i_{2}>i:=\level(S)$, if $S$ is both nearby $A_{i_{1}}$-sparse and nearby $A_{i_{2}}$-sparse, then for all vertex pairs $(u,v)\in \Lambda_{i_{1}}(S)\times \Lambda_{i_{2}}(S)$ and all point pairs $(\upt,\vpt)\in V_{\ppt}(u)\times V_{\ppt}(v)$, add $1$ to the entry indexed by $(\upt,\vpt)$ in both the global table $M_{\glo}$ and cluster table $M_{S}$.

\medskip

\noindent\underline{For Type-\ref{Item:Type3Link} Links.} For each cluster $S$, each levels $i_{1}>i_{2}>i:=\level(S)$, if $S$ is both nearby $A_{i_{1}}$-sparse but nearby $A_{i_{2}}$-dense, then let $\hat{S}_{\ppt}$ be the satellite point of the satellite cluster $\satellite_{i_{2},\alpha_{2}}(S)$, and for each vertex $u\in \Lambda_{i_{1}}(S)$ and each point $\upt\in V_{\ppt}(u)$, add $1$ to the entry indexed by $(u_{\ppt},\hat{S}_{\ppt})$ in both the global table $M_{\glo}$ and cluster table $M_{S}$.

\medskip

\noindent\underline{For Type-\ref{Item:Type7Link} Links.} For each vertex $v$, add $1$ to each entry $(\vpt,\vpt')\in V_{\ppt}(v)\times V_{\ppt}(v)$ in the global table $M_{\glo}$.

\begin{lemma}
\label{lemma:OracleSpaceTable}
All counting tables together take space $\Otilde(f^{2})\cdot n^{1+O(1/k)}$.
\end{lemma}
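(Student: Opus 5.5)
The plan is to count the nonzero entries contributed to the tables by each of the four link types. Since orthogonal range counting structures use space nearly linear in the number of nonzero entries, this count bounds the space up to polylogarithmic factors. Every entry is indexed by points, so the basic multiplier is $|V_{\ppt}(w)|$, the number of principal occurrences of a vertex $w$. This equals the number of Steiner trees containing $w$ as a terminal, which is at most $\gamma_{\tree}=\Otilde(1)\cdot n^{O(1/k)}$. In fact it is bounded more simply by the number of clusters containing $w$, namely $O(i_{\max}\alpha_{\max}\omega)$. Cluster tables duplicate global entries at most once each, so they at most double the count.

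\textbf{Type 1 entries.} These number $\sum_{(u,v)\in E}|V_{\ppt}(u)||V_{\ppt}(v)|\le m\cdot \gamma_{\tree}^2$. Because $G$ has already been replaced by the fault-tolerant spanner, $m=\Otilde(f^{1-1/k}n^{1+1/k})$, so this term is $\Otilde(f)\cdot n^{1+O(1/k)}$.

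\textbf{Type 7 entries.} These number $\sum_v |V_{\ppt}(v)|^2\le n\gamma_{\tree}^2$, which is negligible.

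\textbf{Type 2 entries.} The key point is that a cluster $S$ creates entries only for levels where it is nearby sparse, so $|\Lambda_{i_1}(S)|,|\Lambda_{i_2}(S)|<\rho_{\heavy}$. Summing over level pairs gives at most $i_{\max}^2\rho_{\heavy}^2\gamma_{\tree}^2$ entries per cluster. The total number of clusters is at most $\sum_{i,\alpha}|{\cal N}_{i,\alpha}|\le i_{\max}\alpha_{\max}\cdot n\omega$, since each vertex lies in at most $\omega$ clusters per cover and each cluster is nonempty. With $\rho_{\heavy}=\Otilde(f)\cdot n^{O(1/k)}$, the total is $\Otilde(f^2)\cdot n^{1+O(1/k)}$.

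\textbf{Type 3 entries.} For each cluster $S$, each sparse level $i_1$, each dense level $i_2$ and each scale $\alpha_2\ge\alpha$, we add $|\Lambda_{i_1}(S)|\cdot\gamma_{\tree}<\rho_{\heavy}\gamma_{\tree}$ entries. That is $O(i_{\max}^2\alpha_{\max}\rho_{\heavy}\gamma_{\tree})$ per cluster, giving $\Otilde(f)\cdot n^{1+O(1/k)}$ in total. The satellite points add only one index each, and there are at most $i_{\max}\alpha_{\max}n\omega$ of them.

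The main thing to verify is that no term hides a dependence on $m$ beyond the spanner bound, or on $|\Lambda(S)|$ for dense levels. For type 2 this holds because only sparse levels are bicliqued. For type 3 it holds because only the sparse side is enumerated and the dense side is collapsed to a single satellite point. Summing the four contributions gives $\Otilde(f^2)\cdot n^{1+O(1/k)}$.
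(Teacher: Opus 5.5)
Your proposal is correct and follows essentially the same route as the paper's proof. The paper bounds $|V_{\ppt}(v)|$ by the number of clusters containing $v$, uses the spanner edge bound for the type-1 and type-7 entries, and charges the type-2 and type-3 entries to the $\Otilde(1)\cdot n^{1+O(1/k)}$ clusters, at $(i_{\max}\rho_{\heavy}\lambda_{\ppt})^{2}$ entries each, using nearby sparseness. The only cosmetic difference is that you count type-3 entries separately and get a sharper $\Otilde(f)$-type bound for them, whereas the paper lumps them in with type 2.
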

\begin{proof}
For each vertex $v$, its corresponding set of points $V_{\ppt}(v)$ has size $\lambda_{\ppt}:=i_{\max}\cdot\alpha_{\max}\cdot \omega = \Otilde(1)\cdot n^{O(1/k)}$. Recall that we replaced the input graph by a spanner at the beginning of this section, so the number of original edges is $\Otilde(f^{1-1/k}\cdot n^{1+1/k})$. Therefore, the number of non-zero entries contributed by the type-\ref{Item:Type1Link} and type-\ref{Item:Type7Link} links are at most $\Otilde(f)\cdot n^{1+O(1/k)}$.

The number of clusters is $i_{\max}\cdot \alpha_{\max}\cdot \omega\cdot n = \Otilde(1)\cdot n^{1+O(1/k)}$. Each of them will contribute $(i_{\max}\cdot \rho_{\heavy}\cdot \lambda_{\ppt})^{2} = \Otilde(f^{2})\cdot n^{O(1/k)}$ non-zero entries when dealing when type-\ref{Item:Type2Link} and type-\ref{Item:Type3Link} links (note that $\rho_{\heavy}$ comes from that $S$ is nearby $A_{i_{1}}$-sparse). 
\end{proof}

\paragraph{Update Algorithm Part 1: Detecting Shortcut-Based Links.} 

Detecting shortcut-based links between two nodes $u_{\nd},v_{\nd}\in Q$ can be done by range counting queries. For example, for two subtree node $u_{\nd}$ and $v_{\nd}$. Recall that we have assumed that each residual subtree is ``represented'' by one residual interval. Thus, by querying these two intervals, we can obtain the number of links connecting $u_{\nd}$ and $v_{\nd}$ in a table.

We note that we need to perform range counting queries on the global table and the cluster tables for each affected cluster, because we need to subtract the contribution from the invalid shortcuts provided by affected clusters.

\begin{lemma}
\label{lemma:OracleUpdateTime1}
The running time for detecting shortcut-based links is $\Otilde(f^{5})\cdot n^{O(1/k)}$.
\end{lemma}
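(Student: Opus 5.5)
The bound is a counting argument: count the pairs of query-graph endpoints for which we must test a shortcut-based link, and multiply by the cost of one test.

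\textbf{Size of each test.} Every shortcut-based link (types \ref{Item:Type1Link}, \ref{Item:Type2Link}, \ref{Item:Type3Link}, \ref{Item:Type7Link}) joins two concrete nodes, or a concrete node and a satellite node. We may assume each concrete node is represented by one residual interval or one singleton point. By \Cref{ob:OracleUseCanonical}, each residual interval splits into two canonical intervals. So for a fixed pair of candidate endpoints $(u_{\nd},v_{\nd})$, the number of links of a given length between them is a sum of $O(1)$ orthogonal range-counting queries.

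\textbf{Subtracting invalid links.} The global table $M_{\glo}$ also contains links contributed by affected clusters, which are not valid under $F$. We remove them by issuing the same rectangle queries on each cluster table $M_S$ with $S$ affected and subtracting the counts. By \Cref{lemma:NumAffectedCluster} there are $\Otilde(f)\cdot n^{O(1/k)}$ affected clusters. Together with $O(\log n)$ length classes and $\Otilde(1)$ time per range query, one pair therefore costs $\Otilde(f)\cdot n^{O(1/k)}$.

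\textbf{Number of pairs.} By \Cref{lemma:NumQNodes}, $Q$ has $\Otilde(f^{2})\cdot n^{O(1/k)}$ nodes. Satellite nodes number only $\Otilde(f)\cdot n^{O(1/k)}$ by \Cref{lemma:NumAffectedCluster}, so adding them changes nothing. Hence there are $\Otilde(f^{4})\cdot n^{O(1/k)}$ node pairs, and the total time is $\Otilde(f^{4})\cdot n^{O(1/k)}\cdot\Otilde(f)\cdot n^{O(1/k)}=\Otilde(f^{5})\cdot n^{O(1/k)}$.

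\textbf{Main obstacle.} The difficult step is justifying the subtraction: each invalid link must be counted exactly once in $M_{\glo}$ and exactly once in $M_S$ for the unique affected $S$ that created it. I would argue this directly from how entries are added. Type-\ref{Item:Type2Link} and type-\ref{Item:Type3Link} entries are written to $M_{\glo}$ and to the cluster table of the creating cluster. Type-\ref{Item:Type1Link} and type-\ref{Item:Type7Link} entries appear only in $M_{\glo}$ and are always valid, since $V_{\nd}(v)$ is empty for every failed vertex $v$.

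The remaining subtlety is that a range over a canonical interval may include points of failed vertices or non-terminal occurrences. Non-terminal occurrences do not matter, because only principal occurrences of terminals are points in $U$. Failed points are excluded because residual intervals omit all occurrences of $F$, and \Cref{ob:OracleUseCanonical} gives canonical intervals with exactly the same terminal set. So the counts are exact, and a positive count means the link exists. The true minimum link length between $u_{\nd}$ and $v_{\nd}$ is then found by taking the smallest length class with a positive count.
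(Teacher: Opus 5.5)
Your proposal is correct and follows the paper's argument: there are $\Otilde(f^{4})\cdot n^{O(1/k)}$ node pairs, and each pair costs one range query per affected-cluster table to subtract the invalid contributions, giving $\Otilde(\lambda_{\affS}\cdot\lambda_{Q}^{2})=\Otilde(f^{5})\cdot n^{O(1/k)}$ in total. One small correction: the two canonical intervals from \Cref{ob:OracleUseCanonical} can overlap, so summing over them does not give exact counts, but the same rectangles are queried in $M_{\glo}$ and in every $M_{S}$, so the difference is still positive exactly when a valid link exists (alternatively, query the residual interval directly, since it is a contiguous range of $U$).
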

\begin{proof}
Recall from \Cref{lemma:NumAffectedCluster} that the number of affected clusters is $\lambda_{\affS}:=\Otilde(f)\cdot n^{O(1/k)}$. Recall from \Cref{lemma:NumQNodes} that the number of nodes in $Q$ is $\lambda_{Q}:=\Otilde(f^{2})\cdot n^{O(1/k)}$. %

The overall running time is then $\Otilde(\lambda_{\affS}\cdot \lambda_{Q}^{2}) = \Otilde(f^{5})\cdot n^{O(1/k)}$. Roughly speaking, we want to detect shortcut-based links for each pair of nodes, so we have the term $\lambda_{Q}^{2}$. The term $\lambda_{\affS}$ is from subtracting the contributions of affected clusters.
\end{proof}

\subsubsection{Structures for Expansion-Based Links and Update Algorithm Part 2}

Type-\ref{Item:Type4Link} links can be detected trivially using the Euler Tour trick, so we omit discussing it. The key to detecting type-\ref{Item:Type5Link} and type-\ref{Item:Type6Link} links is to determine the adjacent-cluster $A_{i}$-denseness of a singleton node $v_{\nd}$ or a subtree node $\tau_{\nd}$. Take singleton node $v_{\nd}$ of a vertex $v$ for example. Recall \Cref{def:OracleAdjClusterDenseness} of adjacent-cluster denseness. The only update-phase sensitive information is that the witness cluster $S$ must be unaffected. This small issue can be easily overcame: because the number of affected cluster is at most $\lambda_{\affS}:=\Otilde(f)\cdot n^{O(1/k)}$, we simply store up to $\lambda_{\affS}+1$ candidate witness clusters for this vertex $v$. Then in the update phase, we will find one if there exists an unaffected one. To check the adjacent-cluster denseness for a residual subtree, use the Euler Tour trick again and work with canonical intervals instead.

\begin{lemma}
\label{lemma:OracleSpaceExpLinks}
The data structures for detecting expansion-based links take $\Otilde(f)\cdot n^{1+O(1/k)}$ space.
\end{lemma}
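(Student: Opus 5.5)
The plan is to make explicit the per-vertex candidate-witness structure sketched just before the statement, extend it to canonical intervals, and then count entries. The only query-dependent part of adjacent-cluster $A_{i_{1}}$-denseness (\Cref{def:OracleAdjClusterDenseness}) is that the witness cluster $S$ must be unaffected. Nearby denseness of $S$ and the condition $v\in\Lambda(S)$ are static. So for each vertex $v$, each triple $(i_{2},\alpha_{2},i_{1})$ with $i_{1}\geq i_{2}$, and each relevant cluster level, I store a list $W(v,i_{2},\alpha_{2},i_{1})$. It contains up to $\lambda_{\affS}+1$ witness clusters $S\in{\cal N}_{i,\alpha}$ with $i<i_{2}$, $\alpha\leq\alpha_{2}$, $S$ nearby $A_{i_{1}}$-dense, and $v\in\Lambda(S)$. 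By \Cref{lemma:NumAffectedCluster}, $\lambda_{\affS}=\Otilde(f)\cdot n^{O(1/k)}$ bounds the number of affected clusters. Hence if any unaffected witness exists, the truncated list contains one, and scanning it in the update phase decides denseness of a singleton node. The number of index tuples is $O(i_{\max}^{2}\alpha_{\max})=\Otilde(k^{2})$, so the per-vertex space is $\Otilde(f)\cdot n^{O(1/k)}$, and summed over $n$ vertices this gives $\Otilde(f)\cdot n^{1+O(1/k)}$.

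Subtree nodes are handled next. A residual subtree is a union of residual intervals (\Cref{lemma:OracleUseIntervals}), and by \Cref{ob:OracleUseCanonical} each residual interval is covered by two canonical intervals $I^{\rarr},I^{\larr}$. So for each occurrence $v_{\oc}$ in each Euler tour $\Euler(T_{S})$, and each canonical interval $I\in{\cal I}^{\rarr}_{T}(v_{\oc})\cup{\cal I}^{\larr}_{T}(v_{\oc})$, I store the analogous truncated list of $\lambda_{\affS}+1$ witness clusters $S'$ with $\Lambda(S')\cap V_{\tmn}(I)\neq\emptyset$. Denseness of $\tau_{\nd}$ is then the OR over its constituent canonical intervals. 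A witness found for the canonical interval intersects $V_{\tmn}(\tau)$, which is exactly the condition "$V(q_{\nd})$ intersects $\Lambda(S)$". For the count, the total number of occurrences over all Euler tours is $O(\sum_{S}|V(T_{S})|)\leq n\cdot\gamma_{\tree}=\Otilde(1)\cdot n^{1+O(1/k)}$. Each occurrence has $O(\log n)$ canonical intervals, and each list again has length $\Otilde(f)\cdot n^{O(1/k)}$, so the total is $\Otilde(f)\cdot n^{1+O(1/k)}$.

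The remaining structures are small. Type-\ref{Item:Type4Link} links need only the Euler tours themselves, of size $\Otilde(1)\cdot n^{1+O(1/k)}$. Type-\ref{Item:Type5Link} links also require enumerating, in the update phase, the affected clusters $S_{1}$ intersecting $S_{2}$. Type-\ref{Item:Type6Link} links require the affected satellite clusters containing $\pivot(S_{2})$. Neither needs precomputed storage beyond cluster membership lists and pivots. An affected $S_{1}$ contains a failed vertex, so for each of the $\Otilde(f)\cdot n^{O(1/k)}$ affected pairs one can test $S_{1}\cap S_{2}\neq\emptyset$ from stored memberships. The membership lists take total space $\sum_{i,\alpha}\sum_{S}|S|\leq n\omega\, i_{\max}\alpha_{\max}$. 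Satellite membership of pivots is a lookup in the same lists.

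I expect the main obstacle to be the truncation argument for canonical intervals. It must be shown that keeping only $\lambda_{\affS}+1$ witnesses per canonical interval loses nothing, even though different witnesses may be hit by the same failure. This works because each stored witness is a distinct cluster and at most $\lambda_{\affS}$ clusters are affected. One also needs a uniform index range $(i_{2},\alpha_{2})$, fixed by the level and scale of the tree's cluster, and a separate list for each $i_{1}$. With these in place the space bound is a direct count.
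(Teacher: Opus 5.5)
Your proposal is correct and follows the paper's argument. Like you, the paper stores, for each vertex and each canonical interval (there are $\Otilde(1)\cdot n^{1+O(1/k)}$ of these in total) and each level, a truncated list of $\lambda_{\affS}+1$ candidate witness clusters, so that an unaffected witness survives whenever one exists. The only difference is that, for type-\ref{Item:Type5Link} links, the paper precomputes an intersection table of intersecting cluster pairs (also $\Otilde(n^{1+O(1/k)})$ space) to test $S_{1}\cap S_{2}\neq\emptyset$ in $\Otilde(1)$ time. Your membership-list test suffices for the space bound, but it would not give the update time of \Cref{lemma:OracleUpdateTime2}.
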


The space bound in \Cref{lemma:OracleSpaceExpLinks} basically comes from: (1) the number of vertices and canonical intervals is bounded by $\Otilde(n^{1+O(1/k)})$, and (2) for each of them and each level, we store a list of $\lambda_{\affS}+1 = \Otilde(f)\cdot n^{O(1/k)}$ candidate clusters.

\begin{lemma}
\label{lemma:OracleUpdateTime2}
In the update phase, detecting all expansion-based links takes $\Otilde(f^{3})\cdot n^{O(1/k)}$ time.
\end{lemma}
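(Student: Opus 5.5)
The plan is to determine, for every concrete node $q_{\nd}$ of $Q$ and every level $i_{1}$, whether $q_{\nd}$ is adjacent-cluster $A_{i_{1}}$-dense. Once these bits are known, the links of types \ref{Item:Type5Link} and \ref{Item:Type6Link} are produced by enumerating the relevant affected clusters and affected satellite clusters. Type-\ref{Item:Type4Link} links are handled separately by the Euler tour trick, at cost linear in the number of subtree nodes, which is $\Otilde(f)\cdot n^{O(1/k)}$ by \Cref{lemma:OracleResidualIntervalNum}.

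\emph{Step 1: preprocessing.} For each vertex $v$, each level $i_{1}$ and each candidate pair $(i_{2},\alpha_{2})$ (the level and scale of the node's cluster), we store a list of up to $\lambda_{\affS}+1$ witness clusters $S$. Each witness is at level $i<i_{2}$ and scale $\alpha\leq\alpha_{2}$, is nearby $A_{i_{1}}$-dense, and has $v\in\Lambda(S)$. We store the same kind of list for each canonical interval $I^{\rarr},I^{\larr}$, with the condition $v\in\Lambda(S)$ replaced by $V_{\tmn}(I)\cap\Lambda(S)\neq\emptyset$. This is the structure of \Cref{lemma:OracleSpaceExpLinks}. Since at most $\lambda_{\affS}$ clusters are affected (\Cref{lemma:NumAffectedCluster}), any stored list of full length contains an unaffected witness. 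Conversely, if the list is shorter, it contains every witness, so scanning it is exact.

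\emph{Step 2: testing denseness.} For a singleton node $v_{\nd}$, we scan the list of $v$ for the given $i_{1}$, with $(i_{2},\alpha_{2})=(\level(v_{\nd}),\scale(v_{\nd}))$. Affectedness of a cluster is tested in $\Otilde(1)$ time by hashing the set of affected clusters. For a subtree node $\tau_{\nd}$, we take its representing residual interval $I$ and scan the lists of $I^{\rarr}$ and $I^{\larr}$; \Cref{ob:OracleUseCanonical} guarantees these cover exactly $V_{\tmn}(I)$. Each test costs $O(\lambda_{\affS})$ per level, hence $O(i_{\max}\lambda_{\affS})=\Otilde(f)\cdot n^{O(1/k)}$ per node. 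Over the $\lambda_{Q}=\Otilde(f^{2})\cdot n^{O(1/k)}$ nodes of \Cref{lemma:NumQNodes}, Step 2 takes $\Otilde(f^{3})\cdot n^{O(1/k)}$ time in total.

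\emph{Step 3: emitting links.} For each dense pair $(q_{\nd},i_{1})$, we enumerate every affected cluster $S_{1}$ at level $i_{1}$ and scale $\alpha_{1}\geq\scale(q_{\nd})$, and add a type-\ref{Item:Type5Link} link whenever $S_{1}$ intersects $S_{2}=\cluster(q_{\nd})$. When $i_{1}=\level(q_{\nd})$, we also enumerate the affected satellite clusters containing $\pivot(S_{2})$ to emit type-\ref{Item:Type6Link} links.

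The main obstacle is the cost of Step 3. Enumerating up to $\lambda_{\affS}$ clusters for each of the $\lambda_{Q}\cdot i_{\max}$ pairs gives $\Otilde(f^{3})\cdot n^{O(1/k)}$ time, which matches the bound, but only if each intersection test takes $\Otilde(1)$ time. To achieve this, I would precompute, for each cluster $S_{2}$ and each $(i_{1},\alpha_{1})$, the list of clusters in ${\cal N}_{i_{1},\alpha_{1}}$ that intersect $S_{2}$, stored in a hash table. A simpler alternative is to test intersection against the at most $\lambda_{\affS}$ affected clusters through a precomputed intersection-indicator dictionary on pairs of clusters. The remaining worry is the space of such a dictionary. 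If it is too large, the fallback is to collect the set of affected clusters intersecting $S_{2}$ during the update phase, by walking over the clusters containing each failed vertex. This produces all $\Otilde(f^{2})\cdot n^{O(1/k)}$ affected-cluster pairs at once. Indexing these pairs by $S_{2}$ makes each node's enumeration output-sensitive, and the total time remains $\Otilde(f^{3})\cdot n^{O(1/k)}$.
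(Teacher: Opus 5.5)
Your argument is essentially the paper's: for each node and level you scan a stored list of $\lambda_{\affS}+1$ candidate witness clusters, which costs $\lambda_Q\cdot i_{\max}\cdot\lambda_{\affS}=\Otilde(f^{3})\cdot n^{O(1/k)}$ in total. For each dense node you then test the at most $\lambda_{\affS}$ affected clusters for intersection with $\cluster(q_{\nd})$, each test taking $\Otilde(1)$ time via a precomputed intersection table. Finally you emit type-\ref{Item:Type6Link} links from the $n^{O(1/k)}$ satellite clusters containing $\pivot(S_2)$.

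Your space worry is settled by the paper's one-line count. Any two intersecting clusters share a vertex, and each vertex lies in $O(i_{\max}\alpha_{\max}\omega)$ clusters, so there are only $\Otilde(n^{1+O(1/k)})$ intersecting pairs.

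You should discard the fallback, because it is not correct. Walking over clusters through failed vertices only finds pairs of clusters that meet at a failed vertex. The intersections the upper bound relies on need not contain any failed vertex: for example, $S_w\cap S_z\supseteq\sigma^{\circ}$ in Case~3 of \Cref{lemma:SingleMappingPathLength}, where $\sigma^{\circ}$ lies inside the unaffected $S_{\sigma}$. So the fallback would lose required type-\ref{Item:Type5Link} links.
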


The running time bound in \Cref{lemma:OracleUpdateTime2} basically comes from the following. 

First, we need to check the adjacent-cluster denseness for $\lambda_{Q}=\Otilde(f^{2})\cdot n^{O(1/k)}$ nodes and $\Otilde(1)$ levels. Each check needs to scan the candidate list, which takes $\lambda_{\affS}+1 = \Otilde(f)\cdot n^{O(1/k)}$ time. Therefore, the total time for checking adjacent-cluster denseness is $\Otilde(f^{3})\cdot n^{O(1/k)}$.

Next, after passing the adjacent-cluster denseness check for a node $q_{\nd}$ (w.r.t. some level), we still need to spend additional time to add the expansion-based links. 

More concretely, for type \ref{Item:Type5Link}, we need to add a link between $q_{\nd}$ and $S_{1,\nd}$ for every affected cluster $S_{1}\in \mathcal{N}_{i_{1},\alpha_{1}}$ that intersects $S_{2}:=\cluster(q_{\nd})$. To do this, we can scan through every affected cluster $S_{1}\in {\cal N}_{i_{1},\alpha_{1}}$ and check if $S_{1}$ intersects $S_{2}$. In fact, in order to check whether $S_{1}$ intersects $S_{2}$ in $\Otilde(1)$ time, we will preprocess an \emph{intersection table} of clusters. The intersection table takes only $\Otilde(n^{1+O(1/k)})$ space since each vertex will contribute at most $n^{O(1/k)}$ pairs of intersecting clusters. Therefore, adding type-\ref{Item:Type5Link} links for a node $q_{\nd}$ needs $\Otilde(f\cdot n^{O(1/k)})$ time (basically the number of affected clusters), and the total time for adding type-\ref{Item:Type5Link} links is $\Otilde(f^{3}\cdot n^{O(1/k)})$.

For type \ref{Item:Type6Link}, we need to add a link between $q_{\nd}$ and the satellite node $\hat{S}_{2,\nd}$ for every affected satellite cluster $\hat{S}_{2}\in \hat{\cal N}_{i_{2},\alpha_{2}}$ containing the vertex $\pivot(S_{2})$. Since the number of satellite clusters containing $\pivot(S_{2})$ is $n^{O(1/k)}$ (i.e., the width of satellite neighborhood covers), the total time for adding type-\ref{Item:Type6Link} is $\Otilde(f^{2}\cdot n^{O(1/k)})$.

\subsubsection{Update Algorithm Part 3}
\label{sect:OracleUpdateTime3}

After computing the query graph $Q$ explicitly, the update algorithm will compute the all-pair distances of $Q$. Recall that $Q$ has $\Otilde(f^{2})\cdot n^{O(1/k)}$ vertices and thus $\Otilde(f^{4})\cdot n^{O(1/k)}$ edges. We can compute exact all-pair distances of $Q$ using textbook algorithm in $\Otilde(f^{6})\cdot n^{O(1/k)}$ time, or compute $3$-approximate all-pair distances in $\tilde{O}(f^{4})\cdot n^{O(1/k)}$ time using results in e.g., \cite{cohen2001all}.

\subsubsection{Proxy Lists}

The proxy lists required by the query algorithm can be constructed directly, which take space $\Otilde(n\cdot f)$ in total.

\subsection{Proof of \Cref{thm:OracleMain}}
\label{sect:ProofOracleMain}

The approximation and correctness is given by \Cref{lemma:UpperBoundSolution} (the upper-bound direction) and \Cref{lemma:OracleSolLowerBound} (the lower-bound direction) with one more multiplicative factor $2k-1$ from the vertex-fault-tolerant spanner (as discussed in the beginning of this section).

The construction time is clearly polynomial, if we use the polynomial-time version of LC vertex-expnader decomposition \Cref{thm:PolyED}. The space complexity is given by \Cref{lemma:OracleSpaceTable} and \Cref{lemma:OracleSpaceExpLinks}. The update time is given by \Cref{lemma:OracleUpdateTime1}, \Cref{lemma:OracleUpdateTime2} and \Cref{sect:OracleUpdateTime3}. The query time is clearly $\Otilde(f)$ from the description of \Cref{algo:QueryAlgorithm}.

\section{Vertex-Failure Approximate Distance Labeling Schemes}

In this section, we present our vertex-failure distance (and connectivity) labeling schemes. 

In \Cref{sect:LabelBuildingBlocks}, we discuss some building blocks that are slightly different from those in the oracle section. In \Cref{sect:LabelShortcutGraph}, we define the shortcut graph $H$. In \Cref{sect:LabelQueryGraph}, we define the query graph $Q$, but this time it is only for analysis.
In \Cref{sect:LabelSparsifiedQ}, we define the sparsified query graph $\Qtilde$, and describe a query algorithm that answers the query by constructing $\Qtilde$, with some interfaces deferred to \Cref{sect:SolveDownDense} and \Cref{sect:Interfaces}.

\begin{theorem}[Vertex-Failure Distance Labeling Schemes]
\label{thm:DistLabelMain}
Let $G$ be a graph with integral, nonnegative, polynomially bounded edge and vertex lengths $\ell$. Given parameters $f\geq 1$ and $k\geq 1$, there is an algorithm that constructs a label $L(v)$ for each vertex $v$ and supports the following.
\begin{itemize}
\item(Query) Given two vertices $x,y$ and a set of failed vertices $F\subseteq V$ of size at most $f$, it computes a value $d$ such that $\dist_{G\setminus F}(x,y)\leq d\leq s\cdot \dist_{G\setminus F}(x,y)$ by reading only the labels $\{L(v)\mid v\in F\cup\{x,y\}\}$.
\end{itemize}
The approximation is $s=O(k^{6})$ for a polynomial-time construction, and $s=O(k^{5})$ for an exponential-time construction. Each label has $f^{3}\cdot n^{1/k}\cdot \log^{O(k)}n$ bits. Moreover, the query time is polynomial in the total size of accessed label, and the algorithm is randomized which can answer each query correctly with probability $1-n^{-10}$.
\end{theorem}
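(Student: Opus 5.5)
We would prove \Cref{thm:DistLabelMain} by converting the oracle of \Cref{thm:OracleMain} into a distributed object. The point to keep in mind is that the query algorithm may only see the labels of $F\cup\{x,y\}$, so the query graph must be rebuilt from that information alone. Three ingredients need new treatment: the expander hierarchy, the Steiner trees, and the neighborhood covers. We keep the first two essentially as in \Cref{sect:OracleBlocks}. For the covers we switch to the tree-structured hierarchy of the overview: at each level $i$ and each level-$(i+1)$ parent cluster $\hat S$, we build covers inside $G_{i,\alpha}[\hat S]$ with slack $s_{\nc}=k$. A vertex then lies in $n^{O(1/k)}\log^{O(k)}n$ clusters, which is where the $\log^{O(k)}n$ factor comes from.

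The label of $v$ will store the following. First, every cluster containing $v$, its Steiner tree with Euler-tour positions, and its pivot and satellite data. Second, for each such cluster, the shortcuts it creates: the sparse bicliques of type~\ref{Item:Type2Link} and the satellite links of type~\ref{Item:Type3Link}. At $\Otilde(\rho_{\heavy}^2)=\Otilde(f^2)n^{O(1/k)}$ links per cluster, this fits in the budget. Storing these lets the decoder enumerate exactly the invalidated shortcuts of all affected clusters, and it replaces the global 2D counting table. Third, the proxy lists of \Cref{algo:QueryAlgorithm} for $x$ and $y$.

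The decoder first identifies the affected clusters and their residual subtrees and intervals, using the stored Euler tours. It then builds a sparsified query graph $\Qtilde$ from the nodes of \Cref{sect:OracleQNode}, with two kinds of links.

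\emph{Shortcut links between surviving points.} These must come from the labels of the points themselves, since an unaffected cluster's label is never read. We handle them with the notions of upward and downward denseness from \Cref{sect:OverviewLabelingSchemes}.
\begin{itemize}
\item A point $v_{\ppt}$ whose unaffected upward neighbours at level $i'$ number fewer than $\rho_{\heavy}$ (it is \emph{upward sparse}) stores those neighbours explicitly.
\item Otherwise $v_{\ppt}$ is linked to the cluster node of its level-$i'$ ancestor. This is justified by \Cref{lemma:LCExpanding} with denseness propagation, exactly as in Case 2 of \Cref{lemma:GeneralSolLowerBound}.
\item A higher point adjacent to some upward-dense lower point is \emph{downward dense}, and it is likewise linked to its own cluster node.
\end{itemize}
Because terminals of affected clusters must appear as nodes, each such point is either a terminal of a stored cluster or lies in a heavy subtree. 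In the latter case it is represented through canonical intervals (\Cref{def:CanonicalIntervals}), and each canonical interval stores aggregated upward-sparse lists. Counting $\rho_{\heavy}$ neighbours per point, $\Otilde(1)$ levels, and $n^{O(1/k)}\log^{O(k)}n$ points per vertex gives the label size $f^{3}n^{1/k}\log^{O(k)}n$. The extra factor $f$ comes from needing up to $\lambda_{\affS}+1$ candidate witnesses per denseness flag, so that an unaffected witness survives, as in \Cref{lemma:OracleSpaceExpLinks}.

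\emph{Correctness.} We follow the oracle analysis. The lower bound copies \Cref{lemma:GeneralSolLowerBound}, since every new expansion link is certified by a heavy/dense set and \Cref{lemma:LCExpanding}. For the upper bound we redo the witness-path decomposition of \Cref{sect:OraclePathDecompose}. The tree structure guarantees that any replacement path only passes through cluster nodes of ancestors of leaf-segment clusters, i.e.\ clusters of recursive segments. The portal-charging argument of \Cref{lemma:UpperBoundLengPqry} then bounds the stretch by $O(s_{\tree}s_{\nc}s_{\ed}i_{\max})$. Without the vertex-fault-tolerant spanner this is $O(k^5)$ or $O(k^6)$, matching the existential and polynomial regimes. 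Randomness enters only through \Cref{lemma:HighProbabilityEvent}.

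The main obstacle is detecting downward denseness locally. A high point $u_{\ppt}$ must decide, from its own label, whether some unaffected shortcut to an upward-dense lower point was dropped. The counting approach of \cite{long2025connectivity} fails here because it relies on spanner-based sparsification, whose hidden paths could enter clusters outside ${\cal S}_{\up}$. We would try to define downward denseness purely via expansion: $u_{\ppt}$ is downward dense at level $i'$ if some lower cluster $S$ with $u\in\Lambda(S)$ is nearby $A_{i'}$-dense, or contains an upward-dense point. We would then precompute, per point, $\lambda_{\affS}+1$ witnessing clusters $S$, so that at query time an unaffected witness is guaranteed to exist. Verifying that this definition is both certifiable (for the lower bound) and complete (it catches every missed shortcut on the witness path) is the step we expect to require the most care.
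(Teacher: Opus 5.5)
Your overall architecture matches the paper: tree-structured covers, cluster-owned shortcuts read from affected clusters, upward/downward denseness, and canonical intervals. However, the step you flag as the main obstacle is a genuine gap, and your tentative definition does not close it.

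\textbf{Downward denseness.} The second branch of your definition (``some lower cluster $S$ with $u\in\Lambda(S)$ contains an upward-dense point'') can only fire for \emph{affected} lower clusters. Upward denseness is a query-time notion, defined for points and residual subtrees of affected clusters with respect to $H\setminus H_{\aff}$. So a precomputed list of $\lambda_{\affS}+1$ candidate clusters, searched for an unaffected one, cannot detect it. Conversely, $u\in\Lambda(\check{S})$ with $\check{S}$ affected gives no bound on the $G\setminus F$ distance from $u$ to the dense neighbour set, so the branch is not sound for the lower bound either.

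What is needed (Criterion~2 in the paper) is an unaffected, non-failed shortcut from $\upt$ to a point of an upward-dense residual subtree of a descendant cluster. The missing idea is how to test its existence from labels. The paper splits the shortcuts going down from $\core(S)$ into $H_S^{\down}$, $H_S^{\spar}$ and $H_S^{\up}$. In $H_S^{\down}$ the lower endpoint has at least $\lambda_{\affS}\lambda_{\nb}+\rho_{\heavy}$ upward neighbours in $H$, so any surviving such shortcut is automatically a witness, and storing enough of them suffices. For the other two classes, existence is decided by the count $\delta_0-\delta_{\fail}-\delta_{\tree}-\delta_{\aff}>0$. There, nearby sparseness of $\check{S}$ or small $H$-neighbourhoods bound what must be stored. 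The term $\delta_{\tree}$ is not mergeable over canonical intervals, so it is read off prefix sums stored at both outer endpoints of a residual interval.

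\textbf{Label-size accounting.} Your accounting has a related flaw. Storing $\rho_{\heavy}$ neighbours per point cannot decide upward sparseness in $H\setminus H_{\aff}$; one needs $\lambda_{\affS}\lambda_{\nb}+\rho_{\heavy}$ of them, which is order $f^{2}$ up to $n^{O(1/k)}\log^{O(k)}n$ factors. For points of light residual subtrees, whose own vertex labels are never read, these certificates must be attached as secondary labels to the $\rho_{\heavy}$-element core lists stored at canonical-interval endpoints. That product, not the $\lambda_{\affS}+1$ witnesses, is where $f^{3}$ comes from.

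\textbf{Parameter error.} With $s_{\nc}=k$ in the tree-structured covers, each level multiplies the number of clusters containing a vertex by $\alpha_{\max}\cdot O(k\,n^{1/k})$. Over $i_{\max}=\Theta(k)$ levels, the only bound you get is $(\alpha_{\max}\cdot O(k\,n^{1/k}))^{i_{\max}}=n^{\Theta(1)}$, not $n^{O(1/k)}\log^{O(k)}n$. The paper takes $s_{\nc}=k^{2}$, giving per-level width $O(k^{2}n^{1/k^{2}})$. The resulting extra factor $k$ in $h_{\diam,\alpha}$, $h_{\ed,\alpha}$ and $h_{\tree,\alpha}$ is exactly why $O(s_{\nc}s_{\ed}s_{\tree}i_{\max})$ becomes $O(k^{5})$ or $O(k^{6})$ without a spanner. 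With your $s_{\nc}=k$ the same formula would read $O(k^{4})$ or $O(k^{5})$, so your stated bounds do not match your parameters.

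\textbf{Satellite clusters.} Retaining satellite clusters and type-\ref{Item:Type3Link} links is not supported in the labeling setting. Those links are owned by unaffected clusters and end at virtual nodes, so neither the affected-cluster labels nor point-based upward lists recover them. The paper drops satellites and instead uses leader sets $B_{i'}(S)$ of size $f+1$. This works because, in the tree-structured hierarchy, every vertex of $\Lambda(S)$ for a maximal unaffected $S$ is a terminal of an affected ancestor, so a non-failed leader is present in the query graph.
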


The approximation of \Cref{thm:DistLabelMain} is proven in \Cref{sect:DistLabelCorrectness}. The label size is shown in \Cref{sect:DistLabelSize}.

\begin{theorem}[Vertex-Failure Connectivity Labeling Schemes]
\label{thm:ConnLabel}
Let $G$ be a graph. Given parameters $f\geq 1$, there is an algorithm that constructs a label $L(v)$ for each vertex $v$ and supports the following.
\begin{itemize}
\item (Query) Given two vertices $x,y$ and a set of failed vertices $F\subseteq V$ of size at most $f$, it determines whether $x$ and $y$ is connected in $G\setminus F$, by reading only the labels $\{L(v)\mid v\in F\cup\{x,y\}\}$.
\end{itemize}
The labels can be constructed in polynomial time. Each label has $\Otilde(f^{2})$ bits. Moreover, the query time is polynomial in the total size of accessed label, and the algorithm is deterministic.
\end{theorem}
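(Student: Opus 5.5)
The plan is to specialize the distance machinery to connectivity, replacing the length-constrained pieces by their length-free, deterministic counterparts. All lengths become irrelevant, so we take a single length scale (effectively $h=\infty$). The LC vertex-expander hierarchy is replaced by an ordinary vertex-expander hierarchy: at each level we compute a deterministic (non-length-constrained) vertex-expander decomposition with sparsity $\phi=1/\polylog(n)$. Here the cut is integral, so the random sampling of \Cref{lemma:Hierarchy} is not needed, and the number of levels is $O(\log n)$ because terminals shrink by a constant factor per level. Neighborhood covers become trivial: each cluster is a connected component of $G_i=G\setminus A_{>i}$, so the width is $\omega=1$ per level. Following \cite{long2022near}, each such component admits a low-degree Steiner tree spanning its terminals with degree $\Delta=\polylog(n)$ and overlap $\gamma=\polylog(n)$, obtained from expander routing exactly as in \Cref{thm:SteinerTree}. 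Since components are nested across levels, we get the tree-structured hierarchy used in \Cref{sect:OverviewLabelingSchemes} for free. The heaviness threshold then becomes $\rho_{\heavy}=\Otilde(f)$, and the LC-expansion property \Cref{lemma:LCExpanding} degenerates to: two $\rho_{\heavy}$-sized terminal sets of the same level-$i$ component remain connected in $G\setminus F$.

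Next we rerun the labeling-scheme construction of this section with connectivity in place of distance: the shortcut graph $H$, the query graph $Q$ (used only for analysis), and the sparsified $\Qtilde$ with upward and downward denseness and expansion-based links. We then check the two correctness directions. The lower bound becomes: every link of $\Qtilde$ joins vertices that are connected in $G\setminus F$. This is the analogue of \Cref{lemma:GeneralSolLowerBound}, with the cases handled by unaffected clusters being connected and by the expansion property. The upper bound becomes: if $x$ and $y$ are connected, then the witness-path decomposition of \Cref{sect:OraclePathDecompose} yields a path in $\Qtilde$. Lengths are dropped, so the portal/charging argument of \Cref{lemma:UpperBoundLengPqry} is unnecessary; only the existence of each replacement link matters.

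For the label size, each vertex lies in $O(\log n)$ clusters and $\polylog$ Steiner trees. Each cluster with sparse surroundings stores $O(\rho_{\heavy}^2)=\Otilde(f^2)$ shortcuts, and we store at each vertex the shortcuts of the clusters containing it. The upward-sparse lists of a point also have size $\Otilde(f)$ per level. The extra factor of $f$ present in the distance scheme came from the downward-denseness detection needing candidate lists of length about the number of affected clusters, $\Otilde(f)\cdot n^{O(1/k)}$, multiplied by per-level data. In the connectivity setting the number of affected clusters is only $O(f\log n)$, and the Steiner-tree overlap is polylogarithmic, so that term collapses to $\Otilde(f)$ and the total becomes $\Otilde(f^2)$. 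Determinism follows because the only randomized ingredient, the sampling that rounds fractional cuts, is avoided. The remaining tools, namely deterministic vertex-expander decomposition, expander routing in \Cref{lemma:ExpanderRouting}, and Euler-tour bookkeeping, are all deterministic.

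The main obstacle I expect is the downward-denseness detection of \Cref{sect:SolveDownDense}, carried over to connectivity without losing a factor of $f$. Concretely, I must verify that deciding whether $\upt$ has some missed shortcut to an upward-dense lower point can be encoded with $\Otilde(f)$ words per (point, level) rather than $\Otilde(f^2)$. My first attempt would exploit that in connectivity all clusters of a level that meet an affected component coincide with that single component. A downward-dense witness can then be stored as $O(f)$ candidate lower clusters, enough to survive $f$ failures. Charging these candidates per point, rather than per pair, gives the claimed $\Otilde(f^2)$ bound.
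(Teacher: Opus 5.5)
\textbf{There is a gap in your label-size argument, exactly where the factor of $f$ is supposed to be saved.}

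Your diagnosis cannot work. You say the extra $f$ came from candidate lists whose length is the number of affected clusters, and that this drops to $O(f\log n)$ in connectivity. But in the distance scheme $\lambda_{\affS}$ is already $f\cdot n^{O(1/k)}\log^{O(k)}n$. Passing to connectivity therefore removes only the $n^{O(1/k)}\log^{O(k)}n$ overhead, not a power of $f$.

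The real $f^{3}$ bottleneck is the single-point interface $\mathsf{ListUpwardNeighbors}(\vpt,\tau,i')$. Every $\CoreNodes$ label lists up to $\rho_{\heavy}=\Otilde(f)$ points, and each point carries an upward-denseness certificate of size $\lambda_{\affS}\cdot\lambda_{\nb}+\rho_{\heavy}$.

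Your claim that a point's upward-sparse list has size $\Otilde(f)$ is also wrong. Since $H_{\aff}$ is unknown at preprocessing, you must store $\Nbr(\vpt,i',H)$ truncated at $\lambda_{\affS}\lambda_{\nb}+\rho_{\heavy}$ entries. Even in connectivity, $\lambda_{\affS}\lambda_{\nb}=\Otilde(f)\cdot\Otilde(f)$, because a single affected cluster can invalidate up to $\rho_{\heavy}$ neighbors of a point.

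So if you rerun the construction with singleton nodes, as your second paragraph says, you get $\Otilde(f^{3})$. Your proposed $O(f)$ downward candidates per point do not touch this term. Moreover, the paper notes that per-point downward and adjacent-cluster checks are unnecessary even in the distance setting.

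\textbf{The missing idea is to eliminate per-point information altogether.} Singleton nodes exist in the distance scheme only because a light residual subtree cannot be charged to its cluster node: it has node length $h_{\tree,\alpha}$ and no heaviness. In connectivity there are no lengths. Every residual subtree, light or heavy, is connected in $G\setminus F$, so the paper contracts each one into a subtree node and creates no singleton nodes.

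With that change, upward neighbors and denseness, adjacent-cluster denseness, and downward denseness are only ever queried for whole residual subtrees. That information is mergeable over canonical intervals, so it is stored once per canonical interval (or per cluster, or per failed point) at size $\Otilde(f^{2})$. Nothing is attached as a secondary label to the $\rho_{\heavy}$-size $\CoreNodes$ lists. Combined with the $\Otilde(1)$ overheads and determinism of the non-length-constrained hierarchy, where your first paragraph is fine, this gives $\Otilde(f^{2})$.

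You would still need to recheck the correctness cases that relied on singleton nodes. For example, Case 2(c) of \Cref{lemma:ReplaceShortcut} must now use downward denseness of a light subtree as a whole. The lower bound must use the fact that each residual subtree is connected in $G\setminus F$.
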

We finally discuss how to achieve \Cref{thm:ConnLabel} in \Cref{sect:ConnLabelSize}.

\subsection{Building Blocks}
\label{sect:LabelBuildingBlocks}

For labeling schemes, we still use the same based length scale $\{h_{\alpha}\mid 1\leq \alpha\leq \alpha_{\max}\}$ and the parameter $\rho_{\heavy} = f\cdot O(\log n/\phi)+1$.

\subsubsection{Expander Hierarchy and Expanding Graphs} we still use the same LC vertex expander hierarchy in \Cref{sect:OracleHierarchy} and the expanding graphs $G_{i,\alpha}$ in \Cref{sect:OracleExpandingGraphs}, except for the following small modifications.
\begin{itemize}
\item We no longer need the second decompositions $\hat{C}$ when building the hierarchy, since the idea of satellite cluster is not needed in this section. Formally, set all moving cuts $\hat{C}_{i,\alpha}$ to be zero function.
\item Below, we will change the length slack of neighborhood covers to $s_{\nc} = k^{2}$. So this change propagates to $h_{\ed,\alpha}$, which now becomes $h_{\ed,\alpha} = \Theta(k^{2}\cdot h_{\alpha})$.
\end{itemize}

\subsubsection{The Tree-Structured Hierarchy of Neighborhood Covers} The major change of the building blocks is about the neighborhood covers. This time we enforce the clusters across levels form a tree structure. Gaining this stronger structure will make the algorithm conceptually simpler and bypass some technical issues, at a cost of a worse tradeoff between $s_{\nc}$ and the final width. The new neighborhood covers are constructed as follows.

Define parameters
\[
s_{\nc} = k^{2},\qquad h_{\cov,\alpha} = 2\cdot h_{\alpha},\qquad h_{\diam,\alpha} = h_{\cov,\alpha}\cdot s_{\nc} = \Theta(k^{2} h_{\alpha})
\]

For each level $i$ from $i_{\max}$ to $0$, 
each level-$(i+1)$ cluster $\hat{S}$ with scale $\hat{\alpha}:=\scale(\hat{S})$ (when $i=i_{\max}$, let $\hat{S} = V$ be the whole vertex set and $\hat{\alpha} = \alpha_{\max}$), and
each length scale $\alpha\in[1,\alphahat]$, 
\begin{itemize}
\item construct a level-$i$ cover ${\cal N}_{i,\alpha,\hat{S}}$ in the subgraph $G_{i,\alpha}[\hat{S}]$ with covering radius $h_{\cov,\alpha}$, diameter $h_{\diam,\alpha}$ and width $\omega' = O(n^{1/s_{\nc}}\cdot s_{\nc})$, %
\end{itemize}
For each level-$i$ cluster, we still define $V_{\tmn}(S) = S\cap A_{i}$.

The following observation bounds the total width of all the neighborhood covers.

\begin{observation}
\label{ob:LabelWidth}
Each vertex can appear in at most
\[
\omega = n^{O(1/k)}\cdot \log^{O(k)} n
\]
clusters. We call this $\omega$ the \emph{total width} of the neighborhood cover hierarchy.
\end{observation}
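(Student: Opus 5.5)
The plan is to count, for a fixed vertex $v$, the clusters containing it level by level, following the tree structure of the hierarchy from the top down. For each level $i$, let $N_i(v)$ be the number of level-$i$ clusters (over all scales and all parents) that contain $v$. At the top level $i=i_{\max}$ there is a single parent $\hat{S}=V$. For each scale $\alpha\in[1,\alpha_{\max}]$, the cover ${\cal N}_{i_{\max},\alpha,V}$ has width $\omega'$. Hence
\[
N_{i_{\max}}(v)\leq \alpha_{\max}\cdot\omega'.
\]

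For the inductive step, I will use two facts. First, a level-$i$ cluster built inside ${\cal N}_{i,\alpha,\hat{S}}$ is a subset of its parent $\hat{S}$, because the cover is constructed in the induced subgraph $G_{i,\alpha}[\hat{S}]$. So a level-$i$ cluster containing $v$ must have a parent that also contains $v$. Second, each parent $\hat{S}$ contributes at most $\alpha_{\max}$ covers, one per scale $\alpha\le\hat{\alpha}$, and each of these covers contains $v$ in at most $\omega'$ clusters. Together these give
\[
N_i(v)\leq N_{i+1}(v)\cdot \alpha_{\max}\cdot \omega'.
\]
Unrolling the recursion yields $N_i(v)\le(\alpha_{\max}\omega')^{i_{\max}-i+1}$. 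Summing over all levels then bounds the total by $O\bigl((\alpha_{\max}\omega')^{i_{\max}+1}\bigr)$, since the geometric sum is dominated by its largest term.

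It remains to plug in the parameters. By \Cref{lemma:Hierarchy}, $i_{\max}\le 15k$, and $\alpha_{\max}=O(\log n)$. With $s_{\nc}=k^2$ we have $\omega'=O(k^2 n^{1/k^2})$. Raising to the power $O(k)$ gives:
\[
(n^{1/k^2})^{O(k)}=n^{O(1/k)},\qquad (k^2\alpha_{\max})^{O(k)}=\log^{O(k)} n,
\]
where the second uses $k\le O(\log n)$, which we may assume without loss of generality. The product is $n^{O(1/k)}\log^{O(k)}n$, as claimed. I expect no real obstacle. The only point that needs care is that the sub-cover structure forces the containment $S\subseteq\hat{S}$, which is what lets the width bounds multiply along ancestor chains instead of being summed over all parents.
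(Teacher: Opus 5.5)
Your proof is correct and follows the paper's argument: each level of the tree-structured hierarchy multiplies the count by at most $\alpha_{\max}\cdot\omega'$, because a child cluster lies inside its parent, and this gives $(\alpha_{\max}\omega')^{O(k)} = n^{O(1/k)}\log^{O(k)}n$. Your remark that the $k^{O(k)}$ factor is absorbed into $\log^{O(k)}n$ only when $k\le O(\log n)$ (which is without loss of generality) states explicitly a step the paper leaves implicit.
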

\begin{proof}
Observe that each level contributes a multiplicative factor of $\alpha_{\max}\cdot \omega'$, so the final width is $(\alpha_{\max}\cdot \omega')^{i_{\max}} = (O(\log n)\cdot n^{1/k^{2}}\cdot k^{2})^{O(k)} = n^{O(1/k)}\cdot \log^{O(k)}n$.
\end{proof}

\paragraph{Ancestor-Descendant Relations among Clusters.} For each cluster $S\in {\cal N}_{i,\alpha,\hat{S}}$, the \emph{parent} of $S$ is $\hat{S}$, and $S$ is a child of $\hat{S}$. This naturally defines the ancestor-descendant relations among clusters. When $\hat{S}$ is an ancestor of $S$, we write $\hat{S}\anceq S$ (use $\anc$ for strict ancestor). Similarly, we write $\check{S}\desceq S$ (or $\check{S}\desc S$) if $\check{S}$ is a descendant (or strict descendant) of $S$.

\paragraph{The New Surrounding Set $\Lambda(S)$ and Leader Sets $B_{i'}(S)$.} For each cluster $S$ with level $i:=\level(S)$ and length scale $\alpha:=\scale(S)$, define
\[
\Lambda(S):=\bigcup_{\hat{S}\anc S}\{v\in V_{\tmn}(\hat{S})\mid v\in S \text{ or there is an edge in $G$ connecting $v$ and $S$ of length $\leq h_{\alpha}$}\}.
\]
and naturally, for each higher level $i'>i$, $\Lambda_{i'}(S) = \Lambda(S)\cap V_{\tmn}(\hat{S}_{i'})$, where $\hat{S}_{i'}$ is the level-$i'$ ancestor of $S$.

We have the same definition of nearby denseness as in \Cref{def:nearbyDensenes}. For each level $i'>i$, we additional define a \emph{leader set} $B_{i'}(S)\subseteq \Lambda_{i'}(S)$ as follows. If $S$ is nearby $A_{i'}$-sparse, then set $B_{i'}(S) = \Lambda_{i'}(S)$. If $S$ is nearby $A_{i'}$-dense, then let $B_{i'}(S)$ be an arbitrary subset of size $|B_{i'}(S)| = \min\{f+1,|\Lambda_{i'}(S)|\}$. 

\begin{remark}
\label{remark:benefits}
The key benefits we obtain from the new hierarchy of neighborhood covers are as follows. 

First, it is compatible with the decomposition process of the witness path in \Cref{sect:OraclePathDecompose}, so we only need to add shortcut-based link between (the terminal sets) of ancestor-descendant cluster pairs. 

Second, by the tree structure, once a cluster  is affected in the query phase, all its ancestor will also be affected. This means, for a ``maximal'' unaffected cluster $S$, all vertices in $\Lambda(S)$ must belong to (the terminal set) of an affected ancestral cluster. This enables us to replace the ``satellite cluster'' in the oracle section with the leader set $B_{i'}(S)$ for representing $\Lambda_{i'}(S)$. This is because $B_{i'}(S)$ will contain a non-failed vertex $v$ in $\Lambda_{i'}(S)$, and $v$ indeed shows up in the query graph (as a singleton node or part of a subtree node). Then $v$ will connect to all vertices in $\Lambda_{i'}(S)$ via expansion-based links. In contrast, in the oracle section, even though we can define such a leader set, but the vertex $v$ may not show up in the query graph.

\end{remark}

\subsubsection{Low-Degree Low-Diameters Steiner Trees}

This part is analogous to \Cref{sect:SteinerTrees}. We construct a collection of Steiner trees $\{T_{S}\mid \text{all clusters }S\}$ in $G$ that satisfies the following.
\begin{itemize}
\item Each tree $T_{S}$ spans the terminals $V_{\tmn}(S)$ of $S$
\item Each tree $T_{S}$ has diameter $h_{\tree,\alpha} = 4h_{\ed,\alpha}\cdot s_{\ed}\cdot s_{\tree} = \Theta(k^{4}\cdot h_{\alpha})$. Note that $h_{\tree,\alpha}$ changed because of the new $s_{\nc}$.
\item Each tree has maximum degree $\Delta = \Otilde(n^{1/s_{\tree}}\cdot \omega/\phi) = n^{O(1/k)}\cdot \log^{O(k)}n$.
\item Each vertex appears in at most $\gamma_{\tree} = \Otilde(n^{1/s_{\tree}}\cdot \omega/\phi) = n^{O(1/k)}\cdot \log^{O(k)}n$ Steiner trees.
\end{itemize}

\subsection{The Shortcut Graph}
\label{sect:LabelShortcutGraph}

In this section, we define the \emph{shortcut graph} $H$ which is purely for analysis. Intuitively, this graph contains the shortcuts behind the shortcut-based links in the query graph. Compared to the oracle section, this time we have a cleaner construction of shortcuts due to the benefits explained in \Cref{remark:benefits}. To distinguish between input graph $G$, shortcut graph $H$, and query graph $Q$, we call vertices and edges in $H$ \emph{points} and \emph{shortcuts}.

\paragraph{Points of $H$.} For each cluster $S$ and each vertex $v\in V_{\tmn}(S)$, create a \emph{point} with respect to $v$ and $S$, denoted by $\point(v,S)$. Define the \emph{core} of $S$ to be the point set \[
\core(S):=\{\point(v,S)\mid v\in V_{\tmn}(S)\}.
\]
The node set of $H$ is then 
\[
V(H):=\bigsqcup_{\text{cluster }S}\core(S).
\]
For each point $\vpt = \point(v,S)$, let $\pi_{\vtx}(\vpt) = v$ denote its corresponding vertex. 

\paragraph{Shortcuts of $H$.} We add the following two types of shortcuts.

\begin{enumerate}
\item\label{Type:Shortcut1} For each cluster $S_{2}$ and each ancestor cluster $S_{1}\succeq S_{2}$, if there is an original edge $e=(u,v)$ with $u\in V_{\tmn}(S_{1})$, $v\in V_{\tmn}(S_{2})$ and length $\ell_{G}(e)\leq h_{\alpha_{2}}$ (for $\alpha_{2} = \scale(S_{2})$), then add a shortcut between $\point(u,S_{1})$ and $\point(v,S_{2})$ with length $\ell_{G}(e)$.

\item\label{Type:Shortcut2} For each triple of clusters $S_{1},S_{2},S$ s.t. $S_{1}\succ S_{2}\succ S$, for each unordered pair of distinct vertices $(u,v)\in B_{i_{1}}(S)\times B_{i_{2}}(S)$ where $i_{1}:=\level(S_{1})$ and $i_{2}:=\level(S_{2})$, add a shortcut between $\point(u,S_{1})$ and $\point(v,S_{2})$ with length $h_{\diam,\alpha} + 2h_{\alpha}$ where $\alpha:=\scale(S)$. This shortcut is \emph{owned} by the cluster $S$.
\end{enumerate}

For convenience, we may also use $H$ to denote the set of all shortcuts. For each shortcut $(\point(u,S_{1}),\point(v,S_{2}))$ connecting two different levels, i.e. $S_{1}\succ S_{2}$, we call $\point(u,S_{1})$ its \emph{upper endpoint} and $\point(v,S_{2})$ its \emph{lower endpoint}.

\begin{definition}[The counting function $\delta$ and length function $\beta$] 
\label{def:CountLengthFunctions}
For a subset of shortcuts $H'\subseteq H$, two sets of points $U_{\ppt}$ and $V_{\ppt}$, we define
\begin{align*}
\delta(U_{\ppt},V_{\ppt},H')&:= \text{the number of shortcuts in $H'$ with one endpoint in $U_{\ppt}$ and the other in $V_{\ppt}$},\\
\beta(U_{\ppt},V_{\ppt},H')&:= \text{the minimum length over shortcuts in $H'$ with}\\ &~~~~~\text{one endpoint in $U_{\ppt}$ and the other in $V_{\ppt}$}.
\end{align*}
Note that if a shortcut $(u_{\ppt},v_{\ppt})\in H'$ has $u_{\ppt},v_{\ppt}\in U_{\ppt}\cap V_{\ppt}$, it will be counted twice in $\delta(U_{\ppt}, V_{\ppt}, H')$.
\end{definition}

\subsection{The Query Graph}
\label{sect:LabelQueryGraph}

Now we define our query graph $Q$ in the query phase with input $(x,y,F)$ (recall that we merge the update and query phase in the labeling scheme model). Note that this query graph $Q$ is also purely for analysis. The graph we will actually construct and use to answer queries is the \emph{sparsified query graph} $\Qtilde$, introduced later in \Cref{sect:LabelSparsifiedQ}.

\subsubsection{Affected Clusters, Residual Subtrees, and Affected Shortcuts}

Let $F_{\ppt}$ collect all points corresponding to failed vertices, called \emph{failed points}.

The same as \Cref{sect:OracleAffected}, the affected clusters are still clusters hit by failed vertices, and we generate residual subtrees for affected clusters in the same way. For each residual subtree $\tau$ of an affected cluster $S$, similarly we define its core point set to be
\[
\core(\tau):=\{\point(v,S)\mid v\in V_{\tmn}(\tau)\}.
\]

For a shortcut $e_{\sc}\in H$, it is affected if it is owned by an affected cluster; otherwise it is unaffected. It is \emph{non-failed} if both of its endpoints are not failed points; otherwise it is failed. Note that these two terms ``affected'' and ``non-failed'' of shortcuts are independent. An affected shortcut could be non-failed, while a failed shortcut could be unaffected. Below are some useful observations about shortcuts.

Let $\lambda_{\affS}$ denote the number of affected clusters. Let $\lambda_{\nb}$ upper bound $\sum_{i'}|B_{i'}(S)|$ for an arbitrary cluster $S$. We clearly have
\[
\lambda_{\affS} \leq f\cdot \omega = f\cdot n^{O(1/k)}\cdot \log^{O(k)}n,\qquad \lambda_{\nb} \leq i_{\max}\cdot \rho_{\heavy} = \Otilde(f)\cdot n^{O(1/k)}.
\]

\begin{observation}
\label{ob:AffectedSCEndpoints}
There are at most $\lambda_{\affS}\cdot \lambda_{\nb}$ distinct points incident to affected shortcuts.
\end{observation}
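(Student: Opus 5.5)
Affected shortcuts are exactly the type-\ref{Type:Shortcut2} shortcuts owned by affected clusters; type-\ref{Type:Shortcut1} shortcuts have no owner and are never affected. So the plan is to bound, for each affected cluster $S$, the number of distinct points that can be endpoints of shortcuts owned by $S$, and then sum over the at most $\lambda_{\affS}$ affected clusters.

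Fix an affected cluster $S$. By the rule for type-\ref{Type:Shortcut2} shortcuts, every shortcut owned by $S$ has the form $(\point(u,S_{1}),\point(v,S_{2}))$ with $S_{1}\succ S_{2}\succ S$, $u\in B_{i_{1}}(S)$ and $v\in B_{i_{2}}(S)$, where $i_{1}=\level(S_{1})$ and $i_{2}=\level(S_{2})$. The key structural point is the tree-structured hierarchy: $S$ has exactly one ancestor $\hat{S}_{i'}$ at each level $i'>\level(S)$. Hence a point $\point(u,S_{1})$ arising here is determined by its level $i_{1}$ together with the vertex $u\in B_{i_{1}}(S)$, since $S_{1}$ must be $\hat{S}_{i_{1}}$. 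The same holds for the lower endpoint.

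It follows that the set of endpoints of shortcuts owned by $S$ is contained in
\[
\bigcup_{i'>\level(S)}\{\point(u,\hat{S}_{i'})\mid u\in B_{i'}(S)\},
\]
which has size at most $\sum_{i'}|B_{i'}(S)|\leq\lambda_{\nb}$. (These points are well defined because $B_{i'}(S)\subseteq\Lambda_{i'}(S)\subseteq V_{\tmn}(\hat{S}_{i'})$.) A union bound over the affected clusters then gives at most $\lambda_{\affS}\cdot\lambda_{\nb}$ distinct points.

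The only step that needs care is the uniqueness of the level-$i'$ ancestor. This holds because each cluster has a single parent by construction of the covers ${\cal N}_{i,\alpha,\hat{S}}$, so no real obstacle is expected.
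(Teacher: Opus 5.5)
Your proof is correct and takes the same approach as the paper. The paper's one-line argument is that every shortcut owned by $S$ has both endpoints in the leader sets $B_{i'}(S)$, followed by a union bound over the $\lambda_{\affS}$ affected clusters. You additionally use the uniqueness of the level-$i'$ ancestor of $S$ to identify each endpoint point with a pair (level, vertex in $B_{i'}(S)$), which is the step the paper leaves implicit when it counts vertices in the leader sets as points.
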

\begin{proof}
Each shortcut owned by a cluster $S$ must have both endpoints in the leader sets $B_{i'}(S)$.
\end{proof}

\begin{observation}
\label{ob:ClusterOwnShortcut}
Each cluster $S$ owns at most $\lambda^{2}_{\nb}$ shortcuts. 
\end{observation}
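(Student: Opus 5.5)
The plan is to read off the bound directly from the definition of type-\ref{Type:Shortcut2} shortcuts, since these are the only shortcuts that have an owner. Fix a cluster $S$ with $i:=\level(S)$. A shortcut owned by $S$ comes from a pair $S_{1}\succ S_{2}\succ S$ together with an unordered pair of distinct vertices $(u,v)\in B_{i_{1}}(S)\times B_{i_{2}}(S)$, and it joins $\point(u,S_{1})$ and $\point(v,S_{2})$.

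\emph{Step 1: the shortcut is determined by its vertex pair.} Because the hierarchy of neighborhood covers is tree-structured, $S$ has exactly one ancestor at each level $i'>i$; call it $\hat{S}_{i'}$. The cluster $S_{1}$ must therefore equal $\hat{S}_{i_{1}}$, and likewise $S_{2}=\hat{S}_{i_{2}}$. Next, $B_{i_{1}}(S)\subseteq\Lambda_{i_{1}}(S)\subseteq V_{\tmn}(\hat{S}_{i_{1}})\subseteq A_{i_{1}}$, and the same holds at level $i_{2}$. Since the terminal sets $A_{i'}$ partition $V$, each vertex $u$ has a unique level, and so it lies in at most one leader set $B_{i'}(S)$. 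Consequently, once the vertex pair $(u,v)$ is fixed, the levels $i_{1},i_{2}$ are fixed, hence the clusters $S_{1},S_{2}$ are fixed, and hence the shortcut itself is fixed.

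\emph{Step 2: count.} By Step 1, the shortcuts owned by $S$ inject into pairs of vertices drawn from $\bigcup_{i'>i}B_{i'}(S)$. This union has size at most $\sum_{i'}|B_{i'}(S)|\le\lambda_{\nb}$, so the number of such pairs is at most $\lambda_{\nb}^{2}$. As a sanity check on $\lambda_{\nb}$: a sparse level contributes at most $\rho_{\heavy}$ vertices, a dense level contributes at most $f+1\le\rho_{\heavy}$, and summing over at most $i_{\max}$ levels gives $i_{\max}\rho_{\heavy}$, which is consistent with the stated bound. This bound does not even need the factor $1/2$ that comes from the pairs being unordered.

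\emph{Main obstacle.} The only delicate point is Step 1, namely that one pair $(u,v)$ cannot generate several shortcuts through different choices of $(S_{1},S_{2})$. This is ruled out by two facts used together: each level has a unique ancestor of $S$ in the tree structure, and the levels partition $V$. The definition of the type-\ref{Type:Shortcut2} shortcut also forces $S_{1}\succ S_{2}$, so $i_{1}>i_{2}$ and the two endpoints are distinct points.
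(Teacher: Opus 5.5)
Your proof is correct and follows the paper's own reasoning, which the paper leaves implicit: every shortcut owned by $S$ is a type-2 shortcut with both endpoints coming from the leader sets $B_{i'}(S)$, so counting vertex pairs from $\bigcup_{i'}B_{i'}(S)$ gives the bound $\lambda_{\nb}^{2}$. Your Step 1 makes explicit the injectivity the paper does not state; uniqueness of ancestors at each level would already suffice on its own, since the count is $\sum_{i_1>i_2}|B_{i_1}(S)|\,|B_{i_2}(S)|\le\bigl(\sum_{i'}|B_{i'}(S)|\bigr)^{2}$.
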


\subsubsection{Nodes in the Query Graph $Q$} Similar to the oracle setting, the query graph $Q$ has three types of nodes.
\begin{itemize}
\item(Singleton Nodes) For each affected cluster $S$ and each vertex $v\in V_{\tmn}(S)$ such that $v$ is inside a light residual subtree of $S$, create a singleton node $\node(v,S)$. The node length is $0$
\item(Subtree Nodes) For each affected cluster $S$ and each heavy residual subtree of $S$, create a subtree node $\node(\tau)$. The node length is $h_{\tree,\alpha} = \Theta(s_{\nc}\cdot s_{\ed}\cdot s_{\tree}\cdot h_{\alpha})$ for $\alpha = \scale(S)$.
\item(Cluster Nodes) For each affected cluster $S$, create a cluster node $\node(S)$. The node length is $h_{\ed,\alpha}\cdot s_{\ed} + 2h_{\diam,\alpha} + 4h_{\alpha} + h_{\tree,\alpha} = \Theta(s_{\nc}\cdot s_{\ed}\cdot s_{\tree}\cdot h_{\alpha})$. Note that the cluster node length has an extra $h_{\tree,\alpha}$ additive factor compared to the oracle section, because of some technical details in the proof of \Cref{lemma:LabelSolLowerBound}.
\end{itemize}
For each node $q_{\nd}$ defined above, set $\cluster(q_{\nd}):=S$, and let $\level(q_{\nd})$ and $\scale(q_{\nd})$ inherit the level and length scale of $S$, respectively. Again singleton nodes and subtree nodes are \emph{concrete nodes}, and cluster nodes are \emph{virtual nodes}.
Below are some notations.

\paragraph{Mapping Between Points and Nodes.} For each point $\vpt = \point(v,S)$, if $v$ is in a light residual subtree of $S$, then $\pi_{\nd}(\vpt) = \{\node(v,S)\}$; if $v$ is in a heavy residual subtree $\tau$ of $S$, then $\pi_{\nd}(\vpt) = \node(\tau)$. 

For each concrete node $q_{\nd}$, if $q_{\nd}=\node(v,S)$ is a singleton node, then $\pi_{\ppt}(q_{\nd}) = \{\point(v,S)\}$; if $q_{\nd} = \node(\tau)$ is a subtree node, then $\pi_{\ppt}(q_{\nd}) = \core(\tau)$.

\paragraph{Mapping Between Vertices and Concrete Nodes.} For each node $q_{\nd}$, if $q_{\nd} = \node(v,S)$ is a singleton node, then $V(q_{\nd}) = \{v\}$; if $q_{\nd} = \node(\tau)$ is a subtree node, then $V(q_{\nd}) = V_{\tmn}(\tau)$.

For each vertex $v$, $V_{\nd}(v) = \{q_{\nd}\mid v\in V(q_{\nd})\}$.

\subsubsection{Links in the Query Graph $Q$} 

\paragraph{Shortcut-Based Links.} The shortcut-based links defined below can be viewed as mappings of the $H$-shortcuts into the query graph $Q$.

\begin{tcolorbox}[
  colback=black!3,
  colframe=black,
  boxrule=1.5pt,
  arc=0pt,
  left=6pt,
  right=6pt,
  top=4pt,
  bottom=4pt
]
\setcounter{Qtypectr}{0}
\refstepcounter{Qtypectr}\label{Type:Qlink1}
\noindent\underline{\textcolor{DarkRed}{Type \arabic{Qtypectr}}.} For each non-failed and unaffected $H$-shortcut $e_{\sc}:=(\upt, \vpt)$, add a link $e_{\lk}$ between nodes $\pi_{\nd}(\upt)$ and $\pi_{\nd}(\vpt)$ with length $\ell(e_{\lk})= \ell(e_{\sc})$.
\end{tcolorbox}

\paragraph{Expansion-Based Links.} We define two types of expansion-based links: type \ref{Type:Qlink2} and type \ref{Type:Qlink3}, which are analogous to type \ref{Item:Type4Link} and type \ref{Item:Type5Link} in the oracle section respectively.

\begin{tcolorbox}[
  colback=black!3,
  colframe=black,
  boxrule=1.5pt,
  arc=0pt,
  left=6pt,
  right=6pt,
  top=4pt,
  bottom=4pt
]
\refstepcounter{Qtypectr}\label{Type:Qlink2}
\noindent\underline{\textcolor{DarkRed}{Type \arabic{Qtypectr}}.} For each affected cluster $S$, and each heavy residual subtree $\tau$ of $S$, add a link between $\node(\tau)$ and $\node(S)$ with length $0$.
\end{tcolorbox}

\medskip

The following definition of adjacent-cluster denseness is almost identical to its counterpart in the oracle section, except that it is now formulated for point sets.

\begin{definition}[Adjacent-Cluster Denseness]
For each cluster $S$, each level $i'\geq i:=\level(S)$, a point $\vpt\in \core(S)$ is \emph{adjacent-cluster $A_{i'}$-dense} if 
\begin{itemize}
\item there exists an unaffected cluster $\check{S}\desc S$ s.t. $\vpt\in \Lambda(\check{S})$ and $\check{S}$ is nearby $A_{i'}$-dense.
\end{itemize}
A residual subtree $\tau$ of $S$ (or equivalently, the point set $\core(\tau)$) is adjacent-cluster $A_{i'}$-dense if there exists a point $\vpt\in\core(\tau)$ that is adjacent-cluster $A_{i'}$-dense.
\end{definition}

\begin{tcolorbox}[
  colback=black!3,
  colframe=black,
  boxrule=1.5pt,
  arc=0pt,
  left=6pt,
  right=6pt,
  top=4pt,
  bottom=4pt
]
\refstepcounter{Qtypectr}\label{Type:Qlink3}
\noindent\underline{\textcolor{DarkRed}{Type \arabic{Qtypectr}}.} For each affected cluster $S$, and level $i'\geq i:=\level(S)$, let $\hat{S}\anceq S$ be the level-$i'$ ancestor.
\begin{itemize}
\item For each singleton node $v_{\nd}$ with $\cluster(v_{\nd}) = S$, if the point $\vpt:=\pi_{\ppt}(v_{\nd})$ is adjacent-cluster $A_{i'}$-dense, then add a link between $v_{\nd}$ and $\node(\hat{S})$ with length $0$.
\item For each heavy residual subtree $\tau$ of $S$, if $\tau$ is adjacent-cluster $A_{i'}$-dense, then add a link between $\node(\tau)$ and $\node(\hat{S})$ with length 0.
\end{itemize}
\end{tcolorbox}

\subsubsection{Approximation of the Query Graph}

The following lemma essentially establishes the upper-bound direction for query answering with the query graph $Q$, assuming that $Q$ is given explicitly. It basically follows the strategy of proving \Cref{lemma:FindingProxy}, but stops at a middle step \Cref{lemma:SingleMappingPathLength}. In fact, after simplifying $P_{\qry}$ as in the proof of \Cref{lemma:FindingProxy}, it will certify that $\dist_{Q}(x_{\nd},y_{\nd})\leq O(s_{\nc}\cdot s_{\ed}\cdot s_{\tree}\cdot i_{\max})\cdot \dist_{G\setminus F}(x,y)$

\begin{lemma}
\label{lemma:FindQPath}
There exists a (not necessarily simple) path $P_{\qry}$ in $Q$ connecting some $x_{\nd}\in V_{\nd}(x)$ and $y_{\nd}\in V_{\nd}(y)$, satisfying the following.
\begin{enumerate}
\item\label{Item:QPathProperty1} Each subpath whose endpoints are concrete and all internal nodes are virtual is in the form either
\begin{enumerate}[label=(\alph*)]
\item\label{Item:SubpathForma} $(w_{\nd},z_{\nd})$, which is a single shortcut-based link, or
\item\label{Item:SubpathFromb} $(w_{\nd},S_{\nd})\circ(S_{\nd},z_{\nd})$, i.e., two expansion-based links sharing a common cluster node $S_{\nd}$. Moreover, either $\cluster(S_{\nd})=\cluster(w_{\nd})\succeq \cluster(z_{\nd})$ or $\cluster(S_{\nd})=\cluster(z_{\nd})\succeq \cluster(w_{\nd})$.
\end{enumerate}
\item\label{Item:QPathProperty2} For each occurrence of a subtree node $\tau_{\nd}$ on $P$, this occurrence serves as an endpoint of a form-\ref{Item:SubpathFromb} subpath.
\item\label{Item:QPathProperty3} The total length of shortcut-based links on $P_{\qry}$ is at most $O(s_{\nc})\cdot \dist_{G\setminus F}(u,v)$.
\item\label{Item:QPathProperty4} Each form-\ref{Item:SubpathFromb} subpath has length at most $O(1)\cdot \ell_{}(S_{\nd})$.
\item\label{Item:QPathProperty5} The total length of cluster nodes in 
\begin{align*}
\{\node(S)\mid S = \cluster(q_{\nd})\text{ for some node $q_{\nd}$ on $P_{\qry}$}\}
\end{align*}
is at most $O(s_{\nc}\cdot s_{\ed}\cdot s_{\tree}\cdot i_{\max})\cdot \dist_{G\setminus F}(x,y)$
\end{enumerate}
\end{lemma}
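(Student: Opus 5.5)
The plan is to replay the witness-path decomposition of \Cref{sect:OraclePathDecompose} on a fixed shortest $x$-$y$ path $P$ in $G\setminus F$, now inside the tree-structured hierarchy. In the labeling setting $x,y$ are treated as failed-like query vertices, so clusters containing $x$ or $y$ count as affected. This lets us start the recursion at the top level with cluster ${\cal N}_{i_{\max},\alpha^\star,V}(x)$ and avoid the proxy machinery. The initial segment is $P$ itself. A recursive segment $\sigma$ is broken at its $A_{i_\sigma}$-vertices, and each child $\check\sigma$ is assigned the child cluster of $S_\sigma$ at level $i_\sigma-1$ and scale $\alpha_{\check\sigma}$ that covers $\check\sigma^\circ$. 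This cluster exists because $\check\sigma^\circ\subseteq S_\sigma$ and, by \Cref{lemma:HighProbabilityEvent}, it has length at most $h_{\cov,\alpha_{\check\sigma}}$ in $G_{i_\sigma-1,\alpha_{\check\sigma}}[S_\sigma]$. Properties \ref{Item:SegProperty1}--\ref{Item:SegProperty4} carry over, with two structural gains. First, every endpoint $w$ of a leaf segment $\sigma$ is a breakpoint of an ancestor segment $\sigma_w$, so $S_{\sigma_w}\succ S_\sigma$, and $w\in\Lambda(S_\sigma)$ in the new ancestor-based sense. Second, the clusters of all recursive segments form a chain-closed family of affected ancestors. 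For each endpoint $w$ we take $w_{\nd}=\pi_{\nd}(\point(w,S_{\sigma_w}))$.

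Each leaf segment $\sigma=P[w,z]$ is then mapped to a short $Q$-path, with $S_{\sigma_w}\succeq S_{\sigma_z}$ without loss of generality. There are four cases.
\begin{enumerate}
\item If $\sigma$ is a single edge, a type-\ref{Type:Shortcut1} shortcut gives a type-\ref{Type:Qlink1} link of length $\leng(\sigma)$.
\item If $S_\sigma$ is nearby $A_{i_w}$- and $A_{i_z}$-sparse, the leader sets are the full $\Lambda$'s. The type-\ref{Type:Shortcut2} shortcut owned by the unaffected $S_\sigma$ is then non-failed and unaffected, giving a type-\ref{Type:Qlink1} link of length $h_{\diam,\alpha_\sigma}+2h_{\alpha_\sigma}=O(s_{\nc})\leng(\sigma)$.
\item If $S_\sigma$ is nearby $A_{i_w}$-dense, then both $\point(w,S_{\sigma_w})$ and $\point(z,S_{\sigma_z})$ are adjacent-cluster $A_{i_w}$-dense, witnessed by $S_\sigma\desc$ both clusters. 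Type-\ref{Type:Qlink3} links then join $w_{\nd}$ and $z_{\nd}$ to $\node(S_{\sigma_w})$. This is because $S_{\sigma_w}$ is exactly the level-$i_w$ ancestor of $S_{\sigma_z}$, using the tree structure. The result is a form-\ref{Item:SubpathFromb} subpath with $\cluster(S_{\nd})=\cluster(w_{\nd})$.
\item If $S_\sigma$ is $A_{i_w}$-sparse but $A_{i_z}$-dense, use the leader set $B_{i_z}(S_\sigma)$, which has $f+1$ elements. It contains a non-failed $v$, and since all ancestors of $S_\sigma$ are affected, $v$ appears as a concrete node of $S_{\sigma_z}$. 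Route $w_{\nd}\to v_{\nd}$ by the type-\ref{Type:Shortcut2} shortcut. Then route $v_{\nd}\to\node(S_{\sigma_z})\to z_{\nd}$ by type-\ref{Type:Qlink3} links, both points being adjacent-cluster $A_{i_z}$-dense via $S_\sigma$.
\end{enumerate}
Whenever a subtree node appears as an endpoint of a type-\ref{Type:Qlink1} link, we enforce Property~\ref{Item:QPathProperty2} by inserting a detour $\tau_{\nd}\to\node(\cluster(\tau))\to\tau_{\nd}$ via type-\ref{Type:Qlink2} links. That detour is itself a form-\ref{Item:SubpathFromb} subpath whose cluster equals the cluster of both endpoints.

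For the bounds, Property~\ref{Item:QPathProperty1} holds by construction, since no link joins two virtual nodes. Property~\ref{Item:QPathProperty3} follows by summing $O(s_{\nc})\leng(\sigma)$ over leaf segments. Property~\ref{Item:QPathProperty4} holds because each form-\ref{Item:SubpathFromb} subpath has zero-length links. Its endpoints have length at most $h_{\tree,\alpha'}$ with $\alpha'\le\scale(S_{\nd})$, which is at most $\ell(S_{\nd})$; the extra $h_{\tree,\alpha}$ in the cluster-node length was designed for exactly this. For Property~\ref{Item:QPathProperty5}, every node on $P_{\qry}$ has cluster equal to $S_{\sigma'}$ for some recursive segment $\sigma'$. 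The sum of $\ell(\node(S_{\sigma'}))=O(s_{\nc}s_{\ed}s_{\tree}h_{\alpha_{\sigma'}})$ over distinct recursive segments is at most $O(s_{\nc}s_{\ed}s_{\tree})\sum 2\leng(\sigma')$. Recursive segments at a fixed level are disjoint subpaths of $P$, so this is bounded by $O(s_{\nc}s_{\ed}s_{\tree}i_{\max})\leng(P)$.

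I expect the main obstacle to be Case 4 together with the $x,y$ endpoints. In Case 4, the non-failed leader $v$ must actually lie in $V_{\tmn}$ of the affected ancestor $S_{\sigma_z}$ and be represented in $Q$, so its point is adjacent-cluster dense and the shortcut $(w,v)$ is unaffected. This relies on \Cref{remark:benefits}, which I would check first. At the endpoints $x$ and $y$, one must ensure the leftmost and rightmost leaf segments still terminate at nodes in $V_{\nd}(x)$ and $V_{\nd}(y)$. I would handle this by treating clusters containing $x,y$ as affected, so that leftmost and rightmost segments are always recursive.
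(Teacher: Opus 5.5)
Your proposal is correct and follows the paper's proof essentially step for step. It uses the same top-down decomposition through the tree-structured covers, with clusters containing $x$ or $y$ declared affected so that every leaf segment is a middle segment, and the same endpoint nodes $\pi_{\nd}(\point(w,S_{\sigma_w}))$. It also has the same four-case mapping, including the non-failed leader from $B_{i_z}(S_\sigma)$ in Case~4, the same type-2 detours through $\node(\cluster(\tau))$, and the same level-by-level charging for the cluster-node bound.
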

\begin{proof}
Fix an arbitrary shortest path $P$ connecting $u$ and $v$ in $G\setminus F$. Following the proof strategy of \Cref{lemma:FindingProxy}, we will first recursively decompose $P$ into leaf segments. 

\medskip

\noindent{\underline{The Decomposition Process}.} The decomposition process is almost the same as that in the proof of \Cref{lemma:FindingProxy}, except for the following changes.
\begin{itemize}
\item We now start from the top level $i_{\max}$ of the hierarchy, i.e. the initial segment has $i_{\sigma}:=i_{\max}$.

\item For each segment $\sigma$, let $\hat{\sigma}$ be its parent and let $\hat{S}:=S_{\hat{\sigma}}$ be the corresponding cluster of $\hat{\sigma}$. Then the corresponding cluster $S_{\sigma}$ of $\sigma$ should be selected from the neighborhood cover ${\cal N}_{i_{\sigma},\alpha_{\sigma},\hat{S}}$, i.e. the cover constructed for the level-$(i_{\sigma}+1)$ cluster $\hat{S}$. In particular, the initial segment should select its corresponding cluster from ${\cal N}_{i_{\max},\alpha_{\sigma},V}$.
\end{itemize}

\medskip

\noindent\underline{Extra Properties of Segments.} In addition to Properties \ref{Item:SegProperty1} to \ref{remark:LeafSegClassify}, we have the following extra properties.
\begin{enumerate}[start=6]
\item For the old Property \ref{remark:LeafSegClassify} about the leaf segments, now additionally, every leaf segment must be a middle segment. To see this, observe that each segment $\sigma$ whose interior $\sigma^{\circ}$ contains $x$ or $y$ must be recursive, since we declare all cluster containing $x$ or $y$ to be affected. Hence, the interior $\sigma^{\circ}$ of a leaf segment $\sigma$ does not contains $x$ and $y$, i.e., $\sigma$ is a middle segment. We note that this extra property is not from the new neighborhood cover hierarchy, it is purely because we do not separate the update and query phases in the labeling scheme model. 
\end{enumerate}

\noindent\underline{Mapping Leaf Segments to $Q$.} Again, we will map each (middle) leaf segment $\sigma$ to a path $P_{\qry,\sigma}$ in $Q$, and then concatenating all $P_{\qry,\sigma}$ to obtain $P_{\qry}$. Now focus on a leaf segment $\sigma = P[w_{},z_{}]$. The first step is still define the endpoint nodes of $w$ and $z$. 

We take $w$ as an example. Let $\sigma_{w}$ be the unique segment that has $w$ as its breakpoint. Let $S_{w}$ be the corresponding cluster $S_{\sigma_{w}}$ of $\sigma_{w}$. Then we define
\begin{align*}
&w_{ \ppt}:=\point(w_{},S_{w}),&&w_{ \nd}:=\pi_{\nd}(w_{ \ppt}).
\end{align*}
Let $i_{w}$ and $\alpha_{w}$ be the level and length scale of $S_{w}$. Let $i_{z},\alpha_{z},S_{z},z_{\ppt},z_{\nd}$ be the analogous notations for the other endpoint $z$. Note that the definition of endpoint nodes is analogous to that in the oracle section, and we have \Cref{ob:LabelShortcutCondition} analogous to \Cref{ob:LevelScaleConsistent}.

\begin{observation}
\label{ob:LabelShortcutCondition}
When $\sigma$ has a corresponding cluster $S_{\sigma}$, we have the following.
\begin{itemize}
\item Both $S_{w}$ and $S_{z}$ are strict ancestors of $S_{\sigma}$, i.e. $S_{w},S_{z}\anc S_{\sigma}$.
\item Either $S_{w}\anceq S_{z}$ or $S_{z}\anceq S_{w}$.
\item $S_{w}$ and $S_{z}$ are affected, while $S_{\sigma}$ is unaffected.
\item Both endpoint vertices $w$ and $z$ are inside $\Lambda(S_{\sigma})$.
\end{itemize}
\end{observation}
\begin{proof}
The proof is essentially the same as \Cref{ob:LevelScaleConsistent}. Here we only emphasize some minor differences. For the first and second items, now we have the stronger ancestor-descendant guarantees among $S_{w},S_{z},S_{\sigma}$ because of the new hierarchy of neighborhood covers. This is also why the fourth item still holds under the new definition of $\Lambda(\cdot)$.
\end{proof}

We now define the path $P_{\qry,\sigma}$ which will connect $w_{\nd}$ and $z_{\nd}$. Again, we consider the same four cases in the proof of \Cref{lemma:SingleMappingPathLength}. The first three can be handled exactly as in the previous proof, so we present their constructions directly (with some simplifications). The key difference lies in Case 4.

\medskip

\noindent\underline{Case 1.} Suppose $P[w_{},z_{}]$ is a single edge $e=(w_{},z_{})$. Define 
\[
P_{\qry,\sigma}:=\underbrace{(w_{ \nd},z_{ \nd})}_{\text{shortcut-based link}}.
\]
This link is based on the shortcut $(w_{ \ppt},z_{ \ppt})$ corresponding to the edge $e$. Note that $S_{w}\anceq S_{z}$, $\ell_{G}(e)\leq \leng_{G}(\sigma)\leq h_{\alpha_{z}}$, so the shortcut exists according to rule \ref{Type:Shortcut1} of shortcuts. %

\medskip

\noindent\underline{Case 2.} Suppose $S_{\sigma}$ exists and it is both nearby $A_{i_{w}}$-sparse and nearby $A_{i_{z}}$-sparse. Again define 
\[
P_{\qry,\sigma}:=\underbrace{(w_{ \nd},z_{ \nd})}_{\text{shortcut-based link}}.
\]
This link is based on the shortcut $(w_{ \ppt},z_{ \ppt})$ owned by the unaffected cluster $S_{\sigma}$. The shortcut exists since $w_{ \nd}\in \Lambda_{i_{w}}(S_{\sigma}) = B_{i_{w}}(S_{\sigma})$ (here we exploit that $S_{\sigma}$ is nearby $A_{i_{w}}$-sparse) and similarly $z_{ \nd}\in \Lambda_{i_{z}}(S_{\sigma})=B_{i_{z}}(S_{\sigma})$. The link exists because the shortcut is non-failed (since $w_{ \ppt},z_{ \ppt}\notin F_{\ppt}$) and unaffected (since $S_{\sigma}$ is unaffected).

\medskip

\noindent\underline{Case 3.} Suppose $S_{\sigma}$ exists and it is nearby $A_{i_{w}}$-dense. Define
\[
P_{\qry,\sigma}:=\underbrace{(w_{ \nd},\node(S_{w}))}_{\text{Type-\ref{Type:Qlink3}}}\circ \underbrace{(\node(S_{w}),z_{ \nd})}_{\text{Type-\ref{Type:Qlink3}}}.
\]
Both type-\ref{Type:Qlink3} expansion-based links exist because the corresponding point sets $\pi_{\ppt}(w_{ \nd})$ and $\pi_{\ppt}(z_{ \nd})$ of nodes $w_{ \nd}$ and $z_{ \nd}$ are adjacent-cluster $A_{i_{w}}$-dense (because of $S_{\sigma}$).

\medskip

\noindent\underline{Case 4}. Suppose $S_{\sigma}$ exists and it is nearby $A_{i_{w}}$-sparse and nearby $A_{i_{z}}$-dense. This time we no longer have the links to satellite nodes, but informally speaking, we can still use a non-failed vertex $b_{i_{z}}\in B_{i_{z}}(S_{\sigma})\setminus F$ as a ``satellite'' which is connected to all $\Lambda_{i_{w}}$-vertices through type-\ref{Type:Qlink3} links.

Formally, take an arbitrary vertex $b_{i_{z}}\in B_{i_{z}}(S_{\sigma})\setminus F$, which exist since $|B_{i_{z}}(S_{\sigma})|=\min\{f+1,|\Lambda_{i_{z}}(S_{\sigma})|\}=\min\{f+1,\rho_{\heavy}\} = f+1$. To avoid clutter, write $b_{z}$ for $b_{i_{z}}$. Define point $b_{z,\ppt}:=\point(b_{z},S_{z})$ and node $b_{z,\nd}:=\pi_{\nd}(b_{z,\ppt})$.
Then construct
\[
P_{\qry,\sigma}:=
\underbrace{(w_{ \nd},b_{z,\nd})}_{\text{Shortcut-based link}}
\circ
\underbrace{(b_{z,\nd},\node(S_{z}))}_{\text{Type-\ref{Type:Qlink3}}}
\circ 
\underbrace{(\node(S_{z}),z_{ \nd})}_{\text{Type-\ref{Type:Qlink3}}}.
\]
The shortcut-based link is based on shortcut $(w_{ \ppt},b_{z,\ppt})$. The shortcut exists since $w_{}\in B_{i_{w}}(S_{\sigma})\setminus F$ and $b_{z,\nd}\in B_{i_{z}}(S_{\sigma})\setminus F$. 
All the type-\ref{Type:Qlink3} expansion-based links exist because the point sets $\pi_{\ppt}(b_{z,\nd}), \pi_{\ppt}(z_{ \nd})$ are adjacent-cluster $A_{i_{z}}$-dense. 

\medskip

\noindent\underline{Constructing $P_{\qry}$.} The path $P_{\qry}$ is the concatenation of all $P_{\qry,\sigma}$ with one small modification. Note that the current $P_{\qry}$ already satisfies the desired \Cref{Item:QPathProperty1}, while \Cref{Item:QPathProperty2} may still not hold (for example, there could be two case-1 subpaths $P_{\qry,\sigma}$ with a sharing subtree node). This can be addressed easily like before: for each (occurrence of) subtree node $\node(\tau)$ on the current $P_{\qry}$, take a detour to the cluster node $\node(\cluster(\tau))$ through type-\ref{Type:Qlink2} expansion-based link. Namely, insert a subpath
\[
\underbrace{(\node(\tau),\node(\cluster(\tau)))}_{\text{Type-\ref{Type:Qlink2}}}
\circ 
\underbrace{(\node(\cluster(\tau)),\node(\tau))}_{\text{Type-\ref{Type:Qlink2}}}
\]

To see \Cref{Item:QPathProperty3}, note that total length of shortcut-based links on $P_{\qry}$ is $O(1)\cdot \sum_{\text{leaf }\sigma} h_{\diam,\alpha_{\sigma}} = O(s_{\nc})\cdot \sum_{\text{leaf }\sigma} h_{\alpha_{\sigma}} = O(s_{\nc})\cdot \leng_{G}(P)$. \Cref{Item:QPathProperty4} is straightforward, where the $O(1)$ factor comes from the ratio of subtree-node length to cluster-node length. For \Cref{Item:QPathProperty5}, note that for a node $q_{\nd}$ on $P_{\qry}$, regardless of whether it is a cluster node, a singleton node, or a subtree node, the cluster $\cluster(q_{\nd})$ must be the corresponding cluster $S_{\sigma}$ for some $\sigma$ in the decomposition. The total segment length is clearly $O(i_{\max})\cdot \leng_{G}(P)$ since there are at most $i_{\max}$ levels. The factor $O(s_{\nc}\cdot s_{\ed}\cdot s_{\tree})$ comes from the ratio of cluster-node length to the ``based length'' (i.e. $h_{\alpha}$).

\end{proof}

\subsection{The Sparsified Query Graph and the Query Algorithm}
\label{sect:LabelSparsifiedQ}

\subsubsection{Store and Access Labels}

Since we will answer queries by running a shortest path algorithm on $\Qtilde$, instead of just defining $\Qtilde$, we should be able to construct it through a labeling scheme. Therefore, we construct $\Qtilde$ along the way using \emph{interfaces} of the labeling scheme, and defer some discussion on the implementations to \Cref{sect:Interfaces}. We note that most of the ideas underlying the lower-level interface implementations already appear in \cite{long2025connectivity}. But for better understanding, we will still sketch the implementation idea right after we define an interface. Therefore, we make several assumptions to simplify our presentation, which are all without loss of generality.

First, we assume that we can store information in ``labels'' of clusters and in the query phase, can access ``labels'' of affected clusters. Formally, this means at the end we will further store the ``label'' of a cluster $S$ at each vertex in $S$ (this is affordable since a vertex belongs to a small number of clusters). Then in the query phase, we can access the label of an affected cluster $S$ since $S\cap (F\cup\{s,t\})$ is non-empty. Similarly, we assume we have ``labels'' of points by storing them in the corresponding vertices, and thus in the query phase, we can access ``labels'' of failed points. 

Regarding residual subtrees, recall that we can ``linearize'' each Steiner tree into a path in the construction phase, so each residual subtree $\tau$ in the query phase is (roughly) an interval $I=\tau$ on the path whose two outer endpoints are failed points. We do not have ``labels'' for each residual subtree, since the number of possible residual subtrees for a cluster could be quadratic over all possible queries. But remember that we can access the failed point label of its two outer endpoints. Moreover, recall the idea of \emph{canonical intervals} in \Cref{def:CanonicalIntervals} in \Cref{sect:OracleConstruction}, since each point creates a small number of canonical intervals, we can assume ``labels'' for canonical intervals. Also, recall that each residual interval $I$ is the union of its two corresponding canonical intervals $I^{\rarr}$ and $I^{\larr}$ created by the two failed outer endpoints of $I$. So once the information we required from $\tau$ is \emph{mergeable} (namely, the information of $I$ immediately follows from the information of $I^{\rarr}$ and $I^{\larr}$). Then we can focus on canonical intervals instead, or equivalently, assume $\tau$ is a canonical interval which has a ``label'' to record its information. 

With the above assumption, we can immediately implement the following interfaces.
\begin{itemize}
\item $\mathsf{ListAffectedClusters}()$, which outputs (the identifiers of) all affected clusters.
\item $\mathsf{ListFailedPoints}()$, which outputs (the identifiers of) all failed points.
\item $\mathsf{ListResidualSubtrees}()$, which outputs (the representation of) all residual subtrees. Each residual subtree is represented by its two outer endpoints.
\item $\mathsf{ListAffectedShortcuts}()$, which outputs all affected shortcuts (including the basic information such as length and ids of endpoints). To implement this, note that a cluster only owns a small number of shortcuts by \Cref{ob:ClusterOwnShortcut}, so we just store all of them into the cluster label, and then access the labels of affected clusters to obtain affected shortcuts.
\end{itemize}

\subsubsection{Nodes in the Sparsified Query Graph $\Qtilde$}

The nodes in the sparsified query graph $\Qtilde$ are exactly those in the query graph $Q$. Cluster nodes can be constructed using interface $\mathsf{ListAffectedClusters}()$ as we discussed above. Using the following interface $\mathsf{ListCorePoints}(\tau)$, we can construct the singleton nodes and subtree nodes.

\begin{tcolorbox}[
  colback=blue!3,
  colframe=black,
  boxrule=1.5pt,
  arc=0pt,
  left=6pt,
  right=6pt,
  top=4pt,
  bottom=4pt
]
\noindent\textcolor{DarkBlue}{\textit{Interface}} $\mathsf{ListCorePoints}(\tau)$. It receives a residual subtree $\tau$. It determines whether $\tau$ is heavy, and if not, lists all points in $\core(\tau)$. 
\end{tcolorbox}
This is possible since a light $\tau$ has $|\core(\tau)|< \rho_{\heavy}$. To implement this, note that the required output is mergeable, so we can assume $\tau$ is a canonical interval and store the heaviness and $\core(\tau)$ in its label. We refer to \Cref{sect:InterfaceHeaviness} for more details.

\subsubsection{Links in the Sparsified Query Graph $\Qtilde$}
\label{sect:QtildeLink}

Next, we define the links in the sparsified query graph $\Qtilde$ and explain how they can be detected through the labeling-scheme interfaces. We refer to \Cref{sect:OverviewLabelingSchemes} for further intuition behind these definitions, particularly the notions of upward and downward denseness.

\begin{definition}[Upward Shortcut-Neighbors]
Consider a subset $H'\subseteq H$ of shortcuts. For each cluster $S$, each higher level $i'\geq i:=\level(S)$, each point $\vpt\in\core(S)$, let $\hat{S}\anceq S$ be the level-$i'$ ancestor of $S$ (note that possibly $i'=i$ and $S'=S$). The \emph{level-$i'$ shortcut-neighbors} of $\vpt$ w.r.t. $H'$ are
\[
\Nbr(\vpt, i', H'):=\{u_{\ppt}\in\core(\hat{S})\mid \text{there exists an $H'$-shortcut connecting $u_{\ppt}$ and $\vpt$}\}
\]
More generally, for a set of points $K\subseteq \core(S)$, the level-$i'$ shortcut-neighbors of $K$ w.r.t. $H'$ are
\[
\Nbr(K,i',H'):=\bigcup_{\vpt\in K}\Nbr(\vpt, i', H').
\]

\end{definition}

\paragraph{Shortcut-Based Links.} 

\begin{definition}[Upward Denseness]
\label{def:UpwardDenseness}
For each affected cluster $S$, each higher level $i'\geq i:=\level(S)$, let $\hat{S}\anceq S$ be the level-$i'$ ancestor of $S$. A point $\vpt\in\core(S)\setminus F_{\ppt}$ is \emph{upward $A_{i'}$-dense} if
\[
|\Nbr(\vpt,i',H\setminus H_{\aff})|\geq \rho_{\heavy},
\]
otherwise it is \emph{upward $A_{i'}$-sparse}. A residual subtree $\tau$ of $S$ (or equivalently, its core point set $\core(\tau)$) is \emph{upward $A_{i'}$-dense} if $\Nbr(\core(\tau),i',H\setminus H_{\aff})\geq \rho_{\heavy}$, and upward $A_{i'}$-sparse otherwise.
\end{definition}

\begin{tcolorbox}[
  colback=black!3,
  colframe=black,
  boxrule=1.5pt,
  arc=0pt,
  left=6pt,
  right=6pt,
  top=4pt,
  bottom=4pt
]
\setcounter{Qtildetypectr}{0}
\refstepcounter{Qtildetypectr}\label{Type:Qlink11}
\noindent\underline{\textcolor{DarkRed}{Type \arabic{Qtildetypectr}}.} For each affected cluster $S$, and level $i'\geq i:=\level(S)$, let $\hat{S}\anceq S$ be the level-$i'$ ancestor.
\begin{itemize}
\item For each singleton node $v_{\nd}$ with $\cluster(v_{\nd}) = S$, if the point $\vpt:=\pi_{\ppt}(v_{\nd})$ is upward $A_{i'}$-sparse, then for each $\upt\in \Nbr(\vpt, i',H\setminus H_{\aff})\setminus F_{\ppt}$, add a link between $v_{\nd}$ and $u_{\nd}:=\pi_{\nd}(\upt)$ with length $\beta(\vpt, \upt, H\setminus H_{\aff})$.
\item For each heavy residual subtree $\tau$ of $S$, if $\tau$ is upward $A_{i'}$-sparse, then for each $\upt\in\Nbr(\core(\tau),i',H\setminus H_{\aff})\setminus F_{\ppt}$, add a link between $\node(\tau)$ and $\pi_{\nd}(\upt)$ with length $\beta(\core(\tau),\upt,H\setminus H_{\aff})$.
\end{itemize}
\end{tcolorbox}
Moreover, we can construct these shortcuts by invoking the following two interfaces.

\begin{tcolorbox}[
  colback=blue!3,
  colframe=black,
  boxrule=1.5pt,
  arc=0pt,
  left=6pt,
  right=6pt,
  top=4pt,
  bottom=4pt
]
\noindent\textcolor{DarkBlue}{\textit{Interface}} $\mathsf{ListUpwardNeighbors}(\tau,i')$. The input is a residual subtree $\tau$, and a level $i'\geq i:=\level(\tau)$. The interface should first determine whether $\tau$ is upward $A_{i'}$-dense, and if not, output all points $\upt$ in $\Nbr(\core(\tau),i',H\setminus H_{\aff})$ explicitly with the length $\beta(\core(\tau),\upt,H\setminus H_{\aff})$. 
\end{tcolorbox}

To implement it, again the required information is mergeable, so assume $\tau$ is a canonical interval. Let us first focus on obtaining $\Nbr(\core(\tau),i',H\setminus H_{\aff})$ when $\tau$ is upward $A_{i'}$-sparse. By definition, $|\Nbr(\core(\tau),i',H\setminus H_{\aff})|\leq \rho_{\heavy}$. Although the size is small, we cannot really store it in the construction phase since $H_{\aff}$ is unknown. Fortunately, we have $|\Nbr(\core(\tau),i',H)|\leq |\Nbr(\core(\tau),i',H\setminus H_{\aff})| + \lambda_{\affS}\cdot \lambda_{\nb}$ by \Cref{ob:AffectedSCEndpoints}, so instead we can store $\Nbr(\core(\tau),i',H)$ and for each point $\upt$ inside, store $\delta(\core(\tau),\upt,H)$. Then, we can determine whether $\upt\in \Nbr(\core(\tau),i',H\setminus H_{\aff})$ by obtaining $\delta(\core(\tau),\upt,H\setminus H_{\aff})$ by subtracting the contributions of affected shortcuts. Note that we know all affected shortcuts from $\mathsf{ListAffectedShortcuts}()$. Finally, to get the length $\beta(\core(\tau),\upt, H\setminus H_{\aff})$, we simply group the shortcuts by length (there are only $\Otilde(1)$ distinct lengths) and apply the above argument to each length class separately.

\begin{tcolorbox}[
  colback=blue!3,
  colframe=black,
  boxrule=1.5pt,
  arc=0pt,
  left=6pt,
  right=6pt,
  top=4pt,
  bottom=4pt
]
\noindent\textcolor{DarkBlue}{\textit{Interface}} $\mathsf{ListUpwardNeighbors}(\vpt,\tau,i')$. The input is a \emph{light} residual subtree $\tau$, a point $\vpt\in\core(\tau)$, and a level $i'\geq i:=\level(\tau)$. The interface should first determine whether $\vpt$ is upward $A_{i'}$-dense, and if not, output all points $\upt$ in $\Nbr(\vpt,i',H\setminus H_{\aff})$ explicitly with the length $\beta(\core(\tau),\upt,H\setminus H_{\aff})$. 
\end{tcolorbox}

Suppose we have a ``label'' for $\vpt$. Then we can proceed exactly as above. But where should we store the ``label'' for $\vpt$? The key is that $\tau$ is guaranteed to be light. Recall the interface $\mathsf{ListCorePoints(\tau)}$, which will output the whole $\core(\tau)$ that includes $\vpt$. Therefore, we can store the ``label'' of $\vpt$ as a \emph{secondary label} of the labels behind $\mathsf{ListCorePoints(\tau)}$. We refer to \Cref{sect:InterfacesUpwardNbr} for more details.

\paragraph{Expansion-Based Links.} Next, we will add four types of expansion-based links to $\Qtilde$. We add to $\Qtilde$ the same type-\ref{Type:Qlink2} and type-\ref{Type:Qlink3} links as in $Q$, which are based on residual subtrees heaviness and adjacent-cluster denseness. In addition, the remaining two types will be defined below, which are based on upward denseness (in \Cref{def:UpwardDenseness}) and \emph{downward denseness} (in \Cref{def:DownwardDenseness}).

Before we move on, let us discuss how to construct type-\ref{Type:Qlink2} and type-\ref{Type:Qlink3} links. Type-\ref{Type:Qlink2} links can be added easily once we construct the node set. For type-\ref{Type:Qlink3} link, we need the following interface for checking adjacent-cluster denseness.

\begin{tcolorbox}[
  colback=blue!3,
  colframe=black,
  boxrule=1.5pt,
  arc=0pt,
  left=6pt,
  right=6pt,
  top=4pt,
  bottom=4pt
]
\noindent\textcolor{DarkBlue}{\textit{Interface}} $\mathsf{CheckAdjCluster}(\tau\text{ or }\vpt, i')$. It receives either a residual subtree $\tau$ or a point $\vpt\in\core(\tau)$ for a light residual subtree, and a level $i'>i:=\level(\tau)$. It should determine whether $\tau$ (or $\vpt$) is adjacent-cluster $A_{i'}$-dense. 
\end{tcolorbox}

To implement the above interface, let $S$ be the cluster of $\tau$. This task is asking whether there is an unaffected $\check{S}\desc S$ such that $\Lambda(\check{S})$ intersects $\core(\tau)$ (or $\vpt$) and $\check{S}$ is nearby $A_{i'}$-dense. All conditions of $\check{S}$ can be easily checked in the construction phase, except that we do not know whether $\check{S}$ is affected. However, the number of affected clusters is bounded by $\lambda_{\affS}$, so simply store $\lambda_{\affS}+1$ candidates. When the input is $\tau$, we can store these candidate clusters in the label of canonical intervals. When the input is $\vpt$, since it is in a light residual subtree, similar to the interface $\mathsf{ListUpwardNeighbors}$ above, we can store the information of $\vpt$ as a \emph{secondary label} of the labels behind $\mathsf{ListCorePoints}(\tau)$. We refer to \Cref{sect:InterfacesUpwardNbr} for more details.

\begin{tcolorbox}[
  colback=black!3,
  colframe=black,
  boxrule=1.5pt,
  arc=0pt,
  left=6pt,
  right=6pt,
  top=4pt,
  bottom=4pt
]
\setcounter{Qtildetypectr}{3}
\refstepcounter{Qtildetypectr}\label{Type:Qlink4}
\noindent\underline{\textcolor{DarkRed}{Type \arabic{Qtildetypectr}}.} For each affected cluster $S$, and level $i'\geq i:=\level(S)$, let $\hat{S}\anceq S$ be the level-$i'$ ancestor.
\begin{itemize}
\item For each singleton node $v_{\nd}$ with $\cluster(v_{\nd}) = S$, if the point $\vpt:=\pi_{\ppt}(v_{\nd})$ is upward $A_{i'}$-dense, add a link between $v_{\nd}$ and $\node(\hat{S})$ with length $0$.
\item For each heavy residual subtree $\tau$ of $S$, if $\tau$ is upward $A_{i'}$-dense, then add a link between $\node(\tau)$ and $\node(\hat{S})$ with length $0$.
\end{itemize}
\end{tcolorbox}
Moreover, we can construct type-\ref{Type:Qlink4} by invoking the above interfaces $\mathsf{ListUpwardNeighbors}(\tau,i')$ and $\mathsf{ListUpwardNeighbors}(\vpt,\tau,i')$.

\begin{definition}[Downward Denseness] 
\label{def:DownwardDenseness}
For each affected cluster $S$, let $i:=\level(S)$. A point $\upt\in\core(S)\setminus F_{\ppt}$ is \emph{downward $A_{i}$-dense} if there exists an \underline{unaffected} and \underline{non-failed} shortcut $(\upt,\vpt)$ where $\vpt$ satisfies one of the following criteria.
\begin{enumerate}
\item\label{Crit:DownDense1} $\vpt$ is in $\core(\check{S})$ for some cluster $\check{S}\desc S$ s.t. $\check{S}$ is unaffected and nearby $A_{i}$-dense.
\item\label{Crit:DownDense2} $\vpt$ is in $\core(\check{\tau})$ for some residual subtree $\check{\tau}$ of some cluster $\check{S}\desc S$ (note that $\check{S}$ must be affected) s.t. $\check{\tau}$ is upward $A_{i}$-dense.
\end{enumerate}
\end{definition}

\begin{tcolorbox}[
  colback=black!3,
  colframe=black,
  boxrule=1.5pt,
  arc=0pt,
  left=6pt,
  right=6pt,
  top=4pt,
  bottom=4pt
]
\setcounter{Qtildetypectr}{4}
\refstepcounter{Qtildetypectr}\label{Type:Qlink5}
\noindent\underline{\textcolor{DarkRed}{Type \arabic{Qtildetypectr}}.} For each affected cluster $S$, let $i:=\level(S)$.
\begin{itemize}
\item For each singleton node $v_{\nd}$ with $\cluster(v_{\nd}) = S$, if the point $\pi_{\ppt}(v_{\nd})$ is downward $A_{i}$-dense, add a link between $v_{\nd}$ and $\node(S)$ with length $0$.
\end{itemize}
\end{tcolorbox}
This type of shortcuts can be constructed by invoking the following interface. However, its implementation is relatively involved which cannot be explained in a few lines. We will discuss it more in \Cref{sect:SolveDownDense}.

\begin{tcolorbox}[
  colback=blue!3,
  colframe=black,
  boxrule=1.5pt,
  arc=0pt,
  left=6pt,
  right=6pt,
  top=4pt,
  bottom=4pt
]
\noindent\textcolor{DarkBlue}{\textit{Interface}}  $\mathsf{DownwardDenseness}(\upt,\tau)$. It receives a point $\upt$, a residual subtree $\tau$ s.t. $\tau$ is a \underline{light} residual subtree and $\upt\in \core(\tau)$. It should determine whether $\upt$ is downward $A_{i}$-dense for level $i:=\level(\tau)$. 
\end{tcolorbox}

\subsubsection{Answering the Query with the Sparsified Query Graph}
\label{sect:DistLabelCorrectness}

After constructing the sparsified query graph $\Qtilde$, we simply return $\dist_{\Qtilde}(V_{\nd}(x),V_{\nd}(y))$ by running a shortest path algorithm. This gives an approximation $O(s_{\nc}\cdot s_{\ed}\cdot s_{\tree}\cdot i_{\max}) = O(k^{5})$ for exponential-time construction. When polynomial-time construction is required, $s_{\ed}$ will increase from $k$ to $k^{2}$, which gives an approximation of $O(k^{6})$. \Cref{lemma:LabelUB} and \Cref{lemma:LabelSolLowerBound} show the upper-bound direction and the lower-bound direction respectively.

\paragraph{The Upper-Bound Direction.} 

\begin{lemma}
\label{lemma:LabelUB}
There exist two nodes $x_{\nd}\in V_{\nd}(x)$ and $y_{\nd}\in V_{\nd}(y)$ s.t.
\[
\dist_{\Qtilde}(x_{\nd},y_{\nd})\leq O(s_{\nc}\cdot s_{\ed}\cdot s_{\tree}\cdot i_{\max})\cdot \dist_{G\setminus F}(x,y).
\]
\end{lemma}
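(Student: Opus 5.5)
Starting from the path $P_{\qry}$ in $Q$ given by \Cref{lemma:FindQPath}, I will convert it into a path $\Ptilde_{\qry}$ in $\Qtilde$ whose length exceeds $\leng_{Q}(P_{\qry})$ by at most a constant factor times the budget of \Cref{lemma:FindQPath}, \Cref{Item:QPathProperty5}. Nodes of $Q$ and $\Qtilde$ coincide, and the type-\ref{Type:Qlink2} and type-\ref{Type:Qlink3} links of $Q$ are also present in $\Qtilde$. Hence every form-\ref{Item:SubpathFromb} subpath of $P_{\qry}$ survives unchanged. Only the shortcut-based links (type \ref{Type:Qlink1} of $Q$) are missing, and these are the only ones I must simulate.

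Fix such a link $(w_{\nd},z_{\nd})$, coming from a non-failed unaffected shortcut $(\upt,\vpt)$ with upper endpoint $\upt\in\core(S_{1})$ and lower endpoint $\vpt\in\core(S_{2})$, where $S_{1}\anceq S_{2}$, and let $i_{1}=\level(S_{1})$. I split into cases.
\begin{itemize}
\item If $\pi_{\nd}(\vpt)$ is upward $A_{i_{1}}$-sparse, then $\upt\in\Nbr(\cdot,i_{1},H\setminus H_{\aff})\setminus F_{\ppt}$. A type-\ref{Type:Qlink11} link is then present with length $\beta\le\ell(e_{\sc})$, and the link is kept at no extra cost.
\item If $\pi_{\nd}(\vpt)$ is upward dense, there is a type-\ref{Type:Qlink4} link $(z_{\nd},\node(S_{1}))$.
\item On the upper side, if the lower point lies in a heavy upward-dense residual subtree, then criterion \ref{Crit:DownDense2} of \Cref{def:DownwardDenseness} applies. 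If it lies in an unaffected nearby-dense cluster, criterion \ref{Crit:DownDense1} applies.
\end{itemize}
For singleton $w_{\nd}$ I expect to obtain a type-\ref{Type:Qlink5} link $(w_{\nd},\node(S_{1}))$. For subtree $w_{\nd}$ I use the type-\ref{Type:Qlink2} link to $\node(S_{1})$ directly. The missing link is thereby replaced by a two-hop path $w_{\nd}\to\node(S_{1})\to z_{\nd}$ through the cluster node of $S_{1}$.

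The main obstacle is the upward-dense case when $\pi_{\nd}(\vpt)$ is a singleton node but the shortcut is only witnessed pointwise. Here I must check that downward denseness of $\upt$ actually triggers. The trouble is that \Cref{def:DownwardDenseness} only counts upward density of residual subtrees (criterion \ref{Crit:DownDense2}), or of unaffected clusters, whereas upward density of a single point sitting in a light subtree is not literally among the criteria. I would try to argue that upward density of $\vpt$ implies upward density of its whole subtree $\core(\check\tau)$, since $\Nbr$ is monotone in $K$. That gives criterion \ref{Crit:DownDense2} for $\check\tau$, even if light. If the definition forbids this, I would fall back on the Case-3/Case-4 paths of \Cref{lemma:FindQPath}, rerouting through $b_{z}$, so that the low endpoint is always handled via adjacent-cluster denseness.

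Finally I bound the length. Each shortcut-based link is either kept at no greater length, or replaced by a detour through $\node(S_{1})$, where $S_{1}=\cluster(w_{\nd})$ is the cluster of a node on $P_{\qry}$. Repeated detours through the same cluster node are shortcut as in \Cref{algo:ConstructPqry}: I jump from the first to the last visit, so each cluster node is paid at most twice. The subtree node lengths are also charged to cluster nodes, since they are dominated by the cluster node length. The total cost is then
\[
O(1)\cdot\leng(\text{shortcut links})+O(1)\sum_{S}\ell(\node(S)),
\]
which by \Cref{Item:QPathProperty3}, \Cref{Item:QPathProperty4} and \Cref{Item:QPathProperty5} is $O(s_{\nc}\cdot s_{\ed}\cdot s_{\tree}\cdot i_{\max})\cdot\dist_{G\setminus F}(x,y)$.
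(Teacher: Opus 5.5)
\textbf{Missing case: a shortcut within a single cluster.}
Your overall route is the paper's. You replace each shortcut-based link of $P_{\qry}$ as in \Cref{lemma:ReplaceShortcut}: a type-\ref{Type:Qlink11} link if the lower side is upward sparse, otherwise a type-\ref{Type:Qlink4} link to $\node(S_{1})$ together with a type-\ref{Type:Qlink2} or type-\ref{Type:Qlink5} link on the upper side. You then shortcut through cluster-node portals as in \Cref{algo:ConstructPqry}. However, one case fails as written.

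The failing case is $S_{1}=S_{2}$. You explicitly allow it by writing $S_{1}\anceq S_{2}$, and it does occur. A single-edge middle leaf segment joining two consecutive breakpoints of the same recursive segment yields a type-\ref{Type:Shortcut1} shortcut whose two endpoints are points of the same cluster.

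In this case, a singleton upper node $w_{\nd}$ cannot obtain a type-\ref{Type:Qlink5} link from the shortcut $(\upt,\vpt)$. \Cref{def:DownwardDenseness} requires the lower endpoint to lie in $\core(\check S)$ for a \emph{strict} descendant $\check S\desc S$. So $\upt$ need not be downward dense, and your two-hop path $w_{\nd}\to\node(S_{1})\to z_{\nd}$ may not exist.

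The fix is the paper's Case 2(a), which uses symmetry. When $S_{1}=S_{2}$, each endpoint is a level-$i_{1}$ shortcut-neighbor of the other. If $\pi_{\ppt}(w_{\nd})$ is upward $A_{i_{1}}$-sparse, the type-\ref{Type:Qlink11} link from the $w$-side exists with length at most $\ell(e_{\sc})$. Otherwise both sides are upward $A_{i_{1}}$-dense, and two type-\ref{Type:Qlink4} links pass through $\node(S_{1})$. With this case added, your length accounting via portals and \Cref{lemma:FindQPath} goes through as you stated it.

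\textbf{Smaller points.}
Your worry about light subtrees is unnecessary. Criterion \ref{Crit:DownDense2} places no heaviness requirement on $\check\tau$. Monotonicity of $\Nbr$ in $K$ is exactly the paper's argument in Case 2(c). The fallback through $b_{z}$ is therefore not needed, and it would not help anyway, since those replacement paths still contain shortcut-based links.

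Your bullet about an ``unaffected nearby-dense cluster'' is vacuous here. The lower endpoint is a point of a node of $Q$, so its cluster is affected.
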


The above \Cref{lemma:LabelUB} is an immediate consequence of \Cref{lemma:FindQPath} and the following \Cref{lemma:ReplaceShortcut}. Roughly speaking, we take the path $P_{\qry}$ in the query graph $Q$ from Lemma \ref{lemma:FindQPath}, and then replace each shortcut-based link on it with a path in $\Qtilde$. Finally, following \Cref{algo:ConstructPqry}, we shortcut this path using cluster nodes as ``portals''. The desired bound immediately follows.

\begin{lemma}
\label{lemma:ReplaceShortcut}
For each shortcut-based link $e_{\lk}:=(u_{\nd},v_{\nd})\in Q$, there exists a path $\Ptilde$ in $\Qtilde$ connecting $u_{\nd}$ and $v_{\nd}$ in one of the following forms.
\begin{enumerate}[label=(\alph*)]
\item $\Ptilde = (u_{\nd},v_{\nd})$ consists of a single link with link length at most $\ell(e_{\lk})$.
\item $\Ptilde = (u_{\nd},\node(S))\circ(\node(S),v_{\nd})$, where $S$ is the higher-level cluster among $\cluster(u_{\nd})$ and $\cluster(v_{\nd})$. Moreover, $\leng(\Ptilde)\leq O(1)\cdot \ell(\node(S))$.
\end{enumerate}
\end{lemma}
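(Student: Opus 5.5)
A shortcut-based link $e_{\lk}=(u_{\nd},v_{\nd})$ of $Q$ comes from a non-failed, unaffected $H$-shortcut $(\upt,\vpt)$ with $u_{\nd}=\pi_{\nd}(\upt)$ and $v_{\nd}=\pi_{\nd}(\vpt)$. I will split into cases according to which endpoint is the lower one and whether that lower endpoint is upward sparse or dense. Let $\vpt\in\core(S_{2})$ be the lower endpoint and $\upt\in\core(S_{1})$ the upper one, so $S_{1}\succeq S_{2}$. Both $S_{1}$ and $S_{2}$ are affected, since the nodes exist. Put $i_{1}:=\level(S_{1})$. Then $\upt\in\Nbr(\vpt,i_{1},H\setminus H_{\aff})$.

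\emph{Case A: the lower side is upward $A_{i_{1}}$-sparse.} If $v_{\nd}$ is a singleton node, this means $\vpt$ is upward sparse. If $v_{\nd}=\node(\tau)$ is a subtree node, it means $\tau$ is upward sparse. In either case the type-\ref{Type:Qlink11} rule adds a link between $v_{\nd}$ and $\pi_{\nd}(\upt)=u_{\nd}$, because $\upt\notin F_{\ppt}$. Its length is a $\beta$-minimum over a set of shortcuts containing $(\upt,\vpt)$, so it is at most $\ell(e_{\lk})$. This gives form (a).

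\emph{Case B: the lower side is upward $A_{i_{1}}$-dense.} Type-\ref{Type:Qlink4} links $v_{\nd}$ to $\node(S_{1})$, the level-$i_{1}$ ancestor of $S_{2}$, with length $0$. Note that $S_{1}$ is the higher-level cluster, as form (b) requires. It remains to link $u_{\nd}$ to $\node(S_{1})$.

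\begin{itemize}
\item If $u_{\nd}$ is a subtree node, the type-\ref{Type:Qlink2} link does this.
\item If $u_{\nd}$ is a singleton node, I will show that $\upt$ is downward $A_{i_{1}}$-dense, so the type-\ref{Type:Qlink5} link supplies the connection. The witness shortcut is $(\upt,\vpt)$ itself, which is unaffected and non-failed, with $\vpt\in\core(\check\tau)$ for the residual subtree $\check\tau$ of $S_{2}$ containing it. When $v_{\nd}=\node(\tau)$, the subtree $\check\tau=\tau$ is upward $A_{i_{1}}$-dense, which is criterion \ref{Crit:DownDense2}. When $v_{\nd}$ is a singleton node, the point $\vpt$ is dense, and since $\Nbr(\core(\check\tau),\cdot)\supseteq\Nbr(\vpt,\cdot)$, its subtree $\check\tau$ is upward dense as well. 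This again gives criterion \ref{Crit:DownDense2}.
\end{itemize}

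One degenerate case needs care: if $S_{1}=S_{2}$, then $\check S\desc S$ fails. I expect $S_{1}=S_{2}$ cannot occur for unaffected shortcuts in a way that matters, because type-\ref{Type:Shortcut2} requires $S_{1}\succ S_{2}$. For type-\ref{Type:Shortcut1} with $S_{1}=S_{2}$, the link $(u_{\nd},v_{\nd})$ lies inside one cluster. There I would handle it by checking upward denseness at level $i'=i$ and using $\node(S)$ from both sides, arguing as in Case B but relaxing $\desc$ to $\desceq$ where needed. Alternatively, I would observe that such same-cluster edges never appear on $P_{\qry}$ from \Cref{lemma:FindQPath}, since there $S_{w}\ne S_{z}$ always holds for leaf segments.

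For the length bound in form (b), all links used have length $0$. The only nodes involved are $u_{\nd}$ and $v_{\nd}$, with length at most $h_{\tree,\alpha}$, and $\node(S_{1})$, whose length already contains an additive $h_{\tree,\alpha_{1}}$. The scale of $v_{\nd}$ is at most $\alpha_{1}$, because $S_{1}\succeq S_{2}$ in the tree-structured hierarchy and scales are nonincreasing down the tree. Hence $\leng(\Ptilde)\le 3\ell(\node(S_{1}))$.

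The main obstacle is making the downward-denseness step match \Cref{def:DownwardDenseness} exactly: the lower endpoint must lie in a strict descendant, and upward density of a singleton point must be converted into upward density of its light residual subtree. The containment $\Nbr(\core(\check\tau),\cdot)\supseteq\Nbr(\vpt,\cdot)$ settles the second point. The degenerate $S_{1}=S_{2}$ case is the place I would double-check first.
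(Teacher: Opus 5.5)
Your main argument matches the paper's proof. If the lower side is upward $A_{i_1}$-sparse, you get a type-\ref{Type:Qlink11} link (form (a)). Otherwise a type-\ref{Type:Qlink4} link reaches $\node(S_1)$, and the upper side reaches it either by a type-\ref{Type:Qlink2} link (subtree node) or by a type-\ref{Type:Qlink5} link via Criterion \ref{Crit:DownDense2} with the residual subtree $\check\tau\ni\vpt$ (singleton node). Your length bound for form (b) is also fine.

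The gap is the degenerate case $S_1=S_2$, and neither of your two suggested treatments works. This case does occur: type-\ref{Type:Shortcut1} shortcuts are defined for $S_1\succeq S_2$, and the lemma covers every shortcut-based link of $Q$.

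\begin{itemize}
\item You cannot ``relax $\desc$ to $\desceq$''. Type-\ref{Type:Qlink5} links exist only for points that are downward dense in the sense of \Cref{def:DownwardDenseness}, which requires a strict descendant. The implementation depends on this too: $H_S$ only contains shortcuts whose lower endpoint lies in $\region(S)$, i.e.\ in strict descendants.
\item Your alternative claim that $S_w\neq S_z$ on all leaf segments is false. If two consecutive breakpoints $r_j,r_{j+1}$ of a recursive segment $\sigma$ are joined by a single edge, the leaf segment $P[r_j,r_{j+1}]$ has $S_w=S_z=S_\sigma$. This is why \Cref{ob:LabelShortcutCondition} only asserts $S_w\anceq S_z$ or $S_z\anceq S_w$.
\end{itemize}

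The missing observation is that when $S_1=S_2=:S$ at level $i$, the two endpoints are symmetric. All the upward notions ($\Nbr$, upward denseness, type-\ref{Type:Qlink11} and type-\ref{Type:Qlink4} links) allow $i'=i$ with $\hat S=S$. So:

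\begin{itemize}
\item If $\pi_{\ppt}(v_{\nd})$ is upward $A_i$-sparse, your Case A argument gives the type-\ref{Type:Qlink11} link, i.e.\ form (a).
\item If $\pi_{\ppt}(u_{\nd})$ is upward $A_i$-sparse, then $\vpt\in\Nbr(\pi_{\ppt}(u_{\nd}),i,H\setminus H_{\aff})\setminus F_{\ppt}$. Running the same argument from $u_{\nd}$'s side gives a type-\ref{Type:Qlink11} link of length at most $\ell(e_{\lk})$, again form (a).
\item If both sides are upward $A_i$-dense, two type-\ref{Type:Qlink4} links to $\node(S)$ give form (b).
\end{itemize}

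Downward denseness is never needed in this case. This is exactly the paper's Case 2(a), and with it added your proof is complete.
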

\begin{proof}
Let $e_{\sc} = (\upt,\vpt)$ be the shortcut that $e_{\lk}$ is based on, which means $\upt\in \pi_{\ppt}(u_{\nd})$, $\vpt\in \pi_{\ppt}(v_{\nd})$ and $\ell(e_{\sc}) = \ell(e_{\lk})$. Note that by the rule of adding shortcut-based link to $Q$, $e_{\sc}$ must be unaffected and non-failed (i.e., $\upt,\vpt\notin F_{\ppt}$) Also, by the definition of $\pi_{\ppt}(\cdot)$, when $u_{\nd}$ is a singleton node, $\pi_{\ppt}(u_{\nd})$ is exactly the point $\upt$, and when $u_{\nd} = \node(\tau)$ is a subtree node, $\pi_{\ppt}(u_{\nd})=\core(\tau)$.

Let $S=\cluster(u_{\nd})$,  $\check{S}=\cluster(v_{\nd})$, $i=\level(S)$ and $\check{i}=\level(\check{S})$. By the definition of shortcuts, one of $S$ and $\check{S}$ must be an ancestor of the other.  Without loss of generality, assume $S\anceq \check{S}$.

\medskip

\noindent{Case 1.} Suppose $\pi_{\ppt}(v_{\nd})$ is upward $A_{i}$-sparse. The existence of $e_{\sc}$ implies that $\upt$ is inside $\Nbr(\pi_{\ppt}(v_{\nd}),i,H\setminus H_{\aff})\setminus F_{\ppt}$, and thus there is a type-\ref{Type:Qlink11} link in $\Qtilde$ between $v_{\nd}$ and $u_{\nd}$ of length
\[
\beta(\pi_{\ppt}(v_{\nd}), \upt, H\setminus H_{\aff})\leq \beta(\vpt,\upt,H\setminus H_{\aff})\leq \ell(e_{\sc})=\ell(e_{\lk}).
\]
We simply take this link as our desired $\Ptilde$.

\medskip

\noindent\underline{Case 2.} From now on, suppose $\pi_{\ppt}(v_{\nd})$ is upward $A_{i}$-dense. Then there is a type-\ref{Type:Qlink4} link in $\Qtilde$ connecting $v_{\nd}$ and $\node(S)$ with link length 0. In what follows, we will argue the ``$u$-side'', and there are three subcases.

\medskip

\noindent\underline{Case 2(a).} Suppose $S=\check{S}$. Because now the ``$u$-side'' and ``$v$-side'' are at the same level, if $\pi_{\ppt}(u_{\nd})$ is upward $A_{i}$-sparse, we also get the desired $\Ptilde$ as in Case 1. 

Otherwise, both $\pi_{\ppt}(u_{\nd})$ and $\pi_{\ppt}(v_{\nd})$ are upward $A_{i}$-dense, and there is a type-\ref{Type:Qlink4} link in $\Qtilde$ connecting $u_{\nd}$ and $\node(S)$ with link length $0$. Thus we obtain a desired
\[
\Ptilde = \underbrace{(u_{\nd},\node(S))}_{\text{Type-\ref{Type:Qlink4}}}\circ\underbrace{(\node(S),v_{\nd})}_{\text{Type-\ref{Type:Qlink4}}}.
\]

\medskip

\noindent\underline{Case 2(b).} Suppose $S\anc\check{S}$ and $u_{\nd} = \node(\tau)$ is a subtree node. By definition, $\tau$ is a heavy residual subtree of $S$ and thus there exists a type-\ref{Type:Qlink2} link in $\Qtilde$ connecting $u_{\nd}$ and $\node(S)$ with link length $0$, and
\[
\Ptilde = \underbrace{(u_{\nd},\node(S))}_{\text{Type-\ref{Type:Qlink2}}}\circ\underbrace{(\node(S),v_{\nd})}_{\text{Type-\ref{Type:Qlink4}}}.
\]
\medskip

\noindent\underline{Case 2(c).} Suppose $S\anc\check{S}$ and $u_{\nd}$ is a singleton node. We will argue that the point $\upt=\pi_{\ppt}(u_{\nd})$ is downward $A_{i}$-dense, and thus there exists a type-\ref{Type:Qlink5} link in $\Qtilde$ connecting $u_{\nd}$ and $\node(S)$ with link length $0$, and we obtain
\[
\Ptilde = \underbrace{(u_{\nd},\node(S))}_{\text{Type-\ref{Type:Qlink5}}}\circ\underbrace{(\node(S),v_{\nd})}_{\text{Type-\ref{Type:Qlink4}}}.
\]

To see that $\upt$ is downward $A_{i}$-dense, note that we have an unaffected and non-failed shortcut $(\upt,\vpt)$. Furthermore, let $\check{\tau}$ be the residual subtree of $\check{S}$ s.t. $\vpt\in\core(\check{\tau})$. Since $\vpt$ is upward $A_{i}$-dense, $\check{\tau}$ is also upward $A_{i}$-dense, and thus the second criteria in \Cref{def:DownwardDenseness} of downward denseness is fulfilled.

\end{proof}

\paragraph{The Lower-Bound Direction.} \ 

\begin{lemma}
\label{lemma:LabelSolLowerBound}
Let $x_{\nd}$ and $y_{\nd}$ be two arbitrary concrete nodes in $\Qtilde$. For each pair of vertices $x\in V(x_{\nd})$ and $y\in V(y_{\nd})$, we have
\[
\dist_{G\setminus F}(x,y)\leq \dist_{\Qtilde}(x_{\nd},y_{\nd}).
\]
\end{lemma}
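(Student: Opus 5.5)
We follow the template of \Cref{lemma:GeneralSolLowerBound}. Take a shortest path in $\Qtilde$ from $x_{\nd}$ to $y_{\nd}$ and split it at concrete nodes into maximal subpaths $P_{\sub}$ whose internal nodes are all virtual. Every concrete node $q_{\nd}$ satisfies $\diam_{G\setminus F}(V(q_{\nd}))\leq \ell(q_{\nd})$. For a singleton node this is trivial. For a subtree node, $\tau$ lies in $T_{S}\setminus F$ and $T_S$ has diameter $h_{\tree,\alpha}$ in $G$. Hence it suffices to show $\dist_{G\setminus F}(V(u_{\nd}),V(v_{\nd}))\leq \leng(P_{\sub})-\ell(u_{\nd})-\ell(v_{\nd})$ for each such $P_{\sub}$. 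Cluster nodes are the only virtual nodes and no link joins two of them. So each $P_{\sub}$ is either a single link, or a pair of expansion-based links (types \ref{Type:Qlink2}, \ref{Type:Qlink3}, \ref{Type:Qlink4}, \ref{Type:Qlink5}) through some $\node(S_1)$.

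\textbf{Single links.} The only direct links in $\Qtilde$ are of type \ref{Type:Qlink11}. Such a link has length $\beta(\pi_{\ppt}(v_{\nd}),\upt,H\setminus H_{\aff})$, so it is witnessed by an unaffected shortcut $(\upt',\upt)$ with $\upt'\in\pi_{\ppt}(v_{\nd})$. We need both endpoints non-failed: $\upt\notin F_{\ppt}$ holds by construction, and $\upt'$ lies in a residual subtree, so it is non-failed as well. There are two kinds of such shortcuts.
\begin{itemize}
\item A rule-\ref{Type:Shortcut1} shortcut is an original edge between two non-failed vertices.
\item A rule-\ref{Type:Shortcut2} shortcut is owned by an unaffected cluster $S$ and joins two vertices of $\Lambda(S)\setminus F$. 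The second part of \Cref{ob:SurroundingDiam} (same argument with the new $\Lambda$) bounds their distance in $G\setminus F$ by $h_{\diam,\alpha}+2h_\alpha$, which is the shortcut length.
\end{itemize}

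\textbf{Two-link subpaths.} Let $i_1=\level(S_1)$ and $\alpha_1=\scale(S_1)$. As in Case 2 of \Cref{lemma:GeneralSolLowerBound}, for each side we exhibit a set $V_u\subseteq A_{i_1}$ with three properties: (a) $|V_u|\geq\rho_{\heavy}$; (b) every vertex of $V_u$ is within $\theta$ of $S_1$ in $G_{i_1,\alpha_1}$; (c) every vertex of $V_u\setminus F$ is within $\eta$ of $V(u_{\nd})$ in $G\setminus F$. Then \Cref{lemma:LCExpanding} finishes the argument, provided $2\theta+h_{\diam,\alpha_1}\leq h_{\ed,\alpha_1}$ and $2\eta+h_{\ed,\alpha_1}s_{\ed}\leq\ell(\node(S_1))$. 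The extra $h_{\tree}$ term in the cluster-node length pays for a larger $\eta$. The choice of $V_u$ depends on the link type.
\begin{itemize}
\item Type \ref{Type:Qlink2}: take $V_u=V_{\tmn}(\tau)$, with $\theta=\eta=0$.
\item Type \ref{Type:Qlink3}: take $V_u=\Lambda_{i_1}(\check S)$ for the witnessing unaffected nearby-dense $\check S$. This works exactly as in the oracle proof, and uses $\check S\desc S\desceq S_1$ together with the scale-changing observation.
\item Type \ref{Type:Qlink4} (upward dense): take $V_u$ to be the vertices of $\Nbr(\pi_{\ppt}(u_{\nd}),i_1,H\setminus H_{\aff})$. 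These lie in $V_{\tmn}(S_1)\subseteq A_{i_1}$, so (b) holds with $\theta=0$. For (c), each non-failed neighbor is joined to some point of $\pi_{\ppt}(u_{\nd})$ by an unaffected shortcut, and the single-link argument bounds its $G\setminus F$ distance by $O(h_{\diam,\alpha_1})$.
\item Type \ref{Type:Qlink5} (downward dense): take the unaffected non-failed shortcut $(\upt,\vpt)$ from \Cref{def:DownwardDenseness}. Under criterion \ref{Crit:DownDense1}, set $V_u=\Lambda_{i_1}(\check S)$, with $v\in\Lambda(\check S)$ or adjacent to $\check S$. Under criterion \ref{Crit:DownDense2}, set $V_u$ to the upward neighbors of $\check\tau$. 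To get from $u$ to any such neighbor we go $u\to v$ by the shortcut, then across $\check\tau$ at cost $h_{\tree,\check\alpha}\leq h_{\tree,\alpha_1}$, then along another shortcut.
\end{itemize}

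The main obstacle is the type-\ref{Type:Qlink5}, criterion-\ref{Crit:DownDense2} case. There we must confirm that the detour through $\check\tau$ plus two shortcut lengths, all at scale at most $\alpha_1$, fits into $\eta\leq h_{\tree,\alpha_1}+2h_{\diam,\alpha_1}+O(h_{\alpha_1})$. We must also confirm that the shortcut lengths at lower clusters are dominated by scale $\alpha_1$ via \Cref{ob:ScaleChange}. I would first check that the tree-structured hierarchy forces $\scale$ to be monotone along ancestors, so that these bounds hold.
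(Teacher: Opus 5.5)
Your proposal is correct and follows the paper's proof essentially step for step. It uses the same split into maximal internally-virtual subpaths, the same single-shortcut argument for type-\ref{Type:Qlink11} links, and the same (a)/(b)/(c) reduction to \Cref{lemma:LCExpanding} with the same choice of $V_u$ for each expansion-based link type. The paper simply takes $\theta=0$ in every case, since each such $V_u$ already lies in $V_{\tmn}(S_1)$; for instance $\Lambda_{i_1}(\check S)\subseteq V_{\tmn}(S_1)$ under the new definition of $\Lambda$.

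On the obstacle you flag, scales are monotone along ancestors because each child cover is built only for $\alpha\le\hat\alpha$. The criterion-\ref{Crit:DownDense2} detour gives $\eta=2(h_{\diam,\alpha_1}+2h_{\alpha_1})+h_{\tree,\check\alpha}$ with $\check\alpha\le\alpha_1$, so $\eta_u+\eta_v$ can exceed the extra $2h_{\diam,\alpha_1}+4h_{\alpha_1}+h_{\tree,\alpha_1}$ in the stated cluster-node length. The paper's own accounting has the same slip, and enlarging that term to $4h_{\diam,\alpha}+8h_{\alpha}+2h_{\tree,\alpha}$ closes it at only a constant-factor cost in the approximation.
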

\begin{proof}
Let $\Ptilde$ be a shortest path in $\Qtilde$ between $x_{\nd}$ and $y_{\nd}$. We decompose $\Ptilde$ into maximal internally virtual subpaths. Consider such a subpath $\Ptilde_{\sub} = \Ptilde[u_{\nd},v_{\nd}]$. It suffices to show
\[
\dist_{G\setminus F}(V(u_{\nd}),V(v_{\nd}))\leq \leng(\Ptilde_{\sub}) - \ell(u_{\nd}) - \ell(v_{\nd}),
\]
since for each concrete node $q_{\nd}$, we have $\diam_{G\setminus F}(V(q_{\nd}))\leq \ell(q_{\nd})$. Furthermore, observe that $\Ptilde_{\sub}$ is either (a) a single shortcut-based link $(u_{\nd},v_{\nd})$ or (b) two expansion-based links $(u_{\nd},S_{\nd})\circ(S_{\nd},v_{\nd})$ sharing a common cluster node $S_{\nd}$.

\medskip

\noindent\underline{Case 1.} Suppose $\Ptilde_{\sub} = (u_{\nd},v_{\nd})$ is in form (a). It is equivalent to show that the length of this shortcut is at least $\dist_{G\setminus F}(V(u_{\nd}),V(v_{\nd}))$. Observe that every shortcut-based link in $\Qtilde$ also appears in $Q$, and every shortcut-based link in $Q$ is based on an unaffected and non-failed shortcut in $H$. Analogous to Case 1 in \Cref{lemma:GeneralSolLowerBound}, any unaffected and non-failed $H$-shortcut $(u_{\ppt},v_{\ppt})$ will not underestimate $\dist_{G\setminus F}(u,v)$, where $u,v$ are the vertices corresponding to points $u_{\ppt},v_{\ppt}$, regardless it is type-\ref{Type:Shortcut1} (from an original edge) or type-\ref{Type:Shortcut2} (from a cluster, which is unaffected since the shortcut is unaffected).

\medskip

\noindent\underline{Case 2.} Suppose $\Ptilde_{\sub} = (u_{\nd},S_{\nd})\circ(S_{\nd},u_{\nd})$ is in form (b). Let $S=\cluster(S_{\nd}), i=\level(S_{\nd})$ and $\alpha = \scale(S_{\nd})$. It suffices to show that $\ell(S_{\nd})\geq \dist_{G\setminus F}(V(u_{\nd}),V(v_{\nd}))$. These two links are expansion-based links, which have types \ref{Type:Qlink2}, \ref{Type:Qlink3}, \ref{Type:Qlink4}, or \ref{Type:Qlink5}.

For better understanding and to avoid tedious case analysis, let us repeat our strategy of exploiting the LC-expanding property \Cref{lemma:LCExpanding} w.r.t. $A_{i}$. Take the expansion-based link $(u_{\nd},S_{\nd})$ as an example. Regardless it has which type, at the end we want to find a vertex set $V_{u}$ such that
\begin{enumerate}[label=(\alph*)]
\item $V_{u}\subseteq A_{i}$ and $|V_{u}|\geq \rho_{\heavy}$.
\item each vertex $u'\in V_{u}$ is close to $S$ \underline{in $G_{i,\alpha}$}, or formally, $\dist_{G_{i,\alpha}}(u',S)\leq \theta$ for some $\theta$ such that $2\theta+h_{\diam,\alpha}\leq h_{\ed,\alpha}$.
\item each vertex $u'\in V_{u}\setminus F$ is close to $V(u_{\nd})$ \underline{in $G\setminus F$}, or formally, $\dist_{G\setminus F}(u',V(u_{\nd}))\leq \eta$ for some $\eta$ such that $2\eta + h_{\ed,\alpha}\cdot s_{\ed}\leq \ell(S_{\nd})$.
\end{enumerate}
We first argue why finding such a set $V_{u}$ (and $V_{v}$ for the ``$v$-side'') will upper bound $\dist_{G\setminus F}(V(u_{\nd}),V(v_{\nd}))$. First, we have 
\[
\diam_{G_{i,\alpha}}(V_{u},V_{v})\leq 2\theta + \diam_{G_{i,\alpha}}(S)\leq 2\theta + h_{\diam,\alpha}\leq h_{\ed,\alpha}.
\]
Therefore, 
we can apply \Cref{lemma:LCExpanding} on graph $G_{i,\alpha}$, node-weighting $A_{i}$, and the two sets $V_{u},V_{v}$, which gives
$\dist_{G\setminus F}(V_{u}\setminus F,V_{v}\setminus F)\leq h_{\ed,\alpha}\cdot s_{\ed}$.
Finally, using (c), we have 
\[
\dist_{G\setminus F}(V(u_{\nd}),V(v_{\nd}))\leq 2\eta + h_{\ed,\alpha}\cdot s_{\ed}\leq \ell(S_{\nd}).
\]
and the proof completes. 

Therefore, it remains to determine $V_{u},\theta,\eta$ for $(u_{\nd},S_{\nd})$ in different types. In fact, for all types, we will pick a $V_{u}\subseteq S$, so we can safely set $\theta = 0$.

\medskip

\noindent{\underline{$(u_{\nd},S_{\nd})$ is type-\ref{Type:Qlink2}.}} Let $V_{u}:= V(u_{\nd})$ and $\eta = 0$. In this case, $u_{\nd}$ must be a subtree node $\tau_{\nd}$ where $\tau$ is a heavy residual subtree of $S$. Thus, (a), (b) and (c) clearly follows.

\medskip

\noindent{\underline{$(u_{\nd},S_{\nd})$ is type-\ref{Type:Qlink3}.}} Let $V_{u}:=\Lambda_{i}(\check{S})$, where $\check{S}\desc S_{u}:=\cluster(u_{\nd})\desceq S$ is the unaffected cluster witnessing the adjacent-cluster $A_{i}$-denseness of the point set $\pi_{\ppt}(u_{\nd})$. Let $\eta = h_{\diam,\check{\alpha}} + 2h_{\check{\alpha}}$, where $\check{\alpha} = \scale(\check{S})$. We have (b) since $\Lambda_{i}(\check{S})\subseteq S$. Such a $\eta$ fulfills (c) since this is the bound of $\diam_{G\setminus F}(\Lambda(\check{S}))$ (note that $V(u_{\nd})$ must intersect $\Lambda(\check{S})$).

\medskip

\noindent{\underline{$(u_{\nd},S_{\nd})$ is type-\ref{Type:Qlink4}.}} We first define $V_{u}$. Let $U_{\ppt}$ be the point set $\pi_{\ppt}(u_{\nd})$. let $U'_{\ppt}$ be the point set $\Nbr(U_{\ppt},i,H\setminus H_{\aff})$, which satisfies $U'_{\ppt}\subseteq \core(S)$ and $|U'_{\ppt}|\geq \rho_{\heavy}$ since $U_{\ppt}$ is upward $A_{i}$-dense. Finally, define $V_{u}$ to be $\pi_{\vtx}(U'_{\ppt})$, which satisfies (a). Again, we have $V_{u}\subseteq S$ which fulfills (b). Lastly, set $\theta = h_{\diam,\alpha} + 2h_{\alpha}$, since each point in $U'_{\ppt}$ is adjacent to a point in $U_{\ppt}$ through an unaffected shortcut of length at most $h_{\diam,\alpha} + 2h_{\alpha}$, which never underestimates the $G\setminus F$-distance.

\medskip

\noindent{\underline{$(u_{\nd},S_{\nd})$ is type-\ref{Type:Qlink5}.}} Note that $u_{\nd}$ is a singleton node in this case, so let $u_{\ppt}$ be the point $\pi_{\ppt}(u_{\nd})$, and let $u$ be the vertex $\pi_{\vtx}(u_{\ppt})$. To define $V_{u}$, there are two cases depending on the object witnessing the downward $A_{i}$-denseness of $u_{\ppt}$.
\begin{itemize}
\item Suppose the witness object is a cluster $\check{S}\desc S_{u}:=\cluster(u_{\ppt})\desceq S$, which is unaffected and nearby $A_{i}$-dense. Then let $V_{u}:=\Lambda_{i}(\check{S})$. We can set $\eta = (h_{\diam,\alpha} + 2h_{\alpha}) + (h_{\diam,\check{\alpha}} + 2h_{\check{\alpha}})$, where $\check{\alpha}=\scale(\check{S})$, since each vertex in $V_{u}\setminus F$ is connected to $u$ through the unaffected cluster $\check{S}$ plus an unaffected, non-failed shortcut connecting $\check{S}$ and $u$. 
\item Suppose the witness object is a upward $A_{i}$-dense residual subtree $\check{\tau}$ of some affected cluster $\check{S}\desc S_{u}$. Then let $V_{u}:=\Nbr(\core(\check{\tau}),i,H\setminus H_{\aff})$. We can set $\eta = (h_{\diam,\alpha} + 2h_{\alpha}) + h_{\tree,\check{\alpha}} + (h_{\diam,\alpha} + 2h_{\alpha})$, because each vertex $u'$ in $V_{u}\setminus F$ can go to $u$ (avoiding failures) by first using a shortcut from $u'$ to $V_{\tmn}(\check{\tau})$, walking inside $\check{\tau}$, and lastly using a shortcut from $V_{\tmn}(\check{\tau})$ to $u$. Note that the term $h_{\tree,\check{\alpha}}$ comes from walking inside $\check{\tau}$ as a residual subtree has diameter $h_{\tree,\check{\alpha}}$.
\end{itemize}

\end{proof}

\subsection{Solving Downward Denseness}
\label{sect:SolveDownDense}

In this section, we will discuss our approach to implementing the interface $\mathsf{DownwardDenseness}(\upt,\tau)$ in the query phase. Recall that the input satisfies that $\tau$ is a \underline{light} residual subtree of $S$, and $\upt\in\core(\tau)$. Let $i:=\level(S)$. Our goal is to detect an arbitrary unaffected and non-failed shortcut that satisfies Criterion \ref{Crit:DownDense1} or Criterion \ref{Crit:DownDense2} in \Cref{def:DownwardDenseness}. We call such a shortcut a \emph{witness shortcut}.

To this end, for the cluster $S$, define
\[
\region(S):=\bigcup_{\text{cluster $S'\prec S$}}\core(S)
\]
to be the union of the cores of all strict descendants of $S$. Let $H_{S}$ be the set consisting of all the shortcuts with their higher endpoints in $\core(S)$ and lower endpoints in $\region(S)$. By definition, $e^{\star}_{\sc}$ must belong to $H_{S}$.

Next, we will partition $H_{S}$ into three parts: the \emph{downward shortcuts} $H_{S}^{\down}$, \emph{upward shortcuts} $H_{S}^{\up}$ and \emph{sparse shortcuts} $H_{S}^{\spar}$. Then try to detect the witness shortcut from each of them separately.

\paragraph{Partition of $H_{S}$.} Consider a shortcut $e_{\sc}=(u_{\ppt},\vpt)\in H_{S}$ where $\vpt$ is its lower endpoint. Thus $v_{\ppt}\in\core(\check{S})$ for some cluster $\check{S}\desc S$. 
\begin{itemize}
\item(Sparse Shortcuts) Suppose $\check{S}$ is nearby $A_{i}$-sparse. We classify $e_{\sc}$ as a sparse shortcut and put it into $H_{S}^{\spar}$.
\item(Downward Shortcuts) Suppose $\check{S}$ is nearby $A_{i}$-dense, and
\[
|\Nbr(\vpt, i, H)|\geq \lambda_{\affS}\cdot \lambda_{\nb}+ \rho_{\heavy}.
\]
We classify $e_{\sc}$ as a downward shortcut and put it into $H_{S}^{\down}$.
\item (Upward Shortcuts) Suppose $\check{S}$ is nearby $A_{i}$-dense and $|\Nbr(\vpt,i,H)|<\lambda_{\affS}\cdot \lambda_{\nb} + \rho_{\heavy}$. We classify $e_{\sc}$ as an upward shortcut and put it into $H_{S}^{\up}$
\end{itemize}

We emphasize that all these shortcut sets $H_{S},H^{\spar}_{S},H^{\down}_{S},H^{\up}_{S}$ are defined independently of the query $(x,y,F)$, so we can use them to construct labels in the construction phase, as we will see. 

\paragraph{Detect Witness Shortcuts in $H_{S}^{\down}$.} By the lemma below, we can simply invoke the interface $\mathsf{FindDownwardShortcuts}(\upt, S)$ below.

\begin{lemma}
In the query phase, any unaffected shortcut $e_{\sc}=(\upt,\vpt)\in H_{S}^{\down}$ with $\vpt\notin F_{\ppt}$ is a witness shortcut.
\end{lemma}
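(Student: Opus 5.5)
The plan is to show that such a shortcut satisfies Criterion~\ref{Crit:DownDense2} of \Cref{def:DownwardDenseness}, with a fallback to Criterion~\ref{Crit:DownDense1}. Fix an unaffected shortcut $e_{\sc}=(\upt,\vpt)\in H_{S}^{\down}$ with $\vpt\notin F_{\ppt}$, and let $\check{S}\desc S$ be the cluster with $\vpt\in\core(\check{S})$. First I check that $e_{\sc}$ is non-failed. We have $\vpt\notin F_{\ppt}$ by assumption, and $\upt\notin F_{\ppt}$ because the interface only receives $\upt\in\core(\tau)$ for a residual subtree $\tau$. So $e_{\sc}$ is an unaffected, non-failed shortcut incident to $\upt$, and it remains only to show that $\vpt$ meets one of the two criteria.

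Next, I split on whether $\check{S}$ is affected. If $\check{S}$ is unaffected, then, since $e_{\sc}\in H_{S}^{\down}$ forces $\check{S}$ to be nearby $A_{i}$-dense, Criterion~\ref{Crit:DownDense1} holds at once. If $\check{S}$ is affected, then $\vpt$ is a non-failed terminal point of $\check{S}$, so it lies in $\core(\check{\tau})$ for some residual subtree $\check{\tau}$ of $\check{S}$. In that case I want to show $\check{\tau}$ is upward $A_{i}$-dense, which gives Criterion~\ref{Crit:DownDense2}.

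For upward denseness I use the counting argument behind the $\mathsf{ListUpwardNeighbors}$ interface. By \Cref{ob:AffectedSCEndpoints}, at most $\lambda_{\affS}\cdot\lambda_{\nb}$ distinct points are endpoints of affected shortcuts. Removing $H_{\aff}$ from $H$ can therefore delete a neighbor $\upt'$ from $\Nbr(\vpt,i,H)$ only if $\upt'$ is one of those points. Hence
\[
|\Nbr(\vpt,i,H\setminus H_{\aff})|\geq |\Nbr(\vpt,i,H)|-\lambda_{\affS}\cdot\lambda_{\nb}\geq \rho_{\heavy},
\]
where the last step uses the defining threshold of $H_{S}^{\down}$. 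Since $\Nbr(\core(\check{\tau}),i,H\setminus H_{\aff})\supseteq\Nbr(\vpt,i,H\setminus H_{\aff})$, the subtree $\check{\tau}$ is upward $A_{i}$-dense, as required.

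The step that needs the most care is matching the levels in the definitions. Upward denseness of $\check{\tau}$ is taken with respect to the level-$i$ ancestor of $\check{S}$, and I need that ancestor to be exactly $S$, so that the neighbors counted in $\Nbr(\vpt,i,\cdot)$ lie in $\core(S)$. This follows from $\check{S}\desc S$ and $\level(S)=i$ together with the tree structure of the cover hierarchy. I will also check that the set-difference bound is sound even though $\Nbr$ counts neighbor points rather than shortcuts: a neighbor is lost only if every shortcut between it and $\vpt$ is affected, and then that neighbor is an endpoint of an affected shortcut, so the $\lambda_{\affS}\cdot\lambda_{\nb}$ bound applies.
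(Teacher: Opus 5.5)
Your proof is correct and matches the paper's argument: the same case split on whether $\check{S}$ is affected, Criterion~1 in the unaffected case, and in the affected case the bound from \Cref{ob:AffectedSCEndpoints} together with the $H_{S}^{\down}$ threshold, showing $\check{\tau}$ is upward $A_{i}$-dense. The only differences are cosmetic. You subtract the $\lambda_{\affS}\cdot\lambda_{\nb}$ term from $\Nbr(\vpt,i,\cdot)$ before passing to the superset $\Nbr(\core(\check{\tau}),i,\cdot)$, whereas the paper does these two steps in the opposite order. You also spell out that $\upt$ is non-failed and that the level-$i$ ancestor of $\check{S}$ is $S$, both of which the paper leaves implicit.
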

\begin{proof}
The lemma condition guarantees that $e_{\sc}$ is unaffected and non-failed, so it suffices to prove that $\vpt$ satisfies either Criterion \ref{Crit:DownDense1} or Criterion \ref{Crit:DownDense2} in \Cref{def:DownwardDenseness}. 

If the cluster $\check{S}$ with $\core(\check{S})\ni \vpt$ is unaffected, then $\check{S}$ must be nearby $A_{i}$-dense by the definition of $H_{S}^{\down}$, and thus Criterion \ref{Crit:DownDense1} is fulfilled.

Suppose the cluster $\check{S}$ is affected. Since $\vpt\notin F_{\ppt}$, there must be a residual subtree $\check{\tau}$ of $\check{S}$ s.t. $\vpt\in\core(\check{\tau})$. Now, observe that 
\begin{align*}
|\Nbr(\core(\check{\tau}),i,H\setminus H_{\aff})|&\geq |\Nbr(\core(\check{\tau}),i,H)| -\lambda_{\affS}\cdot \lambda_{\nb} \\
&\geq  |\Nbr(\vpt,i,H)| - \lambda_{\affS}\cdot \lambda_{\nb}\\
&\geq \rho_{\heavy},
\end{align*}
where the first inequality holds because at most $\lambda_{\affS}\cdot \lambda_{\nb}$ points can be endpoints of affected shortcuts (see \Cref{ob:AffectedSCEndpoints}), and the third inequality is by the definition of $H^{\down}_{S}$. Hence, Criterion \ref{Crit:DownDense2} is fulfilled.
\end{proof}

\begin{tcolorbox}[
  colback=blue!3,
  colframe=black,
  boxrule=1.5pt,
  arc=0pt,
  left=6pt,
  right=6pt,
  top=4pt,
  bottom=4pt
]
\noindent\textcolor{DarkBlue}{\textit{Interface}} $\mathsf{FindDownwardShortcuts}(\upt, S)$. It receives point $\upt$ and an affected $S$ such that $\upt\in\core(\tau)$ for a light residual subtree $\tau$ of $S$. It determines whether there exists an unaffected and non-failed shortcut $(\upt,\vpt)\in H^{\down}_{S}$.
\end{tcolorbox}

If we have a ``label'' of $\upt$, then this interface can be viewed as a ``downward'' version of the interface $\mathsf{ListUpwardNeighbors}(\vpt,\tau,i')$ in \Cref{sect:QtildeLink}. We do not know the ``unaffected'' and ``non-failed'' information of each shortcut $(u_{\ppt},v_{\ppt})\in H^{\down}_{S}$ in the preprocessing phase. To address this, we use a strategy similar to the implementation of $\mathsf{ListUpwardNeighbors}$. We store up to $\lambda_{\affS}\cdot \lambda_{\nb} + f + 1$ shortcuts $(u_{\ppt},v_{\ppt})\in H^{\down}_{S}$, and one of them must be unaffected and non-failed in the query phase. The additive term $\lambda_{\affS}\cdot \lambda_{\nb}$ enables us to tolerate all affected shortcuts (because of \Cref{ob:AffectedSCEndpoints}), while the additive term $f$ enables us to tolerant all failed $v_{\ppt}$.

To store this ``label'' of $\upt$, recall that $\upt\in\core(\tau)$ for a light residual subtree. Therefore, we can again append $\upt$'s ``label'' as a \emph{secondary label} to the labels behind $\mathsf{ListCorePoints}(\tau)$.

\paragraph{Detect Witness Shortcuts in $H^{\spar}_{S}$.} 
Note that for a witness shortcut $e_{\sc}^{\star} = (\upt,\vpt^{\star})$ in $H^{\spar}_{S}$, we must have $v^{\star}_{\ppt}\in\core(\check{\tau})$ for some upward $A_{i}$-dense residual subtree $\check{\tau}$ of some affected cluster $\check{S}\desc S$, because $e_{\sc}^{\star}$ cannot satisfy Criterion \ref{Crit:DownDense1} (by the definition of $H^{\spar}_{S}$) and it must satisfy Criterion \ref{Crit:DownDense2}. Also, the cluster $\check{S}$ must be nearby $A_{i}$-sparse by the definition of $H^{\spar}_{S}$.

Therefore, we will check the existence of a witness shortcut by simply counting the number of witness shortcuts. Let $\check{\cal S}_{\spar}$ denote the set of all clusters $\check{S}\desc S$ that is nearby $A_{i}$-sparse, and recall from \Cref{def:CountLengthFunctions} that  $\delta(U_{\ppt},V_{\ppt}, H')$ denotes the number $H'$-shortcuts connecting two disjoint point sets $U_{\ppt}$ and $V_{\ppt}$. Define
\begin{itemize}
\item $\delta_{0}:=$ the summation of $\delta(\upt,\core(\check{S}),H^{\spar}_{S})$ over all $\check{S}\in\check{\cal S}_{\spar}$.
\item $\delta_{\fail}:=$ the summation of $\delta(\upt,\vpt,H^{\spar}_{S})$ over all failed points $\vpt\in F_{\ppt}$ s.t. $\vpt\in\core(\check{S})$ for some $\check{S}\in\check{\cal S}_{\spar}$.
\item $\delta_{\tree}:=$ the summation of $\delta(\upt,\core(\check{\tau}),H^{\spar}_{S})$ over all upward $A_{i}$-sparse residual subtrees $\check{\tau}$ of all $\check{S}\in\check{\cal S}_{\spar}$.
\item $\delta_{\aff}:=$ the number of affected shortcuts $(\upt,\vpt)$ in $H^{\spar}_{S}$ s.t. for some $\check{S}\in\check{\cal S}_{\spar}$,  $\vpt\in \core(\check{\tau})$ of some upward $A_{i}$-dense residual subtree $\check{\tau}$ of $\check{S}$.
\end{itemize}
Then a witness shortcut in $H^{\spar}_{S}$ exists if and only if $\delta_{0}-\delta_{\fail}-\delta_{\tree}-\delta_{\aff}>0$. To see this, roughly speaking, subtracting $\delta_{\fail}$ and $\delta_{\tree}$ guarantees that we only count shortcut $(\upt,\vpt)$ with $\vpt^{\star}\in\core(\check{\tau})$ for an upward $A_{i}$-dense $\check{\tau}$, and subtracting $\delta_{\aff}$ rules out all the affected shortcuts. Moreover, no shortcut is doubly subtracted. Next, we compute $\delta_{0},\delta_{\fail},\delta_{\tree}$ and $\delta_{\aff}$ separately with the help of some additional interfaces.

\medskip

\noindent\underline{Compute $\delta_{0}$.} First, we can obtain the list of all affected clusters. Then we can obtain the list of clusters in $\check{\cal S}$ by checking affected clusters one by one, since it is trivial to store and access information of affected clusters about ancestors and nearby denseness. Therefore, it suffices to invoke the following interface.

\begin{tcolorbox}[
  colback=blue!3,
  colframe=black,
  boxrule=1.5pt,
  arc=0pt,
  left=6pt,
  right=6pt,
  top=4pt,
  bottom=4pt
]
\noindent\textcolor{DarkBlue}{\textit{Interface}} $\mathsf{CountSparseShortcuts}(\upt,S,\check{S})$. It receives point $\upt$, affected clusters $S$ and $\check{S}$, such that $\upt\in \core(S)$, $S\anc \check{S}$ and $\check{S}$ is nearby $A_{i}$-sparse for $i:=\level(S)$. It returns $\delta(\upt,\core(\check{S}),H^{\spar}_{S})$.
\end{tcolorbox}

The above interface is actually easy to implement because for a particular cluster $\check{S}$, the number of its ancestors $S$ is small. The number of possible $\upt$ could be large, but a $\upt$ with non-zero $\delta(\upt,\core(\check{S}),H^{\spar}_{S})$ must belong to $\Lambda_{i}(\check{S})$, and $|\Lambda_{i}(\check{S})|$ is small since $\check{S}$ is nearby $A_{i}$-sparse. Therefore, it is affordable to precompute all the answers and store them in the label of the cluster $\check{S}$.

\medskip

\noindent\underline{Compute $\delta_{\fail}$.} Similarly, we can enumerate all cluster $\check{S}\in\check{\cal S}$, then enumerate all failed points $\vpt \in F_{\ppt}$ (again it is trivial to list all failed points), and lastly find those inside $\core(\check{S})$. Then invoke the following interface.

\begin{tcolorbox}[
  colback=blue!3,
  colframe=black,
  boxrule=1.5pt,
  arc=0pt,
  left=6pt,
  right=6pt,
  top=4pt,
  bottom=4pt
]
\noindent\textcolor{DarkBlue}{\textit{Interface}} $\mathsf{CountSparseShortcuts}(\upt,S,\vpt,\check{S})$. The $\upt,S,\check{S}$ in the input are as above, and the additional $\vpt$ should satisfy $\vpt\in\core(\check{S})$ and $\vpt\in F_{\ppt}$. It returns $\delta(\upt,\vpt,H^{\spar}_{S})$.
\end{tcolorbox}

Again, the answer is non-zero only when $\upt\in \Lambda_{i}(\check{S})$ by the nearby $A_{i}$-denseness of $\check{S}$, so just simply store all the non-zero answers in the label of point $\vpt$.

\medskip

\noindent\underline{Compute $\delta_{\tree}$.} The computation of $\delta_{\tree}$ requires some more work. Again, it is easy to list all residual subtrees of affected clusters, so now we focus on computing $\delta(\upt,\core(\check{\tau}),H^{\spar}_{S})$ for one $\check{\tau}$ in the expression.

\begin{tcolorbox}[
  colback=blue!3,
  colframe=black,
  boxrule=1.5pt,
  arc=0pt,
  left=6pt,
  right=6pt,
  top=4pt,
  bottom=4pt
]
\noindent\textcolor{DarkBlue}{\textit{Interface}} $\mathsf{CountSparseShortcuts}(\upt, S, \check{\tau}, \check{S})$. Inputs $\upt, S, \check{S}$ are as above, and the additional $\check{\tau}$ should be a upward $A_{i}$-sparse residual subtree of $\check{S}$. It returns $\delta(\upt, \core(\check{\tau}), H^{\spar}_{S})$.
\end{tcolorbox}

To implement the additional interface above, in fact we only rely on the fact that $\check{\tau}$ is upward $A_{i}$-sparse, and the nearby $A_{i}$-denseness of $\check{S}$ is not important. The key observation is still that $\delta(\upt,\core(\check{\tau}), H^{\spar}_{S})$ is non-zero only if $\upt\in \Nbr(\core(\check{\tau}),i,H)$. Since $\check{\tau}$ is upward $A_{i}$-sparse, we know the size of $\Nbr(\core(\check{\tau}),i,H\setminus H_{\aff})$ is small (i.e. at most $\rho_{\heavy}$). Then $|\Nbr(\core(\check{\tau}),i,H)|\leq |\Nbr(\core(\check{\tau}),i,H\setminus H_{\aff})| + \lambda_{\affS}\cdot \lambda_{\nb}$ by \Cref{ob:AffectedSCEndpoints}.

A small issue is that the output is no longer mergeable, so we are unable to exploit the canonical intervals. Nevertheless, we can still somehow store something in the point labels, such that in the query phase, each point $\upt\in \Nbr(\core(\check{\tau}),i,H)$ appears in the labels of \emph{both} outer endpoints of $\tau$. Then if we precompute two prefix-sum numbers and store them along with these two $\upt$ in the outer endpoint labels, then the desired $\delta(\upt,\core(\check{\tau}),H^{\spar}_{S})$ will be the differences of these two prefix-sum numbers. See \Cref{sect:PrefixSumLabel} for more details.

\medskip

\noindent{\underline{Compute $\delta_{\aff}$}.} Since we can obtain all affected shortcuts, we simply check for each of them whether it contributes to $\delta_{\aff}$. Note that checking whether a shortcut belongs to $H_{S}^{\spar}$ is simple because an $H$-shortcut belongs to at most one group in $\{H_{S}^{\down},H_{S}^{\up},H_{S}^{\spar}\mid \text{all clusters $S$}\}$.

\paragraph{Detect Witness Shortcuts in $H^{\up}_{S}$.}

We will use a similar counting argument. For each shortcut $(\upt,\vpt)\in H^{\up}_{S}$ where $\vpt$ is the lower endpoint, by definition $\vpt$ must belong to one of the following: (1) $\core(\check{\tau})$ of a upward $A_{i}$-sparse residual subtree $\check{\tau}$, (2) $F_{\ppt}$, (3) $\core(\check{\tau})$ of a upward $A_{i}$-dense residual subtree, and (4) a nearby $A_{i}$-dense unaffected cluster $\check{S}$. We want to detect an unaffected shortcut in $H^{\up}_{S}$ whose $\vpt$ falls in (3) or (4). To this end, define
\begin{itemize}
\item $\delta_{0}:=\delta(\upt, \cdot, H_{S}^{\up})$, i.e., the number of $H_{S}^{\up}$-shortcuts with $\upt$ as one of its endpoints.
\item $\delta_{\fail}:=$ the summation of $\delta(\upt, \vpt, H_{S}^{\up})$ over all $\vpt\in F_{\ppt}$ with $\level(\vpt)<i:=\level(\upt)$ and $|\Nbr(\vpt,i,H)|<\lambda_{\affS}\cdot\lambda_{\nb}+\rho_{\heavy}$ (note that the last condition is by the definition of $H^{\up}_{S}$).
\item $\delta_{\tree}:=$ the summation of $\delta(\upt,\core(\check{\tau}),H^{\up}_{S})$ over all upward $A_{i}$-sparse residual subtrees $\check{\tau}$ s.t. $\check{\tau}$ belongs to a nearby $A_{i}$-dense cluster $\check{S}\desc S$.
\item $\delta_{\aff}:=$ the number of affected shortcuts $(\upt,\vpt)$ in $H^{\up}_{S}$ s.t. for some nearby $A_{i}$-dense affected cluster $\check{S}\desc S$, $\vpt\in \{\core(\check{\tau})\mid \text{upward $A_{i}$-dense residual subtree $\check{\tau}$ of $\check{S}$}\}$
\end{itemize}
Again, it is clear that a witness shortcut in $H^{\up}_{S}$ exists if and only if $\delta_{0} - \delta_{\fail} - \delta_{\tree} - \delta_{\aff}>0$. Next we discuss how to compute each of them.

\medskip

\noindent{\underline{Compute $\delta_{0}$}.} $\delta_{0}$ is just a number that can be precomputed in the construction phase. To see where to store it, we will exploit that the input of $\mathsf{DownwardDenseness}(\upt,\tau)$ guarantees $\upt\in\core(\tau)$ and $\tau$ is a light residual subtree. Recall that $\mathsf{ListCorePoints}(\tau)$ can list all points in $\core(\tau)$ for a light residual subtree. Therefore, we can store $\delta_{0}$ as a secondary label affiliated to labels behind $\mathsf{ListCorePoints}(\tau)$. Then obtain $\delta_{0}$ in the query phase along with $\upt$.

\medskip

\noindent{\underline{Compute $\delta_{\fail}$.}} For this, we first enumerate all $\vpt\in F_{\ppt}$ satisfying the conditions of the summation. For one such $\vpt$, to compute $\delta(\upt,\vpt,H^{\up}_{S})$, we need to exploit that $|\Nbr(\vpt,i,H)|<\lambda_{\affS}\cdot \lambda_{\nb} + \rho_{\heavy}$. Note that $\delta(\upt,\vpt,H^{\up}_{S})$ is non-zero only when $\upt\in \Nbr(\vpt,i,H)$. Hence, we can store in the label of point $\vpt$ all points $u'_{\ppt}\in |\Nbr(\vpt,i,H)|$ along with the number $\delta(u'_{\ppt},\vpt,H^{\up}_{S})$.

\medskip

\noindent{\underline{Compute $\delta_{\tree}$.}} This part is essentially the same as computing $\delta_{\tree}$ for $H^{\spar}_{S}$. Namely, implement the interface below using the same approach, which works because it only relies on that $\check{\tau}$ is upward $A_{i}$-sparse.

\begin{tcolorbox}[
  colback=blue!3,
  colframe=black,
  boxrule=1.5pt,
  arc=0pt,
  left=6pt,
  right=6pt,
  top=4pt,
  bottom=4pt
]
\noindent\textcolor{DarkBlue}{\textit{Interface}} $\mathsf{CountUpwardShortcuts}(\upt,S,\check{\tau},\check{S})$. It receives point $\upt$, affected clusters $S,\check{S}$, and residual tree $\check{\tau}$ of $\check{S}$, such that $\upt\in \core(S)$, $S\anc\check{S}$, $\check{S}$ is nearby $A_{i}$-dense for $i:=\level(S)$, and $\check{\tau}$ is upward $A_{i}$-sparse. It returns $\delta(\upt, \core(\check{\tau}), H^{\up}_{S})$.
\end{tcolorbox}

\noindent{\underline{Compute $\delta_{\aff}$}} It is the same as computing $\delta_{\aff}$ for $H^{\spar}_{\up}$.

\subsection{Interfaces}
\label{sect:Interfaces}

In this section, we explain some of the details behind the interfaces mentioned above. In particular, we elaborate on $\mathsf{CountSparseShortcuts}(\upt,S,\check{\tau},\check{S})$, whose technical details were previously omitted. However, all of these bottom-level interfaces have already appeared, either explicitly or implicitly, in \cite{long2025connectivity}. Thus, we do not give fully formal arguments here.

\subsubsection{The Euler Tours of Steiner Trees}

For clarity, we restate the preliminaries on Euler tour order here.

\begin{definition}[Euler Tours]
Consider a cluster $S$ and its Steiner tree $T_{S}$. We define $\Euler(T_{S})$ to be its Euler tour, which is an ordered list that includes all \emph{occurrences} of vertices according to a DFS traversal of $T_{S}$, starting from an arbitrary root node. For convenience, we add two \emph{virtual occurrences} $\start(T_{S})$ and $\eend(T_{S})$ at the front and the end of $\Euler(T_{S})$ as the ``guard occurences''. For each vertex $v\in V_{\tmn}(S)$, the \emph{principal occurrences} of $v$ on $\Euler(T_{S})$ is its first occurrence. 
\end{definition}

\begin{definition}[Intervals]
For a cluster $S$ and its Steiner tree $T_{S}$, an \emph{interval} $I$ on $\Euler(T_{S})$ is a consecutive subsequence. The terminal vertices of $I$, denoted by $V_{\tmn}(I)$, are those vertices in $V_{\tmn}(S)$ whose principal occurrences fall in $I$. The core of $I$ is the corresponding node set of $V_{\tmn}(I)$, i.e. $\core(I):=\{\point(v,S)\mid v\in V_{\tmn}(I)\}$. The left (resp. right) \emph{outer endpoints} of $I$ are the occurrences immediately to the left (resp. right) of $I$.
\end{definition}

\begin{definition}[Residual Intervals under Failures]
Consider the query phase for some $(x,y,F)$. For the Steiner tree $T$ of an \underline{affected} cluster, we use ${\cal I}_{T}$ to denote the \emph{residual intervals} of $T$, which are intervals obtained by removing $\start(T),\eend(T)$ and all occurrences of failed vertices from $\Euler(T)$.
\end{definition}

\begin{lemma}
\label{lemma:UseIntervals}
In the query phase, there is an assignment that assigns a subset of residual intervals ${\cal I}_{\tau}\subseteq {\cal I}_{T}$ to each residual subtree $\tau\in T\setminus F$, satisfying the following.
\begin{itemize}
\item For each subtree $\tau\in T\setminus F$, $V_{\tmn}(\tau)=\bigsqcup_{I\in{\cal I}_{\tau}}V_{\tmn}(I)$.
\item $\{{\cal I}_{\tau}\mid \tau\in T\setminus F\}$ forms a partition of ${\cal I}_{T}$.
\end{itemize}
\end{lemma}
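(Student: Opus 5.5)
The plan is to use the standard correspondence between a DFS Euler tour and the tree structure. Fix the root $r$ of $T$ used to produce $\Euler(T)$. Removing the failed vertices $F\cap V(T)$ splits $T$ into residual subtrees; each residual subtree $\tau$ has a unique topmost vertex $r_\tau$, the vertex of $\tau$ closest to $r$. I would define ${\cal I}_{\tau}$ to be the set of residual intervals whose occurrences all belong to vertices of $\tau$. The work is to show that this is well defined and gives a partition.

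The key step is that every residual interval $I\in{\cal I}_{T}$ contains occurrences of vertices from exactly one residual subtree. Consecutive occurrences $a,b$ in the Euler tour correspond to traversing a single tree edge $(a,b)$, or to the same vertex when the tour returns. If neither is failed, both endpoints of that edge are non-failed, so $a$ and $b$ lie in the same component of $T\setminus F$. Since a residual interval contains no failed occurrence, induction along $I$ places all of its occurrences in one $\tau$. Hence each $I$ is assigned to exactly one $\tau$, and $\{{\cal I}_{\tau}\}$ partitions ${\cal I}_{T}$, which is the second bullet.

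For the first bullet, I need $V_{\tmn}(\tau)=\bigsqcup_{I\in{\cal I}_\tau}V_{\tmn}(I)$. Disjointness is immediate, because each terminal has a single principal (first) occurrence, which lies in at most one interval. For containment in one direction, a terminal $v\in V_{\tmn}(\tau)$ is non-failed, so its principal occurrence survives in some residual interval. By the key step that interval belongs to the component containing $v$, namely $\tau$. Conversely, a terminal whose principal occurrence lies in an interval of ${\cal I}_\tau$ is a vertex of $\tau$ by the key step, and it is a terminal of $S$, so it lies in $V_{\tmn}(\tau)$ as defined by $V_{\tmn}(S)\cap\tau$.

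The main obstacle is mild. The residual intervals are built from occurrences, not vertices, so one vertex of $\tau$ may appear in several intervals. Intervals of different subtrees may also interleave, for example when a failed vertex separates a child subtree and the tour later returns to the parent's component. The argument above avoids this by using only adjacency along the tour and principal occurrences, so interleaving never matters.
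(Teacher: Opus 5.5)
Your proof is correct. Consecutive occurrences in the DFS Euler tour are adjacent in $T$, so each residual interval lies inside a single component of $T\setminus F$; the uniqueness of principal occurrences then gives the disjoint decomposition of terminals. The paper states this lemma without proof, calling it straightforward, and your argument is the standard Euler-tour reasoning it implicitly relies on.
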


\begin{definition}[Canonical Intervals]
Consider a Steiner tree $T$. For each occurrences $v_{\oc}\in \Euler(T)$, let ${\cal I}^{\rightarrow}_{T}(v_{\oc})$ denote the \emph{canonical intervals} on the right side of $v_{\oc}$, defined as follows. Let $v'_{\oc}\in \Euler(T)$ be the occurrence next to $v_{\oc}$ on the right. Then ${\cal I}^{\rightarrow}_{T}(v_{\oc})$ includes all intervals starting from $v'_{\oc}$ of size $1,2,4,8,...$ until the largest such interval, namely the suffix beginning with $v'_{\oc}$. Similarly, we can define the canonical intervals $\Ical^{\larr}_{T}(v_{\oc})$.
\end{definition}

\begin{observation}
\label{ob:UseCanonical}
In the query phase, for each residual interval $I\in\Ical_{T}$ with left and right outer endpoints $u_{\oc}$ and $v_{\oc}$, there exists canonical intervals $I^{\rarr}\in \Ical^{\rarr}(u_{\oc})$ and $I^{\larr}\in \Ical^{\larr}(v_{\oc})$ such that $V_{\tmn}(I^{\rarr})\cup V_{\tmn}(I^{\larr}) = V_{\tmn}(I)$.
\end{observation}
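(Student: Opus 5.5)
The statement to prove is \Cref{ob:UseCanonical}: a residual interval $I$ with outer endpoints $u_{\oc}$ and $v_{\oc}$ has its terminals covered by one right-canonical interval of $u_{\oc}$ together with one left-canonical interval of $v_{\oc}$. The plan is the standard doubling cover of an interval by two overlapping power-of-two blocks anchored at its two ends.

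Let $L\ge 1$ be the number of occurrences in $I$; the interval is nonempty, since otherwise there is nothing to prove or we take empty sets. By definition of outer endpoints, $I$ is the consecutive block of $\Euler(T)$ starting at the occurrence $u'_{\oc}$ immediately right of $u_{\oc}$ and ending at the occurrence $v'_{\oc}$ immediately left of $v_{\oc}$. Choose $j=\lfloor \log_2 L\rfloor$, so $2^{j}\le L<2^{j+1}$.
\begin{itemize}
\item Let $I^{\rarr}$ be the interval of size $2^{j}$ starting at $u'_{\oc}$.
\item Let $I^{\larr}$ be the interval of size $2^{j}$ ending at $v'_{\oc}$.
\end{itemize}
Both belong to $\Ical^{\rarr}_{T}(u_{\oc})$ and $\Ical^{\larr}_{T}(v_{\oc})$ respectively, because $2^{j}\le L$ is at most the length of the suffix (resp. prefix) available. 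The only subtlety is the cap in the definition, where the last canonical interval is the whole suffix. It is harmless because $2^j\le L$ never exceeds the available length.

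Both $I^{\rarr}$ and $I^{\larr}$ are contained in $I$, since each has length $2^{j}\le L$ and is anchored at an end of $I$. Their union is all of $I$, since $2\cdot 2^{j}\ge L$ means the prefix and suffix of length $2^{j}$ overlap or abut.

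Finally, terminals are defined by principal occurrences falling in the interval. An occurrence lies in $I$ iff it lies in $I^{\rarr}\cup I^{\larr}$, so $V_{\tmn}(I^{\rarr})\cup V_{\tmn}(I^{\larr})=V_{\tmn}(I)$.

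There is no real obstacle here; the only care needed is checking the boundary caps in the definition of canonical intervals, as noted above.
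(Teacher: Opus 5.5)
Your proof is correct and is the standard doubling argument the paper has in mind; the paper states this observation without proof, and the power-of-two sizes of canonical intervals exist precisely so that the prefix and suffix of $I$ of length $2^{\lfloor \log_2 |I| \rfloor}$ are canonical at the two outer endpoints and together cover $I$. Your aside about an empty $I$ is moot, since residual intervals are the nonempty maximal runs between removed occurrences.
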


\subsubsection{Determine Heaviness}
\label{sect:InterfaceHeaviness}

In what follows, for ease of presentation, we may store information in ``labels of occurrences''. Formally, this means at the very end, we will further store the label of an occurrence $v_{\oc}$ at its corresponding vertex $v$. Exceptionally, for the guard occurrences $\start(T_{S})$ and $\eend(T_{S})$ which have no corresponding vertex, we will store their labels at each vertex in the cluster $S$.

\begin{observation}
\label{ob:AccessIntervalEndpoints}
In the query phase, for each residual interval $I$, we can access the labels of its left and right outer endpoints.
\end{observation}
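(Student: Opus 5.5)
The plan is to unpack what the labels are and where they live, then check the claim directly from the definition of residual intervals. A residual interval $I\in\Ical_{T}$ of the Steiner tree $T=T_{S}$ of an affected cluster $S$ is a maximal consecutive block of $\Euler(T)$ obtained after deleting $\start(T)$, $\eend(T)$ and all occurrences of failed vertices. By maximality, the occurrence immediately to the left of $I$ is either $\start(T)$ or an occurrence $v_{\oc}$ of some failed vertex $v\in F$. The same dichotomy holds on the right, with $\eend(T)$ in place of $\start(T)$. The argument then splits into these two cases for each side.

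In the first case, the outer endpoint is an occurrence $v_{\oc}$ of a failed vertex $v\in F$. By the convention just stated, the label of an occurrence is stored at its corresponding vertex, so the label of $v_{\oc}$ is part of $L(v)$. The query reads $L(v)$ for every $v\in F$, so this label is accessible.

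In the second case, the outer endpoint is a guard occurrence $\start(T_{S})$ or $\eend(T_{S})$. Its label is stored at every vertex of $S$. Since $S$ is affected, it contains some vertex of $F\cup\{x,y\}$. Reading that vertex's label, which the query is allowed to do, gives access to the guard labels of $T_{S}$.

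One subtle point needs checking. Occurrences of a failed vertex $v$ may appear in many Steiner trees, so $L(v)$ must contain the labels of all of $v$'s occurrences, in every tree, tagged by the tree identifier so the decoder can locate the right one. The size of this is controlled by $\gamma_{\tree}$ trees times $\deg_{T}(v)+1\le\Delta+1$ occurrences per tree, which is small. Similarly, every vertex must store the guard labels of all $\omega$ clusters containing it. Both overheads were already accounted for when choosing these storage conventions, so no new bound is needed.

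The only real obstacle is ensuring the decoder can identify which occurrences border which interval. It recovers each interval from the Euler-tour positions, which are stored in the occurrence labels, of the failed occurrences in $T$ together with the two guards. Sorting these positions lists the residual intervals and pairs each one with its two outer endpoints, whose labels were shown above to be available.
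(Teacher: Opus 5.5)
Your proposal is correct and follows the paper's argument. Each outer endpoint is either an occurrence of a failed vertex, whose label is stored at that vertex, or one of the guard occurrences $\start(T_S)$ and $\eend(T_S)$. The guard labels are stored at every vertex of the affected cluster $S$, so they are reachable through any vertex of $S\cap(F\cup\{x,y\})$; the paper phrases this as $S\cap F\neq\emptyset$. Your remarks on storage overhead and on how the decoder pairs intervals with their outer endpoints go beyond what the paper spells out, but they are consistent with it.
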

\begin{proof}
This is because the left and right outer endpoints of $I$ are either occurrences of failed vertices, or the guard occurrences $\start(T_{S}), \eend(T_{S})$. Since a residual interval $I$ is defined under the affected cluster $S$, we can access the labels of $\start(T_{S})$ and $\eend(T_{S})$ at any vertex in $S\cap F$ (which is not empty by the definition of affected clusters).
\end{proof}

\medskip

\noindent\underline{Labels $\CoreNodes(I^{\rarr},v_{\oc},T)$ and $\CoreNodes(I^{\larr},v_{\oc},T)$.} For each Steiner tree $T$, each occurrence $v_{\oc}\in \Euler(T)$, and each canonical interval $I^{\rarr}\in\Ical^{\rarr}_{T}(v_{\oc})$, the label $\CoreNodes(I^{\rarr},v_{\oc},T)$ is a set includes as many nodes in $\core(I^{\rarr})$ as possible, up to $\rho_{\heavy}$ of them. $\CoreNodes(I^{\larr},v_{\oc},T)$ can be defined analogously. We store $\CoreNodes(I^{\rarr},v_{\oc},T)$ and $\CoreNodes(I^{\larr},v_{\oc},T)$ in the label of the occurrence $v_{\oc}$.

\medskip

\noindent\underline{Interface $\mathsf{ListCoreNodes}(\tau,S)$.} Recall that in the input, $S$ is an affected cluster and $\tau$ is a residual subtree of $S$. The goal is to either determine $|V_{\tmn}(\tau)|=|\core(\tau)|\geq \rho_{\heavy}$, or return $\core(\tau)$ explicitly. By \Cref{lemma:UseIntervals} and \Cref{ob:UseCanonical},
\[
\core(\tau) = \bigcup_{I\in\Ical_{\tau}} \core(I^{\rarr})\cup \core(I^{\larr}).
\]
Moreover, for each canonical interval $I^{\rarr}$ (or $I^{\larr}$) in the above expression, we can either determine $|\core(I^{\rarr})|\geq \rho_{\heavy}$ or obtain $\core(I^{\rarr})$ from the label $\CoreNodes(I^{\rarr},v_{\oc},T)$, where $v_{\oc}$ is the left outer endpoint of $I$ and thus it can be accessed by \Cref{ob:AccessIntervalEndpoints}. Therefore, we can merge the answers for all such canonical intervals to obtain the answer for $\tau$.

\subsubsection{Obtain Affected Shortcuts}

For ease of presentation, we may also store information in ``labels of clusters''. Formally, this means at the end we will further store the label of a cluster $S$ at each vertex in $S$. Note that in the query phase, we can access the label of an affected cluster $S$ since $S\cap (F\cup\{s,t\})$ is non-empty.

\medskip

\noindent\underline{Labels $\Shortcuts(S)$.} For each cluster $S$, we use $\Shortcuts(S)$ to denote the set of all shortcuts owned by $S$. We store $\Shortcuts(S)$ at the label of the cluster $S$.

\medskip

\noindent\underline{Interface $\mathsf{ListAffectedShortcuts}()$.} We can simply obtain the explicit set ${\cal S}_{\aff}$ of all affected shortcuts by inspecting $\Shortcuts(S)$ for all affected clusters.

\subsubsection{Labels for Interfaces $\mathsf{ListUpwardNeighbors}$ and $\mathsf{CheckAdjCluster}$}
\label{sect:InterfacesUpwardNbr}

The following two certificates are exactly the information needed for a point $\vpt$ (in a light residual subtree) or a residual subtree. As we mentioned, we will store the following certificates $\mathsf{AdjCluster}$ and $\Upward$ (1) as secondary labels affiliated with elements in the above $\mathsf{CoreNodes}$, and (2) for each canonical interval. 

\begin{definition}[Adjacent-Cluster Denseness Certificate]
Let $S$ be a cluster and $K\subseteq \core(S)$ be a subset of its core nodes. For each level $i'\geq i:=\level(S)$, an adjacent-cluster $A_{i'}$-denseness certificate of $S$, denoted by $\AdjCluster(K,i')$, consists of (the identifiers of) all clusters $\check{S}\preceq S$ s.t.
\begin{itemize}
\item $\Lambda(\check{S})$ intersects $K$, and
\item $\check{S}$ is nearby $A_{i'}$-dense.
\end{itemize}
However, if there are at least $\lambda_{\aff}+1$ such $\check{S}$, then only keep arbitrary $\lambda_{\aff}+1$ of them in $\AdjCluster(S,i')$.
\end{definition}

\begin{lemma}
In the query phase for arbitrary $(x,y,F)$, with access to $\AdjCluster(K,i')$ and the set ${\cal S}_{\aff}$ of all affected clusters, one can determine whether there exists an \underline{unaffected} cluster $\check{S}\preceq S$ (as above, $S$ is the cluster with $\core(S)\supseteq K$) s.t. $\Lambda(\check{S})$ intersects $\core(S)$ and $\check{S}$ is nearby $A_{i'}$-dense.
\end{lemma}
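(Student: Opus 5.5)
The plan is a pigeonhole argument over the stored candidate list. Let $\mathcal{C}$ be the set of all clusters $\check{S}\preceq S$ such that $\Lambda(\check{S})$ intersects $K$ and $\check{S}$ is nearby $A_{i'}$-dense. This set is determined at construction time, independently of the query. By definition, $\AdjCluster(K,i')$ equals $\mathcal{C}$ if $|\mathcal{C}|\leq\lambda_{\affS}$, and otherwise it is an arbitrary subset of $\mathcal{C}$ of size exactly $\lambda_{\affS}+1$. (I read $\lambda_{\aff}$ as the bound $\lambda_{\affS}\le f\cdot\omega$ on the number of affected clusters.) The query-phase procedure is to scan $\AdjCluster(K,i')$ and test each listed identifier for membership in $\mathcal{S}_{\aff}$. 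It answers ``yes'' iff some listed cluster is not in $\mathcal{S}_{\aff}$. Nothing else about $\check{S}$ needs to be checked at query time, since the intersection and denseness conditions were already verified when the certificate was built.

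\emph{Soundness.} Every cluster in $\AdjCluster(K,i')$ lies in $\mathcal{C}$. So any listed cluster not in $\mathcal{S}_{\aff}$ is an unaffected $\check{S}\preceq S$ with $\Lambda(\check{S})\cap K\neq\emptyset$ that is nearby $A_{i'}$-dense. Since $K\subseteq\core(S)$, this is exactly the required witness.

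\emph{Completeness.} Suppose some unaffected $\check{S}^\star\in\mathcal{C}$ exists. There are two cases.
\begin{itemize}
\item If $|\mathcal{C}|\le\lambda_{\affS}$, then the certificate is all of $\mathcal{C}$. It therefore contains $\check{S}^\star$, and the scan finds it.
\item If $|\mathcal{C}|>\lambda_{\affS}$, the certificate contains $\lambda_{\affS}+1$ distinct clusters, while at most $\lambda_{\affS}$ clusters are affected. By pigeonhole, at least one listed cluster is unaffected.
\end{itemize}
In both cases the procedure answers ``yes''.

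The only delicate step is confirming that $\lambda_{\affS}$ really upper-bounds $|\mathcal{S}_{\aff}|$ for every query, including clusters hit by $x$ or $y$. Each of the at most $f+2$ vertices of $F\cup\{x,y\}$ lies in at most $\omega$ clusters by \Cref{ob:LabelWidth}. I would therefore fix $\lambda_{\affS}:=(f+2)\omega$, which is still $f\cdot n^{O(1/k)}\log^{O(k)}n$. A second, minor point: the statement writes ``$\Lambda(\check{S})$ intersects $\core(S)$'', which I interpret as intersecting $K$, matching the certificate's definition. The argument is identical under either reading, provided the certificate is built with the same set.
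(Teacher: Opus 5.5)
Your proposal is correct and is essentially the paper's argument: the paper notes that at most $\lambda_{\affS}$ clusters are affected, so discarding the members of ${\cal S}_{\aff}$ from the at most $\lambda_{\affS}+1$ stored candidates in $\AdjCluster(K,i')$ leaves a survivor iff an unaffected witness exists, which is exactly your soundness plus pigeonhole-completeness split. Your choice $\lambda_{\affS}:=(f+2)\omega$ is a sensible safeguard, since the paper is inconsistent about whether clusters hit only by $x$ or $y$ count as affected, and it does not change the asymptotic bounds.
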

\begin{proof}
This is because the number of affected clusters is bounded by $\lambda_{\aff}$, and we can simply rule out all of them from $\AdjCluster(S,i')$ with access to ${\cal S}_{\aff}$. 
\end{proof}

\begin{definition}[Upward Denseness Certificate]
Consider a shortcut set $H'\subseteq H$. Let $S$ be a cluster and $K\subseteq \core(S)$ be a subset of its core nodes. For each level $i'\geq i:=\level(S)$, an \emph{upward $A_{i'}$-denseness certificate} of $K$, denoted by $\Upward(K,i',H')$, is a subset of $\Nbr(K,i',H')$ of size  $\min\{\lambda_{\affS}\cdot\lambda_{\nb} + \rho_{\heavy},|\Nbr(K,i',H')|\}$. Moreover, each node $v_{\nd}\in \Upward(K,i',H')$ is associated with the number of $H'$-shortcuts in $\{v_{\nd}\}\times K$.

\end{definition}

\begin{lemma}
In the query phase for arbitrary $(x,y,F)$, with access to 
\begin{itemize}
\item $\Upward(K,i',H')$, 
\item the set $H_{\aff}$ of all affected shortcuts,
\item interfaces that determine whether a shortcut is inside $H'$ and whether a node is contained by $K$. 
\end{itemize}
one can determine whether
\begin{itemize}
\item $K$ is upward $A_{i'}$-dense w.r.t. $H'$, i.e. $|\Nbr(K,i',H'\setminus H_{\aff})|\geq \rho_{\heavy}$, or
\item $K$ is upward $A_{i'}$-sparse w.r.t. $H'$, and furthermore, obtain $\Nbr(K,i',H'\setminus H_{\aff})$ explicitly.
\end{itemize}
\end{lemma}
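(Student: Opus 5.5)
The plan is a counting argument on the stored certificate $\Upward(K,i',H')$, using \Cref{ob:AffectedSCEndpoints} to bound how much affected shortcuts can remove. Write $N:=\Nbr(K,i',H')$ and $N':=\Nbr(K,i',H'\setminus H_{\aff})$. Since $H'\setminus H_{\aff}\subseteq H'$, we have $N'\subseteq N$. A point of $N$ drops out of $N'$ only if every $H'$-shortcut joining it to $K$ is affected. Every affected shortcut has both endpoints among the at most $\lambda_{\affS}\cdot\lambda_{\nb}$ points of \Cref{ob:AffectedSCEndpoints}, so $|N\setminus N'|\leq \lambda_{\affS}\cdot\lambda_{\nb}$.

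\textbf{Computing the survivors.} For each stored $\upt\in\Upward(K,i',H')$ we know the number $c(\upt)$ of $H'$-shortcuts between $\upt$ and $K$. We scan the explicit list $H_{\aff}$ and use the membership interfaces to select the affected shortcuts that lie in $H'$ and join $\upt$ to a point of $K$. Let $a(\upt)$ be their count. Then $\upt\in N'$ if and only if $c(\upt)-a(\upt)>0$. This recovers $\Upward(K,i',H')\cap N'$ exactly. (When the certificate stores minimum lengths per length class, $\beta$ is recovered the same way, one length class at a time.)

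\textbf{Deciding, by cases on the certificate size.}
\begin{itemize}
\item Suppose $|\Upward(K,i',H')|<\lambda_{\affS}\cdot\lambda_{\nb}+\rho_{\heavy}$. By the $\min$ in the definition, the certificate then equals all of $N$, so the previous step yields $N'$ exactly. We declare $K$ dense if $|N'|\geq\rho_{\heavy}$; otherwise we output $N'$.
\item Suppose the certificate has the full size $\lambda_{\affS}\cdot\lambda_{\nb}+\rho_{\heavy}$. At most $\lambda_{\affS}\cdot\lambda_{\nb}$ of its points can lie outside $N'$, so at least $\rho_{\heavy}$ of them survive. Hence $|N'|\geq\rho_{\heavy}$, and we correctly report dense without needing all of $N'$.
\end{itemize}

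\textbf{Main obstacle.} The only delicate point is the bound $|N\setminus N'|\leq\lambda_{\affS}\cdot\lambda_{\nb}$. It must count points made to disappear, not affected shortcuts, since one point may be incident to many affected shortcuts. This is exactly why \Cref{ob:AffectedSCEndpoints} is stated in terms of distinct endpoint points, and everything else is bookkeeping.
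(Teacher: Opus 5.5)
Your proof is correct and follows essentially the same route as the paper. If the certificate has full size $\lambda_{\affS}\cdot\lambda_{\nb}+\rho_{\heavy}$, you use \Cref{ob:AffectedSCEndpoints} to show at least $\rho_{\heavy}$ of its points survive. Otherwise the certificate equals $\Nbr(K,i',H')$, and you recover $\Nbr(K,i',H'\setminus H_{\aff})$ by subtracting the number of affected $H'$-shortcuts from each stored count. This is exactly the paper's two-case argument.
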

\begin{proof}
First, observe that when $|\Upward(K,i',H')| = \lambda_{\affS}\cdot \lambda_{\nb} + \rho_{\heavy}$, we must have $|\Nbr(K,i',H'\setminus H_{\aff})|\geq \rho_{\heavy}$ since the number of distinct endpoints of affected shortcuts is bounded by $\lambda_{\affS}\cdot \lambda_{\nb}$.

From now, assume $\Upward(K,i',H') = \Nbr(K,i',H')$. Then for each node $v_{\nd}\in\Nbr(K,i',H')$, we can compute the number $\delta$ of $(H'\setminus H_{\aff})$-shortcuts in $\{v_{\nd}\}\times K$, since it is equal to the number of $H'$-shortcuts in $\{v_{\nd}\}\times K$ minus the number of $(H'\cap H_{\aff})$-shortcuts in $\{v_{\nd}\}\times K$. Since $v_{\nd}\in \Nbr(K,i',H'\setminus H_{\aff})$ iff $\delta >0$, we obtain $\Nbr(K,i',H'\setminus H_{\aff})$ explicitly.
\end{proof}

\subsubsection{Labels for $\mathsf{CountSparseShortcuts}(u_{\ppt},S,\check{\tau},\check{S})$}
\label{sect:PrefixSumLabel}

Recall that $\upt\in \core(S)$ is at level $i$, and $\check{\tau}$ of $\check{S}$ is at a lower level. Our goal is to return $\delta(\upt, \core(\check{\tau}), H^{\spar}_{S})$. A key property is that $\check{\tau}$ is upward $A_{i}$-sparse, and the answer is non-zero only when $\upt\in \Nbr(\core(\check{\tau}),i,H^{\spar}_{S}\setminus H_{\aff})$.

Without loss of generality, assume $\check{\tau}$ has only one residual interval $I$. We have mentioned that we want to store a superset of $\Nbr(\core(I),i,H^{\spar}_{S}\setminus H_{\aff})$ at both the outer endpoints of $I$, denoted by $w_{\oc}$ and $z_{\oc}$, so that we can apply the prefix-sum trick. This superset can be simply a upward $A_{i}$-denseness certificate $\Upward(\core(I),i,H^{\spar}_{S})$ of $I$ as discussed above. The difficulty, however, is that $I$ is not known during the construction phase. Fortunately, the certificate $\Upward(\core(I),i,H^{\spar}_{S})$ (considered purely as a node set, without the associated numbers) satisfies a certain monotonicity property: fixing a left outer endpoint, when we grows the interval, we can maintain an incremental certificate (this is straightforward from the definition). Therefore, at the left endpoint $u_{\oc}$ of $I$, we can store a maximal certificate generated by the above growing procedure, which will automatically be a superset of the certificate of $I$. Now, $u_{\ppt}$ appears in the labels of $w_{\oc}$ and $z_{\oc}$, allowing us to store two prefix-sum values whose difference gives $\delta(\upt,\core(I),H^{\spar}_{S})$.

\subsection{Label Size Analysis}
\label{sect:DistLabelSize}

For clarity, let us count the $f$ factors in the label size, the overhead will always bounded by $n^{O(1/k)}\cdot \log^{O(k)}n$.

It is not hard to see that the bottleneck appears in the \emph{secondary labels} affiliated to elements in each $\CoreNodes$ labels. From the definition of $\mathsf{CoreNodes}$, each such label has $\rho_{\heavy}$ many elements, incurring an $f$ factor. Let us list the interfaces which store secondary labels affiliated to $\CoreNodes$ as follows. 
\begin{itemize}
\item The first one is $\mathsf{ListUpwardNeighbors}(\vpt,\tau,i')$, the single-point version. The affiliated information is a upward denseness certificate, which takes $\lambda_{\affS}\cdot \lambda_{\nb} + \rho_{\heavy}$ size, incurring an $f^{2}$ factor.
\item 
The second one is $\mathsf{CheckAdjCluster(\vpt,i')}$, the single-point version. The affiliated information is (the identifiers of) $\lambda_{\affS}+1$ candidate clusters, which incurs one $f$ factor.
\item 
The third one is $\mathsf{FindDownwardShortcuts}(\upt,S)$. The affiliated information is a ``downward version'' of upward denseness certificate. It takes $\lambda_{\affS}\cdot \lambda_{\nb} + f + 1$ size, which incurs an $f^{2}$ factor. %
\end{itemize}
We can verify that all other interfaces only store labels with $f^{2}\cdot n^{O(1/k)}\cdot \log^{O(k)}n$ size.
Therefore, the final label size is
\[
f^{3}\cdot n^{O(1/k)}\cdot \log^{O(k)}n.
\]

\subsection{Improved Vertex-Failure Connectivity Labeling Schemes}
\label{sect:ConnLabelSize}

Finally, let us discuss how to improve the $f^{3}$ factor to an $f^{2}$ factor in the connectivity setting. The key observation is that, in this setting, there is no need to store information for individual points. In other words, we will create subtree nodes for every residual subtree, regardless of whether it is light or heavy, but no singleton node. The subtree versions of the above three interfaces each incur only an $f^{2}$ factor, and they no longer need to be affiliated with $\CoreNodes$.

We also want to point out that, even in the distance setting, single-point versions of interfaces $\mathsf{CheckAdjCluster}$ and $\mathsf{FindDownwardShortcuts}$
are already unnecessary, because when we are checking denseness, we can always view a (no matter light or heavy) residual subtree as a whole. Hence, the real bottleneck of our distance labeling scheme is the interface $\mathsf{ListUpwardNeighbors}$, which still needs to output single-point information when working with a light residual subtree.

Finally, the overhead in the connectivity setting will only be $\Otilde(1)$, following \cite{long2025connectivity}. Also, when working with a (non-length-constrained) vertex expander hierarchy, we no longer need the sampling step to obtain integral versions of fractional moving cuts. This removes the only randomized step in our algorithm. Therefore, we obtain a deterministic vertex-failure connectivity labeling scheme with label size 
\[
\Otilde(f^{2}).
\]

\bibliographystyle{alpha}
\bibliography{ref}

@inproceedings{haeupler2025length,
  author = {Haeupler, Bernhard and Long, Yaowei and Saranurak, Thatchaphol and Wang, Shengzhe},
  title = {Length-Constrained Directed Expander Decomposition and Length-Constrained Vertex-Capacitated Flow Shortcuts},
  booktitle = {33rd Annual European Symposium on Algorithms (ESA 2025)},
  series = {Leibniz International Proceedings in Informatics (LIPIcs)},
  volume = {351},
  pages = {107:1--107:17},
  publisher = {Schloss Dagstuhl -- Leibniz-Zentrum f{\"u}r Informatik},
  year = {2025},
  doi = {10.4230/LIPIcs.ESA.2025.107}
}

@inproceedings{HLRS26,
  author = {Haeupler, Bernhard and Long, Yaowei and Roeyskoe, Antti and Saranurak, Thatchaphol},
  title = {A Constant-Approximation Distance Labeling Scheme under Polynomially Many Edge Failures},
  booktitle = {Proceedings of the 58th Annual ACM Symposium on Theory of Computing (STOC 2026)},
  pages = {1014--1024},
  publisher = {ACM},
  year = {2026},
  doi = {10.1145/3798129.3800814}
}

@inproceedings{haeupler2023maximum,
  author = {Haeupler, Bernhard and Hershkowitz, D. Ellis and Saranurak, Thatchaphol},
  title = {Maximum Length-Constrained Flows and Disjoint Paths: Distributed, Deterministic, and Fast},
  booktitle = {Proceedings of the 55th Annual ACM Symposium on Theory of Computing (STOC 2023)},
  pages = {1371--1383},
  publisher = {ACM},
  year = {2023},
  doi = {10.1145/3564246.3585202}
}

@article{awerbuch1998near,
  author = {Awerbuch, Baruch and Berger, Bonnie and Cowen, Lenore and Peleg, David},
  title = {Near-linear time construction of sparse neighborhood covers},
  journal = {SIAM Journal on Computing},
  volume = {28},
  number = {1},
  pages = {263--277},
  year = {1998},
  doi = {10.1137/S0097539794271898}
}

@inproceedings{duan2010connectivity,
  author = {Duan, Ran and Pettie, Seth},
  title = {Connectivity oracles for failure prone graphs},
  booktitle = {Proceedings of the 42nd Annual ACM Symposium on Theory of Computing (STOC 2010)},
  pages = {465--474},
  publisher = {ACM},
  year = {2010}
}

@article{duan2020connectivity,
  author = {Duan, Ran and Pettie, Seth},
  title = {Connectivity oracles for graphs subject to vertex failures},
  journal = {SIAM Journal on Computing},
  volume = {49},
  number = {6},
  pages = {1363--1396},
  year = {2020},
  doi = {10.1137/17M1146610}
}

@inproceedings{long2022near,
  author = {Long, Yaowei and Saranurak, Thatchaphol},
  title = {Near-optimal deterministic vertex-failure connectivity oracles},
  booktitle = {63rd IEEE Annual Symposium on Foundations of Computer Science (FOCS 2022)},
  pages = {1002--1010},
  publisher = {IEEE},
  year = {2022},
  doi = {10.1109/FOCS54457.2022.00098}
}

@inproceedings{parter2022ptimal,
  author = {Parter, Merav},
  title = {Nearly optimal vertex fault-tolerant spanners in optimal time: sequential, distributed, and parallel},
  booktitle = {Proceedings of the 54th Annual ACM SIGACT Symposium on Theory of Computing (STOC 2022)},
  pages = {1080--1092},
  publisher = {ACM},
  year = {2022},
  doi = {10.1145/3519935.3520047}
}

@article{cohen2001all,
  author = {Cohen, Edith and Zwick, Uri},
  title = {All-pairs small-stretch paths},
  journal = {Journal of Algorithms},
  volume = {38},
  number = {2},
  pages = {335--353},
  year = {2001}
}

@inproceedings{long2025connectivity,
  author = {Long, Yaowei and Pettie, Seth and Saranurak, Thatchaphol},
  title = {Connectivity labeling schemes for edge and vertex faults via expander hierarchies},
  booktitle = {Proceedings of the 2025 Annual ACM-SIAM Symposium on Discrete Algorithms (SODA 2025)},
  pages = {1--47},
  publisher = {SIAM},
  year = {2025}
}

@inproceedings{duan2021approximate,
  author = {Duan, Ran and Gu, Yong and Ren, Hanlin},
  title = {Approximate distance oracles subject to multiple vertex failures},
  booktitle = {Proceedings of the 2021 Annual ACM-SIAM Symposium on Discrete Algorithms (SODA 2021)},
  pages = {2497--2516},
  publisher = {SIAM},
  year = {2021}
}

@inproceedings{parter2024connectivity,
  author = {Parter, Merav and Petruschka, Asaf and Pettie, Seth},
  title = {Connectivity labeling and routing with multiple vertex failures},
  booktitle = {Proceedings of the 56th Annual ACM Symposium on Theory of Computing (STOC 2024)},
  pages = {823--834},
  publisher = {ACM},
  year = {2024}
}

@inproceedings{patrascu2007planning,
  author = {P{\u{a}}tra{\c{s}}cu, Mihai and Thorup, Mikkel},
  title = {Planning for fast connectivity updates},
  booktitle = {48th Annual IEEE Symposium on Foundations of Computer Science (FOCS 2007)},
  pages = {263--271},
  publisher = {IEEE},
  year = {2007}
}

@inproceedings{kapron2013dynamic,
  author = {Kapron, Bruce M. and King, Valerie and Mountjoy, Ben},
  title = {Dynamic graph connectivity in polylogarithmic worst case time},
  booktitle = {Proceedings of the 2013 Annual ACM-SIAM Symposium on Discrete Algorithms (SODA 2013)},
  pages = {1131--1142},
  publisher = {SIAM},
  year = {2013}
}

@inproceedings{kosinas2023connectivity,
  author = {Kosinas, Evangelos},
  title = {Connectivity queries under vertex failures: not optimal, but practical},
  booktitle = {31st Annual European Symposium on Algorithms (ESA 2023)},
  series = {Leibniz International Proceedings in Informatics (LIPIcs)},
  volume = {274},
  pages = {75:1--75:13},
  publisher = {Schloss Dagstuhl -- Leibniz-Zentrum f{\"u}r Informatik},
  year = {2023},
  doi = {10.4230/LIPIcs.ESA.2023.75}
}

@inproceedings{jiang2026new,
  author = {Jiang, Yonggang and Parter, Merav and Petruschka, Asaf},
  title = {New oracles and labeling schemes for vertex cut queries},
  booktitle = {Proceedings of the 2026 Annual ACM-SIAM Symposium on Discrete Algorithms (SODA 2026)},
  pages = {2760--2791},
  publisher = {SIAM},
  year = {2026}
}

@inproceedings{kosinas2025optimal,
  author = {Kosinas, Evangelos},
  title = {An optimal 3-fault-tolerant connectivity oracle},
  booktitle = {52nd International Colloquium on Automata, Languages, and Programming (ICALP 2025)},
  series = {Leibniz International Proceedings in Informatics (LIPIcs)},
  volume = {334},
  pages = {110:1--110:20},
  publisher = {Schloss Dagstuhl -- Leibniz-Zentrum f{\"u}r Informatik},
  year = {2025},
  doi = {10.4230/LIPIcs.ICALP.2025.110}
}

@inproceedings{chechik20171+,
  author = {Chechik, Shiri and Cohen, Sarel and Fiat, Amos and Kaplan, Haim},
  title = {$(1+\epsilon)$-Approximate {$f$}-Sensitive Distance Oracles},
  booktitle = {Proceedings of the 2017 Annual ACM-SIAM Symposium on Discrete Algorithms (SODA 2017)},
  pages = {1479--1496},
  publisher = {SIAM},
  year = {2017}
}

@article{chechik2012f,
  author = {Chechik, Shiri and Langberg, Michael and Peleg, David and Roditty, Liam},
  title = {{$f$}-Sensitivity distance oracles and routing schemes},
  journal = {Algorithmica},
  volume = {63},
  number = {4},
  pages = {861--882},
  year = {2012}
}

@inproceedings{bilo2023fault,
  author = {Bil{\`o}, Davide and Choudhary, Keerti and Cohen, Sarel and Friedrich, Tobias and Krogmann, Simon and Schirneck, Martin},
  title = {Fault-tolerant {ST}-diameter oracles},
  booktitle = {50th International Colloquium on Automata, Languages, and Programming (ICALP 2023)},
  series = {Leibniz International Proceedings in Informatics (LIPIcs)},
  volume = {261},
  pages = {24:1--24:20},
  publisher = {Schloss Dagstuhl -- Leibniz-Zentrum f{\"u}r Informatik},
  year = {2023},
  doi = {10.4230/LIPIcs.ICALP.2023.24}
}

@inproceedings{bilo2023approximate,
  author = {Bil{\`o}, Davide and Chechik, Shiri and Choudhary, Keerti and Cohen, Sarel and Friedrich, Tobias and Krogmann, Simon and Schirneck, Martin},
  title = {Approximate distance sensitivity oracles in subquadratic space},
  booktitle = {Proceedings of the 55th Annual ACM Symposium on Theory of Computing (STOC 2023)},
  pages = {1396--1409},
  publisher = {ACM},
  year = {2023}
}

@inproceedings{duan2022maintaining,
  author = {Duan, Ran and Ren, Hanlin},
  title = {Maintaining exact distances under multiple edge failures},
  booktitle = {Proceedings of the 54th Annual ACM SIGACT Symposium on Theory of Computing (STOC 2022)},
  pages = {1093--1101},
  publisher = {ACM},
  year = {2022}
}

@inproceedings{dey2024nearly,
  author = {Dey, Dipan and Gupta, Manoj},
  title = {Nearly optimal fault-tolerant distance oracle},
  booktitle = {Proceedings of the 56th Annual ACM Symposium on Theory of Computing (STOC 2024)},
  pages = {944--955},
  publisher = {ACM},
  year = {2024}
}

@inproceedings{bilo2024improved,
  author = {Bil{\`o}, Davide and Chechik, Shiri and Choudhary, Keerti and Cohen, Sarel and Friedrich, Tobias and Schirneck, Martin},
  title = {Improved distance (sensitivity) oracles with subquadratic space},
  booktitle = {65th IEEE Annual Symposium on Foundations of Computer Science (FOCS 2024)},
  pages = {1550--1558},
  publisher = {IEEE},
  year = {2024}
}

@misc{gibb2015dynamic,
  author = {Gibb, David and Kapron, Bruce M. and King, Valerie and Thorn, Nolan},
  title = {Dynamic graph connectivity with improved worst case update time and sublinear space},
  howpublished = {arXiv preprint arXiv:1509.06464},
  year = {2015},
  url = {https://arxiv.org/abs/1509.06464}
}

@inproceedings{long2024better,
  author = {Long, Yaowei and Wang, Yunfan},
  title = {Better decremental and fully dynamic sensitivity oracles for subgraph connectivity},
  booktitle = {51st International Colloquium on Automata, Languages, and Programming (ICALP 2024)},
  series = {Leibniz International Proceedings in Informatics (LIPIcs)},
  volume = {297},
  pages = {109:1--109:20},
  publisher = {Schloss Dagstuhl -- Leibniz-Zentrum f{\"u}r Informatik},
  year = {2024},
  doi = {10.4230/LIPIcs.ICALP.2024.109}
}

@article{thorup2005approximate,
  author = {Thorup, Mikkel and Zwick, Uri},
  title = {Approximate distance oracles},
  journal = {Journal of the ACM},
  volume = {52},
  number = {1},
  pages = {1--24},
  year = {2005}
}

@article{baswana2013approximate,
  author = {Baswana, Surender and Khanna, Neelesh},
  title = {Approximate shortest paths avoiding a failed vertex: near optimal data structures for undirected unweighted graphs},
  journal = {Algorithmica},
  volume = {66},
  number = {1},
  pages = {18--50},
  year = {2013}
}

@inproceedings{bernstein2008improved,
  author = {Bernstein, Aaron and Karger, David R.},
  title = {Improved distance sensitivity oracles via random sampling},
  booktitle = {Proceedings of the 2008 Annual ACM-SIAM Symposium on Discrete Algorithms (SODA 2008)},
  pages = {34--43},
  publisher = {SIAM},
  year = {2008}
}

@inproceedings{bernstein2009nearly,
  author = {Bernstein, Aaron and Karger, David R.},
  title = {A nearly optimal oracle for avoiding failed vertices and edges},
  booktitle = {Proceedings of the 41st Annual ACM Symposium on Theory of Computing (STOC 2009)},
  pages = {101--110},
  publisher = {ACM},
  year = {2009}
}

@inproceedings{chechik2020distance,
  author = {Chechik, Shiri and Cohen, Sarel},
  title = {Distance sensitivity oracles with subcubic preprocessing time and fast query time},
  booktitle = {Proceedings of the 52nd Annual ACM Symposium on Theory of Computing (STOC 2020)},
  pages = {1375--1388},
  publisher = {ACM},
  year = {2020}
}

@inproceedings{duan2017improved,
  author = {Duan, Ran and Zhang, Tianyi},
  title = {Improved distance sensitivity oracles via tree partitioning},
  booktitle = {15th International Symposium on Algorithms and Data Structures (WADS 2017)},
  pages = {349--360},
  publisher = {Springer},
  year = {2017},
  doi = {10.1007/978-3-319-62127-2_30}
}

@article{ren2022improved,
  author = {Ren, Hanlin},
  title = {Improved distance sensitivity oracles with subcubic preprocessing time},
  journal = {Journal of Computer and System Sciences},
  volume = {123},
  pages = {159--170},
  year = {2022}
}

@inproceedings{dinitz2011fault,
  author = {Dinitz, Michael and Krauthgamer, Robert},
  title = {Fault-tolerant spanners: better and simpler},
  booktitle = {Proceedings of the 30th Annual ACM SIGACT-SIGOPS Symposium on Principles of Distributed Computing (PODC 2011)},
  pages = {169--178},
  publisher = {ACM},
  year = {2011}
}

@inproceedings{bodwin2018optimal,
  author = {Bodwin, Greg and Dinitz, Michael and Parter, Merav and Williams, Virginia Vassilevska},
  title = {Optimal vertex fault tolerant spanners (for fixed stretch)},
  booktitle = {Proceedings of the 2018 Annual ACM-SIAM Symposium on Discrete Algorithms (SODA 2018)},
  pages = {1884--1900},
  publisher = {SIAM},
  year = {2018},
  doi = {10.1137/1.9781611975031.123}
}

@inproceedings{bodwin2019trivial,
  author = {Bodwin, Greg and Patel, Shyamal},
  title = {A trivial yet optimal solution to vertex fault tolerant spanners},
  booktitle = {Proceedings of the 2019 ACM Symposium on Principles of Distributed Computing (PODC 2019)},
  pages = {541--543},
  publisher = {ACM},
  year = {2019},
  doi = {10.1145/3293611.3331588}
}

@inproceedings{chechik2009fault,
  author = {Chechik, Shiri and Langberg, Michael and Peleg, David and Roditty, Liam},
  title = {Fault-tolerant spanners for general graphs},
  booktitle = {Proceedings of the 41st Annual ACM Symposium on Theory of Computing (STOC 2009)},
  pages = {435--444},
  publisher = {ACM},
  year = {2009}
}

@inproceedings{henzinger2015unifying,
  author = {Henzinger, Monika and Krinninger, Sebastian and Nanongkai, Danupon and Saranurak, Thatchaphol},
  title = {Unifying and strengthening hardness for dynamic problems via the online matrix-vector multiplication conjecture},
  booktitle = {Proceedings of the 47th Annual ACM Symposium on Theory of Computing (STOC 2015)},
  pages = {21--30},
  publisher = {ACM},
  year = {2015},
  doi = {10.1145/2746539.2746609}
}

@inproceedings{courcelle2007compact,
  author = {Courcelle, Bruno and Twigg, Andrew},
  title = {Compact forbidden-set routing},
  booktitle = {24th Annual Symposium on Theoretical Aspects of Computer Science (STACS 2007)},
  pages = {37--48},
  publisher = {Springer},
  year = {2007},
  doi = {10.1007/978-3-540-70918-3_4}
}

@inproceedings{DBLP:conf/podc/DoryP21,
  author = {Dory, Michal and Parter, Merav},
  title = {Fault-tolerant labeling and compact routing schemes},
  booktitle = {Proceedings of the 2021 ACM Symposium on Principles of Distributed Computing (PODC 2021)},
  pages = {445--455},
  publisher = {ACM},
  year = {2021},
  doi = {10.1145/3465084.3467929}
}

@inproceedings{DBLP:conf/podc/IzumiEWM23,
  author = {Izumi, Taisuke and Emek, Yuval and Wadayama, Tadashi and Masuzawa, Toshimitsu},
  title = {Deterministic fault-tolerant connectivity labeling scheme},
  booktitle = {Proceedings of the 2023 ACM Symposium on Principles of Distributed Computing (PODC 2023)},
  pages = {190--199},
  publisher = {ACM},
  year = {2023},
  doi = {10.1145/3583668.3594584}
}

@inproceedings{DBLP:conf/wdag/ParterP22a,
  author = {Parter, Merav and Petruschka, Asaf},
  title = {{\~{O}ptimal} dual vertex failure connectivity labels},
  booktitle = {36th International Symposium on Distributed Computing (DISC 2022)},
  series = {Leibniz International Proceedings in Informatics (LIPIcs)},
  volume = {246},
  pages = {32:1--32:19},
  publisher = {Schloss Dagstuhl -- Leibniz-Zentrum f{\"u}r Informatik},
  year = {2022},
  doi = {10.4230/LIPIcs.DISC.2022.32}
}

@article{DBLP:journals/algorithmica/NagamochiI92,
  author = {Nagamochi, Hiroshi and Ibaraki, Toshihide},
  title = {A linear-time algorithm for finding a sparse {$k$}-connected spanning subgraph of a {$k$}-connected graph},
  journal = {Algorithmica},
  volume = {7},
  number = {5--6},
  pages = {583--596},
  year = {1992},
  doi = {10.1007/BF01758778}
}

@inproceedings{DBLP:conf/stoc/HaeuplerR022,
  author = {Haeupler, Bernhard and R{\"a}cke, Harald and Ghaffari, Mohsen},
  title = {Hop-constrained expander decompositions, oblivious routing, and distributed universal optimality},
  booktitle = {Proceedings of the 54th Annual ACM SIGACT Symposium on Theory of Computing (STOC 2022)},
  pages = {1325--1338},
  publisher = {ACM},
  year = {2022},
  doi = {10.1145/3519935.3520026}
}

@article{DBLP:journals/dc/AfekBKCM02,
  author = {Afek, Yehuda and Bremler-Barr, Anat and Kaplan, Haim and Cohen, Edith and Merritt, Michael},
  title = {Restoration by path concatenation: fast recovery of {MPLS} paths},
  journal = {Distributed Computing},
  volume = {15},
  number = {4},
  pages = {273--283},
  year = {2002},
  doi = {10.1007/s00446-002-0080-6}
}

\appendix

\section{Proof Sketch of \Cref{thm:PolyED}}
\label{sect:PolyEDProof}

The proof of polynomial-time construction of LC edge-expander decomposition in \cite{HLRS26} (which is adapted from \cite{DBLP:conf/stoc/HaeuplerR022}) has three building blocks. 

First, the proof uses a result from \cite{DBLP:conf/stoc/HaeuplerR022} stating that the union of a sequence of sparse (edge) moving cuts always has a small size. Here we cite an analogous result in \cite{haeupler2025length} for (vertex) moving cuts.

\begin{lemma}[Lemma 3.17 and Lemma 4.8 in \cite{haeupler2025length}]
Let $G_{0}$ be a graph, $A$ be a node weighting and $(C_{1},...,C_{k})$ be a sequence of $(hs)$-length (vertex) moving cuts where each cut $C_{i}$ is $(h,s)$-length $\phi$-sparse in $G_{i}=G_{0}-\sum_{j<i}(hs\cdot C_{j})$ w.r.t. $A$. Then 
\[
\sum_{i}|C_{i}| \leq \phi\cdot n^{O(1/s)}\log n\cdot |A|.
\]
\end{lemma}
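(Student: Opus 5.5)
This lemma is a potential/union bound for sequences of sparse vertex moving cuts. I would derive it from the routing characterization, Theorem~\ref{thm:lce-routing}, by a charging argument that bounds the size of each cut in terms of a demand it separates. That turns the bound into a bound on how much demand can be separated in total.

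For each $i$, since $C_i$ is $(h,s)$-length $\phi$-sparse in $G_i$ w.r.t. $A$, there is an $A$-respecting, $h$-length (in $G_i$) demand $D_i$ with
\[
|C_i|\le \phi\cdot \sep_{hs}(C_i,D_i)
\]
computed in $G_i$. So it suffices to bound $\sum_i \sep_{hs}(C_i,D_i)$ by $n^{O(1/s)}\log n\cdot|A|$.

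The standard route (as in Haeupler--R\"acke--Ghaffari for edge cuts) uses a length-scale potential. Fix a parameter $t\approx s$ and consider the nested scales $h\cdot \lambda^{j}$ for $j=0,\dots,t$, with $\lambda=s^{\Theta(1)}$ chosen so that $h\lambda^{t}\le hs$. For each vertex $v$ and scale $j$, let $B_j(v)$ be the $A$-weight of the ball of radius $h\lambda^{j}$ around $v$ in the current graph. Balls only shrink as cuts are applied, since lengths only increase.

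A demand pair $(u,v)$ that is $h$-close before $C_i$ and more than $hs$-apart after $C_i$ must witness a scale $j$ at which the ball mass drops: $B_j(u)$ after the cut is at most $B_{j+1}(u)$ before the cut, divided by a factor $n^{1/t}$. Otherwise growing the ball across scales would reconnect $u$ and $v$ cheaply. I would make this precise with a ball-growing/pigeonhole argument over the $t$ scales.

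Each vertex can suffer at most $O(t\cdot\log_{n^{1/t}}(n))=O(t^2)$ such multiplicative drops per scale before its balls become trivial. Summing the separated demand, which is at most $A(u)$ per vertex $u$ per drop, over all vertices and cuts then gives $\sum_i\sep(C_i,D_i)\le n^{O(1/s)}\log n\cdot |A|$.

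The hard step is the vertex version of the drop claim. In a vertex moving cut the length increase sits on vertices, and a vertex of the separated pair itself can be ``cut''. The pigeonhole over scales must therefore charge the separation either to a ball-mass drop or to cut mass on the endpoint, the latter being absorbed by $|C_i|$ itself. Rather than reproving this, I would invoke the stated results from \cite{haeupler2025length}, namely Lemmas~3.17 and~4.8, which carry out exactly this argument for vertex moving cuts. Those lemmas give the claimed bound $\phi\cdot n^{O(1/s)}\log n\cdot|A|$ directly.
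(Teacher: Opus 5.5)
The paper gives no proof of this statement. It imports it from Lemmas 3.17 and 4.8 of \cite{haeupler2025length}, the union-of-sparse-cuts bound transferred to vertex moving cuts. Your last step, invoking those lemmas, is therefore exactly the paper's treatment and suffices on its own. Your opening reduction $\sum_i|C_i|\le\phi\sum_i\sep_{hs}(C_i,D_i)$ is also fine. What you should not do is offer the intermediate sketch as a proof, or describe it as ``exactly'' what the cited lemmas do, because as written it does not work.

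\emph{(i) The scales are inconsistent.} With $\lambda=s^{\Theta(1)}$ and $t\approx s$ one has $\lambda^{t}\gg s$. Enforcing $h\lambda^{t}\le hs$ with $\lambda\ge 2$ leaves only $O(\log s)$ scales, which would cost $n^{1/\log s}$ rather than $n^{O(1/s)}$. Getting $n^{O(1/s)}$ from ball growing requires $\Theta(s)$ radii spaced $h$ apart, so that an $h$-length pair cannot skip a scale.

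\emph{(ii) The per-pair drop claim is false.} Sparsity is a global ratio and says nothing about an individual separated pair. Take a star with edges of length $h/2$ and many unit-weight leaves, plus a disjoint edge $uv$ of length $h$, and cut both the star center and the edge $uv$. This cut is $\phi$-sparse and $(u,v)$ is separated. Yet every ball of $u$ merely goes from mass $A(u)+A(v)$ to $A(u)$, so no factor-$n^{1/t}$ drop occurs once $n^{1/t}>2$. Moreover, your inequality compares $B_j$ after the cut with $B_{j+1}$ before it. That holds with no ball shrinking at all whenever balls grow by a factor $n^{1/t}$ per scale, so it cannot be charged to a potential.

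\emph{(iii) The bookkeeping does not give your bound.} Even granting the claim, $\poly(t)$ drops per vertex, each worth at most $A(u)$ separated demand, gives $\sum_i|C_i|\le\phi\cdot\poly(s)\cdot|A|$ with no dependence on $n$. That is not the bound you state, and for large $n$ it is false. Take a $d$-regular bipartite graph of girth larger than $s+1$ with $d=n^{\Omega(1/s)}$, all edge lengths $h$, and $A\equiv 1/\phi$. Let $C_i$ put value $1$ on the edges of the $i$-th of $d$ perfect matchings. Every other path between matched vertices has length at least $(s+1)h$, so $C_i$ separates the demand $1/\phi$ on each matched pair and is $\phi$-sparse in $G_i$. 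Yet $\sum_i|C_i|=dn/2=\tfrac{d}{2}\,\phi\,|A|$.

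So the $n^{\Omega(1/s)}$ loss is unavoidable and has to be paid inside the charging argument, which your sketch never does.
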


Second, their proof defines a ``worst-case'' demand $D_{A}$, called the \emph{LDD demand}, for the node weighting $A$, in the sense that if $D_{A}$ can be routed in $G$ with low congestion and length, then $A$ is LC-expanding in $G$. We point out that the LDD demand also works for vertex congestion and LC-vertex-expansion following the same proof.

Third, their proof uses a black-box algorithm from \cite{haeupler2023maximum} for computing approximate multi-commodity cutmatches, which either outputs a flow routing $D_{A}$ in $G$ with low congestion and length, or a sparse moving cut w.r.t. $A$. We note that this building block also extends to vertex moving cuts and vertex congestion, since the original result in \cite{haeupler2023maximum} applies even to directed graphs and then we can apply the standard vertex-splitting reduction. We state it below in a form adapted to the undirected vertex-congestion setting (since we will use it again for expander routing).

\begin{theorem}[Theorem A.2 in \cite{haeupler2023maximum}]
\label{thm:CutMatch}
Let $G$ be a graph with length function $\ell$. Given a length parameter $h\geq 1$, a sparsity parameter $0<\phi<1$, and a collection of $k$ equal-size node weighting pairs $\{(A_{i},A'_{i})\}_{i\in[k]}$, there is an algorithm that computes
\begin{itemize}
\item For each $i\in[k]$, a partition of the supports of $A_{i}$ and $A'_{i}$ into the ``matched'' and ``unmatched'' parts $M_{i}\sqcup U_{i} = \supp(A_{i})$ and $M'_{i}\sqcup U'_{i} = \supp(A'_{i})$.
\item An $h$-length flow $F = \sum_{i\in[k]}F_{i}$ satisfying the following: for each $i$, $F_{i}$ is a flow from $M_{i}$ to $M'_{i}$ such that the total flow value of flow paths in $F_{i}$ from any vertex $v\in M_{i}$ equals $A_{i}(v)$, and the total flow value of flow paths in $F_{i}$ to any vertex $v'\in M'_{i}$ equals $A'_{i}(v)$. Moreover, $F$ has vertex congestion
\[
\gamma = \tilde{O}(1/\phi).
\]
\item An $h$-length vertex moving cut $C$ in $G$, such that $\dist_{G-C}(U_{i},U'_{i})\geq h$ for all $i\in[k]$, and the size of $C$ is at most
\[
|C|\leq \phi\cdot\left(\sum_{i\in[k]}|A_{i}|-|F_{i}|\right).
\]
\end{itemize}
The algorithm runs in $\tilde{O}(m\cdot k\cdot h^{17})$ time and it is deterministic.
\end{theorem}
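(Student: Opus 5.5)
The plan is to obtain this statement as a direct corollary of the directed, edge-capacitated length-constrained multi-commodity cutmatch algorithm of \cite{haeupler2023maximum}, via the standard vertex-splitting reduction. Given $G$ with length function $\ell$ (vertex and edge lengths), I will build a directed graph $\vec{G}$ as follows. Each vertex $v$ becomes an arc $v^{\mathrm{in}}\to v^{\mathrm{out}}$ of length $\ell(v)$ and capacity $1$. Each undirected edge $(u,v)$ becomes two arcs $u^{\mathrm{out}}\to v^{\mathrm{in}}$ and $v^{\mathrm{out}}\to u^{\mathrm{in}}$, each of length $\ell(e)$ and with effectively infinite capacity; concretely, capacity large enough (say $n$ times the total demand) that the cut algorithm never profitably places cut mass on them. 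Each node weighting $A_i$ is placed on the tail copies $v^{\mathrm{in}}$ and each $A'_i$ on the head copies $v^{\mathrm{out}}$, so that a source and a sink path both pay their endpoint vertex lengths. Simple $s$-$t$ paths in $G$ then correspond bijectively to $s^{\mathrm{in}}$-$t^{\mathrm{out}}$ paths in $\vec{G}$ with the same length, where path length counts both endpoints and all internal vertices, matching the paper's convention.

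Next I will invoke the directed cutmatch theorem on $\vec{G}$ with length $h$, sparsity $\phi$ and the pairs $\{(A_i,A'_i)\}$. It returns matched/unmatched partitions, an $h$-length flow with arc congestion $\tilde{O}(1/\phi)$, and an $h$-length moving cut $C'$ on arcs separating every $U_i$ from $U'_i$ to distance $h$, of size at most $\phi\cdot\sum_i(|A_i|-|F_i|)$. I will translate each output back to $G$. The partitions transfer verbatim. Flow paths map back to $G$-paths of the same length, and the congestion on arc $v^{\mathrm{in}}\to v^{\mathrm{out}}$ is exactly the vertex congestion at $v$. The cut transfers by setting $C(v):=C'(v^{\mathrm{in}}\to v^{\mathrm{out}})$ and $C(e):=$ the cut values on the two arcs of $e$, capped at $1$. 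Capping only decreases the size of the cut. It does not weaken separation, because an arc carrying cut value $\ge 1$ already contributes length $\ge h$. So $|C|\le|C'|$ and $\dist_{G-C}(U_i,U'_i)\ge h$ are preserved.

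The running time is inherited from the cited algorithm applied to a graph with $O(n+m)$ vertices and $O(n+m)$ arcs, which gives $\tilde{O}(m\cdot k\cdot h^{17})$. Determinism is also inherited.

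The step I expect to need the most care is the handling of edge capacities and the exact form of the cited theorem. The original result may be stated for edge congestion with all capacities equal, or may count the sink's vertex length differently. If infinite capacities are not supported, my fallback is to replace each edge arc by $\lceil\gamma\rceil$ parallel unit arcs. The resulting edge congestion then never binds more than vertex congestion, because any flow through an edge also passes through its endpoint vertex arcs. The cut side is unaffected, since we only need the cut size bound and the distance guarantee, and both survive the reduction. Either way, this is bookkeeping rather than a new argument.
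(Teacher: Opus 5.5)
Your proposal is correct and takes the same route as the paper. The paper likewise justifies this statement by citing the directed length-constrained cutmatch result of \cite{haeupler2023maximum} and applying the standard vertex-splitting reduction; your write-up spells out the bookkeeping the paper leaves implicit, and since the paper's moving cuts may put mass on edges, you could even give the edge arcs unit capacity and drop the large-capacity and parallel-arc fallback.
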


Lastly, providing these three building blocks, the decomposition algorithm is simply as follows. First it computes the LDD demand $D_{A}$ for $A$ on the current graph. Second, it tries to route $D_{A}$ using \Cref{thm:CutMatch}. If the returned cut is empty, meaning that $D_{A}$ can be routed with low congestion and length successfully, we are done. Otherwise, add the cut to the decomposition, apply the cut to the current graph, and repeat. See \cite{HLRS26} for a detailed statement.

\section{Proof of \Cref{lemma:ExpanderRouting}}
\label{sect:ProofRouting}

We want to apply \Cref{thm:CutMatch}. First, we turn the demand $D$ into a collection of equal-size node weighting pairs: for each $(u,v)\in\supp(D)$, create node weighting pair $\{A_{i},A'_{i}\}$ where $A_{i}$ only assigns non-zero weight $A_{i}(u)=D(u,v)$ at vertex $u$ and $A'_{i}$ only assigns non-zero weight $A_{i}(v) = D(u,v)$ at vertex $v$.

Now, we apply \Cref{thm:CutMatch} on the node weighting pairs, with length parameter $hs$, sparsity parameter remaining $\phi$. Note that the all ``unmatched'' parts $\{U_{i},U'_{i}\}_{i\in[k]}$ must be empty, since otherwise we obtain an $(hs)$-length vertex moving cut which is $\phi$-sparse (witnessed by the unmatched parts). Therefore, the $(hs)$-length flow between the matched parts with vertex congestion $\gamma = \Otilde(1/\phi)$ is exactly what we want.

\end{document}